\newif\ifcomments
\definecolor{linkblue}{HTML}{001487}
  \newcommand{\andrey}[1]{{\color{green}\bf (Andrey: #1)}} 
  \newcommand{\akshar}[1]{{\color{red}\bf (Akshar: #1)}} 
  \newcommand{\jon}[1]{{\color{blue}\bf (Jonathan: #1)}}
  \newcommand{\alex}[1]{{\color{purple}\bf (Alex: #1)}}
  \newcommand{\vinod}[1]{{\color{red}\bf (Vinod: #1)}}
  \newcommand{\andrey}[1]{}
  \newcommand{\akshar}[1]{}
  \newcommand{\jon}[1]{}
  \newcommand{\alex}[1]{}
  \newcommand{\vinod}[1]{}
\theoremstyle{plain}
\newtheorem{theorem}{Theorem}[section]
\newtheorem*{theorem*}{Theorem}
\newtheorem{lemma}[theorem]{Lemma}
\newtheorem{claim}[theorem]{Claim}
\Crefname{claim}{Claim}{Claims}
\newtheorem*{lemma*}{Lemma}
\newtheorem{corollary}[theorem]{Corollary}
\newtheorem*{corollary*}{Corollary}
\newtheorem{proposition}[theorem]{Proposition}
\newtheorem{conjecture}[theorem]{Conjecture}
\newtheorem{question}[theorem]{Question}
\theoremstyle{definition}
\newtheorem{definition}[theorem]{Definition}
\newtheorem{problem}[theorem]{Problem}
\theoremstyle{remark}
\newtheorem{remark}[theorem]{Remark}
\numberwithin{equation}{section}
\newcommand{\F}{\mathbb{Z}}
\newcommand{\R}{\mathbb{R}}
\newcommand{\Z}{\mathbb{Z}}
\newcommand{\C}{\mathbb{C}}
\newcommand{\Mod}[1]{\ (\mathrm{mod}\ #1)}
\DeclareMathOperator{\poly}{poly}
\DeclareMathOperator{\negl}{negl}
\newcommand{\ketbra}[2]{|#1\rangle\langle#2|}
\renewcommand{\vec}[1]{\mathbf{#1}}
\renewcommand{\Mod}[1]{\ (\mathrm{mod}\ #1)}
\newcommand{\mb}[1]{\mathbf{#1}}
\newcommand{\br}[1]{\left( #1 \right)}
\renewcommand{\exp}[1]{\text{exp}\br{#1}} % Exponentiation
\newcommand{\bounds}[2]{\bigg\rvert_{#1}^{#2}}
\newcommand{\boudns}[2]{\bounds} % To correct my silly spelling errors
\renewcommand{\a}{\alpha}
\renewcommand{\b}{\beta}
\renewcommand{\d}{\delta}
\newcommand{\e}{\epsilon}
\newcommand{\m}{\mu}
\renewcommand{\r}{\rho}
\renewcommand{\t}{\tau}
\newcommand{\w}{\omega}
\newcommand{\W}{\Omega}
\newcommand{\CA}{\mathcal{A}}
\newcommand{\CB}{\mathcal{B}}
\newcommand{\CC}{\mathcal{C}}
\newcommand{\CD}{\mathcal{D}}
\newcommand{\CE}{\mathcal{E}}
\newcommand{\CO}{\mathcal{O}}
\newcommand{\CP}{\mathcal{P}}
\newcommand{\CR}{\mathcal{R}}
\newcommand{\CS}{\mathcal{S}}
\newcommand{\CT}{\mathcal{T}}
\newcommand{\CV}{\mathcal{V}}
\DeclareMathOperator{\Unif}{Unif}
\DeclareMathOperator{\Log}{Log}
\DeclareMathOperator{\Stab}{Stab}
\DeclareMathOperator{\wt}{wt}
\DeclareMathOperator{\TV}{\mathsf{TV}}
\newcommand{\Ber}{\mathsf{Ber}}
\newcommand{\Bin}{\mathsf{Bin}}
\newcommand{\abs}[1]{\left| #1 \right|}
\NewDocumentCommand{\lsn}{ O{} O{} O{} }{%
  \ensuremath{%
    \mathsf{LSN}%
    \ifblank{#2}{}{^{#2}}% 
    \ifblank{#3}{}{_{#3}}  
    \ifblank{#1}{}{\!\left(#1\right)}%   
  }%
}
\NewDocumentCommand{\slsn}{ O{} O{} O{} }{%
  \ensuremath{%
    \mathsf{stateLSN}%
    \ifblank{#2}{}{^{#2}}% 
    \ifblank{#3}{}{_{#3}}  
    \ifblank{#1}{}{\!\left(#1\right)}%   
  }%
}
\NewDocumentCommand{\lpn}{ O{} }{%
  \ensuremath{%
    \mathsf{LPN}%
    \ifblank{#1}{}{\!\left(#1\right)}%   
  }%
}
\NewDocumentCommand{\symplpn}{ O{} }{%
  \ensuremath{%
    \mathsf{sympLPN}%
    \ifblank{#1}{}{\!\left(#1\right)}%   
  }%
}
\NewDocumentCommand{\qncp}{ O{} }{%
  \ensuremath{%
    \mathsf{QNCP}%
    \ifblank{#1}{}{\!\left(#1\right)}%   
  }%
}
\NewDocumentCommand{\errqncp}{ O{} }{%
  \ensuremath{%
    \mathsf{errQNCP}%
    \ifblank{#1}{}{\!\left(#1\right)}%   
  }%
}
\NewDocumentCommand{\recqncp}{ O{} }{%
  \ensuremath{%
    \mathsf{recQNCP}%
    \ifblank{#1}{}{\!\left(#1\right)}%   
  }%
}
\NewDocumentCommand{\ncp}{ O{} }{%
  \ensuremath{%
    \mathsf{NCP}%
    \ifblank{#1}{}{\!\left(#1\right)}%   
  }%
}
\NewDocumentCommand{\qsdp}{ O{} }{%
  \ensuremath{%
    \mathsf{QSDP}%
    \ifblank{#1}{}{\!\left(#1\right)}%   
  }%
}
\DeclareMathOperator{\Symp}{Symp}
\renewcommand{\paragraph}{%
  \@startsection{paragraph}{4}%
  {\z@}{2.25ex \@plus 1ex \@minus .2ex}{-1em}%
  {\normalfont\normalsize\bfseries}%
}
 \title{Average-Case Complexity of Quantum Stabilizer Decoding}
\author[1]{\quad\quad Andrey Boris Khesin}
\author[2]{Jonathan Z. Lu\footnote{\url{lujz@mit.edu}}}
\author[3]{Alexander Poremba\footnote{\url{poremba@bu.edu}}}
\author[4]{\newline Akshar Ramkumar\footnote{\url{aramkuma@caltech.edu}}}
\author[5]{Vinod Vaikuntanathan}
\affil[1]{Department of Computer Science, University of Oxford}
\affil[2]{Department of Mathematics, Massachusetts Institute of Technology}
\affil[3]{Department of Computer Science \& Department of Physics, Boston University}
\affil[4]{Department of Mathematics, California Institute of Technology}
\affil[5]{EECS \& CSAIL, Massachusetts Institute of Technology}
\begin{document}

\pagenumbering{gobble}

\clearpage\maketitle
\thispagestyle{empty}

\begin{abstract}
Random classical linear codes are widely believed to be hard to decode. 
While slightly sub-exponential time algorithms exist when the coding rate vanishes sufficiently rapidly, all known algorithms at constant rate require exponential time.
By contrast, the complexity of decoding a random quantum stabilizer code has remained an open question for quite some time.
This work closes the gap in our understanding of the algorithmic hardness of decoding random quantum versus random classical codes.
We prove that decoding a random stabilizer code with even a \emph{single} logical qubit is at least as hard as decoding a random classical code at constant rate---the maximally hard regime.
This result suggests that the easiest random quantum decoding problem is at least as hard as the hardest random classical decoding problem, and shows that any sub-exponential algorithm decoding a typical stabilizer code, at \emph{any} rate, would immediately imply a breakthrough in cryptography.

More generally, we also characterize many other complexity-theoretic properties of stabilizer codes.
While classical decoding admits a random self-reduction, we prove significant barriers for the existence of random self-reductions in the quantum case.
This result follows from new bounds on Clifford entropies and Pauli mixing times, which may be of independent interest.
As a complementary result, we demonstrate various other self-reductions which are in fact achievable, such as between search and decision.
We also demonstrate several ways in which quantum phenomena, such as quantum degeneracy, force several reasonable definitions of stabilizer decoding---all of which are classically identical---to have distinct or non-trivially equivalent complexity. 
\end{abstract}

\newpage
\tableofcontents

\newpage
\pagenumbering{arabic} % restart page numbers in arabic numerals
\setcounter{page}{1}   % start at page 1

\section{Introduction}
% Motivation

In the classical realm, coding theory has made a remarkable impact across many different areas of mathematics and computer science, far beyond its original use in telecommunications and storage. 
Presently, the span of coding-theoretic applications range across computational complexity theory (e.g. the PCP theorem and the hardness of approximation~\cite{10.1145/278298.278306,DBLP:journals/jacm/FeigeGLSS96,10.1145/1236457.1236459,}), cryptography (e.g. as a basis of hardness for cryptographic protocols~\cite{Sha79,Stern93,FS96,McEliece1978,Alekhnovich03}), combinatorics and extremal set theory (e.g. packing/covering problems~\cite{hamming1950error, plotkin1960binary} and list-decodable codes~\cite{10.1145/1132516.1132518}), algorithms (e.g. sketching and streaming algorithms~\cite{10.1145/1147954.1147955,10.1109/FOCS.2008.82}), and more. 

One of the key motivations for the development of coding theory which has remained relevant due to its presence in most of the above applications is the search for \emph{asymptotically good} codes---families of error correcting codes which have both constant rate and linear distance asymptotically.
This search naturally led to the rigorous study of \emph{random codes}; in particular, random \emph{linear} codes. 
A classic result of coding theory known as the Gilbert-Varshamov bound is that a random linear code is asymptotically good with high probability~\cite{Gilbert1952ACO, Varshamov}. 
Yet, despite their extraordinary error correcting properties, random linear codes are notoriously hard to decode in practice, and the fastest known algorithms still run in exponential or nearly exponential time~\cite{BKW03,10.1007/11538462_32}.
This longstanding barrier against efficient algorithms has led to the formulation of the \emph{Learning Parity with Noise} ($\lpn$)~\cite{BFKL93} problem---the task of decoding random linear codes in the presence of noise---which has since become a popular hardness assumption upon which much cryptography~\cite{10.5555/647097.717000,Alekhnovich03,10.1007/11535218_18,10.1007/978-3-642-03356-8_35,10.1007/978-3-642-34961-4_40,DBLP:conf/cans/DavidDN14,applebaum_et_al:LIPIcs.ITCS.2017.7,BLVW} and learning theory~\cite{BFKL93,10.1109/FOCS.2006.51} has been built.
Formally, given
\begin{align}\label{eq:lpn-sample}
    (\vec A \sim \Z_2^{n\times k},\;\vec A \cdot \vec x +  \vec e \Mod{2}) 
\end{align}
for a Bernoulli error $\vec e \sim \Ber(p)^{\otimes n}$ of noise rate $p\in (0,\frac{1}{2})$, the $\lpn(k, n, p)$ assumption\footnote{This is the so-called \emph{search} variant of $\lpn$ problem; the \emph{decision} variant of $\lpn(k, n, p)$ states that it is computationally difficult to distinguish a sample as in~\eqref{eq:lpn-sample} from an unstructured random sample $(\vec A \sim \Z_2^{n\times k},\vec u \sim \Z_2^n)$.} states that it is hard to recover the randomly chosen message $\vec x \sim \Z_2^k$; here, $\vec A$ serves as a generator matrix for a linear code. In terms of parameters, a typical choice is one in which $n = \poly(k)$ and $p$ is constant.
%$\lpn$ is believed to be hard for both classical and quantum algorithms running in time $\poly(k)$ in various noise regimes, and the best known algorithms tend to run in exponential time. 

The hardness of the $\lpn$ problem is well-supported and enjoys a firm theoretical foundation.
For example,~\cite{BLVW} gave a random self-reduction into $\lpn$ by proving that $\lpn$ with a very high noise rate $p$ is at least as hard as a its worst case variant---known as the \emph{nearest codeword problem} ($\mathsf{NCP}$)---under some structural assumptions about the worst-case instance.
In addition,~\cite{10.1007/s00145-010-9061-2} gave a search-to-decision reduction which renders both the search and the decision variants of the problem equivalent. 
Despite several decades of study, the $\lpn$ problem has seen virtually no asymptotic algorithmic improvements since the nearly-exponential time approaches of~\cite{BKW03,10.1007/11538462_32}.
It is widely conjectured that $\lpn$ is computationally intractable even for quantum computers~\cite{ghosal2025post}.

In the quantum realm, the development of quantum coding theory has focused heavily on the goal of constructing asymptotically good and practical 
quantum error-correcting codes~\cite{PhysRevA.52.R2493,gottesman2009introductionquantumerrorcorrection}, primarily as a means of achieving fault-tolerance in a race to achieve practical logical quantum computation~\cite{doi:10.1137/S0097539799359385,doi:10.1126/science.279.5349.342}. 
Just as classical coding theory focuses primarily on linear codes, so too does quantum coding theory focus primarily on \emph{stabilizer codes}, the quantum analog of linear codes~\cite{gottesman1997stabilizercodesquantumerror}. 
As in the classical case, random stabilizer codes are also \emph{asymptotically good} with high probability \cite{Graeme_thesis,gottesman_book} and in general offer valuable insights into the fundamental limitations of quantum codes, such as achievable rates, code distances, or thresholds. 
A stabilizer code can be specified by a Clifford operator $\vec{C} \in \CC_n$ the same way a classical linear is specified by a generator matrix.
A $\llbracket n, k \rrbracket$ stabilizer code specified by $\vec C \in \CC_n$ has code states of the form $\vec C (\ket{0^{n-k}}\otimes \ket{\psi})$, where $\ket{\psi}$ is a $k$-qubit state.
To generate a random $\llbracket n, k \rrbracket$ stabilizer code, we simply choose a uniformly random Clifford $\vec C \sim \CC_n$ ~\cite{Graeme_thesis,poremba2025learningstabilizersnoiseproblem}; this is always possible to do efficiently, since $\CC_n$ is a finite group of order $2^{O(n^2)}$.

In sharp contrast with classical coding theory, little is known about the hardness of decoding random quantum stabilizer codes. 
This is somewhat surprising because random stabilizer codes are ubiquitous in quantum information science: they appear 
in the theory of quantum communication~\cite{Graeme_thesis}, quantum authentication and the verification of quantum computations~\cite{aharonov2017interactiveproofsquantumcomputations}, quantum cryptography~\cite{dulek2018quantumciphertextauthenticationkey} and quantum gravity~\cite{Hayden_Preskill_2007,Harlow_2013}.
Given the historical impact of $\lpn$ across numerous areas of computer science, one may reasonably expect that the hardness of random quantum stabilizer decoding---if properly understood---could occupy a similar foundational role in quantum information science, more broadly. 
We are therefore motivated to study the fundamental question:
\begin{question} \label{question:main_question}
    How hard is it to decode a typical quantum stabilizer code?
\end{question}
This question has remained open for quite some time~\cite{HsiehLeGall11,iyer2015hardness,kuo2020hardnesses}, in part because characterizing the complexity of certain quantum coding tasks has proven non-trivial even in the worst case.
In fact, all prior techniques to determine the hardness of stabilizer decoding have been limited to {\em worst-case} syndrome decoding problems~\cite{HsiehLeGall11,iyer2015hardness}. 
An answer to the above question would not only sharpen our understanding of quantum stabilizer decoding in \emph{typical} instances, but may also provide new insights into the complexity of quantum information processing tasks~\cite{ bostanci2023unitarycomplexityuhlmanntransformation}, or whether it is possible to base cryptography on hard problems in \emph{quantum}, rather than classical, error-correction~\cite{10.1007/978-3-642-27660-6_9}.

\begin{figure}[t]
\centering

\renewcommand{\arraystretch}
{1.2} % moderate row spacing
\setlength{\tabcolsep}{6pt}       % base padding

{\begin{tabular}{@{} l @{\hspace{8pt}} l @{\hspace{20pt}} l @{}}
\toprule
 & Learning Stabilizers with Noise & Learning Parity with Noise \\
\midrule
\textbf{Input:} &
\(\big(\vec C \sim \mathcal{C}_n,\; \vec E\,\vec C \ket{0^{n-k}, \vec x}\big)\) &
\(\big(\vec A \sim \mathbb{Z}_2^{n\times k},\; \vec A\vec x + \vec e \Mod{2}\big)\) \\
\textbf{Noise:} &
Depolarizing noise \(\vec E \sim \mathcal{D}_p^{\otimes n}\) &
Bernoulli noise \(\vec e \sim \text{Ber}_p^{\otimes n}\) \\
\textbf{Task:} &
Find \(\vec x \sim \mathbb{Z}_2^k\) &
Find \(\vec x \sim \mathbb{Z}_2^k\) \\
\bottomrule
\end{tabular}}
\caption{{Comparison between $\lsn$~\cite{poremba2025learningstabilizersnoiseproblem} and $\lpn$~\cite{BFKL93}. In both cases, the input features a classical description of a \emph{random} code; in the case of $\lsn$, it is specified by a random \emph{Clifford} encoding of a quantum stabilizer code, whereas in the case of $\lpn$ it consists of a random \emph{generator matrix} of a classical linear code. The problem parameters are characterized by the logical (qu)bits $k$, physical (qu)bits $n$, and the error probability per physical (qu)bit $p$.}}
\label{tab:lpn-lsn-comparison}
\end{figure}

% Another reason that random stabilizer decoding has remained open in complexity is that a completely formal definition of the computational task was formulated only very recently.
In a first attempt to formally explore Question~\ref{question:main_question}, \cite{poremba2025learningstabilizersnoiseproblem} recently defined the \emph{Learning Stabilizers with Noise} ($\lsn$) problem as a natural quantum analog of $\lpn$; this is the task of decoding a random quantum stabilizer code in the presence of noise.
More precisely, the task is to find $\vec x \sim \Z_2^k$ given
\begin{align}
    \left(\vec C \sim  \CC_n, \, \vec E \, \vec C \ket{0^{n-k},\vec x}\right)  
\end{align}
where ${\vec C} \sim  \CC_n$ is a random $n$-qubit Clifford (say, described by a circuit) and $\vec E \, \vec C \ket{0^{n-k},\vec x}$ is a noisy quantum \emph{stabilizer state}, for some Pauli error ${\vec E} \sim \mathcal{D}_p^{\otimes n}$  from a local depolarizing channel.
\cite{poremba2025learningstabilizersnoiseproblem} showed that $\lsn$ can be solved in exponential time using only a single sample by way of the \emph{pretty good measurement}~\cite{barnum2000reversingquantumdynamicsnearoptimal,montanaro2019pretty}.
In addition, they showed that $\lsn$ is contained in the unitary complexity class $\mathsf{unitaryBQP}^{\mathsf{NP}}$~\cite{ bostanci2023unitarycomplexityuhlmanntransformation}. 
While these results provide an initial assessment of the hardness of the $\lsn$ problem through the lens of both algorithms and complexity, the central issue of how to characterize (or lower bound) the average-case hardness itself, as raised in Question~\ref{question:main_question}, remains unresolved. 

More formally, we may break down Question~\ref{question:main_question} into more fine-grained questions about the hardness of random stabilizer decoding.
The first is a question of lower-bounding quantum decoding by more familiar problems.
\begin{question} \label{question:LPN_to_LSN?}
    Is random stabilizer decoding at least as hard as other (e.g. classical) average-case tasks which are widely believed to be hard?
\end{question}
A positive answer would provide rigorous theoretical foundations towards the complexity of decoding stabilizer codes in \emph{typical} instances.  
Closely related to the question of hardness lower bounds is the existence of non-trivial decoding algorithms; in particular:
\begin{question} \label{question:attack_algorithms}
    Do random quantum stabilizer codes admit any sub-exponential time decoders, similar to random linear codes?
\end{question}
While sub-exponential algorithms for certain parameter regimes of $\lpn$ are known~\cite{BKW03,10.1007/11538462_32}, no such algorithm has ever been discovered for decoding a random quantum stabilizer code in \emph{any} (non-trivial) parameter regime.
This is particularly surprising, given that stabilizer decoding in quantum error correction has been studied intensely from many angles for the past three decades.

Another important structural component in the characterization of the average-case complexity of a problem is the existence of \emph{self reductions}: either between decision and search variants, or between worst-case and average-case variants.
Such reductions certify the robustness of the problem to certain useful deformations (e.g. distinguish two cases vs. find a solution) in how the problem is defined.
Both of these, for example, have been studied extensively in the case of $\lpn$~\cite{brakerski2019worst,10.1007/s00145-010-9061-2}. This naturally raises the question:
\begin{question} \label{question:self_reducibility}
    Is stabilizer decoding self-reducible? That is, does stabilizer decoding admit search-to-decision, decision-to-search, and random self-reductions?
\end{question}

\subsection{Contributions and Scope}

Our limited understanding of random stabilizer codes reflects broader gaps in the foundations of quantum error correction.
These gaps have important implications for many different areas of research, including coding theory, algorithms, complexity theory, and cryptography. 
The overarching goal of this work is to expose these limitations and to develop the necessary mathematical techniques to overcome them.
Concretely, we seek to
\begin{itemize}
    \item bring the study of random quantum stabilizer codes to the same theoretical footing as those of random classical linear codes, and
    
    \item to lay the theoretical foundations for how to view problems in quantum error correction, more generally, through the lens of computational complexity and cryptography.
\end{itemize}

To that end, we show the following results about decoding random quantum stabilizer codes, which collectively answer Question~\ref{question:main_question} to roughly the same degree as our understanding of random classical linear codes.
First, to Question~\ref{question:LPN_to_LSN?}, we prove that decoding a random stabilizer code with a \emph{single} logical qubit is at least as hard as decoding a random classical linear code with constant rate $R = \frac{k}n = \Theta(1)$.
More precisely, we reduce $\lpn$ with $n = ck$, where $c > 1$ is a chosen constant to $\lsn$ with \emph{any} rate, e.g. with a single logical qubit $k = 1$, such that the probability parameters $p$ of each differ only by a constant.
This result lower bounds the hardness of $\lsn$ by $\lpn$ itself, which is widely conjectured (with significant evidence) to be computationally intractable.
In fact, $\lpn$ with constant rate is the hardest regime of classical random decoding, in the sense that no sub-exponential algorithms for $\lpn$ have been devised in this regime (for non-trivial, e.g. constant, error probability $p$) have been found after decades of effort.
Therefore, our reduction also provides evidence that the answer to Question~\ref{question:attack_algorithms} is negative.
Specifically, since $\lpn$ at constant rate reduces to any instance of $\lsn$, even for one logical qubit, \emph{any} sub-exponential algorithm constructed for $\lsn$ would also serve as a sub-exponential algorithm for $\lpn$ at constant rate.
Such an algorithm would imply significant breakthrough in classical complexity theory and cryptography, which assumes and relies on the strong hardness of $\lpn$.

Next, we explore Question~\ref{question:self_reducibility} across several directions. We prove that, both in the worst and average case, stabilizer decoding satisfies a search-to-decision reduction.
The reverse notion, a decision-to-search reduction, is seldom considered in classical literature because it is usually trivial.
We observe, however, that for $\lsn$ this reduction is not trivial because a proposed solution is not clearly verifiable.
Nonetheless, we are able to prove a decision-to-search reduction, showing that the decision and search versions of stabilizer decoding are equivalent in a certain sense.
On the other hand, we prove significant barriers against the existence of a random self-reduction for stabilizer decoding.
More precisely, an operator implementing a reasonable reduction should have bounded sparsity, because a completely dense operator will scramble the noise to an undecodable level.
Yet, we prove that no sparse operators can implement a reduction.
Similarly, natural reduction operators scramble the stabilizer tableau itself, in addition to the underlying code space.
However, we prove also that under reasonable assumptions about the structure of the reduction itself, no such reduction exists either.
These results together imply that random self-reductions for $\lsn$ must either have an exotic structure so as to overcome these barriers, or not exist at all.

One consequence of these results is that they highlight the many \emph{qualitative} differences between quantum and classical decoding, which pushes back against folklore wisdom within quantum error correction that decoding quantum codes is merely a slight generalization of decoding classical codes.
Furthermore, a useful facet of the proof techniques we use to answer the above questions is that they give explicit interpretations as to the specific quantum phenomena which cause $\lsn$ to differ so strongly from $\lpn$.
These effects include quantum degeneracy, high uniformity in the mixing capabilities of entangling operators, and the non-commutative nature of unitary operators.
We highlight the particular quantum effects at play in our proofs.

Finally, we also discuss in detail the many natural ways to formulate the problem of decoding a stabilizer code.
For example, these include outputting the logical state, finding a valid recovery operator, or retrieving the exact Pauli error applied.
All such definitions are equivalent on a classical code, and yet quantumly some definitions appear to be strictly harder problems than others, via the implicit and unnecessary embedding of short vector problems.
This subtlety suggests that care is required in defining quantum decoding in the first place, as the task does not enjoy a strong amount of definition robustness as compared classical decoding.

\subsection{Technical Overview}
\label{sec:technical-overview}

Before we introduce the full definitions and proofs, we give an overview of the techniques and key insights in this paper.

\subsubsection{Random Classical versus Quantum Stabilizer Decoding}

The precise definition of decoding a stabilizer code varies across quantum coding literature, and includes notions such as finding the maximum-likelihood error or maximum-likelihood error equivalence class.
However, the most natural definition of decoding is simply a procedure that on input the code and noisy code state outputs the underlying logical state.
In the case of classical linear codes, for example, a code is a $n \times k$ binary generator matrix, and the task is to output the logical bitstring under noise.
Quantumly, the code is a description of a Clifford operator (e.g. as a circuit), but the task is also to output the logical state.
We first briefly outline the distinctions in noise model, as well as two distinct notions we will consider of ``outputting the logical state''.

Classically, every bit of the codeword has some probability $p$ independently of being flipped.
Quantumly, every qubit of the code state has a probability $p$ of having a random Pauli in $\{\vec{X}, \vec Y, \vec Z\}$ being applied to it.
If a qubit does get corrupted by a Pauli, the Pauli is chosen uniformly at random.
Such an error model is known as depolarizing noise and is the natural generalization of the classical noise model.
Following~\cite{poremba2025learningstabilizersnoiseproblem}, we define Search $\lsn(k, n, p)$ to be the distributional task of recovering a bitstring $\vec x \in \Z_2^k$, which is chosen uniformly at random, given $\vec E \vec C \ket{0^{n-k}, \vec x}$, where $\vec E$ is the error and $\vec C$ is a Clifford operator chosen uniformly at random.
Such a formulation allows the problem to have classical output without severely affecting the inherent quantumness of the problem.
To fully capture the quantumness of the logical state, we also define Search $\slsn(k, n, p)$ where the logical state is instead a Haar-random $k$-qubit state.
In average-case analysis, which is intimately connected to cryptography, it is of interest to determine whether any efficient algorithm can solve a problem with probability non-negligibly higher than random guessing.
For this reason, we say that a solver succeeds if it outputs $\vec x$ with probability at least $\frac{1}{2^k} + \frac{1}{\poly(n)}$.
If $k = \omega(\log n)$, then the addition of $\frac{1}{2^k}$ is irrelevant.
However, since we will be interested in the regime even when $k = 1$, this extra term makes a significant difference.
The decision version is concerned with distinguishing two possibilities: given a random Clifford $\vec C$ and a state $\rho \in (\C^{2 \times 2})^{\otimes n}$, either $\rho$ is the maximally mixed state or $\rho$ is $\vec E \vec C \ket{0^{n-k}, \vec x}$ with a random logical string $\vec x$ and error $\vec E$ drawn from a depolarizing distribution.
There are two variants of the above.
In the first, we define $\lsn^m(k, n, p)$ by providing $m$ noisy codewords.
Each ``sample'' has a fresh error $\vec E$ and fresh code $\vec C$, but has the \emph{same} secret logical state $\vec x \in \Z_2^k$.
Such a problem is not obviously natural for the purposes of practical quantum error correction, but is nevertheless reasonable as a theoretical problem in the context of complexity and cryptography.
More importantly, the multi-sample version turns out to be a crucial intermediate problem in the reduction from $\lpn$ to $\lsn$.
We also consider a multi-sample variant with random states, known as $\slsn^m$.

\begin{figure}[t!]
    \centering
    \includegraphics[width=0.6\linewidth]{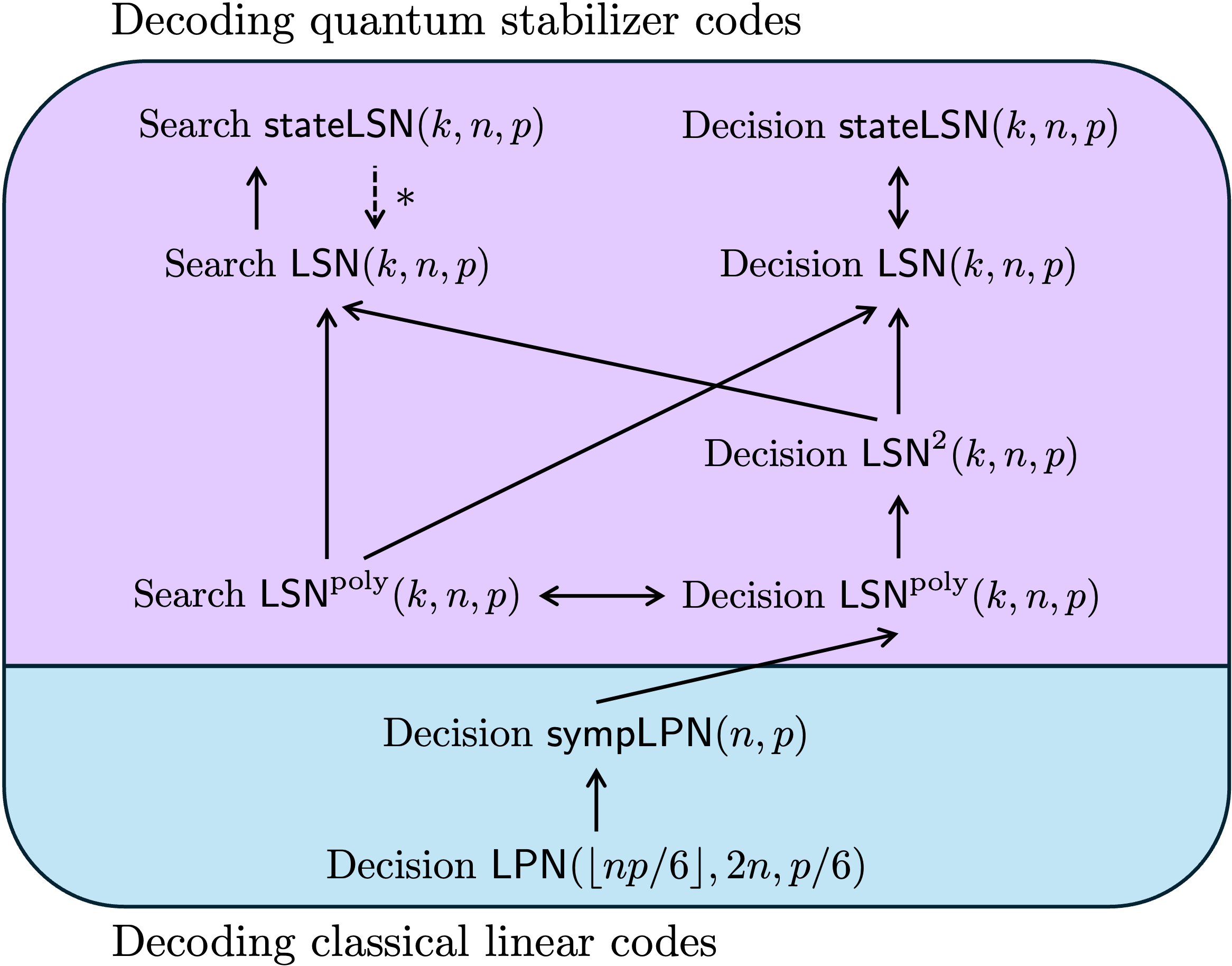}
    \caption{Web of reductions between random classical (blue) and quantum (purple) decoding problems. The parameters (logical (qu)bits, physical (qu)bits, and error probability) shown are exact. 
    Arrows imply a reduction in time $\poly(n)$. The reductions to $\lsn$ hold for any choice of $k \geq 1$, including a single logical qubit $k = 1$. Here $\lpn$ is the classical decoding problem, $\symplpn$ is a symplectically structured version of $\lpn$, $\lsn^{m}$ is quantum decoding with $m$ samples of noisy codewords with the same fixed random logical computational basis state, $\lsn = \lsn^1$, and $\slsn$ uses Haar-random logical states instead of a random logical computational basis state.}
    \label{fig:LSN_main_figure}
\end{figure}

We prove that some of these variants are equivalent by giving efficient reductions from one to the other.
For example, Decision $\lsn(k, n, p)$ and Decision $\slsn(k, n, p)$ are exactly equivalent because the underlying distributions of the logical state are both maximally mixed.
By exploiting the fact that in certain cases the Haar-random state can be replaced by a $t$-design, we can also show that Search $\lsn$ reduces to Search $\slsn$.
The converse holds if the success probability of the $\lsn$ solver is non-negligibly higher than $\frac12 + \frac1{2^{k+1}}$ (rather than $\frac{1}{2^k}$), or when $k = O(\log n)$.
The former is proven by using syndrome decoding as an intermediate problem and a technique that is a form of randomized syndrome matching, while the latter is proven by performing a measurement via randomized syndrome matching and analytical techniques with the measurement distribution of Haar-random states.
We remark that many of these equivalences in the classical analog are immediate, while quantumly our proofs are much more non-trivial.

\begin{theorem}[informal]
    $\lsn$ and $\slsn$ are equivalent decision problems. Search $\lsn$ reduces to Search $\slsn$, and Search $\slsn$ reduces to Search $\lsn$ if the $\lsn$ solver succeeds with probability substantially larger than $\frac12 + \frac1{2^{k+1}}$ (as opposed to $\frac{1}{2^k}$), or if $k = O(\log n)$.
\end{theorem}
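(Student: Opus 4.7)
My plan is to establish the three claims in sequence: the decision equivalence follows from averaging the logical register; the forward search reduction uses a Clifford twirl on the logical register together with the $2$-design property of stabilizer states; and the reverse search reduction routes through syndrome decoding, splitting into the two stated parameter regimes. For the decision equivalence itself, observe that in both $\lsn$ and $\slsn$ the YES distribution takes the form $(\vec C,\ \vec E\vec C(\ket{0^{n-k}}\bra{0^{n-k}}\otimes\rho_k)\vec C^\dagger\vec E^\dagger)$, where $\rho_k$ is the marginal density on the logical register. Since $\mathbb{E}_{\vec x\sim\Z_2^k}\ket{\vec x}\bra{\vec x}=\mathds{1}/2^k=\mathbb{E}_{\ket\psi\sim\mathrm{Haar}}\ket\psi\bra\psi$, the two YES distributions coincide exactly, and the NO distributions (maximally mixed) are trivially identical, so any distinguisher has the same success probability on both problems.

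For Search $\lsn \to$ Search $\slsn$, given $(\vec C,\vec E\vec C\ket{0^{n-k},\vec x})$, I would sample a uniform Clifford $\vec D\sim\CC_k$ and reinterpret the instance as $(\vec C',\vec E\vec C'\ket{0^{n-k}}\otimes\vec D^\dagger\ket{\vec x})$ with $\vec C'=\vec C(\vec I_{n-k}\otimes\vec D)$. By Clifford invariance, $\vec C'$ is uniform in $\CC_n$, and the implicit logical state $\vec D^\dagger\ket{\vec x}$ is a uniformly random stabilizer state. Stabilizer states form a $2$-design, and the $\slsn$ success metric $\mathbb{E}_\psi\operatorname{tr}(\sigma(\psi)\ket\psi\bra\psi)$ is a degree-$2$ polynomial in $\ket\psi\bra\psi$, so the $\slsn$ solver's expected fidelity on stabilizer inputs equals its expected fidelity on Haar inputs. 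Invoke the $\slsn$ solver, apply $\vec D$ to the returned state, and measure in the computational basis to recover $\vec x$ with non-negligible advantage over $1/2^k$.

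For Search $\slsn\to$ Search $\lsn$, I would route through stabilizer syndrome decoding. Given $\vec E\vec C\ket{0^{n-k},\psi}$, coherently measure the $n-k$ stabilizer generators $\vec C\vec Z_i\vec C^\dagger$ to obtain a syndrome $\vec s$ and apply any Pauli correction $\vec E'$ with syndrome $\vec s$; the residual state is $\vec C\ket{0^{n-k}}\otimes\vec L\ket\psi$ for some logical Pauli $\vec L$ determined by $\vec E\vec E'$ modulo stabilizers, and identifying $\vec L$ then recovers $\ket\psi$ via $\vec L^\dagger\vec C^\dagger$. In the first regime (success probability substantially above $\tfrac{1}{2}+\tfrac{1}{2^{k+1}}$), I would implement a randomized syndrome matching protocol: repeatedly twirl the logical ambiguity by applying random logical Pauli shifts and query the $\lsn$ solver; the threshold $\tfrac{1}{2}+\tfrac{1}{2^{k+1}}$ is exactly what is required so that, after accounting for the $2^k$-fold indeterminacy of logical Paulis acting on computational basis states, a constant success-probability gap remains and majority voting identifies $\vec L$. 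In the regime $k=O(\log n)$, I would enumerate the $2^{O(k)}=\poly(n)$ candidate logical Paulis, applying each as a correction and using Porter-Thomas statistics for Haar-random measurement outcomes, together with the $\lsn$ solver as a verification oracle, to single out the correct $\vec L$.

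The principal obstacle is this reverse search reduction. The forward reduction is essentially black-box thanks to the $2$-design property, but the reverse reduction must manage the $2^k$-fold residual logical Pauli ambiguity left over by syndrome decoding, and it requires precisely matching the $\lsn$ solver's guaranteed advantage over random guessing to the combinatorics of this ambiguity (the source of the peculiar $\tfrac{1}{2}+\tfrac{1}{2^{k+1}}$ threshold). The $k=O(\log n)$ case further needs a delicate moment analysis for Haar-random measurement statistics, which I expect to require the most care to make rigorous.
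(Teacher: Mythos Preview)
Your decision equivalence and forward search reduction are correct and match the paper's argument (Lemma~3.10 and Proposition~3.12).

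The reverse search reduction has genuine gaps in both regimes. In the high-success case, the logical Pauli you need to identify decomposes as $\vec L=\overline{\vec X}_{\vec y_1}\overline{\vec Z}_{\vec y_2}$, and you must recover \emph{both} $\vec y_1$ and $\vec y_2$; knowing only $\vec y_1$ leaves you with $\vec Z_{\vec y_2}\ket\psi$, whose expected fidelity with $\ket\psi$ is only $1/2^k$. Your ``random logical Pauli shifts plus majority voting'' gives access only to $\vec y_1$ (since computational-basis secrets are blind to logical $\vec Z$). The paper's actual mechanism is to use the Pauli $\vec P$ (which you know, having produced it to match the syndrome) to \emph{prepare fresh $\lsn$ instances yourself}: feed $(\vec C,\,\vec P\vec C\ket{0^n})$ to the solver to learn $\vec y_1$, and feed $(\vec C(\vec I\otimes\vec H^{\otimes k}),\,\vec P\vec C(\vec I\otimes\vec H^{\otimes k})\ket{0^n})$ to learn $\vec y_2$ via the Hadamard basis change. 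The threshold $\tfrac12(1+\tfrac1{2^k}+\tfrac1{\poly(n)})$ then arises from a union bound over these two queries, not from a ``$2^k$-fold indeterminacy'' argument.

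In the $k=O(\log n)$ regime, enumerating the $4^k$ logical Paulis does not work: you hold a single copy of $\vec L\ket\psi$, and the $\lsn$ solver is not a verification oracle (it cannot certify a guessed $\vec L'$ against an unknown $\ket\psi$). The paper's approach is different and rather specific: after obtaining $\ket{0^{n-k},\phi}$ with $\ket\phi=\vec X_{\vec y_1}\vec Z_{\vec y_2}\ket\psi$, it \emph{destructively measures} $\ket\phi$ in the computational basis (sampling $\vec x+\vec y_1$ with probability $|c_{\vec x}|^2$), then re-applies $\vec P\vec C$ to the measured basis state to obtain a single bona fide $\lsn$ sample $(\vec C,\vec E\vec C\ket{0^{n-k},\vec x})$. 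One call to the $\lsn$ solver yields $\vec x'$, and $\ket{\vec x'}$ is returned as the output. The Porter--Thomas/Beta$(1,2^k-1)$ moments you allude to then appear in the calculation of $\mathbb{E}_\psi\sum_{\vec x',\vec w}p_\psi(\vec x')p_\psi(\vec x'+\vec w)F(\vec w)$, which gives expected fidelity $\tfrac1{2^k}+\tfrac1{(2^k+1)\poly(n)}$.
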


The complete diagram of average-case reductions which we give in this paper is shown in Fig.~\ref{fig:LSN_main_figure}.
So far, we have filled in the equivalences between $\slsn$ and $\lsn$.
The reduction from $\lsn^m$ to $\lsn^{m'}$ with $m > m'$ is trivial by way of ignoring $m - m'$ samples.
The remaining reductions are between decision and search, which we discuss later, and path of reductions from $\lpn$ to $\lsn$.

Starting from $\lpn(k = \lfloor np/6\rfloor, 2n, p/6)$ , we here briefly sketch the steps to reduce to $\lsn(1, n, p)$ which we discuss in greater detail in the remainder of the overview.
We first reduce $\lpn$ to an intermediate problem---a version of $\lpn$ we define whose generator matrix columns satisfy a certain symplectic orthogonality condition and whose error model is a symplectic version of depolarizing noise---which we call $\symplpn$.
This reduction relies on two steps. 
First, we extend a smaller $\lpn$ instance to a slightly larger $\lpn$ instance, carefully using the extra degrees of freedom to enforce the symplectic orthogonality condition and maintain independent noise.
We then show how to transform the independent bit flip noise model to depolarizing noise.
In the process, the error parameter $p$ increases by a small constant factor.
Despite this reduction, the symplectic-type $\lpn$ instance still looks qualitatively different from a $\lsn$ instance, where a Clifford circuit and a quantum state are the inputs rather than a matrix and a noisy matrix-vector product.
To bridge this gap, we develop a new representation of $\lsn$ which is entirely classical.
This representation (a) is exactly equivalent to the standard definition of $\lsn$ discussed above and (b) looks manifestly like a modified version of a $\lpn$-type instance of the form $(\vec A, \vec A \vec x + \vec e)$.
This representation renders a reduction from $\symplpn$ to $\lsn$ much more plausible.
In fact, our reduction from the symplectic-type $\lpn$ to the classical representation of $\lsn$ has no dependence on $k$ in $\lsn$, whereas in the $\lpn$ instance from which we reduce the value of $k$ can be linear in $n$.
We show that this independence from $k$ is a direct consequence of quantum degeneracy, which manifests in the classical representation as an extra ``junk'' register of $n$ bits which obfuscates the real value of the logical bitstring even if there is only only one logical qubit.
More precisely, the existence of this junk register arises from the fact that neither stabilizers nor logical $\vec Z$ operators affect a logical computational basis state.
Therefore, we obtain a reduction from constant-rate $\lpn$ to $\lsn$ with any number of logical qubits $k$, with error probabilities within a constant factor of each other.
This answers Question~\ref{question:LPN_to_LSN?}.
Note that we have omitted a key detail above, namely that the reduction from $\symplpn$ to $\lsn$ only holds in the decision variant, whereas we wish to show that Search $\lsn$ is hard.
Almost always, a decision-to-search reduction is trivial because it is implied from efficient certification of solutions.
As we discuss below, $\lsn$ solutions are unlikely to be efficiently verifiable, so a decision-to-search reduction is also unlikely.
We circumvent this by devising a decision-to-search reduction for $\lsn^2$, and proving a stronger reduction from $\symplpn$ to $\lsn^{\poly(n)}$, which we outline below.

\begin{theorem}[informal] \label{thm-informal:lpn_to_lsn}
    Fix any $k \geq 1$ and $p \in (0, 1)$ which is not necessarily a constant. 
    There exists a reduction from $\lpn(\lfloor np/6 \rfloor, 2n, p/6)$ to $\lsn(k, n, p)$. 
    In particular, for a constant error probability $p$, constant rate $\lpn$ reduces to $\lsn$ with a single logical qubit.
\end{theorem}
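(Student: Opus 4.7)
The plan is to reduce $\lpn(\lfloor np/6 \rfloor, 2n, p/6)$ to $\lsn(k,n,p)$ through a chain of intermediate reductions following the three-stage structure outlined in the technical overview. The first step introduces an auxiliary problem $\symplpn$, which is a classical decoding problem whose generator-matrix columns satisfy a symplectic orthogonality condition and whose noise model is the classical simulation of Pauli depolarizing noise under the symplectic representation. This problem serves as the bridge between the fully classical $\lpn$ and the quantum $\lsn$.

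For the first reduction $\lpn \to \symplpn$, I would proceed in two substeps. First, given an $\lpn$ sample $(\vec{A}, \vec{A}\vec{x} + \vec{e})$ with $2n$ physical bits and $\lfloor np/6 \rfloor$ logical bits, I would extend the instance so that its columns satisfy the symplectic orthogonality condition, exploiting the large excess of rows over columns as degrees of freedom while injecting fresh independent noise that preserves the overall Bernoulli structure. Second, I would convert the Bernoulli bit-flip noise into the symplectic form of depolarizing noise by rerandomizing each two-bit symplectic slot appropriately; straightforward counting shows this blows up the effective per-slot noise rate by at most a factor of $6$, yielding the stated $p/6 \to p$ relation.

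The second major stage is the reduction from $\symplpn$ to $\lsn$. The key conceptual move is to develop a purely classical reformulation of $\lsn$ that (i) is exactly equivalent in distribution to the quantum definition and (ii) syntactically resembles a noisy linear system. Following the sketch in the overview, the classical representation of a $\lsn$ instance consists of a symplectic random matrix (derived from the Clifford tableau), a depolarizing-type noise vector on its output, and crucially an additional ``junk'' register of $n$ uniformly random bits that captures quantum degeneracy: neither the stabilizer generators nor the logical $\vec Z$ operators alter a logical computational-basis state, so $n$ bits of the classical description are unconstrained by the logical secret. Because this junk register accounts for all dependence on $k$, one can reduce $\symplpn$ to the classical representation of $\lsn$ at any $k \ge 1$ --- including $k=1$ --- simply by appending uniformly random bits to emulate the junk register, and mapping the symplectic $\lpn$ sample into the Clifford-tableau slot of the $\lsn$ representation. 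Verifying that the induced distributions agree exactly is a careful but mechanical check.

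The main obstacle is the decision-to-search issue: the chain above most naturally produces a decision-to-decision reduction, whereas we want hardness for the search variant of $\lsn$. Classically this is nearly trivial because candidate solutions are efficiently verifiable, but quantumly the state $\vec{E}\vec{C}\ket{0^{n-k},\vec{x}}$ cannot be recomputed from a proposed $\vec{x}$, so no analogous trick applies directly. I would circumvent this by strengthening the second-stage reduction to generate $\poly(n)$ fresh independent samples of the classical $\lsn$ representation --- each with an independently resampled Clifford tableau and independent symplectic noise, but all sharing the same logical secret $\vec{x}$ --- which corresponds to the multi-sample problem $\lsn^{\poly(n)}$. I would then chain this with the separately established search-to-decision and decision-to-search reductions for $\lsn^m$ shown in Figure~\ref{fig:LSN_main_figure}, thereby transferring hardness to search $\lsn(k,n,p)$. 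The delicate point in this step is guaranteeing that the per-sample symplectic orthogonality condition holds independently while preserving overall noise independence, which requires repeating the first-stage padding construction freshly for each sample rather than reusing any structure.
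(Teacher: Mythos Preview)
Your overall architecture mirrors the paper's chain $\lpn \to \symplpn \to$ Decision $\lsn^{\poly} \to$ Search $\lsn^{\poly} \to$ Search $\lsn$, and the first two stages are essentially right in spirit (a minor inaccuracy: the factor of $6$ is a factor of roughly $2$ from the padding/noise-injection step composed with a factor of $3$ from the Bernoulli-to-depolarizing conversion, not a factor of $6$ from the conversion alone).

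The genuine gap is in how you pass from a single $\symplpn$ sample to Decision $\lsn^{\poly}$. You propose to manufacture $m$ independent $\lsn$ samples by ``repeating the first-stage padding construction freshly for each sample.'' This cannot work: you possess exactly one $\symplpn$ sample $(\vec A,\, \vec A\vec x + \vec e)$. Re-running the symplectic completion $m$ times produces $m$ matrices that all contain the original $\vec A$ as a common sub-block, $m$ junk vectors that all contain the same $\symplpn$ secret $\vec x$, and $m$ noisy outputs that all share the same underlying error $\vec e$ (merely permuted and padded differently). These samples are not even approximately independent, so they are not distributed as $\lsn^m$ instances, and an $\lsn^m$ oracle carries no guarantee on them. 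Nor can you generate all $m$ samples entirely from scratch---you chose the $\lsn$ secret $\vec y$ yourself, so you could, but then the $\symplpn$ data has been discarded and the oracle's answer tells you nothing about your input.

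The paper resolves this with a randomized hybrid argument, which is the missing ingredient in your plan. Because the reduction chose $\vec y$, it can freely generate genuine structured \emph{or} unstructured $\lsn$ samples on its own. It picks a random position $j \in [m]$, embeds the single $\symplpn$ sample at position $j$, fills positions $i > j$ with fresh structured samples, and positions $i < j$ with fresh unstructured samples. If the $\symplpn$ instance is structured, the $j$th slot is structured; if not, the $j$th slot is unstructured. The $\lsn^m$ oracle's advantage between the all-structured and all-unstructured endpoints telescopes over $j$, so the reduction inherits advantage $\eta/m$.
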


One application of this reduction is an explanation for the absence of \emph{any} known sub-exponential algorithm for decoding quantum stabilizer codes in \emph{any} non-trivial parameter regime, despite decades of study in quantum error correction.
On the classical side, in spite of much effort, no known sub-exponential algorithm succeeds on $\lpn$ with constant rate.
The above reduction shows that the search for these two types of algorithms are connected, and gives evidence for a negative answer to Question~\ref{question:attack_algorithms}.

\begin{corollary}[informal] \label{thm-informal:attack_algo}
    The existence of \emph{any} sub-exponential  time (in $n$) algorithm which solves $\lsn$ at \emph{any} rate $\frac{k}n$ implies a substantial breakthrough in classical cryptography in the form of a sub-exponential time algorithm for $\lpn$ at constant rate.\footnote{Note that the runtime for $\lpn$ decoders is typically stated with respect to $k$, the length of the secret. In the case of $\lsn$, it is more natural to think of the block length $n$ as the relevant parameter. Since we consider constant rates, the asymptotics are the same either way. We discuss these differences in~\Cref{sec:avg-case}.}
\end{corollary}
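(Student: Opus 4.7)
The plan is to derive the corollary as a direct composition of the reduction from Theorem~\ref{thm-informal:lpn_to_lsn} with the hypothetical $\lsn$ algorithm, together with a careful bookkeeping of parameters to confirm that the resulting $\lpn$ instance is at constant rate and that the composed algorithm runs in sub-exponential time. Concretely, suppose $k = k(n) \geq 1$ is any function of $n$ (including constants like $k = 1$, or linearly growing $k = \Theta(n)$), fix a constant error rate $p \in (0, \tfrac{1}{2})$, and assume there exists an algorithm $\mathcal{A}$ that solves $\lsn(k, n, p)$ in time $T(n) = 2^{o(n)}$. By Theorem~\ref{thm-informal:lpn_to_lsn}, there is a $\poly(n)$-time reduction $\mathcal{R}$ which maps any instance of $\lpn(\lfloor np/6 \rfloor, 2n, p/6)$ to an instance of $\lsn(k, n, p)$. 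Composing $\mathcal{R}$ with $\mathcal{A}$ yields an algorithm for $\lpn(\lfloor np/6 \rfloor, 2n, p/6)$ running in time $\poly(n) + T(n) = 2^{o(n)}$.

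The next step is to verify that the produced $\lpn$ instance is at constant rate. The secret length is $k' = \lfloor np/6 \rfloor$ and the number of samples (block length) is $n' = 2n$. For constant $p$, the rate is
\[
\frac{k'}{n'} \;=\; \frac{\lfloor np/6 \rfloor}{2n} \;=\; \frac{p}{12} - o(1) \;=\; \Theta(1),
\]
and the error rate $p/6$ is also a positive constant. Hence the reduction outputs instances of $\lpn$ in precisely the constant-rate, constant-noise regime --- the maximally hard regime with no known sub-exponential attacks.

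Finally I would reconcile the two conventional ways of measuring sub-exponentiality, as flagged in the corollary's footnote. The hypothesis gives an $\lsn$-solver running in time $2^{o(n)}$ in the physical block length $n$; the composed algorithm therefore runs in time $2^{o(n)} = 2^{o(n')}$ on $\lpn$ instances of block length $n' = 2n$, which at the constant rate $k'/n' = \Theta(1)$ is asymptotically the same as $2^{o(k')}$. This matches the classical convention under which $\lpn$-solver runtimes are measured in the secret length, and thus yields a genuine sub-exponential improvement over the best known $\lpn$ algorithms of~\cite{BKW03,10.1007/11538462_32}, which would constitute the claimed breakthrough.

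There is essentially no combinatorial obstacle here once Theorem~\ref{thm-informal:lpn_to_lsn} is in hand; the only subtlety is ensuring that the reduction is invoked uniformly for the chosen rate, and that the asymptotic conversion between ``sub-exponential in $n$'' for $\lsn$ and ``sub-exponential in $k$'' for $\lpn$ is justified at constant rate. All of the real technical difficulty is absorbed into the proof of Theorem~\ref{thm-informal:lpn_to_lsn} itself, in particular the fact that the reduction's parameters are independent of $k$ and grow only by constant factors in $p$.
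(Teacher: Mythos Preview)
Your proposal is correct and matches the paper's approach: the corollary is stated informally in the technical overview as an immediate consequence of Theorem~\ref{thm-informal:lpn_to_lsn}, and the paper does not give a separate formal proof beyond noting that the reduction connects the two algorithmic searches. Your parameter bookkeeping (rate $\approx p/12$, conversion between sub-exponential in $n$ and in $k'$ at constant rate) is in fact more explicit than anything the paper writes out for this corollary.
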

Our work thus presents a new and surprising strong barrier for the existence of sub-exponential decoders of quantum codes.

\subsubsection{Classical Representations of $\lsn$}
\begin{figure}[t!]
    \centering
    \includegraphics[width=0.5\linewidth]{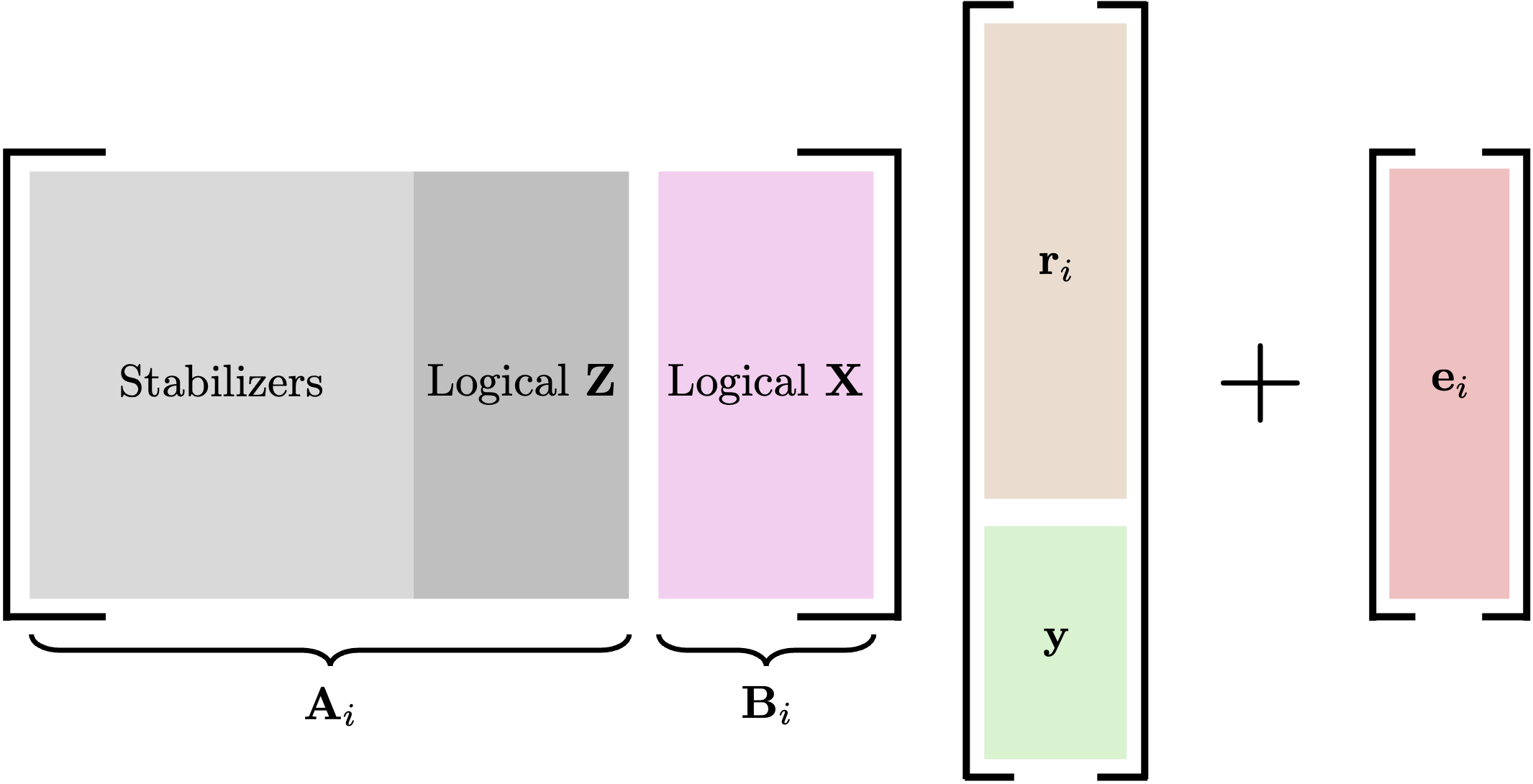}
    \caption{
    A schematic representation for the classical representation of $\lsn^m(k, n, p)$. 
    Everything is written in symplectic representation.
    Each of the $m$ samples given are of the form $(\vec{A}_i, \vec{B}_i, \vec{A}_i \vec{r}_i + \vec{B}_i \vec{y} + \vec{e}_i)$, with the goal of finding $\vec y$.
    The secret bitstring $\vec{y} \sim \Z_2^k$ is a random string which remains \emph{fixed} across samples.
    Here $\vec{A}_i$ is a $2n \times n$ matrix, where $n-k$ columns represent stabilizers and the remaining $k$ represent logical $\vec Z$ operators.
    $\vec{B}_i$ is a $2n \times k$ matrix of the logical $\vec X$ operators. 
    $\vec{r}_i$ is a randomly generated junk vector and $\vec{e}_i$ is an error drawn from $n$ independent single-qubit depolarizing channels each with probability $p$.
    The block multiplication structure emphasizes that the stabilizers and logical $\vec Z$'s act trivially on the secret string $\vec y$; only the logical $\vec X$'s affect $\vec y$.
    For illustrative clarity, this schematic ignores the fact that in the true classical representation $\vec{A}_i$ is randomized in such a way that there is no distinction between stabilizers and logical $\vec Z$ operators. }
    \label{fig:classical_representation}
\end{figure}

One of the main tools which we use to bridge the apparent qualitative gap between the definitions of $\lsn$ and $\lpn$ is a new \emph{classical} representation of $\lsn$, shown in Fig.~\ref{fig:classical_representation}.
The primary motivation for the development of such a representation is to narrow the conceptual gap between the formulations of $\lpn$ and $\lsn$, towards the effort of a reduction from the former to the latter.
As given, $\lpn$ has input $(\vec A, \vec{Ax} + \vec e)$ where $\vec A$ is a generator matrix.
On the other hand, $\lsn$ has input $(\vec C, \vec{EC } \ket{0^{n-k}, \vec{x}})$ for a Clifford circuit $\vec C$.
If $\lsn$ could be represented in an equivalent form that (a) removes the quantum input, becoming fully classical, and (b) structures the given noisy code state in the form of a (non-exponential size) matrix multiplication added to some small error, the reductive path becomes significantly more plausible.

The classical representation task is given as follows. Let $\vec A_i \in \Z_2^{2n \times n}$ be a random matrix whose columns are all orthogonal in the symplectic inner product.
Let $\vec B_i \in \Z_2^{2n \times k}$ be a random matrix with the same property, such that all columns of $\vec A_i$ and $\vec B_i$ are collectively linearly independent.
Let $\vec r_i \in \Z_2^{n}$ be a random string, and $\vec e_i \in \Z_2^{2n}$ be the symplectic representation of a Pauli drawn from the depolarizing distribution.
Fix a random string $\vec{y} \in \Z_2^k$.
The search task is, given $m$ samples of $\Big(\begin{bmatrix}
      \vec A_i \, | \, \vec B_i  
    \end{bmatrix}, 
    \begin{bmatrix}
      \vec A_i \, | \, \vec B_i  
    \end{bmatrix} \cdot \begin{bmatrix}
        \vec r_i \\
        \vec y
    \end{bmatrix} + \vec e_i \Big) $, to find $\vec y$ (in the decision version, to distinguish the given states from random).
We show below that $\vec A_i$ corresponds to the stabilizer and logical $\vec Z$ operators of the code, while $\vec B_i$ represents the logical $\vec X$ operators.
Intuitively, in $\lsn$ logical $\vec Z$ operators and stabilizers do nothing to the logical computational basis state $\vec x$, and thus can be interpreted as ``gauge'' degrees of freedom in the problem.
With a randomly chosen code, these operators obfuscate what is actually relevant---the logical $\vec X$'s---by adding random ``junk'' to the decoding problem, which we encode into the junk vector $\vec r_i$.
It is in fact this $n$ bits of junk---a consequence of a stronger form of quantum degeneracy which includes both stabilizers and logical $\vec Z$'s---which increases the hardness of $\lsn$ to the level of $\lpn$ regardless of how many true logical qubits we give to the stabilizer code.

Two remarks about this classical formulation are in order. 
First, we observe that this representation does indeed meet the criterion of being a simple (small) matrix multiplication with errors problem, as illustrated in Fig.~\ref{fig:classical_representation}.
Second, $\lpn$ is framed in terms of a generator matrix, which produces a codeword that is corrupted with noise.
Stabilizer decoding, on the other hand, is often more naturally phrased as decoding from a syndrome given by the stabilizer tableau.
Yet, in this classical representation, we have constructed a decoding problem which combined both a ``generator'' component (logical Paulis) and the parity check component (stabilizers) into one larger generator matrix.
This construction is unusual in this sense, consisting of embedding a parity check matrix into a larger generator matrix. 

We now sketch a derivation of the classical representation, starting from $\lsn$ itself.
Specifically, we outline a reduction from $\lsn^m$ to a $m$-copy classical representation problem.
The key insight is that $\lsn$ can be framed entirely as correctly deducing a certain logical operator.
Specifically, given a code state corrupted by some random Pauli error $\vec E$, we may measure the syndrome $\vec v$ to obtain a syndrome decoding problem.
If we do not care about the weight of the solution, then it is easy to find \emph{some} Pauli which produces the same syndrome $\vec v$, e.g. by Gaussian elimination.
In fact, we may sample randomly from this space of solutions.
Such solutions, however, will differ from the true error by some combination of stabilizers and logical operators.
A randomly chosen Pauli $\vec P$ with syndrome $\vec v$ may be interpreted as having three components: the error itself $\vec E$, a uniformly random element of the stabilizer subgroup, and a uniformly random logical operator which we can further break down to a uniformly random logical $\vec X$ and a uniformly random logical $\vec Z$.
If our noisy state is of the form $\vec{EC} \ket{0^{n-k}, \vec x}$, then $\vec{C^\dagger PEC} \ket{0^{n-k}, \vec x}$ equals $\ket{0^{n-k}, \vec z}$, where $\vec{z}$ differs from the true logical state $\vec{x}$ by precisely the random $\vec X$ logical encoded within $\vec P$.
But finding this $\vec X$ logical part is precisely the task given by the classical representation---it is the task of finding $\vec y$, the logical $\vec X$ part of a symplectically-represented Pauli $\vec P$ which also includes some random contributions from stabilizers and logical $\vec Z$ operators as specified by $\vec r_i$, in the presence of an error $\vec e$.
Therefore, the classical representation task encodes solutions to $\lsn$.
We apply similar arguments to give a converse, proving that the classical representation and standard $\lsn$ are completely equivalent.
We henceforth use the classical representation to reason about the remainder of the reduction from $\lpn$.

\subsubsection{Symplectic $\lpn$: A Bridge Between $\lpn$ and $\lsn$}

\begin{figure}[t!]
    \centering
    \includegraphics[width=0.7\linewidth]{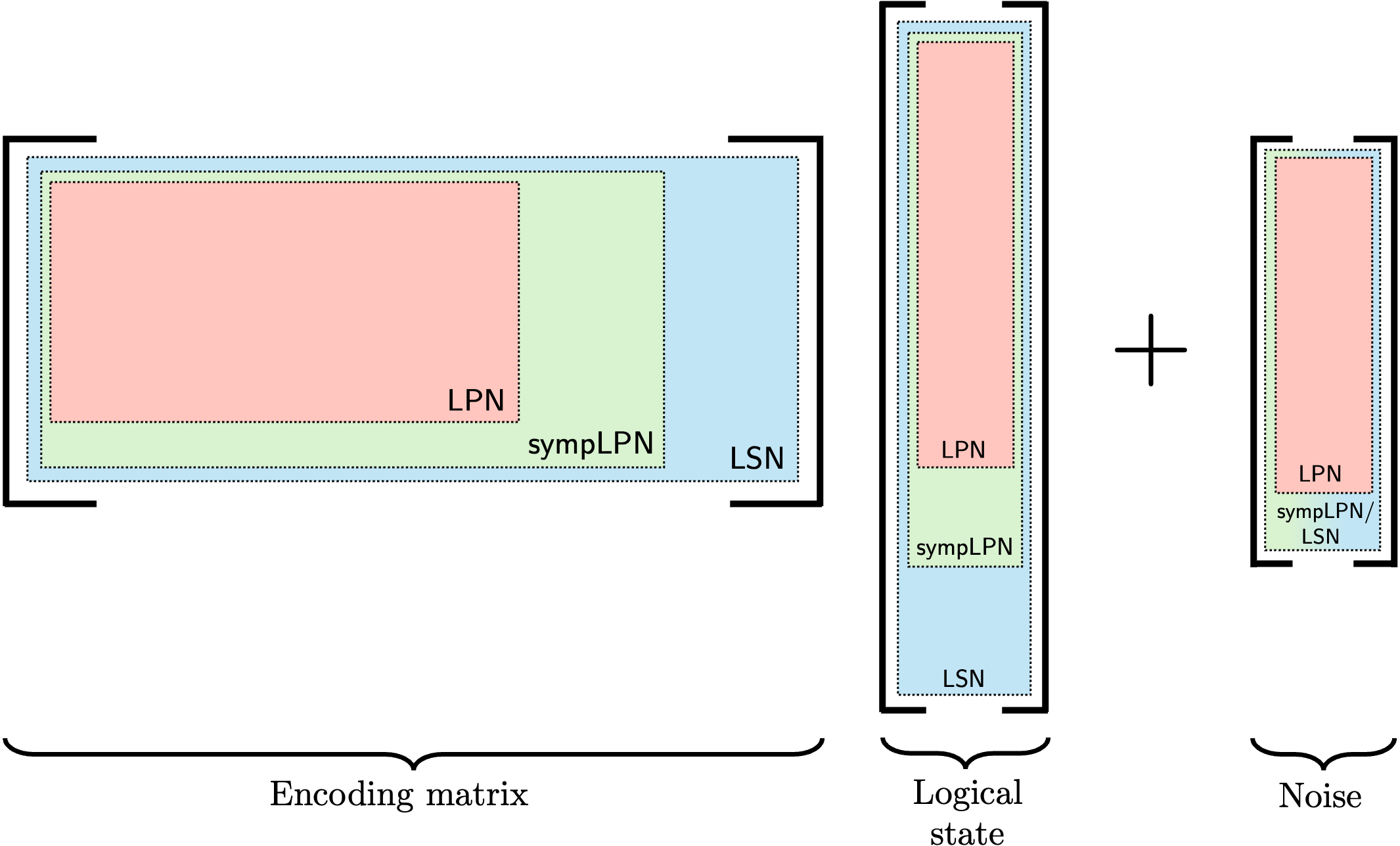}
    \caption{
    Embedding structure for the reduction from $\lpn$ to $\symplpn$ to $\lsn$ (in the classical representation).
    The matrix embeddings are drawn approximately to scale.
    For the code, a $\lpn$ generator matrix is extended by a technique of ``symplectic completion'' into a $\symplpn$ generator matrix.
    Further columns, corresponding to logical $\vec X$ operators, are added to complete the embedding into $\lsn$.
    For the logical state, a $\lpn$ secret is extended into a $\symplpn$ secret, which is all then embedded into the \emph{junk} portion of the $\lsn$ logical state.
    This junk embedding can be observed visually by comparing this figure to Fig.~\ref{fig:classical_representation}.
    (The actual $\lsn$ secret is chosen completely randomly.)
    For the noise, a Bernoulli error is extended to a larger Bernoulli error to embed $\lpn$ to $\symplpn$.
    Mapping from $\symplpn$ to $\lsn$, the dimensions remain the same, but the distribution is modified to be depolarizing instead of Bernoulli.
    }
    \label{fig:embedding_structure}
\end{figure}

The classical representation narrows the conceptual gap between quantum and classical decoding from the quantum side.
We further close this gap from the classical side by defining a quantum-inspired structured version of $\lpn$, which we call $\symplpn(n, p)$.
$\symplpn(n, p)$ differs from $\lpn$ primarily in the distribution of generator matrices $\vec A$.
Rather than letting $\vec A \sim \Z^{n \times k}_{2}$ be a uniformly random matrix, we sample $\vec A$ uniformly at random from the set of full-rank matrices in $\Z_2^{2n \times n}$ such that all columns are orthogonal under the symplectic inner product.
Since the symplectic inner product requires even dimension to be well-defined, we have replaced $n$ with $2n$ in the above.
Moreover, for connection with $\lsn$, we always fix $k = n/2$ in $\symplpn$.
Finally, in $\symplpn$ we use a depolarizing noise model (in symplectic representation) instead of Bernoulli noise.
The intuition behind such a construction is that the classical representation looks like a $\lpn$ instance, except that (a) the generator matrix is structured in a certain way that causes certain columns to be symplectically orthogonal, (b) the noise model is depolarizing instead of Bernoulli, and (c) only part of the vector being multiplied actually contains the secret.
In this sense $\symplpn$ serves as a hybrid problem between $\lpn$ and the classical formulation of $\lsn$ which includes (a) and (b) but not (c).
Our proof technique is then to reduce $\lpn$ to $\symplpn$, and then reduce $\symplpn$ to $\lsn$.

To the first reduction, we begin with a generator matrix $\vec A$ sampled uniformly at random.
The goal will be to pad $\vec A$ with some extra rows, using these new degrees of freedom to ensure that the columns become not only symplectically orthogonal, but a uniformly random symplectically orthogonal matrix.
Specifically, suppose we begin with a uniformly random matrix which is slightly smaller than $2n \times \ell$, namely $(2n - (1+\e)\ell) \times \ell$ for some small constant $\e > 0$ such that $(1+\e)\ell$ is an integer.
We construct $\vec A' \in \Z_2^{(1+\e)\ell \times \ell}$ and stack $\vec A$ on top of it, creating a new matrix $\vec B \in \Z_2^{2n \times \ell}$ such that the distribution of $\vec B$ is negligibly close to a uniformly random full-rank symplectically orthogonal matrix.
% (Typically, randomly chosen matrices are full rank with overwhelming probability; this case is no exception, so we defer details of full rank to the formal proof.)
The construction proceeds as follows: write \begin{align}
    \vec B = \begin{bmatrix}
        \vec N_1 \\ \vec M \\ \vec N_2 \\ \vec A'
    \end{bmatrix} ,
\end{align}
where $\vec N_1, \vec N_2$ are $(n - (1+\e)l) \times l$ and $\vec M, \vec A'$ are $(1+\e)l \times l$.
In this decomposition, we may re-express the condition of symplectic orthogonality as that of a certain matrix being \emph{symmetric}.
In particular consider the matrix, \begin{align}
\label{eq:overview-S_matrix}
    \vec S := \vec N_1^\intercal \vec N_2 + \vec M^\intercal \vec A' .
\end{align}
Then $S_{ij}$ ($S_{ji}$) is the inner product of the top half of the $i$th ($j$th) column with the bottom half of the $j$th ($i$th) row of $\vec B$.
Hence, the symplectic inner product between columns $i$ and $j$ of $\vec B$ is precisely given by $S_{ij} + S_{ji}$.
This is zero for all $i, j$ if and only if $\vec S$ is a symmetric matrix.
This new structure suggests a simple way to sample $\vec A'$ in the desired way.
First, sample a uniformly random symmetric matrix $\vec S'$.
Then, subtract off the known part of Eqn.~(\ref{eq:overview-S_matrix}), creating $\vec T := \vec S' - \vec N_1^\intercal \vec N_2$.
From there, pick a uniformly random $\vec A'$ such that $\vec M^\intercal \vec A' = \vec T$; this step is efficient since it amounts to randomly sampling solutions to a linear equation.
The only remaining uncertainty to check is the homogeneity of this approach.
That is, whether or not for every choice of $\vec A$ there are (up to negligible failure probability) the same number of symplectic completions $\vec A'$, which would ensure that the distribution of $\vec B$ truly is nearly that of a uniformly random symplectically orthogonal full-rank matrix.
We prove this property by direct counting arguments.

The above technique of symplectic completion addresses the code part of the reduction, but now we have enlarged the decoding problem by adding more rows.
We therefore must also supplement by adding some of our own noise so that the new noise distribution matches the desired one.
If we do not do this, then the transformed instance is easily distinguished from $\symplpn$ because there will never be noise in the padded rows.
Unfortunately, the only noise that we can add is uniformly random bit flips, because we can only write down the final result $z_j := \vec a'_j \cdot \vec y + e_j$ for each padded row $\vec a'_j$.
Since we do not know the secret $\vec y$, we have no control over the distribution of $z_j$ for any choice of the distribution of the new error bit $e_j$ unless it is either 0 or $\frac{1}{2}$.
To circumvent this issue, we first generate a random number of indices in the padded region in which to apply noise---so that the total error weight is approximately correct---and then let the resultant bits in those regions of the noisy codeword be uniformly random.
To erase the positional structure of the padding being at the end, we then apply a random permutation of our bits.
The same permutation is applied to both the first and last $n$ bits, to ensure that symplectic orthogonality is preserved.
Careful analysis of a formal combination of these two techniques---randomized symplectic completion and noise injection---give a random full-rank symplectically orthogonal code with independent bit-flip noise.
To complete the reduction, we show how to add extra noise from an independent distribution such that the new aggregate noise distribution is depolarizing.
The parameters $\ell, \e$ can be chosen such that the total change in error probability is a factor of 6, as shown in Fig.~\ref{fig:LSN_main_figure}.

Finally, we outline the remaining piece of the reduction---from $\symplpn$ to $\lsn$.
Starting from $\symplpn$, we have an instance $( \widetilde{\vec{A}} \in \Z_2^{2n \times n}, \; \vec y \in \Z_2^{2n})$, and we wish to embed it into the classical representation of $\lsn$, which is of the form $(\vec A, \;\vec B, \;\vec z=\vec A \vec r + \vec B \vec x + \vec e)$.
By examining the dimensions alone, it becomes apparent that the way to encode $\symplpn$ into $\lsn$ is by embedding the $\symplpn$ matrix into the \emph{junk} part of the $\lsn$ encoding matrix.
Such an embedding reveals an important problem, however: we must choose the matrix $\vec B$ and the $\lsn$ secret $\vec x$ randomly ourselves, but then a Search $\lsn$ solver would do no good---it beings us no value for a solver to output the $\vec x$ we chose ourselves.
To circumvent this issue, we formally reduce from the \emph{decision} versions of $\symplpn$ and $\lsn$.
Now, because the only goal is to decide if everything is random or if the given string is actually a noisy codeword, we can rely on the observation that the transformed $\lsn$ problem is random (actually a noisy codeword) if and only if the $\symplpn$ problem is random (actually a noisy codeword).
Therefore, a distinguisher for $\lsn$ implies a distinguisher for $\lpn$, which completes the reduction between the decision problems.
Figure~\ref{fig:embedding_structure} illustrates the entire sequence of embeddings from $\lpn$ to $\lsn$.

For almost all classical problems, the issue of being able only to reduce directly to a decision problem when we in reality desire a reduction to a search problem is entirely trivial, because decision-to-search reductions always exist due to the ability to efficiently verify proposed solutions.
In the setting of $\lsn$, however, it is not at all clear how one might verify a proposed solution.
Specifically, given a noisy code state $\vec{EC} \ket{0^{n-k}, \vec x}$, and a proposed solution $\ket{\vec x'}$, is there a (quantum) algorithm to check with high probability whether $\vec x = \vec x'$?
Given nothing else but $\vec x'$, this task is equivalent to checking if $\vec C \ket{0^{n-k}, \vec x'}$ and $\vec{EC} \ket{0^{n-k}, \vec x}$ differ by a low-weight Pauli, which is essentially a short vector problem.
We therefore are unable to find a direct decision-to-search reduction for $\lsn$.

Fortunately, we are able to prove a decision-to-search reduction in the case of multiple samples.
More precisely, we can reduce Decision $\lsn^{2m}(k, n, p)$ to Search $\lsn^m(k, n, p)$.
% The parameters of this reduction reflect a similar phenomenon in the worst-case analysis, wherein we called a $\qncp$ solver twice---once on a $\ket{0^k}$ logical state and once on a $\ket{+^k}$ logical state---in order to reduce $\recqncp$ to $\qncp$.
In this case, the key proof technique is to use randomness to our advantage.
Specifically, split up the samples into two halves.
For one set of the query the Search $\lsn^m$ solver exactly as given.
For the other, generate a random bitstring $\vec y \in \Z_2^k$ and apply a logical operator $\overline{\vec X}_{\vec y}$ to each sample.
If the given Decision $\lsn^{2m}$ instance has true noisy code states with logical state $\vec x \in \Z_2^k$, then there is a non-negligible probability that the solvers will output $\vec x$ and $\vec x + \vec y$, at which point we can subtract to verify that we obtained $\vec y$.
If the instance has completely random states, then applying $\overline{\vec X}_{\vec y}$ does nothing to the distribution, the Search $\lsn^m$ solver will have success probabilities negligibly close to random guessing.
Hence, there is a non-negligible advantage, giving a reduction.

The fact that a decision-to-search reduction appears to require multiple samples---a rather unphysical formulation of quantum decoding---is not only interesting in its own right, but it complicates our reduction from $\symplpn$ to Decision $\lsn$, because we now must reduce to multi-sample Decision $\lsn$ instead of Decision $\lsn$ itself.
The multi-sample version is potentially easier than $\lsn$, and therefore we must prove a new reduction directly to Decision $\lsn^2$.
This reduction is the true reduction which we give in the main text, but suffers more technical complications because we must carefully embed a single $\symplpn$ sample into a multi-sample version of $\lsn$.
In spite of this difficulty, we are able to give a reduction by devising a randomized version of a hybrid argument.
Effectively, we hide the $\symplpn$ input into a randomly chosen sample for the $\lpn^m$ solver, and interpolate by choosing samples before the chosen one to be true code states and samples after to be completely random strings.
By analyzing this scheme using the averaging principle, we prove that a reduction to multi-sample Decision $\lsn^m$ holds for any $m = \poly(n)$, thereby completing the full reduction from Decision $\lpn$ to Search $\lsn$, as shown in Fig.~\ref{fig:LSN_main_figure}.

\subsubsection{Worst-case Decoding}
Before we outline our results in self-reducibility, we will define and discuss equivalences between the worst-case variants of stabilizer decoding.
This is because our self-reduction results require both worst-case and average-case notions of decoding.
The classical formulation of decoding linear codes is, in the worst case, known as the nearest codeword problem $\ncp(k, n, w)$~\cite{brakerski2019worst}.
In this setting a code is specified by a generator matrix $\vec A \in \Z_2^{n \times k}$, where $n \geq k$.
Noisy codewords are of the form $\vec{Ax} + \vec e$, where $\vec x \in \Z_2^k$ and $\vec e \in \Z_2^n$.
The decision version of $\ncp$ is to determine, given $(\vec A, \vec y)$ whether there is a weight $\leq w$ error $\vec e$ such that $\vec y = \vec{Ax} + \vec e$ for some $\vec x \in \Z_2^k$.
The search version is to find $\vec x$, promised that $\vec y$ is indeed of the above form.
Quantumly, we may define an analogous decision task as follows: given a Clifford operator $\vec C \in \CC_n$ (e.g. specified classically as a circuit) and a state $\ket{\phi} \in (\C^2)^{\otimes n}$, decide if there exists a weight $\leq w$ Pauli $\vec E \in \CP_n$ and $\ket{\psi} \in (\C^2)^{\otimes k}$ such that $\ket{\phi} = \vec E \vec C \ket{0^{n-k}, \psi}$.
We denote the quantum nearest code state problem $\qncp(k, n, w)$.
The search formulation of $\qncp(k, n, w)$ suffers from a fragility to changes in the definition, which we now sketch.

\begin{figure}[t!]
    \centering
    \includegraphics[width=0.5\linewidth]{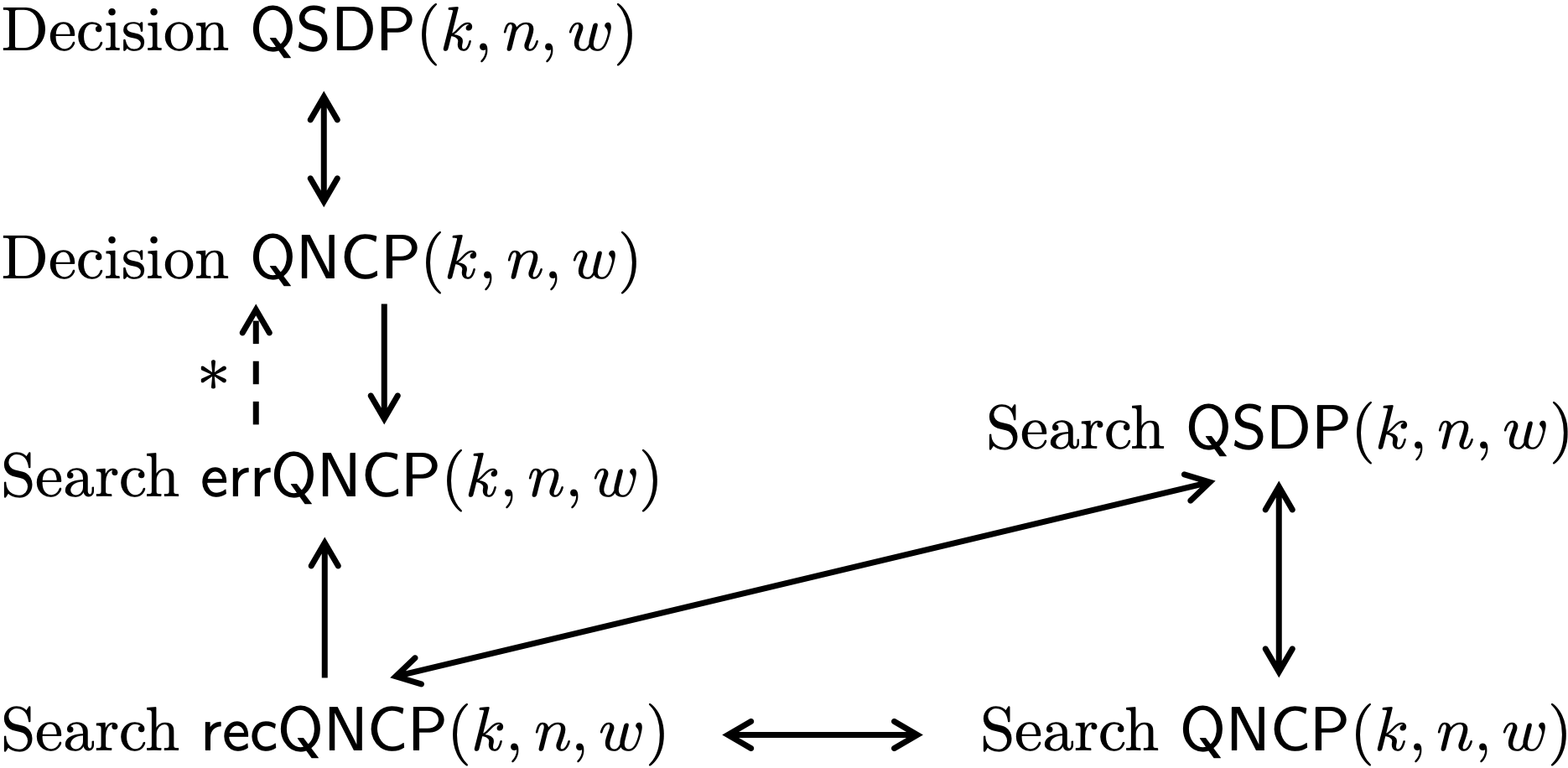}
    \caption{Web of reductions between worst-case quantum decoding problems. The parameters are the logical qubits $k$, physical qubits $n$, and maximum error weight $w$. $\qncp$ is the task of recovering the underlying logical state; $\recqncp$ synthesizing a correct recovery operator; $\errqncp$ outputting a low-weight recovery operator; $\qsdp$ the problem of finding a low-weight error causing a given syndrome. The dotted arrow with an asterisk signifies that the search-to-decision reduction for $\errqncp$ actually requires a Decision $\qncp$ solver for every $w' \leq w$, not just a solver with $w$ itself.}
    \label{fig:QNCP_map}
\end{figure}

The following three formulations of Search $\ncp$ are exactly equivalent in the sense that if an efficient algorithm succeeds in any one task with probability $q$, there is an efficient algorithm on any other task which succeeds with probability $q$: find $\vec x$, synthesize a vector $\vec v$ such that $\vec y - \vec v = \vec{Ax}$, or find $\vec e$.
This is because if we know $\vec x$, we can compute $\vec e = \vec y - \vec{Ax}$. 
If we know $\vec{e}$, we can compute $\vec{Ax} = \vec y - \vec e$ and use Gaussian elimination to compute $\vec x$.
Finally, finding the recovery vector $\vec v$ amounts to finding $\vec e$ itself.
In the setting of quantum codes, however, these three notions are \emph{not} immediately equivalent.
The analogous three formulations for Search $\qncp$ are as follows:
find $\ket{\psi}$ (Search $\qncp$); find any $\vec{E'}$ such that $\vec{E' E}$ is a stabilizer (Search $\recqncp$); and find a low-weight $\vec E'$ such that $\vec{E' E}$ is a stabilizer of the code, e.g. find $\vec{E}$ itself, if all stabilizers are high weight (Search $\errqncp$).
We say that $\vec E'$ is a \emph{recovery operator} for $\vec E$ if $\vec {E' E}$ is a stabilizer.
Certainly $\qncp$ reduces to $\recqncp$, which reduces to $\errqncp$, but the reverse directions are significantly less clear.
In particular, suppose we have a solver for $\recqncp$, which generates with probability $2/3$ some Pauli $\vec E'$ such that $\vec{E' E}$ is a stabilizer.
Any algorithm which upon calling this solver can solve $\errqncp$ with non-negligible probability is by \emph{definition} a solver for a certain short vector problem (SVP), similar to the (binary) SVP problem introduced in~\cite{applebaum_et_al:LIPIcs.ITCS.2017.7}.
Specifically, the SVP is finding a low-weight element of the coset space $\vec{E} \mathcal{S}$, where $\mathcal{S}$ is the stabilizer subgroup.
By using the symplectic representation, wherein $n$-qubit Paulis are represented by a bitstring of length $2n$ (the first $n$ bits indicating $\vec X$'s and last $n$ bits indicating $\vec Z$'s), such a coset is the coset space of a linear subspace over $\Z_2$ and is a bona fide short vector problem.
We believe therefore that it is unlikely that a reduction exists from $\errqncp$ to $\recqncp$, given that it implies an efficient algorithm for a variant of SVP.

It is similarly not immediately clear as to whether $\recqncp$ reduces to $\qncp$.
Given a solver to $\qncp$ which with probability $2/3$ yields the correct state $\ket{\psi} \in (\C^2)^{\otimes k}$, how could we extract the correct recovery operator? 
There is no direct analog of ``subtracting'' out the logical state in the classical sense, wherein $(\vec{Ax} + \vec{e}) - \vec{Ax} = \vec{e}$.
Nonetheless, it turns out that $\qncp$ and $\recqncp$ are exactly equivalent, but to prove it we must pass through the intermediate problem of \emph{syndrome decoding}.
This problem, which we call $\qsdp(k, n, w)$, is the problem of synthesizing a valid recovery operator given a syndrome.
It is entirely classical in the sense that both input and output are classical.
From the standard picture of stabilizer codes, the syndrome is obtained by measuring the stabilizers one by one, recording $+1$ if the stabilizer commutes with the error, and $-1$ otherwise.
In our case, it is convenient to define $\qsdp$ completely in terms of the symplectic representation. 
First, write each stabilizer of the code $\vec{S}_i$ as a symplectic vector $\vec{h}_i \in \Z_2^{2n}$.
We express the symplectic representation explicitly via the bijection $\Symp$.
The syndrome $\vec v \in \Z_2^{n-k}$ becomes a vector such that $v_i = 0$ if $\vec{S}_i$ commutes with the error and $1$ if not.
The decoding problem is to then output some $\vec e' \in \Z_2^{2n}$ such that $\Symp^{-1}(\vec e') \vec{E}$ is a stabilizer.
A corresponding decision problem is to decide if there exists a low-weight error $\vec e$ which produces the given syndrome $\vec v$.
As decision problems, $\qncp$ and $\qsdp$ are tightly equivalent in that an efficient algorithm succeeding with probability $q$ for one implies an efficient algorithm succeeding with probability $q$ for the other.
Since the syndrome does not depend on the underlying logical state, such an equivalence proves that the choice of the logical state itself for Decision $\qncp$ bears no relevance to its hardness.
The reduction from $\qncp$ to $\qsdp$ is immediate: measure the syndrome and call the syndrome decoder.
In the other direction, we exploit the fact that Gaussian elimination enables us to produce Pauli errors which when applied to a clean code state produce \emph{some} noisy code state with the same syndrome as the one given, $\vec v$.
This noisy code state will differ from a clean code state by a low-weight Pauli error if and only if the original code state differs from a clean code state by a low-weight Pauli error, completing the reduction.
Reducing from Search $\qncp$ to $\qsdp$ is a straightforward generalization of the decision reduction. 

The reduction from $\qsdp$ to Search $\qncp$ is slightly more subtle.
Again we can use Gaussian elimination to produce a Pauli error $\vec E'$ that gives the same syndrome as the given one.
However, we must explicitly transform this error to that of a valid recovery operator.
In general, $\vec E'$ is a product of a valid recovery operator, some logical $\vec X$, and some logical $\vec Z$.
Thus, it is enough to deduce what the logical parts are.
We may do this by calling the $\qncp$ solver twice.
First, we call it on $\vec{E' C} \ket{0^n}$, which gives the logical $\vec X$ part.
Then, it call it on $\vec{E' C} \ket{0^{n-k}, +^k}$, which gives the logical $\vec Z$ part.
Finally, since Search $\qsdp$ and Search $\recqncp$ are both tasks for synthesizing recovery operators, we can show by techniques similar to the proof of equivalence between Decision $\qncp$ and Decision $\qsdp$ that the two tasks are equivalent. 
Hence, we have proven the following piece of Fig.~\ref{fig:QNCP_map}.

\begin{theorem}[informal]
    Decision $\qncp$ and Decision $\qsdp$ are equivalent.
    Search $\qncp$, Search $\qsdp$, and Search $\recqncp$ are all equivalent in the sense that an efficient algorithm which solves any one problem with high probability exists if and only if an efficient algorithm which solves any other problem with high probability exists.
    Search $\recqncp$ reduces to Search $\errqncp$.
\end{theorem}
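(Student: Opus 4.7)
The plan is to establish the theorem through a web of bidirectional reductions routed through the syndrome-decoding problem $\qsdp$, handling the decision claim first, then the three-way search equivalence, and finally the trivial reduction $\recqncp \to \errqncp$ at the end. Throughout, the recurring tools are (i) syndrome measurement performed by projectively measuring the $n - k$ Clifford-conjugated stabilizers $\vec{C}\vec{Z}_i\vec{C}^\dagger$, which is deterministic on a noisy code state, and (ii) Gaussian elimination to produce, from a syndrome $\vec{v}$, some representative Pauli $\vec{E}'$ with that syndrome.

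For Decision $\qncp \equiv \qsdp$, in the forward direction I would measure the syndrome of the input state, returning $\vec{v}$ under the YES promise, and feed $(\{\vec{h}_i\}, \vec{v})$ to the $\qsdp$ decider. In the converse direction, given $\vec{v}$, I would synthesize $\vec{E}'$ by Gaussian elimination, pick any Clifford $\vec{C}$ whose first $n - k$ conjugated $\vec{Z}$'s generate the stated stabilizer group, and query $\qncp$ on $(\vec{C}, \vec{E}'\vec{C}\ket{0^n})$. Correctness follows because a Pauli $\vec{F}$ of weight $\le w$ sends $\vec{E}'\vec{C}\ket{0^n}$ into the code space iff $\vec{F}\vec{E}'$ lies in the normalizer of $\mathcal{S}$, iff $\vec{F}$ has the same syndrome $\vec{v}$ as $\vec{E}'$. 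The forward Search direction reuses syndrome measurement and then applies the recovery operator returned by $\qsdp$ to the noisy state, after which uncomputing $\vec{C}$ isolates $\ket{\psi}$ on the last $k$ qubits.

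The technical heart of the proof is the reverse Search reduction $\qsdp \to \qncp$, which I expect to be the main obstacle, since a $\qncp$ oracle yields only a quantum logical state and not a classical Pauli. Starting from $\vec{v}$, Gaussian elimination again produces $\vec{E}'$; the key structural observation is that $\vec{E}'$ differs from every valid recovery operator only by logical operators, so $\vec{E}' = \vec{R}\,\overline{\vec{X}}_{\vec{a}}\,\overline{\vec{Z}}_{\vec{b}}$ for some $\vec{a}, \vec{b} \in \Z_2^k$ and some recovery $\vec{R}$ (modulo an irrelevant stabilizer). Calling the $\qncp$ solver on $\vec{E}'\vec{C}\ket{0^n}$ returns $\ket{\vec{a}}$, because $\overline{\vec{Z}}_{\vec{b}}$ acts trivially on $\ket{0^k}$ up to a global sign; calling it on $\vec{E}'\vec{C}\ket{0^{n-k}, +^k}$ instead reveals $\vec{b}$ via Hadamard-basis readout, since now $\overline{\vec{X}}_{\vec{a}}$ is the silent factor. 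Outputting $\vec{R} := \vec{E}'\,\overline{\vec{X}}_{\vec{a}}\,\overline{\vec{Z}}_{\vec{b}}$ then produces a valid $\qsdp$ answer; the care required is in verifying that $\vec{R}\vec{E}$ lies in $\mathcal{S}$ (not merely in the normalizer) for the underlying low-weight error $\vec{E}$, which requires tracking signs across the two oracle calls and exploiting the uniqueness of the logical decomposition of $\vec{E}'$ up to stabilizers.

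The equivalence Search $\qsdp \equiv \recqncp$ is structurally parallel to the decision argument above: both tasks output recovery operators, so syndrome measurement converts a $\recqncp$ instance to a $\qsdp$ instance while the output, having the correct syndrome, is already valid for $\recqncp$, and in the reverse direction Gaussian elimination synthesizes a noisy code state with the given syndrome that can be passed to the $\recqncp$ solver. Finally, Search $\recqncp \to$ Search $\errqncp$ is immediate, because any low-weight recovery operator produced by an $\errqncp$ solver is a fortiori a recovery operator and may be returned verbatim.
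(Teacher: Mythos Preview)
Your proposal is correct and follows essentially the same approach as the paper: syndrome measurement for the forward directions, Gaussian elimination to manufacture a representative error for the reverse directions, and the two-query trick on $\vec{E}'\vec{C}\ket{0^n}$ and $\vec{E}'\vec{C}\ket{0^{n-k},+^k}$ to extract the logical $\vec{X}$ and $\vec{Z}$ components in the $\qsdp \to \qncp$ step. The paper's version differs only in minor bookkeeping (it amplifies each of the two $\qncp$ calls to success probability $5/6$ before union-bounding), and your concern about ``tracking signs'' is essentially moot since the problems are stated for phase-free Paulis and the distance condition $w \leq \lfloor (d-1)/2 \rfloor$ guarantees the logical decomposition of $\vec{E}'$ is unique modulo $\mathcal{S}$.
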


These equivalences are proven formally in the setting where a solution to $\qsdp$, $\qncp$, $\recqncp$, or $\errqncp$ is an algorithm that is correct with probability at least $\frac{2}{3}$.
For any of these problems, that correctness probability can be amplified to $1-\negl(n)$. 
For $\errqncp$, this may be done by repeating the algorithm, and verifying if the output is indeed a low-weight error and has the correct syndrome.
For $\qsdp$, we can run the solver many times and group outputs together
if they differ by a stabilizer. This is efficiently computable, since it is easy to test membership in a
stabilizer subgroup. To amplify, we choose the equivalence class corresponding to the group with the
most members. This majority vote boosts the success probability close to 1. 
Using the equivalence of $\qsdp$ with $\qncp$ and $\recqncp$, the success probabilities of these two problems can likewise be amplified.

\subsubsection{Reductions Between Search and Decision}

We next outline the reductions we make between search and decision variants of the stabilizer decoding problem, both in the worst and average case.
For search-to-decision reductions, all of our approaches involve sequentially learning one bit of information at a time by querying a Decision solver with a carefully constructed input.
In the worst case, we use an approach inspired by the technique of a classical decoding search-to-decision reduction.
Specifically, we reduce Search $\errqncp(k, n, w)$ to Decision $\qncp(k, n, w')$, assuming that we have a solver for every $w' \leq w$.
This assumption is quite strong---it enables a reduction from the hardest version of the search problem---but it is unclear if any reduction can be made, even from the weaker variant Search $\qncp$, with a weaker assumption.
The reduction proceeds by learning one bit of the Pauli error itself at a time.
In particular, we first query the Decision solver for the existence of a solution with weight $\leq w$.
If one exists, then we apply $\vec X$ to the first qubit and query with weight $\leq w-1$.
We do the same with $\vec Y$ and $\vec Z$ until we get a \texttt{YES} response.
If we do get a \texttt{YES} on Pauli $\vec P$, we know that a low-weight error has $\vec P$ on it.
If we never get a \texttt{YES}, we know that the error we seek is not supported on the first qubit.
We can then proceed in sequence to the next qubit, and so on.
In the average case, can directly reduce from Search $\lsn^{m_1}(k, n, p)$ to Decision $\lsn^{m_2}(k, n, p)$, where $m_1 \geq m_2$ but both are $\poly(n)$, so long as $p \gg \frac{\log n}{n}$.
In this case, the key insight is that the randomness of the encoding Clifford operator enables us to apply extra Clifford operations of our choice without affecting the distribution of the Clifford.
Specifically, we can select a random Pauli and apply a controlled version of it where the control is the first logical qubit.
One can show that if the Pauli is applied, then the problem has the distribution of a true noisy code state, while if it is not applied, then the problem has the distribution of a random state.
Therefore, the Decision solver which can distinguish these two situations will determine the value of the first logical qubit.
This procedure can be repeated sequentially to extract the entire logical state, thereby solving the search problem.
We have in this sketch omitted technical complications arising from the fact that all qubits needs to be re-randomized, but the qubit which serves as the control cannot be randomized by the controlled Pauli.
This issue can be addressed by using the Pauli error which comes as part of the $\lsn$ problem to scramble this final qubit.

As for decision-to-search reductions, we sketched in the previous section the procedure in the average case.
However, the proof strategy relied critically on the fact that the decision problem was to distinguish between a true $\lsn$ noisy code state and a completely random state.
In the worst case, the decision problem is formulated as determining whether or not a solution exists at all, which appears substantially harder.
In fact, in light of the search-to-decision reduction from Search $\errqncp$ to Decision $\qncp$, a decision-to-search reduction would imply at least some form of equivalence between Search $\errqncp$ and Search $\qncp$, which effectively would require solving the short vector problem discussed earlier.
We therefore believe it unlikely that such a reduction holds in the worst case.
For the same reasons, however, there is an immediate decision-to-search reduction to Search $\errqncp$.
We summarize our reductions below.

\begin{theorem}[informal]
    There is a reduction between Search $\lsn^{m_1}(k, n, p)$ to Decision $\lsn^{m_2}(k, n, p)$, where $m_1 \geq m_2$ are both $\poly(n)$ and $k = \omega(\log n / n)$.
    There is a reduction between Decision $\lsn^{2m}(k, n, p)$ to Search $\lsn^m(k, n, p)$.
    There is a reduction between Search $\errqncp(k, n, w)$ and the set of problems Decision $\qncp(k, n, w')$ for all $w' \leq w$, and a reduction from Decision $\qncp(k, n, w)$ to Search $\errqncp(k, n, w)$.
\end{theorem}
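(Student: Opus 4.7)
For the average-case search-to-decision reduction from $\lsn^{m_1}$ to $\lsn^{m_2}$, the plan is to recover the logical string $\vec x \in \Z_2^k$ one bit at a time. Given samples $(\vec C_i, \vec E_i \vec C_i \ket{0^{n-k}, \vec x})$, I would, for each bit index $j$, construct a transformed set of $m_2$ samples whose distribution depends on $x_j$: sample a uniformly random Pauli $\vec Q_i$ and apply the Clifford $\vec C_i \cdot \mathrm{ctrl}\text{-}\vec Q_i \cdot \vec C_i^\dagger$, where the control is the $j$-th logical qubit and the target is the remaining $n-1$ qubits. When $x_j = 0$ the controlled Pauli is inert and the sample remains a valid $\lsn$ instance; when $x_j = 1$, averaging over $\vec Q_i$ induces a Pauli twirl on the target register, driving it towards the maximally mixed state. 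After re-randomizing the encoding by composing with a fresh code-preserving Clifford, the transformed sample is distributed either as a genuine $\lsn^{m_2}$ input or as a maximally mixed state, and the Decision $\lsn^{m_2}$ oracle distinguishes the two, exposing $x_j$. Iterating across $j$ recovers $\vec x$; the parameter condition ensures that the Decision advantage survives a union bound over all $k$ iterations.

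For the decision-to-search reduction from $\lsn^{2m}$ to $\lsn^m$, I would partition the $2m$ samples into two halves, feeding the first directly to the Search $\lsn^m$ solver to obtain a candidate $\vec x_1$. For the second half, draw a uniformly random $\vec y \in \Z_2^k$ and apply the logical operator $\overline{\vec X}_{\vec y}$ (realized as $\vec C_i \vec X_{\vec y} \vec C_i^\dagger$ on the logical register of each sample), then query the Search solver again to obtain $\vec x_2$. In the ``noisy'' branch of the Decision instance, both invocations succeed with probability at least $(\tfrac{1}{2^k} + 1/\poly(n))^2$ and yield $\vec x_2 - \vec x_1 = \vec y$; in the ``random'' branch the input distribution is invariant under $\overline{\vec X}_{\vec y}$, so $\vec x_2 - \vec x_1$ is statistically independent of $\vec y$ and agrees with it only with probability $2^{-k}$. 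Outputting ``noisy'' precisely when $\vec x_2 - \vec x_1 = \vec y$ therefore gives a noticeable distinguishing advantage.

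For the worst-case search-to-decision reduction from Search $\errqncp(k, n, w)$ to Decision $\qncp$ at all budgets $w' \leq w$, the approach is a qubit-by-qubit peel-off. Maintain a residual weight budget (initially $w$) and an accumulated recovery Pauli $\vec E'$ (initially identity). For each qubit $j = 1, \dots, n$, query Decision $\qncp$ with budget one less than the current residual on each of the three modified inputs $(\vec C, \vec P_j \ket{\phi})$ for $\vec P_j \in \{\vec X_j, \vec Y_j, \vec Z_j\}$; if some $\vec P_j$ returns YES, append it to $\vec E'$, apply it to $\ket{\phi}$, and decrement the residual, while if all three return NO the true error acts as identity on qubit $j$. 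After $n$ iterations the accumulated Pauli is a valid low-weight recovery operator for the original error. For the reverse direction from Decision $\qncp(k, n, w)$ to Search $\errqncp(k, n, w)$, simply call the Search solver on the input and verify via symplectic arithmetic whether the returned Pauli has weight $\leq w$ and produces the correct syndrome; output YES iff verification passes.

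The main obstacle lies in the average-case search-to-decision step, where re-randomizing the Clifford description after applying the controlled Pauli must be simultaneously efficient and distribution-preserving. The most delicate subcase is when the extracted bit is the last logical qubit: its role as the control prevents the controlled-$\vec Q$ from scrambling it, so its randomization must be inherited from the ambient depolarizing noise $\vec E_i$, which requires $p$ to be non-negligibly above $(\log n)/n$. The worst-case reductions, by contrast, are essentially combinatorial, and their correctness follows immediately from the efficient bookkeeping afforded by the symplectic representation.
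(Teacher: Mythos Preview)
Your overall plan tracks the paper, but two pieces have real gaps.

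For the search-to-decision direction on $\lsn$, you apply the logical controlled-$\vec Q_i$ directly to the noisy state and claim the control fires according to $x_j$. It does not: pulling $\vec C_i^\dagger$ through puts the conjugated error $\vec E_i' = \vec C_i^\dagger \vec E_i \vec C_i$ \emph{before} the controlled gate, so the control qubit sits in state $x_j \oplus b$, where $b$ is the $X$-part of $\vec E_i'$ on that qubit. For random $\vec C_i$ and $p = \omega((\log n)/n)$ this bit $b$ is essentially $\Ber(\tfrac12)$---precisely the content of the scrambling lemma you invoke at the end---so the twirl fires nearly independently of $x_j$ and both branches look identical to the oracle. The paper repairs this by doing two things in tandem: it folds the controlled Pauli into the \emph{Clifford description} handed to the oracle, $\vec C_i' := \vec C_i (C_{n-k+1}\vec P_i)\overline{\vec X}_{\vec y}$, so the control is read off the clean input $\ket{0^{n-k},\vec x}$; and it separately applies the \emph{uncontrolled} $\vec P_i' = \vec C_i\vec P_i\vec C_i^\dagger$ to the physical state, which as a Pauli commutes past $\vec E_i$ up to phase. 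When $x_1 = 1$ the two copies of $\vec P_i$ cancel (structured sample); when $x_1 = 0$ an unmatched $\vec P_i$ scrambles all but the control qubit (random sample)---the branching is opposite to yours, and crucially the error never touches the control.

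For the decision-to-search direction your construction matches the paper exactly, but the bound ``both invocations succeed with probability $(2^{-k}+1/\poly(n))^2$'' gives no advantage: at $k=1$ this is roughly $\tfrac14$, below the random-case baseline $2^{-k}=\tfrac12$. The paper instead computes the full collision probability: writing the solver's output law as $F(\vec v)=2^{-k}+\delta(\vec v)$ with $\sum_{\vec v}\delta(\vec v)=0$, the structured-case accept probability is $\sum_{\vec v}F(\vec v)^2 = 2^{-k}+\sum_{\vec v}\delta(\vec v)^2 \ge 2^{-k}+\delta(\vec 0)^2$, which beats $2^{-k}$ by $1/\poly(n)$. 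Separately, your worst-case peel-off must first binary-search for the exact weight $w_*$: if you start the budget at $w > w_*$, then on a clean qubit any single-qubit Pauli raises the weight to $w_*+1 \le w-1$ and the oracle returns a spurious YES. The final decision-to-search for $\errqncp$ is fine as you state it.
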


\subsubsection{Random Self-(ir)reducibility} 

One feature of $\lpn$ is its worst-case to average-case reduction, the random self-reduction shown by \cite{brakerski2019worst}.
They show that for a class of worst-case instances of $\lpn$, balanced linear codes, an instance $(\vec{A}, \vec{Ax}+\vec{e})$ can be re-randomized. 
In particular, by applying a well-chosen distribution of sparse matrices $\vec{R} \in \Z_2^{n \times n}$, they show that $(\vec{RA}, (\vec{RA})\vec{x}+\vec{Re})$ is close to a uniformly random instance, with $\vec{RA}$ close to a uniformly random code and $\vec{Re}$ close to independent Bernoulli noise. 
The parameters of the reduction are rather weak: the initial error weight is $\sim \log^2 n$, and the problem reduces to $\lpn$ with error probability $\frac{1}{2} - \frac{1}{\poly(n)}$. 
Nevertheless, it still establishes that a particularly hard version of $\lpn$ is harder than a non-trivial class of easier worst-case instances.

We show that there are significant barriers to a similar type of reduction for quantum stabilizer codes. 
In this case, the analogous reduction would find some distribution of Cliffords $\mathcal{R}_n$ such that for certain starting Cliffords $\vec{C}$, $(\vec{RC}, \vec{REC}\ket{0^{n-k}, \vec{x}})$ is close to a sample of $\lsn$, for $\vec{R} \sim \mathcal{R}_n$. 
In particular, the sample can be rewritten as $(\vec{R}\vec{C}, \vec{(RER^\dag)RC}\ket{0^{n-k}, \vec{x}})$. 
Hence, showing that $\vec{RC}$ is close to uniformly random while $\vec{RER^\dag}$ is still a low-weight error would be sufficient to complete the reduction. 
However, there is a subtlety here, namely that $\vec{RC}$ need not be a uniformly random Clifford to have a reduction.
Indeed, there could be two Cliffords $\vec{C}_1$ and $\vec{C}_2$ that both implement the same encoding on input states $\ket{0^{n-k}, \vec{x}}$.
Extra degrees of freedom which do not affect the distribution of the encoding need not be included in the design of $\vec R$, as we can always randomize these degrees of freedom without any change to the encoded state.

However, while different Cliffords can correspond to the same encoding of code states, if the stabilizer subspaces $\langle \vec{C}_1\vec{Z}_1\vec{C}_1^\dag, \;\dots, \;\vec{C}_1\vec{Z}_{n-k}\vec{C}_1^\dag\rangle$ and $\langle \vec{C}_2\vec{Z}_1\vec{C}_2^\dag, \;\dots, \;\vec{C}_2\vec{Z}_{n-k}\vec{C}_2^\dag\rangle$ are different, the encodings are always different. 
Hence, we may demonstrate barriers for reductions by showing that these stabilizer subspaces cannot be randomized. 
Our first barrier uses precisely this model. 
We show with entropy arguments that no distribution of ``sparse'' Cliffords, i.e. Clifford operators whose symplectic representation is sparse, has sufficient entropy to randomize over stabilizer subspaces. 
A sparse Clifford distribution is the most reasonable distribution for randomizing an encoding without amplifying errors, just as sparse matrices $\vec{R}$ were the most promising to randomize linear codes $\vec{A}$ while not amplifying the error weight of $\vec{e}$ excessively. 
Indeed, a distribution of Cliffords that is not sparse but somehow does not amplify almost all low-weight errors seems unlikely to exist. 
We show that no distribution of sparse Cliffords suffices either, because the uniform distribution over stabilizer subspaces has entropy $\Omega(n^2)$, but any distribution of sparse Cliffords has entropy $o(n^2)$. 

A similarly challenging subtlety is that it is technically only necessary to scramble the distribution of \emph{subspaces}, not the basis representing the subspaces.
In other words, it is possible to have a distribution over stabilizer tableaus which is far from uniform, but represents a uniformly random stabilizer subspace.
While possible, such distributions are exotic and no explicit examples are known.
Unfortunately, we show significant barriers against a reduction that randomizes the tableau.
We prove unconditionally using a symmetry argument that no tableau-randomizing reduction exists if the worst-case code is arbitrary.
The question remains as to whether a reduction could exist for a restricted set of codes.
We study this question under a concrete model, namely a sequence of random local Cliffords.
This model has some semblance of locality, yet is not bounded absolutely in sparsity.
In this setting, we show that it is impossible to randomize over stabilizer tableaus without also randomizing errors. 
The key method is by characterizing the rate of Pauli mixing---in order for a distribution of Clifford operators to randomize stabilizer tableaus (not just the stabilizer subspace but a basis), the distribution must be randomizing at least one Pauli, namely one of the stabilizers. 
But our Pauli mixing results show that the rate of mixing of any two Paulis are nearly identical, so that if one Pauli is taken close to a uniform distribution, then the error $\vec{E}$ must also be randomized soon after. 
It is therefore impossible to use this model to randomize the stabilizer tableau without also amplifying the error. 
The techniques for this specific model show that even randomizing a single Pauli forces any others to be randomized as well, a result that is interestingly much stronger than necessary here, and that could be of independent interest. 
Taken together, these results imply that if any notion of a random self-reduction exists for stabilizer decoding, it must satisfy very exotic properties, including (a) having unbounded sparsity while also not mapping most Paulis to maximal weight and (b) scrambling the code spaces without scrambling the stabilizer tableau.
Alternatively, the reduction would have to increase the dimensions $k, n$ of the code, a slightly unconventional model that we do not consider here.

\subsection{Related Work}
Classically, the hardness of decoding a random classical linear code has been studied thoroughly~\cite{10.1007/978-3-642-03356-8_35,gilbert2008encrypt,10.1007/11535218_18,10.1109/FOCS.2006.51,arora2011new,Alekhnovich03,10.1007/978-3-642-27660-6_9,10.5555/647097.717000,10.1007/11538462_32,yu2019collision}.
Often, $\lpn$ is phrased in terms of a \emph{query problem}: an oracle which knows the secret $\vec x \in \Z_2^k$ can be queried, at which point it gives a random draw of $\vec{a}_i \sim \Z_2^k$ and $e_i \sim \Ber(p)$ and returns $\vec{a}_i \cdot \vec x + e_i$.
An efficient algorithm, for example, is allowed to query oracle as many times $n$ as it would like, bounded only by being $n = \poly(k)$ for efficiency's sake.
For the purpose of coding theory, however, we are interested in a more fine-grained formulation.
Specifically, the goodness of a code is loosely measured by its rate and distance.
The distance is not within our control, but the rate $R = \frac{k}n$, where $n$ is the number of physical bits, or, equivalently, the number of queries.
Therefore, in our work, we use $n$ to specifically study $\lpn$ with a fixed number of queries.
This formulation explicitly interprets $\lpn$ as decoding a randomly chosen code of a given rate, and also makes clear the valid parameter regime for certain attack algorithms.
For example, \cite{BKW03} gives a $2^{O(k / \log k)}$ time algorithm which operates in the regime when $n$ is superpolynomial in $k$, and \cite{10.1007/11538462_32} gives a $2^{O(k / \log \log k)}$ time algorithm which operates in the regime where $n$ is slightly more than linear in $k$.
Importantly, there is no sub-exponential algorithm known after many decades of effort in the regime where $n = \Theta(k)$, i.e. at constant rate.

Until the work of \cite{poremba2025learningstabilizersnoiseproblem}, the study of decoding a stabilizer code has focused on the worst case, and usually on a certain variant of decoding known as maximum-likelihood decoding~\cite{HsiehLeGall11,iyer2015hardness,kuo2020hardnesses}.
Such a model has two flavors in the quantum realm. 
First, given a noisy code state, the task is to output the most likely error to have been applied, assuming that the error was drawn from a given fixed distribution such as the depolarizing channel.
The second flavor is known as degenerate maximum-likelihood decoding, and arises from the fact that many Paulis act identically on a code state; these are the \emph{stabilizers} of the code.
Therefore, one can instead ask what the most likely equivalence class of errors occurred given the noisy code state, i.e. the set of all errors which differ mutually by a stabilizer.
The presence of such an equivalence class is an inherently quantum effect known as \emph{quantum degeneracy}.
Because there are exponentially many stabilizers, the latter model is known to be \textbf{\#P}-complete in the worst case~\cite{iyer2015hardness}, rather than \textbf{NP}-complete as in the former case~\cite{HsiehLeGall11,kuo2020hardnesses}.
Intuitively, this upgraded hardness arises from the fact that a carefully designed stabilizer code can have error equivalence class with extremely similar probabilities.
As a result, one must add up exponentially many error probabilities to compute the probability of each equivalence class and compare them, giving an intractable counting problem.

In practice, however, the purpose of decoding is unambiguous---to recover the underlying logical state.
It does not matter how one chooses to approach this recovery, whether through maximum likelihood decoding or any other technique.
Our formulation of $\lsn$---both in worst- and average-case variants---is therefore to recover the logical state, consistent with~\cite{poremba2025learningstabilizersnoiseproblem}.
Moreover, given a random stabilizer code, it is not clear whether the effects quantum degeneracy play any significant role in complexity at all.
As a consequence, \cite{iyer2015hardness} remarked that an important question of study is precisely Question~\ref{question:main_question}, and the role of quantum degeneracy in it.
Part of our contributions may be viewed as a direct answer to these questions---in our reduction from $\lpn$ at constant rate to $\lsn$ with even a single qubit, we show that this increased hardness of decoding random stabilizer codes is explicit sourced from quantum degeneracy.
Therefore, in the worst case, quantum degeneracy lifts certain models of stabilizer decoding into a much harder complexity class, while in the average case, quantum degeneracy lifts the easiest random quantum decoding problem to at least the level of the hardest random classical decoding problem.

As discussed previously, the formal definition of decoding random stabilizer codes was proposed only recently by \cite{poremba2025learningstabilizersnoiseproblem}.
Their initial assessment of this problem, $\lsn$, included an exponential-time attack algorithm and a proof that $\lsn$ is contained within a certain unitary complexity class known as $\mathsf{unitaryBQP}^{\mathsf{NP}}$.
They do not, however, make progress on Question~\ref{question:main_question}. 
Namely, they explicitly leave Questions~\ref{question:LPN_to_LSN?} and  \ref{question:attack_algorithms} open.
Moreover, they consider only random self-reductions in the context of self-reducibility as phrased in Question~\ref{question:self_reducibility}, and are only able to show a random self-reduction that (a) reduces into an unnatural distribution of problem instances that depends extremely heavily on the worst case instance, which is not the conventionally accepted definition of an average-case problem, (b) succeeds with at most inverse-quasipolynomial probability, which is negligible, and (c) only applies in a regime of worst-case instances wherein even brute-force computation succeeds in quasipolynomial time.
Despite these limitations of their reduction, it is unclear as to how it could be improved at all.
In addition to answering the remaining questions about self-reducibility (between search and decision), this work sheds light into why the random self-reduction of \cite{poremba2025learningstabilizersnoiseproblem} is difficult to improve upon, by proving that reasonable reductions do not exist at all.

\paragraph{Organization.} 
The remainder of the paper is organized as follows. 
In Section~\ref{sec:preliminaries}, we give the requisite facts about quantum coding theory and random quantum states to which we refer in the remainder of the paper.
Section~\ref{sec:formal_defs} formally defines the $\lsn$ problem, as well as its worst-case variant, which we call the quantum nearest code state problem ($\qncp$) in reference to the classical nearest codeword problem ($\ncp$).
Here, we define multiple variants of these problems---all of which being classically equivalent---and discuss how certain quantum phenomena render them inequivalent (or, in cases where they are non-trivially equivalent, we prove them).
In Section~\ref{sec:classical_lsn}, we introduce a useful representation of the $\lsn$ task which is completely classical.
The classical representation is a key technical tool for our proofs.
We show in Section~\ref{sec:self_reducibility} that $\lsn$ satisfies a certain search-to-decision reduction, as well as a decision-to-search reduction (which is non-trivial, as we will show, because solutions are not efficiently verifiable).
However, under certain structural assumptions, $\lsn$ does not admit a random self-reduction.
In Section~\ref{sec:hardness_lsn}, we prove that $\lpn$ with constant rate reduces to $\lsn$ with arbitrary $k$, including $k=1$.
Finally, we conclude with discussion in Section~\ref{sec:discussion}.

\section{Preliminaries} \label{sec:preliminaries}
We establish our notation for common objects in stabilizer coding theory, including the Pauli and Clifford groups, stabilizer codes, and the noise model we consider.

\subsection{Pauli and Clifford Groups}
Let $\mathrm{U}(d)$ be the group of $d \times d$ unitary matrices.
The operators $\vec{I}, \vec{X}, \vec{Y}, \vec{Z} \in \mathrm{U}(2)$ denote the usual Pauli operators, i.e.
\begin{align}
    \vec{I} := \ketbra{0}{0} + \ketbra{1}{1},\; \vec{X} :=  \ketbra{0}{1} + \ketbra{1}{0},\; \vec{Y} := -i \ketbra{0}{1} + i\ketbra{1}{0},\;  \vec{Z} := \ketbra{0}{0} - \ketbra{1}{1}.
\end{align}
The Pauli operators square to the identity, and further satisfy \begin{align} \label{eq:Pauli_product_rule}
    \vec{P}_i \vec{P}_j = - \vec{P}_j \vec{P}_i, \;\vec{XY} = \vec{Z}, \;\vec{YZ} = \vec{X}, \;\vec{ZX} = \vec{Y} .
\end{align}
A $n$-qubit Pauli operator is an operator of the form $\vec{P}= \vec{P}_1 \otimes \cdots \otimes \vec{P}_n \in (\C^{2 \times 2})^{\otimes n}$, where $\vec{P}_i \in \set{\vec{I}, \vec{X}, \vec{Y}, \vec{Z}}$.
When convenient, we omit the explicit $\otimes$ and write the Paulis as strings $\vec{P}_1, \ldots, \vec{P}_n$.
The \emph{weight} of $\vec{P}$, denoted $\wt(\vec{P})$ is given by the number of operators in the string which are not $\vec{I}$.
The set of $n$-qubit Pauli operators form a group in the following way.
\begin{definition}[Pauli group] \label{def:Pauli_group}
Let
\begin{align}
    \widetilde{\CP}_n := \set{c \vec{P}^{(1)} \otimes \cdots \otimes \vec{P}^{(n)} \,:\, c \in \set{\pm 1, \pm i}, \vec{P}^{(i)} \in \set{\vec{I}, \vec{X}, \vec{Y}, \vec{Z}}}
\end{align}
be the set of $n$-qubit Pauli operators with a phase. 
By Eqn.~(\ref{eq:Pauli_product_rule}), $\widetilde{\CP}_n$ is a group under the standard product, known as the $n$-qubit \emph{Weyl-Heisenberg group}.
We call $\CP_n := \widetilde{\CP}_n / \langle i \vec{I} \rangle$ the $n$-qubit \emph{Pauli group}, which consists of all Pauli strings with no phase. 
\end{definition}
Almost always, when we refer to a Pauli, we are implicitly excluding the identity string $\vec{I}$; this is inferred from context.
We next define the Clifford group, a larger group of operators whose normalizer is the Pauli group.
\begin{definition}[Clifford group] \label{def:Clifford_group}
    The $n$-qubit Clifford group $\CC_n$ is the set
    \begin{align}
        \CC_n := \set{\vec{V} \in \mathrm{U}(2^n) \,:\, \vec{V} \vec{P} \vec{V}^\dagger \in \CP_n, \,\forall \vec{P} \in \CP_n}
    \end{align}
    under the standard product.
\end{definition}
Let $\ket{\pm} = \frac{1}{\sqrt{2}}(\ket{0} \pm \ket{1})$.
A basic fact about the Clifford group is that it can be generated by gates $\vec{H}_i$, $\vec{S}_i$, and $\vec{CX}_{ij}$~\cite{nielsen2010quantum}, where
\begin{align}
    \vec{H}_i & := \frac{1}{\sqrt{2}} (\ketbra{+}{0}_i + \ketbra{-}{1}_i) ,\; \vec{S}_i := \ketbra{0}{0}_i + i \ketbra{1}{1}_i , \\
    \vec{CX}_{ij} & := \ketbra{00}{00}_{ij} + \ketbra{01}{01}_{ij} + \ketbra{11}{10}_{ij} + \ketbra{10}{11}_{ij} .
\end{align}
The subscript on $i$ indicates that the gate only acts on the $i^\text{th}$ qubit. The coefficient $i$ in the description of $\vec{S}_i$ is $\sqrt{-1}$.
Implicitly, the remaining qubits all have the identity acted upon them.
By generated, we mean that every $\vec{C} \in \CC_n$ can be written as a finite product of these generators.
\begin{definition}[Clifford circuit] \label{def:clifford_circuit}
    A $n$-qubit Clifford circuit is a sequence of gates, each of which is $\vec{H}_{i}$, $\vec{S}_i$, or $\vec{CX}_{ij}$ for some $i, j$. A specification of this circuit is done so classically.
\end{definition}

\subsection{Stabilizer Codes and Logical Operations}
A set of $n$-qubit Paulis $\vec{P}_1, \dots, \vec{P}_n \in \CP_n$ is called \emph{independent} if for any subset $\vec{P}_{i_1}, \dots, \vec{P}_{i_m}$, $\prod_{j=1}^m \vec{P}_{i_j} \neq \vec{I}$.
Equivalently, no Pauli in the set can be written as the product of other Paulis in the set.
\begin{definition}[Stabilizer codes] \label{def:stabilizer_codes}
    A $\llbracket n, k \rrbracket$ stabilizer code $S$ is the space of states which are $+1$ eigenstates of $n-k$ independent commuting Pauli strings $\vec{P}_1, \dots, \vec{P}_{n-k}$ in $\CP_n$. The set of $\llbracket n, k \rrbracket$ stabilizer codes is denoted $\Stab(n, k)$.
\end{definition} 
If a state is a linear combination of $+1$ eigenstates of $\vec{P}_1, \dots, \vec{P}_{n-k}$, then the state itself is also a $+1$ eigenstate of the same operators. Thus a stabilizer code is a linear subspace of $(\C^2)^{\otimes n}$.
One can specify a stabilizer code by providing a valid choice of $n-k$ Pauli strings.
The resulting $(n-k) \times n$ table of Pauli operators is called the \textit{stabilizer tableau} representation of the corresponding code.
Due to the Gottesman-Knill theorem~\cite{gottesman_book}, a stabilizer code is generated by a corresponding Clifford operator $\vec{U} \in \CC_n$ via the map $\ket{0^{n-k}} \otimes \ket{\psi} \to \vec{U} \ket{0^{n-k}} \otimes \ket{\psi} $. 
The stabilizers of this code are $\vec{U} \vec{Z}_1 \vec{U}^\dagger, \;\dots, \;\vec{U} \vec{Z}_{n-k} \vec{U}^\dagger$, where
\begin{align}
\begin{matrix}
\vec Z_1 &= \,\, \vec Z & \vec I & \vec I & \cdots & \vec I & \vec I & \cdots & \vec I\\
\vec Z_2 &=\,\, \vec I & \vec Z & \vec I & \cdots & \vec I & \vec I &\cdots & \vec I\\
&\vdots\,\,\,\,\,\, & \vdots & \vdots & \vdots &\ddots & \vdots & \ddots & \vdots \\
\vec Z_{n-k} &= \,\, \vec I & \vec I & \vec I & \cdots & \vec Z & \vec I &\cdots & \vec I
\end{matrix}    
\end{align}
\begin{remark}
    One way to generate a uniformly random $\llbracket n, k \rrbracket$ stabilizer code is to choose a uniformly random Clifford $\vec U \sim \CC_n$, and to output $\vec{U}\vec{Z}_1 \vec{U}^\dagger, \;\dots, \;\vec{U}\vec{Z}_{n-k} \vec{U}^\dagger$~\cite{Graeme_thesis,poremba2025learningstabilizersnoiseproblem}.
\end{remark}
The stabilizers $\vec{P}_1, \;\dots, \;\vec{P}_{n-k}$ themselves generate a subgroup of $\CP_n$, which we denote $\CS$.
Stabilizer codes generated by Paulis which are each either purely comprised of $\vec{X}$ operators or of $\vec{Z}$ operators are known as Calderbank-Steane-Shor (CSS) codes~\cite{CSS1,CSS2}.
% \begin{definition}[CSS codes] \label{def:CSS_codes}
%     A CSS code is a stabilizer code generated by a stabilizer tableau $\vec{P}_1, \;\dots, \;\vec{P}_{n-k}$ such that for all $i$, $\vec{P}_i \in \set{\vec{I}, \vec{Z}}^{\otimes n}$ or $\vec{P}_i \in \set{\vec{I}, \vec{X}}^{\otimes n}$.
% \end{definition}
CSS codes are in a certain sense the most general class of stabilizer codes which are ``essentially classical''.

We remark that two stabilizer tableaus can specify the same stabilizer code because there are many choices of basis for a vector space. 
Operations which preserve the code space are known as \emph{logical operators}. 
In this work, we will be primarily interested in logical operators which are themselves Pauli operations.
\begin{definition}[Logical Paulis] \label{def:logical_Paulis}
    Let $\vec{P}_1, \dots, \vec{P}_{n-k} \in \CP_n$ be a stabilizer tableau of a code $S$ which generates a subgroup $\CS \leq \CP_n$. The logical Paulis of $S$ is the set $\Log(S) = \set{\vec{Q} \in \CP_n \,:\, \vec{Q} \vec{P} \vec{Q} \in \CS, \forall \vec{P} \in \CS}$.
\end{definition}
The logical Paulis of $S$ themselves form a group.

\begin{definition}[Code distance]
    The distance of a code $S$ is $d = \min \set{\wt(\vec{P}) \,:\, \vec{P} \in \Log(S)}$.
\end{definition}
The detection of errors in stabilizer codes occurs by measuring the stabilizer generators $\vec{P}_1, \dots, \vec{P}_{n-k}$, which each yield a sign $\pm 1$ known collectively as a syndrome $\mathbf{a} \in \set{\pm 1}^{n-k}$.
$a_i = -1$ if and only if the error $\vec{E}$ and $\vec{P}_i$ anticommute. 
A basic fact about the code distance is that any Paulis of weight $w \leq \lfloor \frac{d-1}{2} \rfloor$ give unique stabilizer syndromes.

\paragraph{Symplectic Representation.}
It is often convenient to represent the stabilizer tableau as a binary matrix, which can be done by mapping Paulis into bitstrings known as symplectic representations.
\begin{definition}[Symplectic isomorphism] \label{def:symplectic_isomorphism}
    Let $\vec{P} = \vec{P}^{(1)} \otimes \cdots \otimes \vec{P}^{(n)} \in \CP_n$ be a (phase-free) $n$-qubit Pauli operator, where $\vec{P}^{(i)} \in \set{\vec{I}, \vec{X}, \vec{Y}, \vec{Z}} = \CP_1$. We define \begin{align}
        \Symp \,:\, \CP_n \to \Z_2^{2n}
    \end{align}
    to map $\vec{P}$ to $\vec{y}$, such that $(\vec{y}_{i}, \vec{y}_{i+n})$ is $(0, 0)$ if $\vec{P}^{(i)} = \vec{I}$, $(1, 0)$ if $\vec{P}^{(i)} = \vec{X}$, $(0, 1)$ if $\vec{P}^{(i)} = \vec{Z}$, and $(1, 1)$ if $\vec{P}^{(i)} = \vec{Y}$.
\end{definition}
$\Symp$ is a bijective map because there is a bijection between $\set{\vec{I}, \vec{X}, \vec{Y}, \vec{Z}}$ and $\Z_2^2$. 
It is also a homomorphism between $\CP_n$ as a group under the standard product and $\Z_2^{2n}$ as a group under addition.
As such, $\Symp$ is an isomorphism that completely characterizes elements of $\CP_n$.
\begin{definition}[Symplectic form and inner product] \label{def:symplectic_form_inner_product}
    Let \begin{align}
        \boldsymbol{\W} = \begin{bmatrix}
            0 & \vec{I} \\ \vec{I} & 0
        \end{bmatrix} \in \Z_2^{2n \times 2n}
    \end{align}
    be the \emph{symplectic form}. 
    The symplectic inner product between $\vec{v}, \vec{w} \in \Z_2^{2n}$ is defined as $\vec{v} \odot \vec{w} = \mathbf{v}^T \boldsymbol{\W} \vec{w}$. 
    If $\vec{v} \odot \vec{w} = 0$, then we say that $\vec{v}$ and $\vec{w}$ are \emph{symplectically orthogonal}.
\end{definition}
Vectors $\vec{v}$ and $\vec{w}$ are symplectically orthogonal if and only if their corresponding Paulis $\Symp^{-1}(\vec{v}), \;\Symp^{-1}(\vec{w}) \in \CP_n$ commute.

\begin{definition}[Symplectic check matrix] \label{def:symplectic_check_matrix}
    A matrix $\vec{H} \in \Z_2^{(n-k) \times 2n}$ is a \emph{symplectic check matrix} if every pair of rows in $\vec{H}$ are symplectically orthogonal. The symplectic check matrix corresponding to a stabilizer tableau $\CT = \set{\vec{P}_1, \;\dots, \;\vec{P}_{n-k}}$, denoted $\Symp(\CT) \in \Z_2^{2n \times m}$, is a symplectic check matrix such that the $i$th row of $\Symp(\CT)$ is $\Symp(\vec{P}_i)$.
\end{definition}
A symplectic check matrix is simply a bitstring representation of a valid stabilizer tableau. 
It follows that a uniformly random symplectic check matrix in $\Z_2^{(n-k) \times 2n}$ corresponds to a uniformly random $\llbracket n, k \rrbracket$ stabilizer code.

The stabilizer tableau/symplectic check matrix and the corresponding Clifford circuit are equivalent in the sense that one can be efficiently computed from the other.
\begin{theorem}[Equivalence of circuits and tableaus] \label{thm:circuit_tableau_equivalence}
    Given a stabilizer tableau of $n-k$ $n$-qubit Paulis $\vec{P}_1, \dots, \vec{P}_{n-k}$ or a symplectic check matrix, there exists a classical algorithm running in polynomial time which computes a description of a Clifford circuit $\vec{C}$ such that the tableau $\vec{C}\vec{Z}_1 \vec{C}^\dagger, \;\dots, \;\vec{C} \vec{Z}_{n-k} \vec{C}^\dagger$ generates the same stabilizer subgroup as $\vec{P}_1, \;\dots, \;\vec{P}_{n-k}$. The same holds vice versa: given $\vec{C}$, there is a polynomial time classical algorithm computing a corresponding stabilizer tableau.
\end{theorem}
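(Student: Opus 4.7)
The plan is to handle the two directions separately, since both rely on the symplectic representation set up in Definitions~\ref{def:symplectic_isomorphism}--\ref{def:symplectic_check_matrix} but in essentially opposite ways.

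For the circuit-to-tableau direction, the idea is to propagate each generator $\vec{Z}_i$ forward through the circuit by conjugation. The key structural fact I would use is that conjugation by each of the generators $\vec{H}_i$, $\vec{S}_i$, $\vec{CX}_{ij}$ acts on $\CP_n$ as an explicit affine map on $\Symp(\CP_n)\cong\Z_2^{2n}$: $\vec{H}_i$ swaps the $X$-bit and $Z$-bit on qubit $i$; $\vec{S}_i$ adds the $X$-bit into the $Z$-bit on qubit $i$; and $\vec{CX}_{ij}$ propagates the $X$-bit from $i$ to $j$ and the $Z$-bit from $j$ to $i$ (each up to a known sign change that can be tracked with a constant amount of bookkeeping per gate). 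Starting from the symplectic representation of each $\vec{Z}_i$ (a standard basis vector) and marching through the circuit gate-by-gate, each Pauli update costs $O(1)$ work and must be performed for each of the $n-k$ rows and each of the $\ell$ gates, giving a total cost of $O((n-k)\,n\,\ell)=\poly(n,\ell)$ to produce the full symplectic check matrix $\Symp(\CT)$.

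For the tableau-to-circuit direction, the plan is to build a Clifford $\vec{D}$ that reduces the given tableau to the canonical $\vec{Z}_1,\ldots,\vec{Z}_{n-k}$ form and then output $\vec{C}=\vec{D}^\dagger$. This is a symplectic Gaussian elimination and proceeds inductively on the number of remaining rows. At stage $j$, the current first generator is some $\vec{P}\in\CP_n$ supported on qubits $j,\ldots,n$. First, apply $\vec{H}_i$ and $\vec{S}_i$ on each qubit $i\in\{j,\ldots,n\}$ (at most two gates per qubit) to convert the single-qubit factor of $\vec{P}$ on each active qubit into an $\vec{X}$. Next, use a cascade of $\vec{CX}_{j,i}$ gates to clear every $\vec{X}$ off qubits $j+1,\ldots,n$, leaving $\vec{X}_j$. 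Finally, apply $\vec{H}_j$ to obtain $\vec{Z}_j$. After these gates are applied (by conjugation) to \emph{all} remaining stabilizers, use the freedom to multiply generators—which leaves the stabilizer subgroup unchanged—to clear any residual $\vec{Z}_j$ from the other rows. The remaining $n-k-j$ generators now commute with $\vec{Z}_j$ and are independent of $\vec{Z}_j$, which forces them to be supported on qubits $j+1,\ldots,n$; we then recurse. Each stage uses $O(n)$ gates, so the full circuit has length $O(n(n-k))=\poly(n)$.

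The main obstacle is verifying that the invariants of the induction step are maintained, namely that after each round the surviving generators still form an independent commuting family on the unprocessed qubits and that the generated subgroup matches the original one. Commutativity and independence are preserved because Clifford conjugation is a group automorphism of $\CP_n/\{\pm\vec{I}\}$ that respects the symplectic inner product from Definition~\ref{def:symplectic_form_inner_product}; support on qubits $>j$ follows because any Pauli commuting with $\vec{Z}_j$ must have a trivial $\vec{X}$-component on qubit $j$ and, if further independent of $\vec{Z}_j$, also a trivial $\vec{Z}$-component after the generator-multiplication clean-up step. The subgroup equality follows because every operation performed—Clifford conjugation and multiplication of a generator by another—preserves the subgroup $\CS=\langle\vec{P}_1,\ldots,\vec{P}_{n-k}\rangle$ up to the conjugation by $\vec{D}$, so $\vec{D}\,\CS\,\vec{D}^\dagger=\langle\vec{Z}_1,\ldots,\vec{Z}_{n-k}\rangle$, which gives $\vec{C}\langle\vec{Z}_1,\ldots,\vec{Z}_{n-k}\rangle\vec{C}^\dagger=\CS$ as required.
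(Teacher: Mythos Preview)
Your proposal is correct and matches the paper's approach: the paper's proof simply cites the tableau-to-circuit direction as ``a classic result based on a symplectic variant of Gaussian elimination'' and the circuit-to-tableau direction as ``equivalent to the Gottesman-Knill theorem,'' which are precisely the two procedures you spell out. You have in fact provided more detail than the paper does, but the underlying ideas are identical.
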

\begin{proof}
    The forward direction is a classic result based on a symplectic variant of Gaussian elimination, and the backward direction is equivalent to the Gottesman-Knill theorem~\cite{gottesman_book,nielsen2010quantum}.
\end{proof}

In the symplectic representation, the syndrome $\mathbf{a} \in \set{\pm 1}^{n-k}$ transforms into $\mathbf{v} = \frac{1}{2} (1 -  \mathbf{a}) \in \Z_2^{n-k}$. 
If an error $\vec{E}$ gives stabilizer syndrome $\mathbf{a}$ on a code $S$, then $\vec{H \W} \Symp(\vec{E}) = \mathbf{v}$, where $\vec{H}$ is the corresponding check matrix.
$E$ is a stabilizer or logical operator if and only if 
$\vec{H \W} \Symp(\vec{E}) = 0$.

\begin{definition}[Sympectic representation of Clifford operator]
For any Clifford operator $\vec{C}$, it has an action on $\mathcal{P}_n$ by conjugation, taking $\vec{P}$ to $\vec{C}\vec{P}\vec{C}^{-1}$. 
We define the matrix $\vec{T} = \Symp(\vec{C}) \in \Z_2^{2n \times 2n}$ to be the corresponding linear map such that 
\begin{equation}
\vec{T}\cdot \Symp(\vec{P})=\Symp(\vec{C}\vec{P}\vec{C}^{-1}).
\end{equation}
\end{definition}
Linear maps $\vec{T} \in \Z_2^{2n \times 2n}$ that preserve the symplectic inner product $\odot$ correspond precisely to Clifford operators $\vec{C}$. 
Furthermore, any two Cliffords that have the same symplectic representation differ by a Pauli operator, i.e. if $\Symp(\vec{C}_1) = \Symp(\vec{C}_2)$, then $\vec{C}_1 = \vec{P}\vec{C}_2$ for some Pauli operator $\vec{P}$.

\subsection{Noise Models}
% Depolarizing noise, erasure? etc.
For the purpose of this work, a noise model is a distribution of linear operators which are applied to a quantum state.
It is known that any noise model is equivalent to a noise model in which the support of the distribution consists exclusively of $n$-qubit Pauli operators~\cite{PhysRevA.52.R2493}.
The most standard such model---and the model in which we will be interested---is a model in which every Pauli occurs with equal probability.
\begin{definition}[Depolarizing channel] \label{def:depolarizing_channel}
    The 1-qubit depolarizing channel $\CD_p$ with parameter $p \in [0, \frac34]$ is a distribution over $1$-qubit Pauli operators such that $\Pr[\vec{I}] = 1 - p$, and $\Pr[\vec{P}] = \frac{p}{3}$ for each $\vec{P} \in \set{\vec{X}, \vec{Y}, \vec{Z}}$.
    As a quantum channel, the action of $\CD_p$ on a state $\rho \in \C^{2 \times 2}$ is given by \begin{align}
        \CD_p(\rho) = (1 - p) \rho + \frac{p}{3} \vec{X} \rho \vec{X} + \frac{p}{3} \vec{Y} \r \vec{Y} + \frac{p}{3} \vec{Z} \rho \vec{Z} = \left(1 - \frac{4}{3} p\right) \rho + \frac{4}{3} p \operatorname{Tr}[\rho] \frac{\vec{I}}{2} .
    \end{align}
\end{definition}
For a $n$-qubit state, we use the noise model $\CD_p^{\otimes n}$, i.e. independent single-qubit depolarizing noise on each qubit.
Such a model is the most widely considered for purposes of using quantum error-correcting codes as a \emph{quantum memory}, though there are more sophisticated models required for error-corrected quantum \emph{computation} which we do not consider here.
Nevertheless, even for models restricted to Pauli errors, there are many reasonable distributions that one can consider for practical purposes.
For example, we can decouple the $\vec{X}$ and $\vec{Z}$ errors (so that the probability of a $\vec{Y}$ error is equal to the probability that \textit{both} an $\vec{X}$ and a $\vec{Z}$ error occur), or bias the noise so that one Pauli error is substantially more likely than the others.
While such models are certainly useful as techniques to capture the exact noise model of a certain quantum device, they are not natural relative to uniformly random stabilizer codes, because such a code is symmetric under a relabeling of $\vec{X}$, $\vec{Y}$, and $\vec{Z}$.
Therefore, for the purpose of studying the complexity of decoding a random stabilizer code, it is natural to fix $\CD_p^{\otimes n}$ as the error model.

Recall that classically, the noise model is always given by $\Ber(p)$ independently for each bit.
We now show that there is a duality between the Bernoulli and depolarizing distributions, in the sense that we can always decompose a depolarizing distribution with probability parameter $p$ into  the product of independent random $\vec X, \vec Y$, and $\vec Z$ errors each with a different probability $q$.
In the symplectic representation, this decomposition appears as the sum of three independent Bernoulli-type random variables.

\begin{lemma}[Depolarizing decomposition]
    Let $b_{\vec X}, b_{\vec Y}, b_{\vec Z} \sim \Ber(q)$ be independent random bits.
    Then $\vec P = \vec X^{b_{\vec X}} \vec Y^{b_{\vec Y}} \vec Z^{b_{\vec Z}}$ (as a phase-free Pauli) is distributed as $\CD_p$, where $p = 3[q^2(1-q) + q(1-q)^2]$ for any $q \in [0, \frac{1}{2}]$ and $p \in [0, \frac34]$.
\end{lemma}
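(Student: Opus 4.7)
The plan is to perform a direct case analysis over the eight possible values of $(b_{\vec X}, b_{\vec Y}, b_{\vec Z}) \in \{0,1\}^3$. For each case, I will (a) simplify the product $\vec X^{b_{\vec X}} \vec Y^{b_{\vec Y}} \vec Z^{b_{\vec Z}}$ to a single-letter Pauli using the relations in Equation~(\ref{eq:Pauli_product_rule}), dropping phases since we are working in the phase-free Pauli group $\CP_1$, and (b) compute the probability of that case as a product of Bernoulli factors. Then I will collect terms by the resulting phase-free Pauli and compare with the depolarizing distribution of Definition~\ref{def:depolarizing_channel}.

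Explicitly, the $(0,0,0)$ case yields $\vec I$ with probability $(1-q)^3$; each of $(1,0,0),(0,1,0),(0,0,1)$ yields $\vec X,\vec Y,\vec Z$ respectively, each with probability $q(1-q)^2$; the three weight-two combinations $(1,1,0),(1,0,1),(0,1,1)$ each have probability $q^2(1-q)$ and, upon simplification through $\vec X\vec Y=i\vec Z$, $\vec X\vec Z=-i\vec Y$, $\vec Y\vec Z=i\vec X$, yield (phase-free) $\vec Z,\vec Y,\vec X$; and $(1,1,1)$ gives $\vec X\vec Y\vec Z=i\vec I$, contributing $q^3$ to the identity. Summing over the two contributing cases for each non-identity Pauli gives
\begin{align}
\Pr[\vec P = \vec Q] = q(1-q)^2 + q^2(1-q) = q(1-q), \qquad \vec Q \in \{\vec X,\vec Y,\vec Z\},
\end{align}
and $\Pr[\vec P = \vec I] = (1-q)^3 + q^3 = 1 - 3q(1-q)$ by a short expansion. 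Setting $p = 3q(1-q) = 3[q^2(1-q) + q(1-q)^2]$ recovers exactly the probabilities $1-p$ and $p/3$ prescribed by $\CD_p$.

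It remains to check that the claimed range of $q \in [0,\tfrac12]$ corresponds to valid $p \in [0,\tfrac34]$; this follows because $q(1-q)$ is monotonically increasing on $[0,\tfrac12]$ with maximum $\tfrac14$, so $p = 3q(1-q) \in [0, \tfrac34]$ and the map is a bijection onto the valid parameter range. There is no substantive obstacle: the only subtlety is keeping track of the phases produced by the Pauli multiplication rules and observing that they are discarded upon passage to $\CP_1$. The symmetry $\vec X \leftrightarrow \vec Y \leftrightarrow \vec Z$ in the final tally is a structural consequence of each non-identity Pauli appearing in exactly two of the eight product cases (once at weight one, once at weight two), which is what makes the decomposition yield the fully symmetric depolarizing distribution rather than a biased one.
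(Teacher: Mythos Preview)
Your proof is correct and takes essentially the same approach as the paper: a direct case-by-case computation of the resulting Pauli and its probability, followed by matching against the depolarizing distribution. The paper organizes the computation in two stages (first apply $\vec X$ and $\vec Z$, then $\vec Y$) rather than enumerating all eight triples at once, but the content is identical; your version additionally verifies the parameter range, which the paper omits.
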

\begin{proof}
    Under $\CD_{p}$, the probabilities of $\vec I, \;\vec X, \;\vec Y, \;\vec Z$ are respectively $1-p, \;p/3, \;p/3, \;p/3$.
    Consider the distribution of a sampling process in which we apply a $\vec X$ error with probability $q$, and independently a $\vec Z$ error with probability $q$.
    Then the probabilities of $\vec I, \;\vec X, \;\vec Y, \;\vec Z$ are respectively $(1-q)^2, \;q(1-q), \;q^2, \;q(1-q)$.
    Now, also apply a $\vec Y$ error with probability $q$.
    Then the probabilities of $\vec I, \;\vec X, \;\vec Y, \;\vec Z$ become $(1-q)^3 + q^3, \;q^2(1-q) + q(1-q)^2, \;q^2(1-q) + q(1-q)^2, \;q^2(1-q) + q(1-q)^2$.
    This is precisely the distribution of $\CD_p$ as claimed.
\end{proof}

\begin{corollary}[Bernoulli representation of depolarizing noise] \label{corollary:bernoulli-depolarizing_duality}
    For $p \in [0, \frac34]$, let $\vec e \in \Z_2^{2n}$ have a distribution such that $\Symp^{-1}(\vec e) \sim \CD_p^{\otimes n}$.
    Then $\vec e$ is equidistributed as $(\vec e_{\vec X}, \vec e_{\vec Z}) + (\vec e_{\vec Y}, \vec e_{\vec Y})$, 
    where $e_{\vec X}, e_{\vec Y}, e_{\vec Z} \in \Z_2^n$ are i.i.d. $\Ber(q)^{\otimes n}$ and $q$ is the unique solution to $p = 3[q^2(1-q) + q(1-q)^2] \leq 3q$ such that $q \in [0, \frac{1}{2}]$.
\end{corollary}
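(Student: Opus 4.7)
The plan is to reduce the corollary to an $n$-fold product of the preceding single-qubit depolarizing decomposition lemma, translated through the symplectic isomorphism. The core observation is that $\Symp$ is not just a bijection but a group homomorphism from $(\CP_n, \cdot)$ to $(\Z_2^{2n}, +)$, so Pauli products on the Hilbert-space side become bitstring sums on the symplectic side. On a single qubit, $\Symp(\vec X) = (1,0)$, $\Symp(\vec Z) = (0,1)$, and $\Symp(\vec Y) = (1,1)$, so the depolarizing-decomposition lemma applied to the qubit-$i$ error $\vec P_i = \vec X_i^{(\vec e_{\vec X})_i}\vec Y_i^{(\vec e_{\vec Y})_i}\vec Z_i^{(\vec e_{\vec Z})_i}$ immediately yields the symplectic identity $\Symp(\vec P_i) = ((\vec e_{\vec X})_i + (\vec e_{\vec Y})_i,\; (\vec e_{\vec Z})_i + (\vec e_{\vec Y})_i) \pmod 2$. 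Stacking these across qubits gives the claimed equality in distribution $\vec e \stackrel{d}{=} (\vec e_{\vec X}, \vec e_{\vec Z}) + (\vec e_{\vec Y}, \vec e_{\vec Y})$.

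Concretely, I would first invoke the preceding lemma independently on each of the $n$ qubits: since $\CD_p^{\otimes n}$ is a product distribution and $(\vec e_{\vec X}, \vec e_{\vec Y}, \vec e_{\vec Z})$ are defined coordinatewise as i.i.d.\ $\Ber(q)$ bits, the marginals at each qubit match those of the lemma and joint independence across qubits is inherited from both sides. Next, I would apply $\Symp$ and use its additive-homomorphism property to combine the three contributions, noting that the quotient by $\langle i\vec I\rangle$ in the definition of $\CP_n$ washes away any ordering-dependent phase that appears in $\widetilde{\CP}_n$, which is precisely why both the preceding lemma and the corollary refer to phase-free Paulis.

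Finally, I would handle the parameter $q$ by simplifying $q^2(1-q) + q(1-q)^2 = q(1-q)$, so the defining equation collapses to the quadratic $p = 3q(1-q)$, i.e.\ $3q^2 - 3q + p = 0$, with roots $q = \tfrac{1}{2} \pm \tfrac{1}{6}\sqrt{9-12p}$. The discriminant is nonnegative exactly when $p \in [0,3/4]$, and the branch with the minus sign is the unique root in $[0,\tfrac{1}{2}]$, taking the values $0$ at $p=0$ and $\tfrac{1}{2}$ at $p=3/4$. The auxiliary bound $p \leq 3q$ is immediate from $p = 3q(1-q) \leq 3q$ since $1-q \leq 1$.

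The main obstacle is essentially bookkeeping rather than mathematical: one must be careful that the depolarizing-decomposition lemma is being read in the phase-free group $\CP_n$, so that the symplectic representation is well-defined regardless of the multiplicative ordering chosen for $\vec X^{b_{\vec X}}\vec Y^{b_{\vec Y}}\vec Z^{b_{\vec Z}}$. Beyond this caveat, the result is a direct corollary of the lemma together with the homomorphism property of $\Symp$ and an elementary quadratic solve.
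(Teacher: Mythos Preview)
Your proposal is correct and follows exactly the intended route: apply the single-qubit depolarizing decomposition lemma coordinatewise, push it through the additive homomorphism $\Symp$, and solve the quadratic $p = 3q(1-q)$ for the unique root in $[0,\tfrac12]$. The paper in fact provides no separate proof of the corollary at all, treating it as immediate from the preceding lemma, so your write-up is if anything more thorough than what appears in the paper.
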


It is also useful to observe that given a certain Bernoulli or depolarizing random variable, we can always add another Bernoulli or depolarizing random variable to increase the noise parameter to a desired level.

\begin{lemma}[Bernoulli convolution] \label{lemma:Bern_convolution}
    Let $\vec y \sim \Ber(p)$ for $p \in [0, \frac{1}{2}]$. Then for any $q \in [p, \frac{1}{2}]$, $\vec y + \vec y' \sim \Ber(q)$, where $\vec y' \sim \Ber(u)$ is an independent random variable and $u = \frac{q-p}{1 - 2p}$.
\end{lemma}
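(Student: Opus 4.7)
The plan is to compute the distribution of $\vec y + \vec y'$ directly and then verify that the claimed value of $u$ both produces the desired parameter $q$ and is itself a legitimate Bernoulli parameter in $[0, \tfrac{1}{2}]$.

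First, I would observe that for independent Bernoulli variables $\vec y \sim \Ber(p)$ and $\vec y' \sim \Ber(u)$, the sum $\vec y + \vec y' \pmod 2$ equals $1$ precisely when exactly one of the two variables is $1$. By independence, the probability of this event is
\begin{align}
    \Pr[\vec y + \vec y' = 1] = p(1 - u) + (1 - p) u = p + (1 - 2p)\, u.
\end{align}
Setting this equal to $q$ and solving for $u$ immediately yields $u = \tfrac{q - p}{1 - 2p}$, which is well-defined since $p \in [0, \tfrac{1}{2}]$ ensures $1 - 2p > 0$ (the edge case $p = \tfrac{1}{2}$ is degenerate and can be handled separately, since then $\vec y$ is already uniform and $\vec y + \vec y'$ is uniform for any $\vec y'$).

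Next, I would check that $u \in [0, \tfrac{1}{2}]$ so that $\Ber(u)$ is a valid distribution. Nonnegativity is immediate from $q \geq p$ and $1 - 2p > 0$. For the upper bound, since $q \leq \tfrac{1}{2}$, we have
\begin{align}
    u = \frac{q - p}{1 - 2p} \leq \frac{\tfrac{1}{2} - p}{1 - 2p} = \frac{1}{2}.
\end{align}
Combining these two steps shows that $\vec y + \vec y'$ is a Bernoulli random variable with the desired parameter $q$.

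There is no genuine obstacle here; the lemma is a direct calculation in $\Z_2$. The only subtlety worth flagging is the boundary case $p = \tfrac{1}{2}$, where the formula for $u$ is not needed because the conclusion is trivial. I would include a one-line remark to that effect for completeness.
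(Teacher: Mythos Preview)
Your proof is correct and follows the same direct computation as the paper, which simply evaluates $\Pr[\vec y + \vec y' = 1] = p(1-u) + u(1-p) = p + (q-p) = q$. You are in fact more thorough than the paper, since you also verify that $u \in [0,\tfrac12]$ and flag the $p=\tfrac12$ edge case.
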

\begin{proof}
    $\Pr[\vec y + \vec y' = 1] = p(1-u) + u(1-p) = p + u - 2 p u = p + (q-p) = q$.
\end{proof}

\begin{lemma}[Depolarizing convolution] \label{lemma:depolarizing_convolution}
    Let $\vec P_1, \vec P_2 \in \CP_1$ be single-qubit phase-free Paulis such that $\vec P_1 \sim \CD_p$ and, independently, $\vec P_2 \sim \CD_u$ where $u = \frac{q-p}{1 - \frac{4}{3} p}$ for any $p \in [0, \frac34]$ and $q \in [p, \frac34]$.
    Then $\vec P_1 \vec P_2 \sim \CD_{q}$.
\end{lemma}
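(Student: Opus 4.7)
The plan is to compute the distribution of the product $\vec{P}_1 \vec{P}_2$ directly from the definitions and match it with $\CD_q$. Since both factors are drawn from symmetric depolarizing distributions over $\{\vec{I}, \vec{X}, \vec{Y}, \vec{Z}\}$, the product distribution must itself be invariant under any relabeling of $\vec{X}, \vec{Y}, \vec{Z}$ (because such a relabeling simply permutes the independent factor distributions, which are each invariant). Hence it suffices to compute $\Pr[\vec{P}_1 \vec{P}_2 = \vec{I}]$ and let symmetry fix the rest: the three non-identity probabilities must each equal $\tfrac{1}{3}\Pr[\vec{P}_1 \vec{P}_2 \neq \vec{I}]$, so the product distribution is automatically of the form $\CD_{q'}$ for some $q'$, and it remains only to solve $q' = q$.

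To carry this out, I would observe that $\vec{P}_1 \vec{P}_2 = \vec{I}$ occurs in exactly four disjoint cases: either both are $\vec{I}$, or both equal the same non-identity Pauli. Independence then yields
\begin{equation}
\Pr[\vec{P}_1\vec{P}_2 = \vec{I}] \;=\; (1-p)(1-u) \;+\; 3 \cdot \frac{p}{3} \cdot \frac{u}{3} \;=\; 1 - p - u + \frac{4pu}{3}.
\end{equation}
Setting this equal to $1 - q$ and solving gives $q = p + u - \tfrac{4}{3}pu$, i.e. $u(1 - \tfrac{4}{3}p) = q - p$, which rearranges to exactly the hypothesized $u = (q-p)/(1 - \tfrac{4}{3}p)$. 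Thus with this choice of $u$ we have $\Pr[\vec{P}_1\vec{P}_2 = \vec{I}] = 1 - q$, and by the symmetry argument above each of $\Pr[\vec{P}_1\vec{P}_2 = \vec{X}], \Pr[\vec{P}_1\vec{P}_2 = \vec{Y}], \Pr[\vec{P}_1\vec{P}_2 = \vec{Z}]$ equals $q/3$, so $\vec{P}_1\vec{P}_2 \sim \CD_q$ as desired.

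As a sanity check, one can alternatively verify this via the channel-action form from Definition~\ref{def:depolarizing_channel}, where $\CD_p(\rho) = (1 - \tfrac{4}{3}p)\rho + \tfrac{4}{3}p \cdot \tfrac{\vec{I}}{2}$. Composing, $\CD_u \circ \CD_p$ is again affine in $\rho$ with identity-coefficient $(1 - \tfrac{4}{3}u)(1 - \tfrac{4}{3}p)$, and equating this with $1 - \tfrac{4}{3}q$ reproduces the same relation between $p, u, q$. The only checks needed are that $u \in [0, \tfrac{3}{4}]$ given the stated ranges of $p$ and $q$: nonnegativity follows from $q \geq p$, and the upper bound follows from $u = (q-p)/(1 - \tfrac{4}{3}p) \leq (\tfrac{3}{4} - p)/(1 - \tfrac{4}{3}p) = \tfrac{3}{4}$, so the construction is well-defined. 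There is no substantive obstacle here — the lemma is a direct algebraic verification once one exploits the $\vec{X}\leftrightarrow\vec{Y}\leftrightarrow\vec{Z}$ symmetry to reduce the computation to a single case.
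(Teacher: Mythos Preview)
Your proof is correct and takes essentially the same approach as the paper: both compute $\Pr[\vec{P}_1\vec{P}_2 = \vec{I}] = (1-p)(1-u) + 3\cdot\tfrac{p}{3}\cdot\tfrac{u}{3}$ and verify the resulting parameter equals $q$. The only minor difference is that the paper computes the non-identity probability explicitly rather than invoking the $\vec{X}\leftrightarrow\vec{Y}\leftrightarrow\vec{Z}$ symmetry, and your additional checks (channel-composition sanity check and verifying $u \in [0,\tfrac{3}{4}]$) go slightly beyond what the paper includes.
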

\begin{proof}
    Let $\vec P = \vec P_1 \vec P_2$. Then $\Pr[\vec P = \vec I] = (1-p)(1-u) + 3 \frac{p}{3} \frac{u}{3}$ since $\vec P = I$ if and only if $\vec P_1 = \vec P_2$.
    Likewise, for each $\vec P_0 \in \set{\vec X, \vec Y, \vec Z}$, $\Pr[\vec P = \vec P_0] = (1-p)\frac{u}{3} + \frac{p}{3} (1-u) + 2 \frac{p}{3} \frac{u}{3}$.
    Thus, $\vec P \sim \CD_{(1-p) u + (1-u)p + \frac{2}{3} pu}$.
    By simplifying, $(1-p) u + (1-u)p + \frac{2}{3} pu = p + u - \frac{4}{3} pu = p + p + (1-p) = q$ as claimed.
\end{proof}

\subsection{Haar Measure Over States}
To generate instances of $\lpn$, we sample a random logical state $\vec x \in \Z_2^k$ and encode it into a random code before adding noise.
Quantumly, one way to analogously generate a random logical state is to choose a uniformly random $k$-qubit state.
This notion of a uniformly random state is given by the Haar measure~\cite{mele2024introduction}.

\begin{definition}[Haar measure] \label{def:haar_measure}
    The Haar measure $\mu$ over the group of $D \times D$ unitary matrices $\text{U}(D)$ is the unique probability measure which is invariant under translation. That is, for all $\vec{V} \in \text{U}(D)$ and integrable functions $f$,
    \begin{align}
        \int f(\vec{V}\vec{U}) \, d\m(\vec{U}) = \int f(\vec{U}) \, d\m(\vec{U}) = \int f(\vec{U}\vec{V}) \, d\m(\vec{U}) .
    \end{align}
\end{definition}
We refer to a \emph{Haar-random state} $\ket{\psi} \sim \mu_k$ to be a state constructed by sampling $\vec{U} \sim \mu$ a unitary drawn from the Haar measure over $\text{U}(2^k)$ and then letting $\ket{\psi} = \vec{U} \ket{0^k}$.
It will be useful in this work to study the measurement distribution of a Haar-random state.
We therefore give the following lemma.

\begin{lemma}[Measurement probabilities are marginally $\text{\textsf{Beta}}$-distributed]
\label{lemma:haar-beta_distribution}
Let $\ket{\psi} \sim \mu_k$ and let $K = 2^k$.
For $\vec x \in \Z_2^k$, let $p_{\psi}(\vec x) = |\langle \vec x | \psi \rangle|^2$.
Then for all $\vec x \in \Z_2^k$, $p_{\psi}(\vec x) \sim \text{\textsf{Beta}}(1, K-1)$, where the $\text{\textsf{Beta}}(a, b)$ distribution has density $f(z) \propto z^{a - 1} (1-z)^{b-1}$ for $z \in [0, 1]$ and $f(z) = 0$ otherwise.
\end{lemma}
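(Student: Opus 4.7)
The plan is to invoke the standard Gaussian representation of a Haar-random pure state. First, by the translation invariance of the Haar measure (\Cref{def:haar_measure}), for any computational basis vector $\ket{\vec x}$, there is a permutation unitary $\vec \Pi_{\vec x} \in \text{U}(K)$ mapping $\ket{\vec x} \mapsto \ket{0^k}$, so $\langle \vec x | \vec U | 0^k\rangle = \langle 0^k | \vec \Pi_{\vec x} \vec U | 0^k\rangle$, and $\vec \Pi_{\vec x} \vec U$ is again Haar-random. Hence $p_\psi(\vec x) \stackrel{d}{=} p_\psi(0^k)$ for every $\vec x$, and it suffices to compute the distribution of $|\langle 0^k | \psi\rangle|^2$.

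Next, I would realize a Haar-random state as a normalized complex Gaussian vector. Let $z_0, z_1, \ldots, z_{K-1}$ be i.i.d.\ standard complex Gaussians, i.e.\ $z_j = x_j + i y_j$ with $x_j, y_j \sim \mathcal{N}(0, 1/2)$ independent, and set
\begin{equation}
    \ket{\phi} \;=\; \frac{1}{\sqrt{\sum_{j=0}^{K-1} |z_j|^2}} \sum_{j=0}^{K-1} z_j \ket{j}.
\end{equation}
The distribution of $\ket{\phi}$ is invariant under left-multiplication by any $\vec V \in \text{U}(K)$ (since the law of $(z_0, \ldots, z_{K-1})$ is unitarily invariant) and supported on the unit sphere of $\mathbb{C}^K$, so $\ket{\phi}$ is equal in distribution to $\ket{\psi} = \vec U \ket{0^k}$.

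Now $|\langle 0^k | \phi\rangle|^2 = |z_0|^2 / \sum_{j=0}^{K-1} |z_j|^2$. With my normalization, each $|z_j|^2$ is an independent $\text{Exp}(1)$ random variable, equivalently $\text{Gamma}(1, 1)$ in shape-rate form. Therefore $A := |z_0|^2 \sim \text{Gamma}(1, 1)$ is independent of $B := \sum_{j=1}^{K-1} |z_j|^2 \sim \text{Gamma}(K-1, 1)$ (the sum of $K-1$ independent $\text{Gamma}(1,1)$'s). By the standard Beta--Gamma relation, if $A \sim \text{Gamma}(a, \theta)$ and $B \sim \text{Gamma}(b, \theta)$ are independent, then $A/(A+B) \sim \text{Beta}(a, b)$; applied here this gives $p_\psi(\vec x) \sim \text{Beta}(1, K-1)$, as claimed.

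There is no real obstacle: the only technical point to be careful about is the normalization convention for complex Gaussians and the invocation of the Beta--Gamma identity, both of which are routine. If one wanted a fully self-contained derivation of the Beta--Gamma step, one could compute the joint density of $(A, B)$ on $\mathbb{R}_{\geq 0}^2$, change variables to $(T, R) = (A+B, A/(A+B))$ with Jacobian $T$, and read off that the marginal density of $R$ is $(K-1)(1-r)^{K-2}$ on $[0,1]$, which is exactly the $\text{Beta}(1, K-1)$ density.
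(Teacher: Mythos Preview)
Your proof is correct and follows essentially the same approach as the paper: represent the Haar-random state via normalized complex Gaussians and apply the Beta--Gamma duality to the ratio $|z_0|^2 / \sum_j |z_j|^2$. The only cosmetic differences are your preliminary reduction to $\vec x = 0^k$ by unitary invariance (the paper works directly with arbitrary $\vec x$) and your choice of variance-$1/2$ Gaussians so that $|z_j|^2 \sim \text{Gamma}(1,1)$ rather than the paper's $\text{Gamma}(1,2)$; neither affects the argument.
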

\begin{proof}
    One way to generate a Haar-random state is via standard complex Gaussians, i.e. for $\ket{\psi} \propto \sum_{\vec x \in \Z_2^k} \a_{\vec x} \ket{\vec x}$, let $\a_{\vec x} = Z_{\text{Re}, \vec x} + i Z_{\text{Im}, \vec x}$, where all $Z_{\text{Re}, \vec x} , Z_{\text{Im}, \vec x}$ across $\vec x$ are i.i.d. standard Gaussians.
    After performing this sampling, we normalize to obtain a valid quantum state.
    Thus, \begin{align}
        p_{\psi}(\vec x) = |\a_{\vec x}|^2 = \frac{Z_{\text{Re}, \vec x}^2 + Z_{\text{Im}, \vec x}^2}{\sum_{\vec y \in \Z_2^k} Z_{\text{Re}, \vec y}^2 + Z_{\text{Im}, \vec y}^2} = \frac{W_{\vec x}}{W_{\vec x} + \overline{W}_{\vec x}} ,
    \end{align}
    where $W_{\vec x} = Z_{\text{Re}, \vec x}^2 + Z_{\text{Im}, \vec x}^2$ and $\overline{W}_{\vec x} = \sum_{\vec y \in \Z_2^k : \vec y \neq \vec x} Z_{\text{Re}, \vec y}^2 + Z_{\text{Im}, \vec y}^2$.
    Note that $W_{\vec x} \sim \chi^2_{2}$ and $\overline{W}_{\vec x} \sim \chi^2_{K-2}$---where $\chi^2_{L}$ is the chi-squared distribution with $L$ degrees of freedom---and the two random variables are independent.
    The chi-squared distribution is a special case of the Gamma distribution; in particular, $\chi^2_{L} = \text{\textsf{Gamma}}(L/2, 2)$.
    Thus, $W_{\vec x} \sim \text{\textsf{Gamma}}(1, 2)$ and independently $\overline{W}_{\vec x} \sim \text{\textsf{Gamma}}(K-1, 2)$.
    By the Beta-Gamma duality (see, e.g., \cite{blitzstein2019introduction}), if $X \sim \text{\textsf{Gamma}}(a, \lambda)$ and independently $Y \sim \text{\textsf{Gamma}}(b, \lambda)$, then $X/(X+Y) \sim \text{\textsf{Beta}}(a, b)$.
    Thus, \begin{align}
        p_{\psi}(\vec x) = \frac{W_{\vec x}}{W_{\vec x} + \overline{W}_{\vec x}} \sim \text{\textsf{Beta}}(1, K-1) 
    \end{align}
    as claimed.
\end{proof}

It will also be useful to consider states which are not fully Haar-random, but agree with the distribution of a Haar-random state up to the first $t$ moments. 
Such objects are known as state $t$-designs.
\begin{definition}[State $t$-design] \label{def:t_design}
    An ensemble $\nu$ over $k$-qubit states is called a state $t$-design, if \begin{align}
        \mathbb{E}_{\ket{\psi} \sim \nu}[\ketbra{\psi}{\psi}^{\otimes t}] = \mathbb{E}_{\ket{\psi} \sim \mu_k}[\ketbra{\psi}{\psi}^{\otimes t}] .
    \end{align}
\end{definition}
Intuitively, an ensemble $\nu$ over states is a $t$-design if it exactly replicates the distribution over $t$ copies of a random state sampled according to the Haar measure $\mu_k$.
We exclusively work with \emph{exact} state $t$-designs, which exist for all $t$~\cite{Roy_2009}. However, our results would also carry over in the case of approximate state $t$-designs with minor modifications, provided the error is sufficiently small.

\subsection{Distances between Distributions and States}
Given two probability distributions $P, Q$ over $\Z_2^m$, the \emph{total variation}  (TV) distance between them is \begin{align}
    \label{eq:TV_fundamental_bound}
    \TV(P, Q) = \frac{1}{2} \sum_{\mathbf{x} \in \Z_2^m} |P(\mathbf{x}) - Q(\mathbf{x})| = \max_{A \subseteq \Z_2^m} (P(A) - Q(A)) .
\end{align}
The right-hand side expresses an operational interpretation of the TV distance as the best partitioning $A$ of the sample space $\Z_2^m$ for the purpose of distinguishing $P$ and $Q$.
The TV distance between $P$ and $Q$ bounds the difference in even probabilities when we draw from $Q$ instead of $P$, in the sense that \begin{align}
\Pr_P[A] - \TV(P, Q) \leq \Pr_Q[A] \leq \Pr_P[A] + \TV(P, Q) .
\end{align}
Being an $L^1$ norm, the TV distance is a bona fide metric in the sense that it is positive-definite, symmetric, and satisfies the triangle inequality.
A basic fact about the TV distance is that it is \emph{monotonic} under marginalization. 
That is, for a joint distribution $(P, Q)$, \begin{align}
    \label{eq:TV_monotonicity}
    \TV(P_1, P_2) \leq \TV((P_1, Q_1), (P_2, Q_2)) .
\end{align}

Quantumly, we require also a distance measures between states.
Given two pure quantum states $\ket{\psi}, \ket{\phi} \in (\C^2)^m$, we use the \emph{fidelity} between the states, which is given by \begin{align}
    \mathsf{F}(\ket{\psi}, \ket{\phi}) = |\langle \psi | \phi \rangle|^2 .
\end{align}
If instead $\ket{\psi}$ is a mixed state $\r \in (\C^{2\times 2})^{\otimes m}$, then $\mathsf{F}(\rho, \ket{\phi}) = \langle \phi | \r | \phi \rangle$.
A related distance measure is the \emph{trace distance} $\mathsf{T}(\rho, \sigma) = \frac{1}{2} \lVert \rho - \sigma \rVert_1$, where $\lVert \vec A \rVert_1$ is the sum of the absolute value of the eigenvalues of $\vec A$.
For our purposes, the only key facts about the trace distance is that the trace distance is negligible if and only if the infidelity $1 - \mathsf{F}$ is negligible~\cite{fuchs2002cryptographic} and that the trace distance decreases monotonically with quantum operations, i.e. for any quantum channel $\CE$, $\mathsf{T}(\CE(\rho), \CE(\sigma)) \leq \mathsf{T}(\rho, \sigma)$.

\section{Quantum Stabilizer Decoding: Problems and Equivalences}  \label{sec:formal_defs}
% Worst case and average case versions of each problem

Classically the exact definition of decoding is quite robust in the sense that minor modifications to the precise details of the computational problem usually result in an equivalent task.
We will show that this intuition fails quantumly in the sense that certain parts of the definitions are surprisingly sensitive to technical modifications.
For that reason, we will define various flavors of decoding. These include three types: decision or search; the structure of the logical state; and whether the recovery involves finding the logical state, the minimum weight error, or a valid recovery operator.
Furthermore, we define versions in both the worst case and the average case, as they tend to exhibit different phenomena even \emph{qualitatively}. 

We remark that there are two general notions of decoding both in the classical and quantum literature. The first is in the context of Shannon theory, known as maximum likelihood decoding, in which we define a probabilistic error model with the goal of recovering the encoded state with high probability after the channel has been applied. The second is in the context of Hamming theory, in which we aim to recover the encoded state after an arbitrary error of weight at most $w$ has been applied.
The Hamming picture is only well-defined as a worst-case definition, and is indeed the definition which we adopt in the worst case.
In our average-case definitions, we adopt the Shannon picture using the depolarizing channel.
One can also formulate a worst-case definition in Shannon theory. 
Namely, given a corrupted encoded state, determine the error which has the highest probability under the channel to have produced this state.
This worst case definition in the Shannon picture has been studied by \cite{iyer2015hardness}, and variants of the definitions range from \textbf{NP}-complete to \textbf{\#P}-complete.

\subsection{Worst-Case Problems}

In this section, we will enumerate several worst-case decoding problems and draw equivalences between them. 
Each of these problems has a classical analogue, and some of them are equivalent in the classical setting as well. 
Firstly, there are the nearest codeword and error recovery problems: the classical task they correspond is recovering $\vec{x} \in \Z_2^{k}$ from $\vec{Ax} + \vec{e}$, where $\wt(\vec{e}) \leq w$, and $\vec{A} \in \Z_2^{n \times k}$. 
In the classical setting, there are three distinct ways of solving this problem, that are easily seen to be equivalent:
\begin{enumerate}
\item Find the secret $\vec{x}$.
\item Find a recovery operator $\vec{v}$, such that $(\vec{Ax}+\vec{e}) - \vec{v} = \vec{Ax}$. 
\item Find the error $\vec{e}$. 
\end{enumerate}
Given an oracle for any of these three tasks, the other two can also be solved. The second and third tasks are clearly equivalent, since the only possible value of $\vec{v}$ is the error $\vec{e}$. 
Meanwhile, knowing the error allows one to obtain the secret, and vice versa. 
We will define these three tasks quantumly---as a direct result of stabilizer degeneracy, they are no longer obviously equivalent. 
The most relevant decoding task is (1), since by construction it characterizes the task of decoding.
For this problem, we will also define a decision version of the task; in the sequel we will craft a search to decision reduction. 
We then define the remaining two problems, the error recovery versions. 
The relationship between all three definitions is important to characterizing the qualitative difference between classical and quantum decoding. 

We then define stabilizer decoding. 
Classically, stabilizer decoding is the problem of finding a low-weight vector $\vec{e}$ from its syndrome $\vec{Be}$, for $\vec{B} \in \Z_2^{(n-k) \times n}$.
This problem is dual to the nearest codeword problem: if $\vec{B} \in \Z_2^{(n-k) \times n}$ is the left kernel of $\vec{A}$, i.e. its rows form a basis of the left-kernel of $\vec{A}$, then the syndrome $\vec{B}(\vec{Ax}+\vec{e})=\vec{BAx}+\vec{Be} = \vec{Be}$ can be calculated. 
If the stabilizer decoding problem can be solved, then $\vec{e}$ may be recovered, solving the nearest codeword problem. 
On the other hand, given $\vec{Be}$, any value $\vec{y}$ so that $\vec{By} = \vec{Be}$ satisfies $\vec{y} = \vec{Ax} + \vec{e}$ for some $\vec{x} \in \Z_2^k$. 
Hence, if the nearest codeword problem can be solved, recovering $\vec{x}$ from $\vec{y}$ is sufficient to find $\vec{e}$ and decode the syndrome. 
We develop the quantum generalization of this duality.

Let $\ket{0^{n-k},\psi } :=  \ket{0^{n-k}} \otimes \ket{\psi}$.
Note that all equalities of states and Paulis are implicitly understood to be up to a phase.

\subsubsection{Nearest Codeword Problems}

Let us now give a formal definition of the quantum analog of the nearest codeword problem.

\begin{definition}[Quantum Nearest Codeword Problem, $\qncp$] \label{def:worst_cast_state_QNCP}

The quantum nearest codeword problem $\qncp(k,n,w)$ is characterized by integers $k,n,w \in \mathbb{N}$. Given as input \begin{align}
    (\vec C \in \CC_n, \;\vec E \vec C  \ket{0^{n-k},\psi})
\end{align}
where $\vec C$ is a Clifford encoding for a stabilizer subgroup $\mathcal{S}$ of code distance $d$ with $w \leq \lfloor \frac{d-1}{2} \rfloor$, where $\vec E \in \CP_n$ is a Pauli error, and where $\ket{\psi} \in (\C^2)^{\otimes k}$ is a logical state, we consider two variants:
 \begin{description}
     \item $\bullet$ \textbf{Search} $\qncp(k,n,w)$ is the task
of producing a $k$-qubit state within fidelity at least $\frac{2}{3}$ of $\ket{\psi}$;
     \item $\bullet$ \textbf{Decision} $\qncp(k,n,w)$ is the task of deciding whether there exists a Pauli $\vec E' \in \CP_n$ such that $\vec E' \vec E \in \CS$ with $\wt(\vec E') \leq w$ with advantage at least $\frac{2}{3}$.
 \end{description}
\end{definition}

\subsubsection{Error Recovery Problems}

We now give a formal definition of the two relevant error recovery problems.

\begin{definition}[Error Recovery Problems: $\errqncp$ and $\recqncp$] \label{def:worst_case_error_QNCP}

An error recovery problem is characterized by integers $k,n,w \in \mathbb{N}$. Given as input \begin{align}
    (\vec C \in \CC_n, \vec E \vec C  \ket{0^{n-k},\psi})
\end{align} 
where
 $\vec C$ is a Clifford encoding for a stabilizer subgroup $\mathcal{S}$ of code distance $d$ with $w \leq \lfloor \frac{d-1}{2} \rfloor$, where $\vec E \in \CP_n$ is a Pauli error, and where $\ket{\psi} \in (\C^2)^{\otimes k}$ is a logical state, we consider two variants:
 \begin{description}
     \item $\bullet$ $\recqncp(k,n,w)$ is the task
of finding an error Pauli $\vec E' \in \CP_n$, with probability at least $\frac{2}{3}$, such that $\vec E\vec E'$ belongs to the stabilizer subgroup $\mathcal{S}$; and we let
      \item $\bullet$ $\errqncp(k,n,w)$ is the task
of finding an error Pauli $\vec E' \in \CP_n$, with probability at least $\frac{2}{3}$, such that $\vec E\vec E'$ belongs to the stabilizer subgroup $\mathcal{S}$, and $\wt(E') \leq w$. 
 \end{description}
\end{definition}
$\qncp$ is analogous to the classical task of recovering $\vec{x}$ from $\vec{Ax}+\vec{e}$. 
$\recqncp$ corresponds to finding some recovery operator, i.e. some vector $\vec{v}$ for which $(\vec{Ax}+\vec{e}) - \vec{v} = \vec{Ax}$. 
Finally, $\errqncp$ corresponds to finding the error $\vec{e}$ itself, because it requires one to find not only a recovery operator, but a low-weight one.

\subsubsection{Syndrome Decoding Problems}
Here, for the purpose of comparison with the above decoding tasks, we define quantum syndrome decoding, using the symplectic representation of stabilizer codes.

\begin{definition}[Quantum Syndrome Decoding Problem, $\qsdp$] \label{def:decision_syndrome_decoding}
The quantum syndrome decoding problem $\qsdp(k,n,w)$ is characterized by integers $k,n,w \in \mathbb{N}$.
Given a symplectic stabilizer check matrix $\vec{H}\in \Z_2^{(n-k) \times 2n}$ (as in Definition~\ref{def:symplectic_check_matrix}) which implicitly defines a stabilizer subgroup $\CS$ whose code distance $d$ satisfies $w \leq \lfloor \frac{d-1}{2} \rfloor$, we let
\begin{description}
     \item $\bullet$ \textbf{Search} $\qsdp(k,n,w)$ be the task
of finding, with probability at least $\frac{2}{3}$, an error $\mathbf{e}' \in \Z_2^{n-k}$ such that $\Symp^{-1}(\mathbf{e}+\mathbf{e}') \in \CS$ when given as input $(\vec{H},\vec v)$ with $\mathbf{v} = \vec H \boldsymbol{\Omega}\mathbf{e}\Mod{2}  \in \Z_2^{n-k}$ and $\wt(\Symp^{-1}(\mathbf{e})) \leq w$; and we let
     \item $\bullet$ \textbf{Decision} $\qsdp(k,n,w)$ be the task
of determining, with probability at least $\frac{2}{3}$, if there exists an error $\mathbf{e}' \in \Z_2^{n-k}$ such that $\Symp^{-1}(\mathbf{e}+\mathbf{e}') \in \CS$ when given as input $(\vec{H},\vec v)$ with $\mathbf{v} = \vec H \boldsymbol{\Omega}\mathbf{e}\Mod{2}  \in \Z_2^{n-k}$ and $\wt(\Symp^{-1}(\mathbf{e})) \leq w$.
 \end{description}
\end{definition}

\begin{remark}
The success probability threshold $\frac{2}{3}$ was just chosen as a constant larger than $\frac{1}{2}$, as in each case this success probability can be amplified to $1 - \negl(n)$. 
For the decision problems, this fact is immediate. 
Running an oracle with advantage multiple times and taking the majority vote is sufficient to amplify the success probability. 
For $\errqncp$, the success probability amplification is also simple. 
Indeed, repeating the algorithm with the same input and choosing the majority outcome will yield the desired output with amplified success probability. 
Finally, for $\qsdp$, the success probability can be amplified as well. Each time the oracle is run, some error $\vec{e'}$ is obtained, and for a majority of these errors $\Symp^{-1}(\vec{e} + \vec{e'}) \in \mathcal{S}$. 
In particular, a majority of errors are in the same stabilizer equivalence class. 
Running the oracle repeatedly and choosing the majority equivalence class amplifies the probability of choosing the correct class. 

We will later show that for $\recqncp$ and $\qncp$, the success probability can be amplified as well, because both are equivalent to $\qsdp$.
Because both of these problems have quantum inputs, the possibility of amplification is not immediately obvious. 
\end{remark}

\subsubsection{Inequivalence of Quantum Definitions and the Short Vector Problem} \label{sec:inequivalence_and_SVP}

The task $\errqncp$ of finding $\vec{E}' \in \CP_n$ such that $\wt(\vec{E}') \leq w$ and $\vec{E}'\vec{E} \in \CS$ is at least as hard as $\recqncp(k, n, w)$.
However, the converse need not be true.
Suppose we find $\vec{E}'$ such that $\vec{E}' \vec{E} \in \CS$.
$\vec{E}'$ is a perfectly valid recovery map, and yet since $|\CS| = 2^{n-k}$ there are exponentially many (assuming $k = O(n)$ which is virtually always the case for practical considerations) possible recovery operators.
Finding one such operator whose weight is $\leq w$ is precisely a short vector problem over a certain subspace of $\Z_2^n$.
Unless this short vector problem can be solved efficiently in all cases, $\errqncp(k, n, w)$ does not reduce to $\recqncp(k, n, w)$.

At the same time, $\errqncp(k, n, w)$ is an unnaturally strong computational task as a model for decoding, since any valid recovery operator will successfully execute the decoding.
This additional constraint of weight is more accurately described as a misgeneralization of classical intuition which unnecessarily injects an additional short vector problem into the task which does not aid the actual recovery operation search in any way. 

In summary, we have presented three quantum worst-case decoding problem, and one dual syndrome decoding problem. 
The task $\errqncp(k, n, w)$ encodes short vector problems in such a way that significantly impacts the hardness of the task without actually improving the quality of the recovery operator, and we therefore do not consider it a reasonable definition.
The two other definitions---finding a valid recovery operator and finding the logical state itself---are both reasonable but not obviously equivalent.
We next show, however, that in fact the two definitions are equivalent, and are themselves both equivalent to Search $\qsdp$.
Therefore, we will for the remainder of this paper work exclusively with $\qncp$.

\subsubsection{Tight Equivalence Between Search Decoding Definitions}

We here study the relation between decoding, in the sense of finding the recovery operator or of purely finding the underlying logical state, and syndrome decoding. 
First, we show that in the Decision setting, $\qncp$ and $\qsdp$ are equivalent problems in an \emph{instance-by-instance} fashion.
That is, if there is a quantum algorithm to solve one problem for a particular code $S$ (with success probability $2/3$), then there is a quantum algorithm with the same runtime up to $\poly(n)$ factors, which solves the other problem on the same code $S$.
A useful interpretation of this result is that the most general operation one can perform in decoding a code in the worst case is to first measure the syndrome.
At least in the worst case then, decoding quantum codes is truly a classical computational problem in the sense that no quantum state processing is required to capture the complexity of the task.

\begin{lemma} \label{lemma:esqncp-qsdp_reduction}
    There exists an instance-by-instance reduction from Decision $\qncp[k, n, w]$ to Decision $\qsdp[k, n, w]$.
\end{lemma}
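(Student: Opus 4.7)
The plan is to construct a polynomial-time quantum procedure that, given an instance $(\vec C, \vec E\vec C\ket{0^{n-k},\psi})$ of Decision $\qncp(k,n,w)$, prepares a classical instance $(\vec H, \vec v)$ of Decision $\qsdp(k,n,w)$ on the \emph{same} underlying stabilizer subgroup $\CS$, and then returns the verdict of the $\qsdp$ oracle. That the two instances share a code is exactly what makes the reduction instance-by-instance.

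Concretely, I first extract $\vec H$ classically: by~\Cref{thm:circuit_tableau_equivalence}, the circuit description of $\vec C$ yields the stabilizer generators $\vec C\vec Z_i\vec C^\dagger$ for $i=1,\dots,n-k$ in polynomial time, and I form $\vec H \in \Z_2^{(n-k)\times 2n}$ whose $i$th row is $\Symp(\vec C\vec Z_i\vec C^\dagger)$. Next, I extract the syndrome $\vec v$ quantumly: applying $\vec C^\dagger$ to the input state yields $\vec F\,\ket{0^{n-k}}\otimes\ket{\psi'}$ with $\vec F := \vec C^\dagger\vec E\vec C \in \CP_n$, and writing $\vec F = \vec F_A\otimes \vec F_B$ across the two registers, the factor $\vec F_A\ket{0^{n-k}}$ is a computational basis state up to phase because each local Pauli in $\{\vec I,\vec X,\vec Y,\vec Z\}$ sends $\ket 0$ to either $\ket 0$ or $\ket 1$. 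Measuring the first $n-k$ qubits in the computational basis therefore yields a deterministic bitstring $\vec v \in \Z_2^{n-k}$, which by construction equals $\vec H\,\boldsymbol{\Omega}\,\Symp(\vec E)\Mod 2$. Finally, I query the $\qsdp$ oracle on $(\vec H,\vec v)$ and return its answer; the whole reduction runs in polynomial time with a single oracle call, and preserves the $\tfrac{2}{3}$ success probability verbatim.

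The forward direction of correctness is immediate: any $\vec E'$ with $\wt(\vec E')\le w$ and $\vec E'\vec E\in\CS$ has the same syndrome $\vec v$ as $\vec E$, so $\Symp(\vec E')$ witnesses a YES for $\qsdp$. The backward direction is the main obstacle: a priori, the low-weight Pauli $\vec F$ witnessing $\qsdp$'s YES could lie in a \emph{different} $\CS$-coset from $\vec E$, making $\vec F\vec E$ a non-trivial logical rather than a stabilizer. The distance promise $w\le\lfloor (d-1)/2\rfloor$ closes this gap: any two weight-$\le w$ Paulis with the same syndrome differ by an element of the normalizer $N(\CS)$ of weight $\le 2w<d$, which must lie in $\CS$ itself since every non-trivial logical has weight $\ge d$. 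Hence at most one $\CS$-coset in each syndrome class is populated by low-weight Paulis. Because the input state determines $\vec E$ only up to logical operators reabsorbed into $\ket\psi$ (i.e.\ $\vec E\vec C\ket{0^{n-k},\psi} = (\vec EL)\vec C\ket{0^{n-k},L^\dagger\psi}$ for any logical $L$), I may reparametrize $\vec E \mapsto \vec E\cdot L$ so that $\vec E$ itself lies in this unique low-weight coset; then $\vec F\vec E\in\CS$ and $\vec E' := \vec F$ witnesses YES for $\qncp$. I expect formalizing this reparametrization argument to be the most delicate technical step, since the definition of $\qncp$ is stated in terms of a fixed $\vec E$ that the input state does not uniquely pin down.
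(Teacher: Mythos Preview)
Your proposal is correct and takes essentially the same approach as the paper: compute the check matrix $\vec H$ from $\vec C$, extract the syndrome $\vec v$ from the noisy state, hand $(\vec H,\vec v)$ to the $\qsdp$ oracle, and output its verdict. Your syndrome extraction (apply $\vec C^\dagger$ and measure the first $n-k$ qubits in the computational basis) is operationally equivalent to the paper's (measure each stabilizer $\vec C\vec Z_i\vec C^\dagger$ directly); both return the same deterministic $\vec v$.

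The one notable difference is that you work harder on the backward direction than the paper does. The paper asserts in one line that two Paulis with the same syndrome differ by a stabilizer (invoking $w\le\lfloor(d-1)/2\rfloor$), and then concludes $\vec E\vec E'\in\CS$ directly. You correctly observe that this step tacitly requires \emph{both} Paulis to have weight $\le w$, whereas the input $\vec E$ carries no weight bound, and you patch this via the reparametrization $\vec E\mapsto \vec E L$. Your concern is legitimate---the literal definition of Decision $\qncp$ is not obviously well-posed from the input state alone, since the state determines $\vec E$ only up to $N(\CS)$---but the resolution is simpler than you suggest: the intended decision question is ``is the input state within Pauli weight $w$ of the codespace'', which is manifestly a syndrome-only property, and your reparametrization argument is exactly the observation that this reading is consistent with the stated one. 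So the extra care is not wasted, but you need not flag it as the ``most delicate technical step''; it is a one-line remark once framed this way.
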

\begin{proof}
    Say that $\CA(\vec{C})$ is an oracle which solves Decision $\qsdp(k, n, w)$ with success probability at least $\frac{2}{3}$, taking as input a Clifford circuit $\vec{C}$ corresponding to a stabilizer code and an error weight $w$. 
    Define $\CB$ as follows.
    Given $\vec{C}$ and a noisy state $\ket{\phi} = \vec{E} \vec{C} \ket{0^{n-k}, \psi }$, first compute a corresponding stabilizer tableau $\CT = \set{\vec{P}_1, \;\dots, \;\vec{P}_{n-k}}$ and symplectic check matrix $\vec{H} \in \Z_2^{(n-k) \times 2n}$ via Theorem~\ref{thm:circuit_tableau_equivalence}. 
    Measure each stabilizer $\vec{P}_i$ on $\ket{\phi}$, getting a syndrome sign $a_i \in \set{\pm 1}$, and let $v_i = \frac{1}{2} (1 - a_i) \in \Z_2$. Run $\CA(\vec{C})$ on $(\vec{H}, \mathbf{v})$ and output its answer.

    Let $\CS$ be the stabilizer subgroup generated by $\CT$. 
    Two possible Pauli errors differ by a stabilizer if and only if they have the same stabilizer syndrome measurements since $w \leq \lfloor \frac{d-1}{2} \rfloor$. 
    Suppose $\exists \vec{E}' \in \CP_n$ such that $\vec{E}' \vec{E} \in \CS$ and $\wt(\vec{E}') \leq w$. 
    Then $\vec{E}\vec{C} \ket{0^{n-k}, \psi}$ generates the same syndrome as $\vec{E}' \vec{C} \ket{0^{n-k}, \psi}$. 
    Therefore, $\Symp(\vec{E}')$ is a valid solution on the $\qsdp$ instance  $(\vec{H}, \mathbf{v})$.
    Conversely, suppose $\exists \mathbf{e}' \in \Z_2^{n}$ such that $\vec{H \W} \mathbf{e}' = \mathbf{v}$ and $\wt(\Symp^{-1}(\mathbf{e}')) \leq w$. 
    Then $\vec{E}' = \Symp^{-1}(\mathbf{e}')$ produces the same syndrome as $\vec{E}$, and thus satisfies $\vec{E}\vec{E}' \in \CS$. 
    Thus, $\CB$ is correct if and only if $\CA(\vec{C})$ is correct, which by assumption happens with probability at least $\frac{2}{3}$.
\end{proof}

\begin{lemma} \label{lemma:qsdp-esqncp_reduction}
 There exists an instance-by-instance reduction from Decision $\qsdp[k, n, w]$ to Decision $\qncp[k, n, w]$.
\end{lemma}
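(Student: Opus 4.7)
The plan is to convert the given syndrome into a quantum input for the $\qncp$ oracle. On input $(\vec H, \vec v)$ to Decision $\qsdp(k,n,w)$, the algorithm $\CB$ will: (i) use Theorem~\ref{thm:circuit_tableau_equivalence} to compute in $\poly(n)$ time a Clifford circuit $\vec C \in \CC_n$ whose stabilizer tableau is $\vec H$; (ii) apply Gaussian elimination over $\Z_2$ to produce any $\vec e_0 \in \Z_2^{2n}$ with $\vec H \boldsymbol{\Omega} \vec e_0 = \vec v$, setting $\vec E_0 := \Symp^{-1}(\vec e_0)$; (iii) prepare $\ket{\phi} := \vec E_0 \vec C \ket{0^n}$ and invoke the Decision $\qncp(k,n,w)$ oracle $\CA$ on $(\vec C, \ket{\phi})$, returning its answer verbatim.

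Correctness in the forward direction is immediate: any weight-$\leq w$ Pauli $\vec E'$ with $\vec E' \vec E_0 \in \CS$ must share the stabilizer syndrome of $\vec E_0$, which is $\vec v$ by construction, so $\Symp(\vec E')$ certifies $\vec v$ as the syndrome of a low-weight error and hence a valid $\qsdp$ witness.

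The main obstacle is the reverse direction: given a low-weight $\vec E$ with $\vec H \boldsymbol{\Omega} \Symp(\vec E) = \vec v$, the product $\vec E \vec E_0$ lies in the centralizer of $\CS$ (that is, $\CS$ times the logical group) but need not itself be in $\CS$, since the Gaussian-elimination output $\vec E_0$ may fall in a different logical coset than $\vec E$. The resolution is that the quantum input to $\qncp$ only fixes the error up to a logical operator that can be absorbed into the logical register $\ket\psi$, so any alternate decomposition $\vec E_{\text{alt}} \vec C \ket{0^{n-k}, \psi'}$ of the same state is equally valid for the oracle. Concretely, $\vec C^\dagger (\vec E \vec E_0) \vec C$ lies in the centralizer of the trivial stabilizer group $\langle \vec Z_1, \dots, \vec Z_{n-k}\rangle$, hence acts as $\vec Z$s on the first $n-k$ qubits (trivially on $\ket{0^{n-k}}$) and as an arbitrary $k$-qubit Pauli on the remaining qubits. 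Therefore $\ket{\phi} = \vec E \vec C \ket{0^{n-k}, \psi'}$ for some $\ket{\psi'}$, and the choice $\vec E' := \vec E$ satisfies $\vec E' \vec E = \vec I \in \CS$ with $\wt(\vec E') \leq w$, forcing a YES from the oracle on this equivalent decomposition. The distance hypothesis $w \leq \lfloor (d-1)/2 \rfloor$ ensures that all low-weight preimages of $\vec v$ lie in a single stabilizer coset, so the oracle's answer is unambiguous. The reduction runs in polynomial time, operates instance-by-instance on the same code $\vec C$, and transfers $\CA$'s $\tfrac{2}{3}$ advantage directly to Decision $\qsdp$.
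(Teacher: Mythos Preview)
Your proposal is correct and follows essentially the same approach as the paper: compute $\vec C$ from $\vec H$ via Theorem~\ref{thm:circuit_tableau_equivalence}, find any $\vec E_0$ matching the syndrome by Gaussian elimination, prepare $\vec E_0\vec C\ket{0^n}$, and query the $\qncp$ oracle. The paper's correctness argument is organized the same way, noting that $\vec E_0$ differs from any low-weight $\vec E$ with syndrome $\vec v$ by a stabilizer-plus-logical, so that $\ket{\phi}=\vec E\vec C\ket{0^{n-k},\vec x}$ for some computational basis state $\ket{\vec x}$, and conversely any low-weight decomposition of $\ket{\phi}$ yields an error with syndrome $\vec v$.
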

\begin{proof}
    Say that $\CB(\vec{C})$ is an oracle which solves Decision $\qncp(k, n, w)$ with success probability at least $\frac{2}{3}$, for a Clifford circuit $\vec{C}$ and error weight $w$. 
    Define $\CA$ as follows. 
    Given a check matrix $\vec{H} \in \Z_2^{(n-k) \times 2n}$ and a symplectic syndrome $\mathbf{v} \in \Z_2^{n-k}$, compute the corresponding Clifford circuit $\vec{C} \in \CC_n$ via Theorem~\ref{thm:circuit_tableau_equivalence}. 
    Next, using Gaussian elimination, find any vector $\mathbf{e}'' \in \Z_2^n$ such that $\vec{H\W} \mathbf{e}'' = \mathbf{v}$. Let $\vec{E}'' = \Symp^{-1}(\mathbf{e}'') \in \CP_n$.
    Prepare the state $\ket{\phi_0} = \vec{E}'' \vec{C} \ket{0^n}$ and run $\CB(\vec{C})$ on $(\vec{C}, \ket{\phi_0})$ and output its answer.

    Suppose that $\exists \mathbf{e} \in \Z_2^{n}$ such that $\wt(\Symp^{-1}(\mathbf{e})) \leq w$ and $\vec{H\W} \mathbf{e} = \mathbf{v}$. 
    Then $\mathbf{e}'' = \mathbf{e} + \mathbf{y}$, where $\vec{H\W} \mathbf{y} = 0$. 
    Thus, $\Symp^{-1}(\mathbf{y})$ is either a stabilizer or a logical operator. 
    As a result, $\vec{E}'' \vec{C} \ket{0^n} = \vec{E C} \ket{0^{n-k}, \mathbf{x}}$, where $\mathbf{x}$ is defined by the logical operation of $\mathbf{y}$. This state is a structured instance of $\qncp(n, k ,w)$. Conversely, suppose that $\exists \vec{E} \in \CP_n$ with $\wt(\vec{E}) \leq w$ such that $\ket{\phi_0} = \vec{EC} \ket{0^{n-k}, \psi}$. Since $\ket{\phi_0} = \vec{E'' C} \ket{0^n}$, $\ket{\psi} = \ket{\mathbf{x}}$ for some $\mathbf{x}\in \Z_2^k$. Then $\Symp(\vec{E}) = \mathbf{e}$ induces the same syndrome as $\mathbf{e}''$, so the instance $(\vec{H}, \mathbf{v})$ is a structured instance for $\qsdp(k, n, w)$.
    Thus, $\mathcal{A}$ is correct precisely when $\mathcal{B}$ is, and so also has success probability at least $\frac{2}{3}$.
\end{proof}

\begin{corollary}\label{thm:Decision_sqncp-qsdp_equivalence}
Decision $\qncp[k, n, w]$ and Decision $\qsdp[k, n, w]$ are instance-by-instance equivalent. 
\end{corollary}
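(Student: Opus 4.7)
The plan is to simply combine the two reductions established in the immediately preceding lemmas. Specifically, Lemma~\ref{lemma:esqncp-qsdp_reduction} already gives an instance-by-instance reduction from Decision $\qncp[k,n,w]$ to Decision $\qsdp[k,n,w]$, and Lemma~\ref{lemma:qsdp-esqncp_reduction} gives the converse instance-by-instance reduction from Decision $\qsdp[k,n,w]$ to Decision $\qncp[k,n,w]$. The corollary is a one-line chaining of these two statements.

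More concretely, I would argue as follows: fix any instance of Decision $\qncp[k,n,w]$ given by $(\vec C, \ket{\phi})$. Lemma~\ref{lemma:esqncp-qsdp_reduction} shows that any oracle solving Decision $\qsdp[k,n,w]$ with success probability $\geq 2/3$ can be used (via stabilizer-syndrome measurement on $\ket{\phi}$, together with the efficient tableau extraction from $\vec C$ guaranteed by Theorem~\ref{thm:circuit_tableau_equivalence}) to solve this $\qncp$ instance with the same success probability, in polynomial extra time. Conversely, Lemma~\ref{lemma:qsdp-esqncp_reduction} shows that given an instance $(\vec H, \vec v)$ of Decision $\qsdp[k,n,w]$, one can efficiently use Theorem~\ref{thm:circuit_tableau_equivalence} and Gaussian elimination to construct a corresponding $\qncp$ instance $(\vec C, \ket{\phi_0})$ whose YES/NO answer matches that of the original $\qsdp$ instance, so any $\qncp$ oracle yields a $\qsdp$ solver with the same success probability. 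Since each direction preserves the identity of the underlying code, the equivalence is truly instance-by-instance rather than merely distributional.

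There is no real obstacle here: the hard work is done in the two lemmas. The only subtlety worth flagging is that the reductions must be consistent in how they associate a Clifford encoding $\vec C$ to a check matrix $\vec H$ and vice versa, which is exactly the content of Theorem~\ref{thm:circuit_tableau_equivalence}, together with the standard fact that a Pauli error and its image under $\Symp$ produce identical stabilizer syndromes (used implicitly in both lemmas). With those ingredients the corollary follows immediately, and the success probability threshold $2/3$ transports from one problem to the other without loss.
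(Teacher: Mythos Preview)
Your proposal is correct and matches the paper's approach exactly: the paper's proof is the one-liner ``Follows immediately from Lemmas~\ref{lemma:esqncp-qsdp_reduction} and~\ref{lemma:qsdp-esqncp_reduction}.'' Your additional unpacking of why each direction works and why the reductions are instance-by-instance is accurate and faithful to those lemmas.
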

\begin{proof}
    Follows immediately from Lemmas~\ref{lemma:esqncp-qsdp_reduction} and \ref{lemma:qsdp-esqncp_reduction}.
\end{proof}

We next show equivalence of the search versions of the problem, using $\recqncp$ instead of $\qncp$.
\begin{theorem}\label{thm:Search_esqncp-qsdp_equivalence}
There exists instance-by-instance reductions from $\recqncp[k, n, w]$ to $\qsdp[k, n, w]$ and from $\qsdp[k, n, w]$ to $\recqncp[k, n, w]$. 
In particular, they are instance-by-instance equivalent. 
\end{theorem}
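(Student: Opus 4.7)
The plan is to mirror the two reductions used to establish the decision-version equivalence (Corollary~\ref{thm:Decision_sqncp-qsdp_equivalence}), now producing a concrete witness at each stage rather than a yes/no answer. Both directions will exploit the fact that measuring the stabilizer generators strips all logical-state dependence from the decoding task while preserving the equivalence class of the error modulo the stabilizer group, together with the fact that $\Symp$ is a group isomorphism so that multiplication of Paulis corresponds to addition of symplectic vectors.

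To reduce $\recqncp[k,n,w]$ to $\qsdp[k,n,w]$, I would take an input $(\vec C, \vec E \vec C \ket{0^{n-k}, \psi})$, use Theorem~\ref{thm:circuit_tableau_equivalence} to extract the symplectic check matrix $\vec H$, and measure each of the $n-k$ stabilizer generators on the noisy code state to read off a sign vector $\vec a \in \{\pm 1\}^{n-k}$ and hence $\vec v \in \Z_2^{n-k}$ with $v_i = \frac{1}{2}(1-a_i)$. This $\vec v$ is by construction the symplectic syndrome induced by $\vec e := \Symp(\vec E)$. Calling the $\qsdp$ oracle on $(\vec H, \vec v)$ returns $\vec e' \in \Z_2^{2n}$ with $\Symp^{-1}(\vec e + \vec e') \in \CS$, and I output $\vec E' := \Symp^{-1}(\vec e')$, which satisfies $\vec E \vec E' \in \CS$ as required by Definition~\ref{def:worst_case_error_QNCP}.

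For the reverse direction, I would take an input $(\vec H, \vec v)$, compute a Clifford encoder $\vec C$ for $\vec H$ via Theorem~\ref{thm:circuit_tableau_equivalence}, use Gaussian elimination to find any $\vec e'' \in \Z_2^{2n}$ with $\vec H \boldsymbol{\W} \vec e'' = \vec v$, set $\vec E'' := \Symp^{-1}(\vec e'')$, and prepare the state $\vec E'' \vec C \ket{0^n}$. Since the promised low-weight error $\vec E$ satisfies $\vec H \boldsymbol{\W} \Symp(\vec E) = \vec v$ as well, the difference $\vec y := \vec e'' + \Symp(\vec E)$ is symplectically orthogonal to every row of $\vec H$, so $\Symp^{-1}(\vec y)$ is a product of a stabilizer and a logical Pauli. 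A direct calculation then shows $\vec E'' \vec C \ket{0^n} = \vec E \vec C \ket{0^{n-k}, \vec x}$ up to a global phase, where $\vec x \in \Z_2^k$ is determined by the logical-$\vec X$ part of $\vec y$; this is a valid $\recqncp$ instance. Feeding it to the $\recqncp$ oracle yields $\vec E'$ with $\vec E \vec E' \in \CS$, and I return $\Symp(\vec E')$, which satisfies $\vec H \boldsymbol{\W} \Symp(\vec E') = \vec H \boldsymbol{\W} \Symp(\vec E) = \vec v$ and is therefore a valid $\qsdp$ solution.

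The main obstacle I anticipate is the verification in the second direction that $\vec E'' \vec C \ket{0^n}$ really is of the form demanded by Definition~\ref{def:worst_case_error_QNCP}, since I never learn $\vec E$ explicitly and the reduction must certify the instance from the $\qsdp$ promise alone. This amounts to showing that the logical-$\vec Z$ component of $\vec y$ acts trivially on the computational-basis logical state, the logical-$\vec X$ component simply produces another computational-basis logical state, and the stabilizer component contributes only a sign; the promise in the $\qsdp$ instance then guarantees that the required low-weight witness $\vec E$ exists. Correctness of both reductions is inherited from the oracle instance by instance, so the $\frac{2}{3}$ success-probability threshold is preserved, yielding the claimed tight equivalence.
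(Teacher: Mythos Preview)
Your proposal is correct and follows essentially the same approach as the paper. Both directions match the paper's reductions exactly: measure the syndrome and pass $\Symp^{-1}$ of the $\qsdp$ output back for the first direction, and for the second direction use Gaussian elimination to produce a trial error $\vec E''$, prepare $\vec E'' \vec C \ket{0^n}$, observe that $\vec E''$ decomposes as $\vec E$ times a stabilizer times logical $\overline{\vec X}_{\vec u} \overline{\vec Z}_{\vec h}$ (your ``product of a stabilizer and a logical Pauli''), so the prepared state equals $\vec E \vec C \ket{0^{n-k}, \vec u}$ and the $\recqncp$ oracle returns a valid recovery operator whose symplectic representation solves the $\qsdp$ instance.
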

\begin{proof}
    The proof proceeds similarly to those of Lemmas~\ref{lemma:esqncp-qsdp_reduction} and \ref{lemma:qsdp-esqncp_reduction}.
The reduction procedures are identical.
The only differences are as follows.
Reducing to Search $\qsdp$, we obtain $\mathbf{e}' \in \Z_2^n$ such that $\Symp^{-1}(\mathbf{e}+\mathbf{e}') \in \CS$, where $\mathbf{e} = \Symp(\vec{E})$ for the true error $\vec{E}$ applied. Thus, we output $\vec{E}' = \Symp^{-1}(\mathbf{e}')$ to obtain the correct $\recqncp$ answer.
Reducing to Search $\recqncp$, we start with a trial guess $\mathbf{e}'' \in \Z_2^n$ obtained from Gaussian elimination which gives the correct syndrome $\mathbf{v} \in \Z_2^{n-k}$. $\vec{E}'' = \Symp^{-1}(\mathbf{e}'')$ may be written as $\vec{E}'' = \vec{E} \,\overline{\vec{X}}_{\mathbf{u}} \,\overline{\vec{Z}}_\mathbf{h}$ (for $\mathbf{u}, \mathbf{h} \in \Z_2^k$), where $\overline{\vec{X}}_{\mathbf{u}}$ is a logical $\vec{X}$ Pauli such that
\begin{align}
    \overline{\vec{X}}_{\mathbf{u}} \vec{C} \ket{0^n} = \vec{C} \vec{X}_{\mathbf{u}} \ket{0^n} = \vec{C} \ket{ 0^{n-k},\mathbf{u}} 
\end{align}
and $\overline{\vec{Z}}_\mathbf{h}$ is a logical $Z$ Pauli such that $\overline{\vec{Z}}_\mathbf{h} \vec{C} = \vec{C} \vec{Z}_{\mathbf{h}}$.
Thus,
\begin{align}
    \ket{\phi_0} = \vec{E}'' \vec{C} \ket{0^n} = \vec{E} \vec{C} \vec{X}_{\mathbf{u}} \vec{Z}_{\mathbf{h}} \ket{0^n} = \vec{E} \vec{C} \ket{0^{n-k},\mathbf{u} } .
\end{align}
By querying the $\recqncp$ solver with $(\vec{C}, \ket{\phi_0})$, we obtain some $\vec{E}' \in \CP_n$ such that $\vec{E}' \vec{E} \in \CS$ with probability at least $\frac{2}{3}$.
\end{proof}

Now, we will show that Search $\qncp$ and Search $\qsdp$ are in fact instance-by-instance equivalent as well. As a consequence of these two statements, we will conclude that Search $\recqncp$ and Search $\qncp$ are equivalent. 
\begin{theorem}[Instance-by-instance equivalence of Search $\qncp$ and Search $\qsdp$] \label{thm:Search_sqncp-qsdp_equivalence}
There exists (instance-by-instance) reductions from Search $\qncp[k, n, w]$ to Search $\qsdp[k, n, w]$ and from Search $\qsdp[k, n, w]$ to Search $\qncp[k, n, w]$. 
In particular, they are instance-by-instance equivalent. 
\end{theorem}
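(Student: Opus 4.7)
My plan is to establish the instance-by-instance equivalence by constructing reductions in both directions, extending the proofs of Lemmas~\ref{lemma:esqncp-qsdp_reduction} and~\ref{lemma:qsdp-esqncp_reduction} so as to propagate quantum state information. The key new ingredient relative to the decision setting is that Search $\qncp$ hands back an unknown $k$-qubit state, so the $\qsdp$-to-$\qncp$ direction requires classically extracting bitstrings from the oracle's quantum output, while the $\qncp$-to-$\qsdp$ direction requires applying a recovery unitary to the input quantum state.

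For Search $\qncp[k,n,w]$ to Search $\qsdp[k,n,w]$: given $(\vec C,\; \vec E \vec C \ket{0^{n-k},\psi})$, I would first apply Theorem~\ref{thm:circuit_tableau_equivalence} to extract a check matrix $\vec H$, and then nondemolition-measure each stabilizer generator to read off the syndrome $\vec v \in \Z_2^{n-k}$ without disturbing the state, since $\vec E \vec C \ket{0^{n-k},\psi}$ is a joint eigenstate of every generator with eigenvalues fixed by how $\vec E$ anticommutes with them. Querying Search $\qsdp$ on $(\vec H, \vec v)$ returns some $\vec e'$; setting $\vec E' = \Symp^{-1}(\vec e')$ and applying $\vec E'$ followed by $\vec C^\dagger$ leaves the first $n-k$ qubits in $\ket{0^{n-k}}$ and the last $k$ in $\ket\psi$, because on the success branch $\vec E' \vec E \in \mathcal S$ and $\vec C^\dagger \mathcal S \vec C$ is generated by the $\vec Z_i$ with $i \leq n-k$, which stabilize $\ket{0^{n-k}}$. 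Discarding the first register then produces a mixed state with fidelity at least $\tfrac{2}{3}$ with $\ket\psi$, since the oracle succeeds with probability at least $\tfrac23$ and the remaining operations are unitary.

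For Search $\qsdp[k,n,w]$ to Search $\qncp[k,n,w]$: given $(\vec H, \vec v)$, I would find any $\vec e'' \in \Z_2^{2n}$ satisfying $\vec H \boldsymbol{\W}\, \vec e'' = \vec v$ via Gaussian elimination, so that $\vec E'' := \Symp^{-1}(\vec e'')$ matches the true error $\vec E$ on syndromes and therefore decomposes (up to phase) as $\vec E'' = \vec E\,\overline{\vec X}_{\vec u}\,\overline{\vec Z}_{\vec h}\,\vec S'$ for some unknown $\vec u, \vec h \in \Z_2^k$ and $\vec S' \in \mathcal S$. Recovering $\vec u$ and $\vec h$ suffices, because then $\Symp(\vec E''\,\overline{\vec Z}_{\vec h}^\dagger\,\overline{\vec X}_{\vec u}^\dagger)$ is a valid $\qsdp$ output. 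To extract $\vec u$, I prepare $\vec E'' \vec C \ket{0^n} = \vec E \vec C \ket{0^{n-k},\vec u}$ (as in the proof of Theorem~\ref{thm:Search_esqncp-qsdp_equivalence}), query Search $\qncp$, and measure the returned state in the computational basis; since the fidelity with $\ket{\vec u}$ is at least $\tfrac{2}{3}$, the outcome is $\vec u$ with probability at least $\tfrac{2}{3}$. For $\vec h$, I prepare $\vec E'' \vec C \ket{0^{n-k}, +^k} = \vec E \vec C \ket{0^{n-k}, \psi_{\vec h}}$, where $\ket{\psi_{\vec h}}$ is a tensor of $\ket{+}$'s and $\ket{-}$'s encoding $\vec h$, and measure in the Hadamard basis.

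The main obstacle will be boosting confidence, since a single Search $\qncp$ call returns the correct classical string with only constant probability, whereas recovering the $2k$ bits of $(\vec u, \vec h)$ exactly by a naive union bound would require much higher per-call accuracy. Because $\vec E'' \vec C \ket{0^n}$ and $\vec E'' \vec C \ket{0^{n-k},+^k}$ are classically specified by $\vec e''$ and $\vec C$, however, I can freely re-prepare them and amplify by repetition: repeating each measurement $t = O(\log n)$ times and taking the mode of the outcomes succeeds with probability $1 - \negl(n)$, since the true answer is sampled with probability at least $\tfrac{2}{3} > \tfrac{1}{2}$ while any other fixed bitstring occurs with probability at most $\tfrac{1}{3}$, so a Chernoff bound makes the true answer the unique empirical mode. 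A union bound across the two recoveries then drives the overall success probability well above $\tfrac{2}{3}$, completing the reduction and hence the instance-by-instance equivalence.
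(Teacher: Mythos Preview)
Your proposal is correct and follows essentially the same approach as the paper: the $\qncp\to\qsdp$ direction measures the syndrome, calls the $\qsdp$ oracle, and applies the returned recovery followed by $\vec C^\dagger$; the $\qsdp\to\qncp$ direction finds $\vec E''$ by Gaussian elimination, prepares $\vec E''\vec C\ket{0^n}$ and $\vec E''\vec C\ket{0^{n-k},+^k}$ to extract $\vec u$ and $\vec h$ via computational and Hadamard measurements, and outputs $\Symp(\vec E''\,\overline{\vec X}_{\vec u}\,\overline{\vec Z}_{\vec h})$. The only minor difference is in amplification: the paper boosts each of the two extractions to probability $\tfrac{5}{6}$ with a constant number of repetitions and then union-bounds, whereas you push to $1-\negl(n)$ with $O(\log n)$ repetitions; both work, and your explicit justification that these states can be freely re-prepared from the classical data $(\vec e'',\vec C)$ is a point the paper leaves implicit.
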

\begin{proof}
    Let the noisy state be $\ket{\phi} = \vec{E C} \ket{0^{n-k},\psi}$, where $\wt(\vec{E}) \leq w$.
    Reducing to $\qsdp$, we will obtain the recovery operator $\vec{E}' = \Symp^{-1}(\mathbf{e}')$, where $\mathbf{e}'$ is given by the $\qsdp$ solver. 
    With probability $\frac{2}{3}$, $\vec{E' E} \in \CS$ and thus by applying $\vec{E}'$ to $\ket{\phi}$, we recover $\vec{C} \ket{0^{n-k},\psi}$, at which point we apply $\vec{C}^\dagger$ and discard the last $n-k$ qubits to recover $\ket{\psi}$. 
    With probability at most $\frac{1}{3}$, the output state is arbitrarily far from $\ket{\psi}$, and we denote it $\rho_{\text{junk}}$. Hence the output $\rho$ is \begin{align}
        \r = \frac{2}{3}\ketbra{\psi}{\psi} + \frac{1}{3}\rho_{\text{junk}} 
    \end{align}
    which satisfies $\langle \psi | \rho |\psi \rangle \geq \frac{2}{3}$.

    Reducing to $\qncp$, we start with a trial guess $\mathbf{e}'' \in \Z_2^n$ obtained from Gaussian elimination which gives the correct symplectic syndrome $\mathbf{v} \in \Z_2^{n-k}$, i.e. $\vec{H\W} \mathbf{e}'' = \vec{H\W} \mathbf{e} = \mathbf{v}$, where $\mathbf{e} = \Symp(E)$ represents the true error and $\vec{H}$ is the parity check matrix of the $\qsdp$ instance.
    Defining $\vec{E}'' = \Symp^{-1}(\mathbf{e}'')$, we prepare $\ket{\phi_0} = \vec{E'' C} \ket{0^n}$. As in the proof of Theorem~\ref{thm:Search_esqncp-qsdp_equivalence}, $\vec{E}'' = \vec{E} \overline{\vec{X}}_{\mathbf{u}} \overline{\vec{Z}}_{\mathbf{h}}$ for $\mathbf{u}, \mathbf{h} \in \Z_2^k$. Thus, \begin{align}
        \ket{\phi_0} = \vec{E'' C} \ket{0^n} = \vec{E C} \vec{X}_{\mathbf{u}} \vec{Z}_{\mathbf{h}} \ket{0^{n}} = \vec{E C} \ket{0^{n-k}, \mathbf{u}} .
    \end{align}
    We submit $(\vec{C}, \ket{\phi_0})$ to the $\qncp$ solver and obtain a state $\r_0$. Similarly, we prepare \begin{align}
        \ket{\phi_+} = \vec{E'' C} \ket{0^{n-k},+^k} = \vec{E C }\vec{X}_{\mathbf{u}} \vec{Z}_{\mathbf{h}} \ket{0^{n-k},+^k } = \vec{E C} (\vec{I}^{\otimes (n-k)} \otimes \vec{H}^{\otimes k}) \ket{0^{n-k}, \mathbf{h} } ,
    \end{align}
    where $\vec{H}$ is the Hadamard gate (distinguished from a parity check matrix by context).
    Submitting $(\vec{C}, \ket{\phi_+})$ to the $\qncp$ solver give a state $\r_+$.
    By assumption, $\langle \mathbf{u} | \rho_0 | \mathbf{u} \rangle \geq \frac{2}{3}$, so with probability at least $\frac{2}{3}$ we obtain $\mathbf{u}$ when we measure $\rho_0$ in the computational basis. 
    Furthermore, we may repeat this step multiple times and output the majority outcome to obtain $\mathbf{u}$ with an amplified probability of at least $\frac{5}{6}$. 
    Likewise we obtain $\mathbf{h}$ with probability at least $\frac{5}{6}$ when we measure $\rho_+$ in the $\ket{\pm}$ basis. 
    By a union bound, both are correct with probability at least $\frac{2}{3}$. 
    From here, we efficiently compute logical Pauli representations $\overline{\vec{X}}_{\mathbf{u}}'$ and $\overline{\vec{Z}}_{\mathbf{h}}'$.
    We then output $\mathbf{e}' = \Symp(\vec{E}'' \overline{\vec{X}}_{\mathbf{u}}' \overline{\vec{Z}}_{\mathbf{h}}')$ as the solution to the $\qsdp$ problem.
    So long as $\mathbf{u}$ and $\mathbf{h}$ were computed correctly, $\vec{E''} \overline{\vec{X}}_{\mathbf{u}}' \overline{\vec{Z}}_{\mathbf{h}}'$ and $\vec{E}$ differ by at most a stabilizer.
    This completes the proof of correctness. 
\end{proof}

\begin{corollary}\label{corollary:Search_sqncp-esqncp_equivalence}
Search $\qncp[k, n, w]$ and $\recqncp[k, n, w]$ are instance-by-instance equivalent. 
\end{corollary}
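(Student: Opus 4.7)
The plan is to obtain the corollary by composing the two equivalences already established via Search $\qsdp$ as an intermediary. Specifically, Theorem~\ref{thm:Search_sqncp-qsdp_equivalence} gives instance-by-instance reductions in both directions between Search $\qncp[k,n,w]$ and Search $\qsdp[k,n,w]$, while Theorem~\ref{thm:Search_esqncp-qsdp_equivalence} gives instance-by-instance reductions in both directions between $\recqncp[k,n,w]$ and Search $\qsdp[k,n,w]$. Composing, I get reductions Search $\qncp \to \qsdp \to \recqncp$ and $\recqncp \to \qsdp \to \qncp$ in both directions, yielding the claimed equivalence.

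The one thing I need to check before invoking transitivity is that these reductions really do preserve the instance, i.e., that both reductions in each theorem operate on the same underlying code. Inspecting the proofs of the earlier theorems, in each case the reduction procedure begins with a Clifford description $\vec C$ (or equivalently its symplectic check matrix $\vec H$, obtained via Theorem~\ref{thm:circuit_tableau_equivalence}) and hands the solver in the other model \emph{the same} code, only altering the noisy state or the symplectic syndrome handed in alongside it. Thus chaining the reductions keeps the code fixed throughout, so the composition is genuinely instance-by-instance.

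I should also briefly note that success probability is preserved under the composition. Each of the four component reductions is of the form ``call the oracle once, post-process in polynomial time with only classical bit manipulations and Gaussian elimination, possibly with an independent amplification step like the measurement-and-majority used in the proof of Theorem~\ref{thm:Search_sqncp-qsdp_equivalence}.'' Hence the composed reduction still runs in polynomial time, makes a single oracle call (or a constant number of them with appropriate amplification), and succeeds with probability at least $\tfrac{2}{3}$ whenever the inner oracle does. No step here is an obstacle; the content of the corollary is really just the observation that the two preceding theorems factor through a common intermediate problem. The work has already been done, and this is the payoff.
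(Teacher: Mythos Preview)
Your proposal is correct and takes essentially the same approach as the paper: both invoke Theorems~\ref{thm:Search_esqncp-qsdp_equivalence} and~\ref{thm:Search_sqncp-qsdp_equivalence} and compose the two equivalences through Search $\qsdp$ as the common intermediary. Your additional remarks about instance preservation and success-probability preservation are accurate elaborations, but the paper's proof is simply the one-line observation that the corollary follows immediately from those two theorems.
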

\begin{proof}
    Follows immediately from Theorems~\ref{thm:Search_esqncp-qsdp_equivalence} and \ref{thm:Search_sqncp-qsdp_equivalence}.
\end{proof}

Henceforth, we shall work exclusively with Search $\qncp$ as the search variant of Decision $\qncp$.
We will not directly work with any syndrome decoding definitions, since they equivalent to a $\qncp$ analog.
\begin{remark}
As a consequence of the above equivalence, the success probability of $\qncp$ and $\recqncp$ may also be amplified, just like $\qsdp$. 
Indeed, there an algorithm for either of these problems implies one for $\qsdp$, whose success probability may be amplified. 
The reductions from $\qncp$ and $\recqncp$ to $\qsdp$ in Theorem \ref{thm:Search_sqncp-qsdp_equivalence} and Theorem \ref{thm:Search_esqncp-qsdp_equivalence} are easily seen to preserve arbitrary success probability, and therefore the amplified algorithm for $\qsdp$ yields one for these two problems as well. 
\end{remark}

\begin{remark}
    The instance-by-instance reductions involving $\qsdp$ do not hold in the average case, because our reductions rely on being able to carefully choose the right logical state and error to give to the solver. 
    In the average case, we do not have freedom to choose the logical state and error as we desire.
    Since $\qsdp$ is not the most natural definition of decoding (as compared to the direct state recovery task), we do not even define an average-case analog.
\end{remark}

\subsection{Average-Case Problems}
\label{sec:avg-case}

The main average-case decoding task of interest to us is termed \textit{Learning Stabilizers with Noise} ($\lsn$) and was first defined by \cite{poremba2025learningstabilizersnoiseproblem}.
The security parameter of the problem will be $n$, the number of physical qubits. 
The choice of $n$ as security parameter differs from the convention for $\lpn$. However, it is necessary because we will eventually relate $\lsn$ to other problems whose security depends solely on $n$; for example, when $k=1$.
Nevertheless, in the regimes where $n = \poly(k)$, the parameter $k$ may be equivalently be taken as the security parameter. Typically, we consider constant rates, where either $k = \Theta(n)$, or vice versa, and thus the overall asymptotics end up being the same.

\subsubsection{Learning Stabilizers with Noise}

The average-case quantum decoding task directly corresponding to worst-case problem $\qncp$ is $\slsn$, to be introduced shortly. 
But we will begin by introducing $\lsn$, which is an easier decoding problem where the output is a classical bitstring. 
Due to this simplifying fact, $\lsn$ has many favorable properties, which will allow us to establish self-reducibility properties and eventually relate it to $\lpn$, obtaining a lower bound on the hardness of $\slsn$ as well.  

\begin{definition}[Learning Stabilizers with Noise, $\lsn$]\label{def:search-LSN}
The \emph{Learning Stabilizers with Noise} problem, denoted by $\mathsf{LSN}(k, n, p)$, is characterized by integers $k,n \in \mathbb{N}$ and a noise parameter $p \in (0,1)$. Here, both $p$ and $k$ can vary with $n$. We consider two variants:
\begin{description}
    \item $\bullet$ \textbf{Search} $\mathsf{LSN}(k, n, p)$ is the following task: given as input a sample of the form \begin{align}
        \left(\vec C \sim \CC_n, \, \vec E  \vec C \ket{0^{n-k},\vec x}\right)  
    \end{align}
where $\vec C$ is a random $n$-qubit Clifford operator and $\vec E \sim \mathcal{D}_p^{\otimes n}$ is an $n$-qubit Pauli error, and $\vec x \sim \Z_2^k$ is a random logical basis state, the task is to find $\vec x$; and
 \item $\bullet$ \textbf{Decision} $\mathsf{LSN}(k, n, p)$ is the following task: given as input a simple which is either \begin{align}
     \overbrace{\left(\vec C \sim \CC_n, \; \vec E  \vec C\ket{0^{n-k},\vec x}\right)}^{\text{Structured sample}}  \quad \text{ or } \quad \overbrace{\Big(\vec C \sim \CC_n, \; \frac{\vec{I}^{\otimes n}}{2^n}\Big)}^{\text{Unstructured sample}} \quad
 \end{align}
where $\vec C$ is a random $n$-qubit Clifford, $\vec E \sim \mathcal{D}_p^{\otimes n}$ is an $n$-qubit Pauli, $\vec x \sim \Z_2^k$ is a random basis state and $\vec I_n$ is the $n$-qubit identity, the task is to distinguish which is the case.
\end{description}
In addition, we also consider the multi-shot, or multi-sample, variant of the problem, denoted by $\mathsf{LSN}^m(k, n, p)$, which is characterized by an additional parameter $m \in \mathbb{N}$. In the search variant, the input consists of many independent samples of the form
\begin{align}
    \left\{\vec C_i \sim \CC_n, \; \vec E_i  \vec C_i\ket{0^{n-k},\vec x}\right\}_{i=1}^m \,,
\end{align}
where $\vec E_i \sim \mathcal{D}_p^{\otimes n}$ are independently chosen Pauli errors, for $i \in [m]$, and the task is to recover $\vec x \in \Z_2^k$ which appears in each of the samples. In the decision variant, the input either consists of the multi-sample instances above, or $m$ copies of the identity matrix $(\vec I_n/2^n)^{\otimes m}$.
\end{definition}

% \begin{remark}[Information-theoretic solvability]
%     Not all choices of $(k, n, p)$ are information-theoretically solvable with probability exponentially close to $1$.
%     A set of such parameters is given by the folklore quantum Gilbert-Varshamov bound~\cite{poremba2025learningstabilizersnoiseproblem} and the standard Chernoff bound.
%     We will implicitly assume for the entirety of this paper that the parameters $(k, n, p)$ for $\lsn$ and all variants being considered are information-theoretically solvable with probability $1 - \negl(n)$, since all of our reductions are vacuous otherwise.
% \end{remark}

\begin{remark}[Worst-case vs average-case]
    In the worst case, the decision variant of stabilizer decoding is deciding whether or not a solution exists at all.
    In the average case, the decision variant is instead to distinguish between either a true noisy codeword or a maximally mixed state.
    This distinction reflects a broader convention wherein worst-case formulations consist of solution existence, whereas average-case formulations consist of distinguishing two statistically different scenarios.
    The latter convention is motivated by consistency with the classical $\lpn$ formulation.
\end{remark}

\begin{remark}[Parameters]\label{remark:lsn-parameters}
For purposes of practical error correction, the relevant parameter regime for $\mathsf{LSN}(k, n, p)$ and $\mathsf{LSN}^m(k, n, p)$ with security parameter $n \in \mathbb{N}$, $k < n$, $m=\poly(n)$, and where $p$ is a small constant.
\end{remark}

We say that a quantum algorithm solves Search $\mathsf{LSN}(k, n, p)$ (respectively, $\mathsf{LSN}^m(k, n, p)$) problem if it runs in time $\poly(n)$ and succeeds with probability at least $\frac{1}{2^k}+\frac{1}{\poly(n)}$ over the random choice of Clifford, Pauli error, basis state $\vec x$, and its internal randomness. 
The extra term $2^{-k}$ is motivated by our desire to study the parameter regime where $k$ is extremely small, e.g. a constant, at which point random guessing would trivially succeed with probability $\frac{1}{\poly(n)}$.
For $k = \Omega(\log n)$, the presence of the $\frac{1}{2^k}$ term becomes irrelevant.
In the decision variants of $\mathsf{LSN}(k, n, p)$ (respectfully, $\mathsf{LSN}^m(k, n, p)$), we require an algorithm that can distinguish with advantage $\frac{1}{\poly(n)}$, or equivalently we require an algorithm that determines the sample type correctly with probability $\frac{1}{2}+\frac{1}{\poly(n)}$ when the instance type (structured or unstructured) is chosen uniformly at random. Finally, when $m = \poly(n)$, we frequently also use the notation $\mathsf{LSN}^{\poly}(k, n, p)$ to denote the multi-sample variant of the problem.

% \begin{remark}[CSS variants of $\lsn$]
%  In the special case where the random Clifford encoding is restricted to $\mathsf{CSS}$ codes only, we denote the problems by $\mathsf{CSS}\mbox{-}\mathsf{LSN}(k, n, p)$ and $\mathsf{CSS}\mbox{-}\mathsf{LSN}^{\poly}(k, n, p)$, respectfully.
% \end{remark}

When the code distributions are uniformly random encodings, we will use the same abbreviations as for the search version. 
We also define the polynomial-sample version of the decision problem in the same way. 
We note a simple observation that the success probability of a $\lsn$ solver for any distribution of the secret $\vec x \in \Z_2^k$ is uniformly lower bounded by its success probability when $\vec x$ is drawn from the uniform distribution.
Therefore, our choice of letting $\vec x$ be uniformly random in the definition is the hardest variant of the problem in terms of the distribution of the secret.

\begin{lemma}[Secret re-randomization] \label{lemma:secret_rerandomize}
    Suppose $\CA$ is an algorithm which solves Search $\lsn^m(k, n, p)$ with probability $q$.
    Let $\CD$ be an arbitrary distribution over $\Z_2^k$, and define Search $\lsn^m(k, n, p, \CD)$ to be the same task as Search $\lsn^m(k, n, p)$, but with $\vec x$ drawn from $\CD$.
    Then there is an algorithm which solves Search $\lsn^m(k, n, p, \CD)$ to probability $q$, using efficient quantum pre-processing and efficient classical post-processing, with a single call to $\CA$ in between.
    The same statement holds for Decision $\lsn^m(k, n, p)$ with advantage $\delta$.
\end{lemma}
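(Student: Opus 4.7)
The plan is to rerandomize the secret by applying a uniformly random logical $\vec{X}$ operator to each noisy code state, which converts a sample with secret drawn from $\CD$ into a sample with uniformly random secret, without changing the Clifford or the noise distribution.

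Concretely, given the input $\{(\vec C_i, \vec E_i \vec C_i \ket{0^{n-k},\vec x})\}_{i=1}^m$ with $\vec x \sim \CD$, first sample a single $\vec y \sim \Z_2^k$ uniformly. For each $i$, classically compute the logical Pauli $\overline{\vec X}_{\vec y}^{(i)} := \vec C_i (\vec I^{\otimes (n-k)} \otimes \vec X^{y_1} \otimes \cdots \otimes \vec X^{y_k}) \vec C_i^{\dagger}$ using the symplectic representation of $\vec C_i$ (which is efficient by Theorem~\ref{thm:circuit_tableau_equivalence}), and apply it to the $i$-th code state. Since $\overline{\vec X}_{\vec y}^{(i)}$ is an $n$-qubit Pauli, it either commutes or anticommutes with the Pauli error $\vec E_i$, so up to a global sign
\begin{align}
\overline{\vec X}_{\vec y}^{(i)} \, \vec E_i \vec C_i \ket{0^{n-k},\vec x} = \pm \vec E_i \vec C_i \vec X_{\vec y} \ket{0^{n-k},\vec x} = \pm \vec E_i \vec C_i \ket{0^{n-k},\vec x + \vec y}.
\end{align}
Hence the transformed multi-sample input is a valid instance of Search $\lsn^m(k,n,p)$ with shared secret $\vec x + \vec y$, which is uniform on $\Z_2^k$ regardless of $\CD$ because $\vec y$ is uniform and independent of $\vec x$. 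Feed this to $\CA$ to obtain $\vec z$, and classically output $\vec z - \vec y$; this equals $\vec x$ exactly when $\CA$ succeeded, so the overall success probability is at least $q$.

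For the decision variant, apply the exact same preprocessing. The structured distribution is mapped to a structured distribution (with uniform secret $\vec x + \vec y$) by the argument above, while the unstructured distribution $\vec I_n / 2^n$ is invariant under any unitary conjugation and hence under applying $\overline{\vec X}_{\vec y}^{(i)}$. So the instance type is preserved, and one simply returns $\CA$'s decision bit; the advantage $\delta$ is therefore unchanged.

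I do not anticipate any real obstacle: the only thing to verify carefully is that $\overline{\vec X}_{\vec y}^{(i)}$ is a Pauli (so that commuting it past $\vec E_i$ only produces a global sign that vanishes on the state), and that the identity component of the unstructured sample is genuinely unitary-invariant; both are immediate. The one subtlety worth flagging is that we must use the \emph{same} $\vec y$ across all $m$ samples, since the shared secret is a feature of the $\lsn^m$ distribution that must be preserved.
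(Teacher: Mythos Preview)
Your proof is correct and follows essentially the same approach as the paper: sample a uniform $\vec y$, apply the logical $\overline{\vec X}_{\vec y}^{(i)}$ to each noisy code state to shift the shared secret to the uniform $\vec x+\vec y$, call $\CA$, and subtract $\vec y$; for decision, note the maximally mixed state is unitary-invariant. Your write-up is slightly more careful than the paper's in making explicit that commuting $\overline{\vec X}_{\vec y}^{(i)}$ past $\vec E_i$ only introduces a harmless global sign and that the same $\vec y$ must be used across all $m$ samples.
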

\begin{proof}
    Consider first the search variant.
    Sample a random $\vec y \sim \Z_2^k$ and for each sample $(\vec C_i, \;\vec E_i \vec C_i \ket{0^{n-k}, \vec x})$ (where $\vec x \sim \CD$), apply the logical operator $\vec{\overline{X}}^{(i)}_{\vec y} = \vec C_i \vec X_{\vec y} \vec C_i^\dagger$ to the noisy code state. 
    This produces the states $\vec E_i \vec C_i \ket{0^{n-k}, \vec z}$, where $\vec z = \vec x + \vec y$ is a uniformly random element of $\Z_2^k$.
    Running $\CA$ on the transformed samples, we obtain $\vec z$ with probability $q$.
    We then output $\vec z - \vec y = \vec x$, thereby answering the Search $\lsn^m(k, n, p, \CD)$ problem correctly with probability $q$.

    In the decision variant, the transformation is the same.
    We note that in the unstructured case, the given state is maximally mixed, so applying any unitary operator leaves the state invariant.
    Thus, the advantage remains $\delta$ in the decision variant as well.
\end{proof}

\subsubsection{Learning Stabilizers With Noise, State Variant}
The problem $\lsn$ represents the average-case complexity of decoding a \textit{classical} input after errors. 
It is primarily useful when a certain application for $\lsn$ is easier to analyze when the output of the task is classical.
The more general problem of stabilizer decoding is more naturally defined when the logical state itself is a completely uniformly random state.
We therefore next define $\slsn$, which replaces the logical computational basis state of $\lsn$ with a Haar-random state.

\begin{definition}[Learning Stabilizers with Noise, state variant $\slsn$]\label{def:search-sLSN}
The $\mathsf{state}$ variant of the Learning Stabilizers with Noise problem, denoted by $\slsn^m(k,n,p)$, is characterized by integers $k,n,m \in \mathbb{N}$ and $p \in (0,1)$. Both $p$ and $k$ can vary with $n$. We consider two variants of the problem:
\begin{description}
    \item $\bullet$ \textbf{Search} $\slsn^m(k,n,p)$ is the following task: given as input samples of the form \begin{align}
        \left\{\vec C_i \sim \CC_n, \; \vec E_i \vec C_i\ket{0^{n-k},\psi}\right\}_{i=1}^m \,,
    \end{align}
where $\vec C_i$ are random $n$-qubit Clifford operators, $\vec E_i \sim \mathcal{D}_p^{\otimes n}$ are $n$-qubit Paulis, and $\ket{\psi} \sim \mathfrak{\mu}_k$ is a Haar random $k$-qubit state, the task is to output a quantum state $\rho$ within average fidelity at least $\frac{1}{2^k}+\frac{1}{\poly(n)}$ of $\ket{\psi}$ over the choice of $\ket{\psi} \sim \mu_k$; that is,
\begin{align}
    \underset{\ket{\psi} \sim \mu_k}{\mathbb{E}}[\braket{\psi|\rho|\psi}] \geq \frac{1}{2^k} + \frac{1}{\poly(n)}.
\end{align}
 \item $\bullet$ \textbf{Decision} $\slsn^m(k,n,p)$ is the following task: given as input a sample which is either \begin{align}
     \overbrace{\left\{\vec C_i \sim \CC_n, \; \vec E_i  \vec C_i\ket{0^{n-k},\psi}\right\}_{i=1}^m }^\text{Structured samples} \quad \text{ or } \quad \overbrace{\left\{\vec C_i \sim \CC_n, \;\vec I_n/2^n\right\}_{i=1}^m}^{\text{Unstructured samples}} \quad
 \end{align}
where $\vec C_i$ are random $n$-qubit Cliffords, $\vec E_i \sim \mathcal{D}_p^{\otimes n}$ are $n$-qubit Paulis, $\ket{\psi} \sim \mathfrak{\mu}_k$ is a Haar random $k$-qubit state, and $\vec I_n$ is the $n$-qubit identity, the task is to distinguish which is the case with non-negligible advantage.
\end{description}
\end{definition}
Like with \emph{Search} $\lsn$, a naive algorithm for \emph{Search} $\slsn$ of outputting a maximally mixed state would succeed with probability exactly $\frac{1}{2^k}$. 
A solution therefore constitutes a non-negligible improvement over this baseline. Overall, we define the problem parameters and advantage similar as for the regular $\lsn$ problem (see Remark~\ref{remark:lsn-parameters}).

\begin{remark}[Amplification]
The definitions of a solution to \emph{Search} $\lsn$ and \emph{Search} $\slsn$ both require the weakest solutions, requiring only a chance of success non-negligibly better than random guessing.
A similar definition is conventionally used for $\lpn$.
In the classical setting, however, an algorithm for $\lpn$ which succeeds with non-negligible probability can be efficiently amplified to an algorithm which succeeds with high probability.
This originates from the observation that classically, given a purported solution $\vec x'$ to a sample $(\vec{A}, \vec{A}\vec{x}+\vec{e})$, one can subtract $\vec{A}\vec{x'}$ from $\vec{A}\vec{x}+\vec{e}$ and check if it is low-weight.
With high probability, $\vec{A}\vec{x} - \vec{A}\vec{x}' + \vec{e}$ is low-weight if and only if $\vec x = \vec x'$.
In order to amplify the success probability, we split a $\lpn$ sample into different chunks of rows, effectively obtaining multiple samples $\{(\vec{A}_i, \;\vec{A}_i\vec{x}_{i} + \vec{e}_{i} )\}_{i \in [m]}$ from a single one. 
Then, the secrets can each be re-randomized by sampling uniformly random $\vec y_i$ and then adding $\vec{A}_i\vec{y}_{i}$ to each secret, yielding independent samples $\{(\vec{A}_i, \;\vec{A}_i\vec{x}_{i} + \vec{e}_{i})\}_{i \in [m]}$. 
Finally, applying the algorithm that solves $\lpn$ for each of these independent samples, eventually one of them will be verified correct, and can be used to reconstruct $\vec{x}$. 
In this way, the probability of success may be amplified. 

\begin{remark}[Increasing noise] \label{remark:lsn_larger_p}
    A direct consequence of Lemma~\ref{lemma:depolarizing_convolution} is that for $q > p$, $\lsn^m(k, n, p)$ reduces to  $\lsn^m(k, n, q)$ in both the search and decision variants, since we can always add extra noise to increase the noise parameter.
    The same holds for $\slsn^m$.
\end{remark}

Meanwhile, the success amplification procedure \textit{cannot} be performed for either $\lsn$ or $\slsn$.
One key issue is that $\lsn$ does not admit an analogous recursive structure that enables us to split a single $\lsn$ sample into many smaller $\lsn$ samples.
More importantly, there does not seem to be an efficient way to verify a purported solution to $\lsn$.
Therefore, it is unclear as to whether, quantumly, decoding with high probability is genuinely harder or equivalent to decoding with non-negligible probability.
Since we are interested in proving hardness lower bounds, we take the problem in the easiest form, wherein non-negligible success probability algorithms suffice. 

Moreover, we later show that for Search $\lsn[][\poly]$, succeeding with non-negligible probability and with high probability are polynomially equivalent, by giving an efficient verification protocol that requires multiple samples.
The argument is fundamentally different from our previous classical one, because although we now have access to multiple samples, there is still no way to verify a correct solution to an $\lsn$ problem efficiently. 
\end{remark}
When there are few samples, the decision and search versions of $\slsn$ may be related directly to their $\lsn$ counterparts.

\begin{lemma}[Decision $\slsn =$ Decision $\lsn$] \label{lem:equiv-slsn-lsn}
Let $k,n \in \mathbb{N}$ and $p \in (0,1)$ be parameters. A quantum algorithm solves
Decision $\slsn(k, n, p)$ with success probability $q$ if and only if it solves Decision $\lsn(k, n, p)$ with success probability $q$. 
\end{lemma}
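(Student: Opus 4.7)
The plan is to show that the two decision problems have identical input distributions from the algorithm's perspective, so that any quantum algorithm produces the same acceptance probability on both. First, I observe that the unstructured samples are defined identically in the two problems: both consist of a random Clifford together with the maximally mixed state $\vec{I}_n/2^n$. So the entire content of the statement reduces to showing that the structured samples also have the same distribution as pairs of (Clifford description, quantum state).

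Next, for the structured case, I would reduce everything to the first-moment calculation of the two secret distributions. Because the algorithm sees only a single copy of the sample, its acceptance probability on the structured case can be written, by linearity of the trace, as
\begin{align*}
\mathbb{E}_{\vec C, \vec E}\!\left[\operatorname{Tr}\!\left(M_1(\vec C)\cdot \vec E\vec C\bigl(\ketbra{0^{n-k}}{0^{n-k}}\otimes \sigma\bigr)\vec C^\dagger \vec E^\dagger\right)\right],
\end{align*}
where $M_1(\vec C)$ is the ``accept'' POVM element of the algorithm on Clifford description $\vec C$, and $\sigma$ is the average of the logical state over the secret distribution. For $\lsn$, $\sigma=\mathbb{E}_{\vec x\sim\Z_2^k}[\ketbra{\vec x}{\vec x}]=\vec I_k/2^k$, and for $\slsn$, $\sigma=\mathbb{E}_{\ket\psi\sim\mu_k}[\ketbra{\psi}{\psi}]=\vec I_k/2^k$ by left-invariance of the Haar measure. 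Since the two values of $\sigma$ coincide, the acceptance probabilities on the structured samples are equal.

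Combining the two directions gives that the total success probability (averaging over whether the instance is structured or unstructured) is the same for a given algorithm on both problems. Running exactly the same algorithm on the other problem achieves success probability $q$, which proves the ``if and only if.'' The only step requiring any care is noticing that the equality of success probabilities does not require the underlying joint distributions of $(\vec C,\vec E,\text{secret})$ to agree, only the induced distribution over $(\vec C,\rho)$ where $\rho$ is the quantum-state part of the sample, and this in turn only needs $\mathbb{E}[\ketbra{\vec x}{\vec x}]=\mathbb{E}[\ketbra{\psi}{\psi}]$ because a single-copy measurement is linear in $\rho$; there is no genuine obstacle here and no calculation beyond the first-moment identity for the Haar measure.
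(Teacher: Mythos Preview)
Your proof is correct and takes essentially the same approach as the paper: both arguments observe that the averaged logical state is $\vec I_k/2^k$ in each case (Haar-random and uniform computational basis), so the structured-sample distributions coincide, and hence the decision tasks are identical. Your version is simply more explicit about the POVM/linearity justification, while the paper states the density-matrix equality in one line.
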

\begin{proof}
It suffices to observe that the distributions of 
\begin{align}
    (\vec C \sim \CC_n, \; \vec E  \vec C\ket{0^{n-k},\psi}) \quad \text{ and } \quad (\vec C \sim \CC_n, \; \vec E  \vec C\ket{0^{n-k},\vec x})
\end{align} 
for a $\vec C$ random $n$-qubit Clifford operator and $n$-qubit Pauli error $\vec E \sim \mathcal{D}_p^{\otimes n}$ are identical when $\ket{\psi}\sim \mu_k$ is a Haar-random state and $\vec x \sim \Z_2^k$. The density matrix of a Haar-random state is precisely the maximally mixed state $\vec I/2^{k}$ on $k$ qubits~\cite{mele2024introduction}---just like for the uniform distribution over $k$ random bits. 
Therefore, the
the two decision tasks are identical.  
\end{proof}

We next prove that with either one or two samples, $\lsn$ reduces directly to $\slsn$ with the same parameters.
This reduction follows from elementary properties of $t$-designs, which we first capture in a lemma.

\begin{lemma}[Relaxation to $t$-designs] \label{lem:designLSN}
Let $k,n \in \mathbb{N}$ and $p \in (0,1)$ be parameters. Then, the Search $\slsn^m(k,n,p)$ problem is equivalent (in the sense that an efficient algorithm for one exists if and only if an efficient algorithm exists for the other) to a variant of the problem in which the encoded Haar state $\ket{\psi} \sim \mu_k$ is instead sampled according to a state $t$-design ensemble $\nu$, whenever $t \geq m+1$.
\end{lemma}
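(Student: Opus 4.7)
My plan is to show that the success probability of any algorithm for Search $\slsn^m(k,n,p)$ depends only on the $(m+1)$-th moment $\mathbb{E}_{\ket{\psi}}\!\left[\ketbra{\psi}{\psi}^{\otimes (m+1)}\right]$ of the ensemble from which $\ket{\psi}$ is drawn, and then invoke the defining property of a state $t$-design for $t \geq m+1$. The same algorithm then achieves identical average fidelity regardless of whether the encoded state is sampled from $\mu_k$ or from $\nu$.

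First, I would formalize the pipeline as a fixed linear map in $\ketbra{\psi}{\psi}^{\otimes m}$. Each sample $\vec E_i \vec C_i \ketbra{0^{n-k},\psi}{0^{n-k},\psi} \vec C_i^\dagger \vec E_i^\dagger$ is linear in $\ketbra{\psi}{\psi}$, so after absorbing the sampling of Cliffords $\vec C_i$ (with their classical descriptions placed in a diagonal register) and of errors $\vec E_i \sim \mathcal{D}_p^{\otimes n}$, the full input density matrix $\rho_{\mathrm{in}}(\psi)$ is a fixed linear function of $\ketbra{\psi}{\psi}^{\otimes m}$. Composing with the algorithm's own CPTP map $\mathcal{A}$, the algorithm's output may be written as $\rho_{\mathrm{out}}(\psi) = \mathcal{M}(\ketbra{\psi}{\psi}^{\otimes m})$ for some fixed linear map $\mathcal{M}$ independent of $\ket{\psi}$.

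Next, I would compute the success criterion. Since $\bra{\psi}\rho_{\mathrm{out}}(\psi)\ket{\psi} = \mathrm{tr}\!\left[\ketbra{\psi}{\psi} \cdot \mathcal{M}(\ketbra{\psi}{\psi}^{\otimes m})\right]$ is bilinear in $\ketbra{\psi}{\psi}$ and $\ketbra{\psi}{\psi}^{\otimes m}$, a Choi-type rearrangement produces a fixed Hermitian operator $M_{\mathcal{M}}$ such that
\begin{align*}
\mathbb{E}_{\ket{\psi}}\!\left[\bra{\psi}\rho_{\mathrm{out}}(\psi)\ket{\psi}\right]
\;=\; \mathrm{tr}\!\left[M_{\mathcal{M}} \cdot \mathbb{E}_{\ket{\psi}}\!\left[\ketbra{\psi}{\psi}^{\otimes (m+1)}\right]\right].
\end{align*}
Thus the average fidelity is entirely determined by the $(m+1)$-th moment of the state ensemble, independent of any other features of the distribution.

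Finally, I would apply Definition~\ref{def:t_design}: when $\nu$ is a state $t$-design with $t \geq m+1$, tracing out $t-(m+1)$ of the tensor factors in the identity $\mathbb{E}_{\ket{\psi} \sim \nu}[\ketbra{\psi}{\psi}^{\otimes t}] = \mathbb{E}_{\ket{\psi} \sim \mu_k}[\ketbra{\psi}{\psi}^{\otimes t}]$ yields equality of the $(m+1)$-th moments. Combined with the previous display, the Haar and $t$-design ensembles give identical success probability for every algorithm, establishing the claimed equivalence. The main subtlety---and the reason for the threshold $t \geq m+1$ rather than $t \geq m$---is that the fidelity pairing against $\rho_{\mathrm{out}}(\psi)$ introduces a single extra copy of $\ket{\psi}$ beyond the $m$ input copies; once this bookkeeping is tracked, no real obstacle remains.
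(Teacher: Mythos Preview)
Your proposal is correct and follows essentially the same approach as the paper: both arguments express the algorithm's input as a fixed channel applied to $\ketbra{\psi}{\psi}^{\otimes m}$, observe that pairing the output against $\ketbra{\psi}{\psi}$ for the fidelity introduces one additional copy, and conclude that the expected fidelity depends only on the $(m+1)$-th moment of the state ensemble. Your exposition is in fact slightly more explicit than the paper's (which compresses the linearity into a map $f(\CD(\rho_\psi)\otimes\psi)$ without spelling out the Choi-type rearrangement), but the underlying idea is identical.
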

\begin{proof}
We may interpret the input to $\lsn^m$ as the output of a channel $\CE$ which maps
$\psi = \ket{\psi}\bra{\psi}^{\otimes m}$ to 
\begin{equation}
\rho_{\psi}  = \CE(\psi) = \mathop{\mathbb{E}}\limits_{\substack{\vec E_i \sim \CD_p^{\otimes n}\\\vec C_i \sim \CC_n}} \left[ \bigotimes_{i=1}^m \ket{\vec C_i}\bra{\vec C_i} \otimes (\vec E_i\vec C_i)\ket{0^{n-k},\psi}\bra{0^{n-k},\psi}(\vec E_i\vec C_i)^\dag\right] .
\end{equation}
Consider a state $\sigma$ given by $m+1$ copies of the logical state. 
That is, let \begin{align}
    \sigma = \mathbb{E}_{\ket{\psi} \sim \mu_k}\left[(\ket{\psi}\bra{\psi})^{\otimes (m+1)}\right] .
\end{align}
An algorithm $\CD$ solving $\slsn(k, n, p)$, can be written as a map taking $\rho_\psi$ to $\tau_\psi = \CD(\rho_\psi)$.
Then, the trace map $f$ taking $\tau_\psi \otimes \psi$ to $\langle \psi | \tau_\psi | \psi \rangle$ measures the success probability.
In particular, 
\begin{align}
\Pr_{\ket{\psi} \sim \mu_k} \left[\langle \psi | \tau_\psi | \psi \rangle \geq \frac{1}{2^k} + \frac{1}{\poly(n)} \right] & = \underset{\ket{\psi} \sim \mu_k}{\mathbb{E}} [f(\CD(\rho_\psi) \otimes \psi)] = \underset{\ket{\psi} \sim \nu}{\mathbb{E}} [f(\CD(\rho_\psi) \otimes \psi)] \\
& = \Pr_{\ket{\psi} \sim \nu} \left[\langle \psi | \tau_\psi | \psi \rangle \geq \frac{1}{2^k} + \frac{1}{\poly(n)} \right]
\end{align}
by the defining property of a state $t$-design for $t \geq m+1$.
\end{proof}

\begin{proposition}[Search $\lsn$ to $\slsn$ with $\leq 2$ samples] Let $n,k \in \mathbb{N}$ and $p \in (0,1)$ be parameters.
There exists an efficient reduction from Search $\lsn^{m}(k, n, p)$ to Search $\slsn^{m}(k, n, p)$, for both $m=1$ and $m=2$.
\end{proposition}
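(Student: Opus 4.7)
The plan is to invoke Lemma~\ref{lem:designLSN} to replace the Haar measure in $\slsn^m$ with the uniform distribution over $k$-qubit stabilizer states, which (by Kueng--Gross) is known to form a state $3$-design on any number of qubits. This suffices precisely because $m \in \{1,2\}$ requires only a $(m+1) \leq 3$ design.

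Concretely, given an $\lsn^m$ instance $\{(\vec C_i,\, \vec E_i \vec C_i \ket{0^{n-k}, \vec x})\}_{i=1}^m$, I would sample a single uniformly random $k$-qubit Clifford $\vec D \sim \CC_k$ (shared across all samples) and set $\vec C_i' := \vec C_i (\vec I \otimes \vec D)$, rewriting each sample as
\begin{align*}
\vec E_i \vec C_i \ket{0^{n-k}, \vec x} \;=\; \vec E_i \vec C_i' (\vec I \otimes \vec D^\dag) \ket{0^{n-k}, \vec x} \;=\; \vec E_i \vec C_i' \ket{0^{n-k}, \phi},
\end{align*}
where $\ket{\phi} := \vec D^\dag \ket{\vec x}$. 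Next I would verify three distributional claims: (i) right-multiplication by $\vec I \otimes \vec D$ is a bijection on $\CC_n$, so each $\vec C_i'$ is still uniformly distributed in $\CC_n$; (ii) the Pauli error $\vec E_i$ is untouched and retains its $\CD_p^{\otimes n}$ distribution; and (iii) since qubit stabilizer states are exactly the Clifford orbit of $\ket{0^k}$ and $\vec D^\dag \vec X_{\vec x}$ is uniform in $\CC_k$ whenever $\vec D$ is, the shared state $\ket{\phi}$ is uniformly distributed over the $k$-qubit stabilizer states. Hence the transformed samples are exactly an instance of the stabilizer-state variant of $\slsn^m$, which by Lemma~\ref{lem:designLSN} can be solved with the $\slsn^m$ oracle to produce $\rho$ satisfying $\mathbb{E}[\bra{\phi}\rho\ket{\phi}] \geq \frac{1}{2^k} + \frac{1}{\poly(n)}$.

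To recover $\vec x$, I would apply $\vec D$ to $\rho$ and measure in the computational basis. Since $\ket{\vec x} = \vec D \ket{\phi}$, the probability of seeing outcome $\vec x$ equals $\bra{\phi}\rho\ket{\phi}$, and averaging over all randomness (including the internal randomness $\vec D$) preserves the $\frac{1}{2^k} + \frac{1}{\poly(n)}$ success guarantee required of a Search $\lsn^m$ solver.

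The main obstacle is the appeal to the non-trivial theorem that uniformly random qubit stabilizer states form a complex projective $3$-design; absent this fact the ensemble of $\vec D^\dag\ket{\vec x}$ would only give a $1$-design (as noted, the logical computational basis states alone are merely a $1$-design), which would not be enough for either value of $m$. The restriction to $m \leq 2$ is inherent to this strategy: qubit stabilizer states do not form a $4$-design in general, so for $m \geq 3$ one would need a substantially richer efficiently-sampleable ensemble of logical states, which is not obtainable from Cliffords alone.
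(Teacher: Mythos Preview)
Your proof is correct and follows essentially the same approach as the paper: both apply a shared random $k$-qubit Clifford on the logical register to convert $\ket{\vec x}$ into a uniformly random stabilizer state, then invoke the state $3$-design property together with Lemma~\ref{lem:designLSN}. You include the recovery step (apply $\vec D$ and measure) that the paper leaves implicit, and you cite Kueng--Gross where the paper cites Webb, but the arguments are otherwise identical.
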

\begin{proof}
To construct the reduction, we transform a given sample of $\lsn^m(k,n,p)$ and apply the solver for $\slsn^m(k,n,p)$. 
The input for the reduction is given by
\begin{equation}
\left\{\vec C_i \sim \CC_n, \; \vec E_i \vec C_i\ket{0^{n-k},\vec x}\right\}_{i=1}^m.
\end{equation}
Sample a random Clifford operator $\vec B \sim \CC_k$.
The reduction outputs 
\begin{equation}
\left\{\vec C_i (\vec I^{\otimes(n-k)} \otimes \vec B), \; \vec E_i \vec C_i\ket{0^{n-k},\vec x}\right\}_{i=1}^m.
\end{equation}
Note that, for any $i \in [m]$, $\vec C_i(\vec I^{\otimes(n-k)} \otimes \vec B)$ is a uniformly random Clifford because $\vec C_i$ is a uniformly random Clifford. 
Substituting $\vec C_i' \coloneqq \vec C_i (\vec I^{\otimes(n-k)} \otimes \vec B)$, the distribution can be expressed as 
\begin{equation}
\left\{ (\vec C_i', \;\vec E_i\vec C_i'(\vec I^{\otimes(n-k)} \otimes \vec B)^\dag\ket{0^{n-k},\vec x}) = (\vec C_i', \;\vec E_i \vec  C_i'\ket{0^{n-k},\psi})\right\}_{i=1}^m ,
\end{equation}
where $\ket{\psi} = \vec B^\dag\ket{\mathbf{x}}$. 
Since the Cliffords form a unitary 3-design~\cite{webb2015clifford}, $\ket{\psi}$ is now in the transformed state drawn from a 3-design.
The reduction is therefore completed by application of Lemma~\ref{lem:designLSN}.
\end{proof}

It is unclear whether a reduction from Search $\slsn$ to Search $\lsn$ exists in general.
That is, it is possible that decoding on average with a typical logical state is harder than decoding on average with a typical computational basis state.
However, we are able to give a reduction from Search $\slsn(k, n, p)$ to $\lsn(k, n, p)$ if the solver for the latter succeeds with probability non-negligibly greater than $\frac{1}{2}$, rather than the previously assumed $\frac{1}{2^k}$.

\begin{proposition}[Search $\slsn$ to high-success Search $\lsn$] \label{prop:slsn_to_lsn}
Suppose that there is an efficient algorithm $\mathcal{A}$ that solves Search $\lsn[k, n, p]$ with probability $\frac{1}{2}\left(1+\frac{1}{2^{k}}+\frac{1}{\poly(n)}\right)$. 
Then there exists an algorithm $\mathcal{B}$ that solves Search $\slsn[k, n, p]$ with probability $\frac{1}{2^{k}} + \frac{1}{\poly(n)}$. 
\end{proposition}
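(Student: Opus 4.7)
The plan is to reduce Search $\slsn$ to syndrome decoding and then implement that via two calls to $\mathcal{A}$ on carefully built $\lsn$ samples. Given an $\slsn$ sample $(\vec{C}, \vec{E}\vec{C}\ket{0^{n-k},\psi})$, the reduction $\mathcal{B}$ first nondestructively measures the stabilizer syndrome $\vec{v}$ of the code $\vec{C}$ and then uses Gaussian elimination to produce a Pauli $\vec{E}_0$ with syndrome $\vec{v}$. Since $\vec{E}_0^\dagger\vec{E}$ commutes with every stabilizer, we may write $\vec{E}_0^\dagger\vec{E} = \vec{S}\cdot\overline{\vec{X}}^{\vec{u}}\overline{\vec{Z}}^{\vec{v}'}$ (up to phase) for some stabilizer $\vec{S}$ and unknown $\vec{u},\vec{v}'\in\Z_2^k$. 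The key observation, analogous to Theorem~\ref{thm:Search_sqncp-qsdp_equivalence}, is that up to global phases,
\begin{align}
\vec{E}_0\vec{C}\ket{0^n} &= \vec{E}\vec{C}\ket{0^{n-k},\vec{u}}, & \vec{E}_0\vec{C}\ket{0^{n-k},+^k} &= \vec{E}\vec{C}'\ket{0^{n-k},\vec{v}'},
\end{align}
where $\vec{C}' := \vec{C}(\vec{I}^{\otimes(n-k)}\otimes \vec{H}^{\otimes k})$ is itself a uniformly random Clifford independent of $\vec{E}$. Hence $\vec{u}$ and $\vec{v}'$ appear as the logical secrets of two $\lsn$-style instances that $\mathcal{B}$ can construct directly.

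For the first call, $\mathcal{B}$ prepares $\vec{E}_0\vec{C}\ket{0^n}$, applies a fresh uniformly random logical mask $\overline{\vec{X}}^{\vec{w}_1}$ with $\vec{w}_1\sim\Z_2^k$ independent of everything else, and feeds the result together with $\vec{C}$ to $\mathcal{A}$. Because $\vec{w}_1$ is independent of $(\vec{C},\vec{E})$, the triple $(\vec{C},\vec{E},\vec{u}+\vec{w}_1)$ is exactly distributed as an $\lsn$ sample (cf.\ Lemma~\ref{lemma:secret_rerandomize}), so $\mathcal{A}$ returns $\vec{z}_1 = \vec{u}+\vec{w}_1$ with probability at least $p_{\lsn} := \tfrac{1}{2}(1+2^{-k}) + 1/\poly(n)$; removing the mask yields a candidate $\tilde{\vec{u}}$. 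The second call applies the analogous masking trick with a fresh $\vec{w}_2$ to the $\vec{C}'$-code identity $\vec{E}_0\vec{C}\ket{0^{n-k},+^k}=\vec{E}\vec{C}'\ket{0^{n-k},\vec{v}'}$ and feeds $\vec{C}'$ (not $\vec{C}$) to $\mathcal{A}$, extracting a candidate $\tilde{\vec{v}}'$. Finally, $\mathcal{B}$ applies the recovery $\vec{E}_0\overline{\vec{X}}^{\tilde{\vec{u}}}\overline{\vec{Z}}^{\tilde{\vec{v}}'}$ to the original noisy state, decodes by $\vec{C}^\dagger$, and outputs the last $k$ qubits.

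When both $\tilde{\vec{u}}=\vec{u}$ and $\tilde{\vec{v}}'=\vec{v}'$, the composition $\vec{E}_0\overline{\vec{X}}^{\vec{u}}\overline{\vec{Z}}^{\vec{v}'}\vec{E}$ lies in the stabilizer group up to phase, so $\mathcal{B}$'s output is exactly $\ketbra{\psi}{\psi}$ with fidelity $1$; otherwise the output still has nonnegative fidelity with $\ket{\psi}$. A union bound across the two $\mathcal{A}$-calls then gives
\begin{align}
\Pr\!\bigl[\tilde{\vec{u}}=\vec{u},\ \tilde{\vec{v}}'=\vec{v}'\bigr] \;\geq\; 2p_{\lsn}-1 \;=\; \tfrac{1}{2^k} + \tfrac{1}{\poly(n)},
\end{align}
which lower bounds the average fidelity and matches the required $\slsn$ success guarantee. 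The main obstacle, and what dictates the structure above, is the second call: extracting the logical $\vec{Z}$-part $\vec{v}'$ would ordinarily require $\mathcal{A}$ to act on a logical register in a $\ket{+}^{\otimes k}$-style basis, which is not its input format; the fix is to absorb the logical Hadamard into the code itself by passing $\vec{C}'$ to $\mathcal{A}$ in place of $\vec{C}$, preserving both the uniformly random Clifford distribution and independence from the depolarizing noise.
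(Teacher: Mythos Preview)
Your proof is correct and follows essentially the same approach as the paper's: measure the syndrome, use a Pauli with that syndrome to manufacture two $\lsn$ instances (one for the logical $\vec X$ part, one via the Hadamard-shifted code $\vec C'$ for the logical $\vec Z$ part), run $\mathcal{A}$ on each, and union-bound to get success probability $2p_{\lsn}-1=\tfrac{1}{2^k}+\tfrac{1}{\poly(n)}$. The only cosmetic difference is that the paper samples a \emph{random} syndrome-matching Pauli so that the logical parts $\vec u,\vec v'$ are automatically uniform, whereas you pick a deterministic $\vec E_0$ via Gaussian elimination and then explicitly re-randomize with fresh masks $\vec w_1,\vec w_2$ (as in Lemma~\ref{lemma:secret_rerandomize}); these are equivalent and the rest of the argument is identical.
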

\begin{proof}
Consider a sample 
\begin{equation}
(\vec{C}, \;\vec{EC}\ket{0^{n-k},\psi})
\end{equation}
of $\slsn[k, n, p]$. 
First, measure the syndrome of the state and produce a random Pauli $\vec{P}$ with the same syndrome. 
We can efficiently sample $\vec P$ since the distribution is specified by a uniformly random solution to a linear system of equations.
We may write $\vec{P}$ in the form $\vec{ES}\overline{\vec{X}}_{\mathbf{y}_1}\overline{\vec{Z}}_{\mathbf{y}_2}$, $\vec S$ is a stabilizer, and $\overline{\vec{X}}_{\mathbf{y}_1}$ and $\overline{\vec{Z}}_{\mathbf{y}_2}$ are random logical $\vec{X}$ and $\vec{Z}$ operators for the code described by $C$. 
Now, run the algorithm $\mathcal{A}$ on 
\begin{equation}
(\vec{C}, \;\vec{PC}\ket{0^n}) = (\vec{C}, \;\vec{EC}\ket{0^{n-k},\mathbf{y}_1}),
\end{equation}
and on 
\begin{equation}
\left(\vec{C}(\vec{I}^{\otimes(n-k)} \otimes \vec{H}^{\otimes k}), \;\vec{P}\left(\vec{C}(\vec{I}^{\otimes(n-k)} \otimes \vec{H}^{\otimes k})\right)\ket{0^n}\right) = \left(\vec{C}', \;\vec{E}\vec{C}'\ket{0^{n-k},\mathbf{y}_2}\right),
\end{equation}
where $\vec{C}' = \vec{C}(\vec{I}^{\otimes(n-k)} \otimes \vec{H}^{\otimes k})$ and $\vec H$ is the Hadamard gate.
The equality above holds because conjugating the $\vec{Z}$ Pauli operator on the last $k$ qubits past the Hadamard operator $(\vec{I}^{\otimes(n-k)} \otimes \vec{H}^{\otimes k})$ converts it to an $\vec{X}$ Pauli.
Both of these states are valid samples of $\lsn$, since $\mathbf{y}_1$ and $\mathbf{y}_2$ are uniformly random, $\vec{C}$ and $\vec{C}'$ are both uniformly random Cliffords, and $\vec{E} \sim \mathcal{D}_p^{\otimes n}$. 

These two applications of $\mathcal{A}$ yield $\mathbf{y}_1'$ and $\mathbf{y}_2'$ as outputs. 
By a union bound, the probability that both outputs are correct is
\begin{equation}
1 - 2\cdot \left(1-\frac{1}{2}\left(1 + \frac{1}{2^k} + \frac{1}{\poly(n)}\right) \right)= \frac{1}{2^k} + \frac{1}{\poly(n)}.
\end{equation}
In this case, $\mathbf{y}_1' = \mathbf{y}_1$ and $\mathbf{y}_2' = \mathbf{y}_2$, and both the $\vec{X}$ and $\vec{Z}$ logical Paulis are known.
Now, computing 
\begin{equation}
\vec{C}^{-1} \overline{\vec{X}}_{\mathbf{y}_1}\overline{\vec{Z}}_{\mathbf{y}_2}\vec{P E C}\ket{0^{n-k},\psi} = \ket{0^{n-k},\psi}
\end{equation}
yields the desired state $\ket{\psi}$.
We have therefore described a procedure $\mathcal{B}$ that solves the problem exactly with probability $\frac{1}{2^k} + \frac{1}{\poly(n)}$. 
The expected fidelity of the output state with the desired state $\ket{\psi}$ is $\frac{1}{2^k} + \frac{1}{\poly(n)}$, as required for a solution for $\slsn$. 
\end{proof}

The above proposition is not a reduction from $\slsn$ to $\lsn$, because by our definition of $\lsn$ a solution need only solve the problem with probability $\frac{1}{2^k} + \frac{1}{\poly(n)}$.
However, one consequence of this proposition is that it proves a strong connection between the relative hardness of $\lsn$ and $\slsn$, and the ability to amplify algorithms for $\lsn$.
Specifically, if an efficient algorithm solving $\lsn$ with probability $\frac{1}{2^k} + \frac{1}{\poly(n)}$ implies an efficient algorithm with probability $1 - \negl(n)$, Proposition \ref{prop:slsn_to_lsn} implies a reduction from $\slsn$ to $\lsn$. 
We may therefore interpret Proposition~\ref{prop:slsn_to_lsn} as a barrier to the amplification for $\lsn$, as this amplification is at least as difficult as reducing $\slsn$ to $\lsn$.
By contrast, an algorithm that solves $\lpn$ with probability $\geq \frac{1}{\poly(n)}$ implies a solution with probability $\geq 1 - \negl(n)$. 

However, we are able to show that Search $\slsn$ and Search $\lsn$ are completely equivalent when there are sufficiently few logical qubits.
\begin{proposition}[Search $\slsn$ to Search $\lsn$ with very few logical qubits]
Let $k = O(\log n)$.
Suppose that there exists an oracle $\CO$ which solves Search $\lsn[k, n, p]$ with probability $\frac{1}{2^k} + \frac{1}{\poly(n)}$.
Then there is a quantum algorithm running in time $\poly(n)$ that 
solves Search $\slsn[k, n, p]$ with probability $\frac{1}{2^k} + \frac{1}{\poly(n)}$ using a single call to $\CO$.
\end{proposition}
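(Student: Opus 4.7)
The plan is to adapt the single-sample syndrome-matching trick used in the proof of Proposition~\ref{prop:slsn_to_lsn}, but to make only a \emph{single} call to the $\lsn$ oracle, extracting only the logical-$\vec{X}$ correction while tolerating a residual, uniformly random logical-$\vec{Z}$ mismatch on the output. The point is that $\slsn$ only demands average fidelity above $\tfrac{1}{2^k}$ by an inverse polynomial, so a leftover uniformly random $\vec{Z}$ logical acting on a Haar-random state only degrades the success by a factor of roughly $2^k$, which stays inverse polynomial precisely when $2^k=\poly(n)$. The two-oracle-call approach of Proposition~\ref{prop:slsn_to_lsn} cannot be used here, since a direct union bound on two independent successes, each with probability only $\tfrac{1}{2^k}+\tfrac{1}{\poly(n)}$, is trivial.

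Concretely, given the input $(\vec{C},\vec{E}\vec{C}\ket{0^{n-k},\psi})$, I would first measure the stabilizer syndrome---which does not disturb the state, since $\vec{E}\vec{C}\ket{0^{n-k},\psi}$ is a joint eigenstate of the stabilizers---and, by Gaussian elimination, sample a uniformly random Pauli $\vec{P}$ consistent with that syndrome. We may decompose $\vec{P}=\vec{E}\vec{S}\,\overline{\vec{X}}_{\vec{y}_1}\overline{\vec{Z}}_{\vec{y}_2}$ with $\vec{S}\in\CS$ and $\vec{y}_1,\vec{y}_2\in\Z_2^k$ uniformly independent. As observed in the proof of Proposition~\ref{prop:slsn_to_lsn}, $\vec{P}\vec{C}\ket{0^n}$ equals $\vec{E}\vec{C}\ket{0^{n-k},\vec{y}_1}$ up to a global phase, and its joint distribution with $\vec{C}$ is exactly that of a genuine $\lsn(k,n,p)$ sample. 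Querying $\CO$ on $(\vec{C},\vec{P}\vec{C}\ket{0^n})$ therefore returns some $\vec{y}_1'$ with $\Pr[\vec{y}_1'=\vec{y}_1]\geq\tfrac{1}{2^k}+\tfrac{1}{\poly(n)}$. I would then apply $\overline{\vec{X}}_{\vec{y}_1'}\vec{P}^\dag$ to the original noisy codeword, followed by $\vec{C}^\dag$, and discard the first $n-k$ qubits. A short calculation---commuting the logical Paulis through $\vec{C}$ and using that $\vec{S}$ acts trivially on $\vec{C}\ket{0^{n-k},\psi}$---shows that the $k$-qubit output equals $\vec{X}_{\vec{y}_1+\vec{y}_1'}\vec{Z}_{\vec{y}_2}\ket{\psi}$ up to a global phase.

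To bound the expected fidelity of this output with $\ket{\psi}$, I would apply the Haar $2$-design identity $\mathbb{E}_{\ket{\psi}\sim\mu_k}[|\bra{\psi}\vec{Q}\ket{\psi}|^2]=\tfrac{1}{2^k+1}$ for every nontrivial Pauli $\vec{Q}\in\CP_k$. The residual Pauli $\vec{X}_{\vec{y}_1+\vec{y}_1'}\vec{Z}_{\vec{y}_2}$ is trivial exactly when $\vec{y}_1'=\vec{y}_1$ and $\vec{y}_2=\vec{0}$; since $\vec{y}_2$ is independent of the oracle's input and output, this event has probability $\Pr[\vec{y}_1'=\vec{y}_1]\cdot 2^{-k}$. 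Averaging over $\ket{\psi}$, $\vec{y}_2$, and the oracle randomness yields expected output fidelity
\begin{align}
\frac{1+\Pr[\vec{y}_1'=\vec{y}_1]}{2^k+1}\;\geq\;\frac{1}{2^k}+\frac{1}{(2^k+1)\,\poly(n)},
\end{align}
which exceeds the baseline $\tfrac{1}{2^k}$ by an inverse polynomial amount precisely when $2^k\leq\poly(n)$, i.e.\ when $k=O(\log n)$. This last inequality is the main obstacle: the uniformly random residual logical $\vec{Z}$ collapses the post-correction fidelity back down to nearly $\tfrac{1}{2^k+1}$ on each Haar-random state, so the oracle's $\tfrac{1}{\poly(n)}$ advantage in recovering $\vec{y}_1$ gets rescaled by a factor of $2^k$, and the argument breaks as soon as $2^k$ is superpolynomial in $n$.
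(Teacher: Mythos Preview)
Your proof is correct and takes a genuinely different route from the paper's. Both approaches start by measuring the syndrome and drawing a random $\vec{P}$ with that syndrome, but from there they diverge. The paper \emph{consumes} the noisy codeword: it applies $\vec{C}^{-1}\vec{P}$, measures the resulting $\ket{0^{n-k},\phi}$ in the computational basis to sample $\vec{x}$ according to $|c_{\vec{x}}|^2$, re-encodes to manufacture an $\lsn$ sample $\vec{EC}\ket{0^{n-k},\vec{x}}$, and finally outputs the basis state $\ket{\vec{x}'}$ returned by the oracle. The fidelity analysis then goes through the Beta-distribution moments $\mathbb{E}_\psi[p_\psi(\vec{x})p_\psi(\vec{x}+\vec{w})]$ of Lemma~\ref{lemma:haar-beta_distribution}. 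You instead \emph{preserve} the original codeword (the syndrome measurement being nondestructive), build the oracle's input from a fresh $\ket{0^n}$, and afterward apply the learned correction to the original state, outputting a Pauli-perturbed $\vec{X}_{\vec{y}_1+\vec{y}_1'}\vec{Z}_{\vec{y}_2}\ket{\psi}$. Your analysis then needs only the standard $2$-design identity $\mathbb{E}_\psi[|\bra{\psi}\vec{Q}\ket{\psi}|^2]=\tfrac{1}{2^k+1}$ for nontrivial Paulis, avoiding the Beta lemma entirely. Both routes land on the identical bound $\tfrac{1}{2^k}+\tfrac{1}{(2^k+1)\poly(n)}$, which is not an accident: the paper's second-moment identities and your $2$-design computation are two faces of the same Haar symmetry. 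Your version is arguably cleaner (fewer auxiliary lemmas) and has the conceptual advantage that the output is an actual perturbation of $\ket{\psi}$ rather than a basis-state proxy.
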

\begin{proof}
We begin with a sample $(\vec{C}, \;\vec{EC} \ket{0^{n-k}, \psi})$ of $\slsn[k, n, p]$.
Measure the syndrome of $\vec{EC} \ket{0^{n-k}, \psi}$ and produce a Pauli $\vec{P}$ with the same syndrome.
Hence, $\vec{PE}$ is some logical operator for the code described by $\vec{C}$, i.e. $\vec{PE} =\overline{ \vec{X}}_{\vec{y}_1}\overline{\vec{Z}}_{\vec{y}_2}\vec{S}$ for some $\vec{y}_1, \vec{y}_2 \in \{0, 1\}^k $ and some stabilizer $\vec{S}$. 
By applying $\vec{C}^{-1}$, we obtain
\begin{equation}
\ket{0^{n-k}, \phi} = \vec{C}^{-1}\vec{PEC} \ket{0^{n-k}, \psi}
\end{equation}
By the logical description of $\vec{PE}$, $\ket{\phi} = \vec{X}_{\vec{y}_1}\vec{Z}_{\vec{y}_2}\ket{\psi}$ since $\vec{S}$ does not change $\ket\psi$.
If $\ket{\psi} = \sum_{\vec{x}} c_{\vec{x}} \ket{\vec{x}}$, then a measurement of $\ket{\psi}$ in the computational basis yields $\vec x$ according to the distribution $\Pr[\vec{x}] = |c_{\vec{x}}|^2$.
The operator $\vec{Z}_{\vec{y}_2}\ket{\psi}$ does not change this measurement distribution, while $\vec{X}_{\vec{y}_1}$ shifts it by $\vec{y}_1$.
Hence, the measurement distribution of $\ket{\phi}$ is $\Pr[\vec x + \vec y_1] = |c_{\vec x}|^2$. 

Measuring $\ket{0^{n-k}, \phi}$ in the standard basis, the state $\ket{0^{n-k}, \vec{x} + \vec{y}_1}$ is obtained with probability $|c_{\vec x}|^2$. 
Then, we may again apply $\vec{PC}$ to yield
\begin{align}
\vec{PC}\ket{0^{n-k}, \vec{x} + \vec{y}_1} &= (\vec{PE})\vec{EC}\ket{0^{n-k}, \vec{x} + \vec{y}_1}\\&=\vec{EC}\ket{0^{n-k}, \vec{x}},
\end{align}
where the last equality holds because $\vec{PE}$ is a product of $\vec{S}$ and $\vec{Z}_{\vec{y}_2}$ which act trivially, and $\vec{X}_{\vec{y}_1}$ which shifts the state by $\vec{y}_1$. 
Hence, this procedure produces a sample of $\lsn[k, n, p]$, where the input bitstring follows the distribution $\Pr[\vec x] = |c_{\vec x}|^2$. 
We now use Lemma \ref{lemma:secret_rerandomize} to re-randomize the secret and then apply the $\lsn$ oracle $\CO$, which therefore succeeds with probability $\frac{1}{2^k} + \frac{1}{\poly(n)}$. 
If $\CO$ returns $\vec x' \in \Z_2^k$, return $\ket{\vec x'}$ as the solution for the $\slsn[k, n, p]$ instance. 

We now show that this procedure solves $\slsn[k, n, p]$. 
Let $F(\vec w)$ be the probability that the Search $\lsn$ solver outputs an answer which differs from the correct answer by $\vec w$, and let $p_\psi(\vec z) = |c_{\vec z}|^2$.
Then the state $\ket{\vec x'}$ is outputted with probability 
\begin{equation}
\Pr[\CO = \vec x'] = 
\sum_{\vec{z} \in \Z_2^k} p_\psi(\vec{z})F(\vec{x}' + \vec z) .
\end{equation}
Furthermore, the fidelity of the outputted state $\ket{\vec{x}'}$ with $\ket{\psi}$ is $p_{\psi}(\vec{x}')$, so if we let $\vec{w} = \vec{x} + \vec{x}'$, then the expected fidelity is
\begin{equation}
\underset{\ket{\psi} \sim \mu_k}{\mathbb{E}}
\left[\sum_{\vec x', \vec w \in \Z_2^k} p_\psi(\vec{x}') p_\psi(\vec{x}' + \vec{w}) F(\vec{w}) \right] = \sum_{\vec x', \vec w \in \Z_2^k} \underset{\ket{\psi} \sim \mu_k}{\mathbb{E}}
\left[p_\psi(\vec{x}') p_\psi(\vec{x}' + \vec{w}) \right] F(\vec{w}) ,
\end{equation}
where the equality is due to the fact that $F$ depends only on the $\lsn$ solver and not on $\ket{\psi}$.
For a $k$-qubit Haar-random state $\ket{\psi}$, we claim that $\mathbb{E}_{\ket{\psi} \sim \mu_k}[p(\vec{x})p_\psi (\vec{x+w})] = \frac{2}{2^k(2^k+1)}$ when $\vec{w} = 0$, and $\mathbb{E}_{\ket{\psi} \sim \mu_k}[p_\psi(\vec{x})p_\psi(\vec{x+w})]= \frac{1}{2^k(2^k+1)}$ otherwise.
The first equality can be derived by Lemma~\ref{lemma:haar-beta_distribution}, which states that for all $\vec x$, $p_\psi(\vec x)$ follows a $\text{\textsf{Beta}}(1, 2^k-1)$ distribution.
The second moment of a $\text{\textsf{Beta}}(a, b)$ distribution is known to be $\frac{a(a+1)}{(a+b+1)(a+b)}$.
In our case, therefore, \begin{align}
    \underset{\ket{\psi} \sim \mu_k}{\mathbb{E}}[p_\psi(\vec x)^2] = \frac{1(1+1)}{(1+2^k-1+1)(1+2^k-1)} = \frac{2}{2^k (2^k+1)} .
\end{align}
As for the second equality, $\mathbb{E}_{\ket{\psi} \sim \mu_k}[p_\psi(\vec{x})p_\psi(\vec{x+w})]$ is the same for any $\vec{w}$ by symmetry. 
Expanding \begin{align}
    1 = \underset{\ket{\psi} \sim \mu_k}{\mathbb{E}} \left[\left(\sum_{\vec x \in \Z_2^k}p_\psi(\vec{x})\right)^2 \right] & = \sum_{\vec x, \vec w \in \Z_2^k} \underset{\ket{\psi} \sim \mu_k}{\mathbb{E}}[p_\psi(\vec x) p_\psi(\vec w)] \\
    & = 2^k \frac{2}{2^k(2^k + 1)} + (2^{2k} - 2^k) \underset{\ket{\psi} \sim \mu_k}{\mathbb{E}}[p_\psi(\vec x) p_\psi(\vec w_0)] ,
\end{align}
where $\vec w_0 \in \Z_2^k$ is any fixed vector.
Rearranging gives the second equality. 
Consequently, 
\begin{align}
\underset{\ket{\psi} \sim \mu_k}{\mathbb{E}} \left[\sum_{\vec{x}', \vec{w} \in \Z_2^k} p_\psi(\vec{x}')p_\psi(\vec{x}'+\vec{w})F(\vec{w})\right] &= \frac{1}{2^k(2^k+1)}\sum_{\vec{x}} \left(2F(\vec{0})+ \sum_{\vec{w} \neq 0} F(\vec{w})\right)\\
&= \frac{1}{2^k(2^k+1)}\sum_{\vec{x}}\left( F(\vec{0}) + 1\right)\\
&= \frac{1}{2^k+1}\left( \frac{1}{2^k} + \frac{1}{\poly(n)} + 1\right)\\
&= \frac{1}{2^k} + \frac{1}{(2^k+1)\poly(n)}. 
\end{align}
In the first line, we used the fact that $F$ is a probability distribution.
In the second line, note that by assumption $F(\vec 0) = \frac{1}{2^k} + \frac{1}{\poly(n)}$ since $F(\vec 0)$ is the success probability of the $\lsn$ oracle $\CO$.
The remaining lines are direct algebra.
Since $k = O(\log(n))$, $\frac{1}{2^k}+\frac{1}{(2^k+1)\poly(n)} = \frac{1}{2^k}+\frac{1}{\poly(n)}$, and hence the expected fidelity is sufficiently large to solve $\slsn[k, n, p]$. 
\end{proof}

In the case of $\lpn$, there is a simple reduction that shows that increasing the number of logical bits $k$ makes $\lpn$ harder.
There is likewise a simple proof in the case of $\lsn$, but our proof holds only when $k$ is super-logarithmic, due to the extra term $\frac{1}{2^k}$ in the success probability. 
When $k = O(\log n)$ we do not rule out the possibility that a highly correlated decoder (e.g. one that tends to either get all logical bits correct or get most of them wrong) to do well on a larger $k$ but poorly on a smaller $k$.
\begin{lemma}[Increasing $k$ makes $\lsn$ harder]
Let $n,k \in \mathbb{N}$ and $p \in (0,1)$ be parameters.
For any $k' \geq k$, with $k = \omega(\log n)$, there exists a reduction from Search $\lsn^m(k,n,p)$ to Search $\lsn^m(k',n,p)$. 
\end{lemma}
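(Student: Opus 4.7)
The plan is to embed each $\lsn^m(k,n,p)$ sample into an $\lsn^m(k',n,p)$ sample by reinterpreting $k'-k$ of the originally-fixed zero qubits as additional uniformly random logical qubits. The key observation is that $\ket{0^{n-k}, \vec x} = \ket{0^{n-k'}, 0^{k'-k}, \vec x}$, so an $\lsn^m(k,n,p)$ instance is already an $\lsn^m(k',n,p)$ instance whose $k'$-bit secret happens to begin with $k'-k$ fixed zeros. I would re-randomize those zeros without disturbing the rest of the sample by absorbing a random Pauli flip into the encoding Clifford.

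Concretely, I would sample a single uniformly random $\vec y \sim \Z_2^{k'-k}$, shared across all $m$ samples, define the Pauli $\vec P_{\vec y} := \vec I^{\otimes (n-k')} \otimes \vec X^{\vec y} \otimes \vec I^{\otimes k}$, and transform each input sample $(\vec C_i, \vec E_i \vec C_i \ket{0^{n-k},\vec x})$ by setting $\vec C'_i := \vec C_i \vec P_{\vec y}$ and $\vec z := (\vec y, \vec x) \in \Z_2^{k'}$. Since $\vec P_{\vec y} \ket{0^{n-k'}, \vec y, \vec x} = \ket{0^{n-k}, \vec x}$, the noisy code state itself is unchanged, i.e.\ $\vec E_i \vec C_i \ket{0^{n-k},\vec x} = \vec E_i \vec C'_i \ket{0^{n-k'}, \vec z}$. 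Because $\vec P_{\vec y}$ is a fixed Clifford, right-multiplication preserves the uniform distribution on $\CC_n$, so each $\vec C'_i$ is uniformly random; because $\vec y$ and $\vec x$ are independent and uniform, $\vec z$ is uniformly distributed in $\Z_2^{k'}$ and fixed across all samples; the depolarizing errors $\vec E_i$ are untouched. Thus the transformed collection is distributed exactly as an $\lsn^m(k',n,p)$ instance. I would then invoke the $\lsn^m(k',n,p)$ solver and output the last $k$ bits of its returned string as the guess for $\vec x$.

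The only delicate point is success-probability bookkeeping. The $\lsn^m(k',n,p)$ solver returns the correct $\vec z$---and hence the correct $\vec x$---with probability at least $\frac{1}{2^{k'}} + \frac{1}{\poly(n)}$, whereas qualifying as a solver for $\lsn^m(k,n,p)$ requires success at least $\frac{1}{2^k} + \frac{1}{\poly(n)}$. Since $k' \geq k$, one has $\frac{1}{2^{k'}} \leq \frac{1}{2^k}$, so a naive comparison of the leading constants fails. However, the hypothesis $k = \omega(\log n)$ forces both $\frac{1}{2^k}$ and $\frac{1}{2^{k'}}$ to be superpolynomially small in $n$, so both thresholds collapse to $\frac{1}{\poly(n)}$ up to a change of polynomial, and the required bound holds with room to spare. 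This parameter-accounting step is the entire obstacle---there is no computational or structural difficulty---and it is precisely what necessitates the super-logarithmic hypothesis on $k$.
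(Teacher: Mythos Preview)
Your proposal is correct and follows essentially the same approach as the paper: both randomize the extra $k'-k$ logical bits so that the $(k,n,p)$ instance becomes a bona fide $(k',n,p)$ instance, then invoke the $k=\omega(\log n)$ hypothesis to absorb the mismatch between the $\frac{1}{2^{k'}}$ and $\frac{1}{2^k}$ baselines into the $\frac{1}{\poly(n)}$ term. The only cosmetic difference is that the paper applies a logical $\overline{\vec X}$ operator to the noisy state (incurring a harmless global phase when it anticommutes with $\vec E_i$), whereas you absorb the Pauli shift $\vec P_{\vec y}$ into the encoding Clifford and leave the state untouched---this sidesteps the phase issue entirely and makes the ``$\vec C'_i$ is still uniform'' step explicit, but it is not a genuinely different route.
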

\begin{proof}
Given an input sample of the form
\begin{equation}
\left\{\vec C_i \sim \CC_n, \; \vec E_i \vec C_i\ket{0^{n-k},\vec x}\right\}_{i=1}^m \, ,
\end{equation}
we sample $\vec y_i' \in \Z_2^{k' - k}$ and define $\vec y_i = \vec 0^{k} \vec y'_i $. We then transform our samples into
\begin{equation}
\left\{\vec C_i \sim \CC_n, \;  \overline{\vec X}_{\vec y_i} \vec E_i  \vec C_i\ket{0^{n-k},\vec x}\right\}_{i=1}^m = \left\{\vec C_i \sim \CC_n, \; \vec E_i  \vec C_i\ket{0^{n-k'},\vec y_i,\vec x}\right\}_{i=1}^m \, ,
\end{equation}
Since global phases may be neglected, this yields a sample for the $\lsn_{k',n,p}^m$ problem. In particular, with non-negligible probability, we can recover $(\vec y_i,\vec x)$, and thus $\mathbf{x}$ with non-negligible probability.
Since $k = \omega(\log n)$, this amounts to $\frac{1}{2^k} + \frac{1}{\poly(n)}$, thereby serving as a solution for $\lsn^m(k,n,p)$.
\end{proof}

To achieve a proof for $\slsn$, however, we must make use of Lemma \ref{lem:designLSN}. 
The proof applies in the case where the number of samples $m \in \set{1, 2}$ and when the distribution of codes is uniform. 
\begin{lemma}[Increasing $k$ makes $\slsn$ harder]
Let $n,k \in \mathbb{N}$ and $p \in (0,1)$ be parameters.
For $k' \geq k$, with $k = \omega(\log n)$, there exists a reduction from Search $\slsn^m(k,n,p)$ to Search $\slsn^m(k',n,p)$, for both $m=1$ and $m=2$. 
\end{lemma}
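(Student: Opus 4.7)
The plan is to embed the $k$-qubit Haar-random logical state into the junk register of a $k'$-qubit logical state, and then randomize with a fresh Clifford $\vec B \sim \CC_{k'}$. Since Cliffords form a unitary 3-design~\cite{webb2015clifford}, this replaces the Haar measure at the 3-design level, which by Lemma~\ref{lem:designLSN} suffices for $m \in \{1,2\}$. Concretely, given a sample $(\vec C_i, \vec E_i \vec C_i \ket{0^{n-k},\psi})$, I would sample $\vec B \sim \CC_{k'}$ uniformly and emit the transformed sample $(\vec C_i', \vec E_i \vec C_i \ket{0^{n-k},\psi})$ where $\vec C_i' := \vec C_i(\vec I^{\otimes(n-k')} \otimes \vec B^\dagger)$. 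Writing $\ket{\psi'} := \vec B \ket{0^{k'-k},\psi}$, one verifies
\begin{equation}
\vec E_i \vec C_i \ket{0^{n-k},\psi} = \vec E_i \vec C_i'(\vec I^{\otimes(n-k')} \otimes \vec B)\ket{0^{n-k},\psi} = \vec E_i \vec C_i' \ket{0^{n-k'},\psi'},
\end{equation}
so the transformed sample has exactly the form of an $\slsn^m(k',n,p)$ sample; uniformity of $\vec C_i$ makes $\vec C_i'$ uniform, and $\vec E_i \sim \mathcal{D}_p^{\otimes n}$ is unchanged.

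The key step is to argue that the induced distribution of $\ket{\psi'}$ is a state 3-design on $k'$ qubits. For any fixed pure state $\ket{\phi_0} \in (\C^2)^{\otimes k'}$, the Clifford 3-design property gives
\begin{equation}
\mathbb{E}_{\vec B \sim \CC_{k'}}\left[ \vec B^{\otimes 3} \ketbra{\phi_0}{\phi_0}^{\otimes 3} (\vec B^\dagger)^{\otimes 3}\right] = \mathbb{E}_{\ket{\phi} \sim \mu_{k'}}\left[ \ketbra{\phi}{\phi}^{\otimes 3}\right].
\end{equation}
Setting $\ket{\phi_0} = \ket{0^{k'-k},\psi}$ shows the 3-design property holds conditional on $\psi$, hence also after averaging over $\psi \sim \mu_k$. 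Applying Lemma~\ref{lem:designLSN} with $t = 3 \geq m+1$, the $\slsn^m(k',n,p)$ solver therefore produces $\rho'$ with $\mathbb{E}[\bra{\psi'}\rho'\ket{\psi'}] \geq \frac{1}{2^{k'}} + \frac{1}{\poly(n)}$.

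To output a $k$-qubit state, I would return $\sigma := \mathrm{Tr}_{1\ldots k'-k}[\vec B^\dagger \rho' \vec B]$. Using cyclicity of trace and the operator inequality $\vec I^{\otimes(k'-k)} \otimes \ketbra{\psi}{\psi} \succeq \ketbra{0^{k'-k}}{0^{k'-k}} \otimes \ketbra{\psi}{\psi}$, one obtains pointwise
\begin{equation}
\bra{\psi}\sigma\ket{\psi} \geq \mathrm{Tr}\!\left[\ketbra{0^{k'-k},\psi}{0^{k'-k},\psi} \cdot \vec B^\dagger \rho' \vec B\right] = \bra{\psi'}\rho'\ket{\psi'},
\end{equation}
and taking expectations yields $\mathbb{E}[\bra{\psi}\sigma\ket{\psi}] \geq \frac{1}{2^{k'}} + \frac{1}{\poly(n)}$. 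Since $k = \omega(\log n)$ and thus $k' \geq k = \omega(\log n)$, both $\frac{1}{2^k}$ and $\frac{1}{2^{k'}}$ are negligible, so this fidelity is at least $\frac{1}{2^k} + \frac{1}{\poly(n)}$ as required.

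The main subtlety to verify carefully is the compatibility of Lemma~\ref{lem:designLSN} with marginalization: the lemma is stated for a fixed design ensemble on the logical state, but in our reduction the ``design'' is a mixture over $\psi \sim \mu_k$ of per-$\psi$ Clifford-conjugated distributions. Because the 3-design identity holds conditionally on $\psi$, the moment identity survives the outer expectation over $\psi$, so the application of the lemma is sound. The restriction to $m \leq 2$ comes precisely from needing $m+1 \leq 3$; higher $m$ would require a state $(m+1)$-design that, unlike the Clifford group, is not necessarily efficiently samplable.
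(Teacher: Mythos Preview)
Your proof is correct and follows essentially the same approach as the paper: embed via a random Clifford $\vec B \sim \CC_{k'}$ on the last $k'$ qubits, invoke Lemma~\ref{lem:designLSN} through the Clifford 3-design property, then undo $\vec B$ and trace out the extra $k'-k$ qubits using the operator inequality $\vec I^{\otimes(k'-k)} \otimes \ketbra{\psi}{\psi} \succeq \ketbra{0^{k'-k}}{0^{k'-k}} \otimes \ketbra{\psi}{\psi}$. The only cosmetic difference is that the paper first replaces the Haar-random $\ket{\psi}$ by a uniformly random $k$-qubit stabilizer state (via Lemma~\ref{lem:designLSN}) before applying $\vec B$, whereas you keep $\ket{\psi}$ Haar-random and argue directly that the ensemble $\vec B\ket{0^{k'-k},\psi}$ is a 3-design---both routes are valid.
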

\begin{proof}
By Lemma~\ref{lem:designLSN}, with one or two samples $\slsn$ is equivalent to the variant where $\ket{\psi}$ is drawn from the uniform distribution $\nu$ over stabilizer states (i.e. a random Clifford applied to $\ket{0^k}$, which is a unitary $3$-design).
We will thus perform the reduction for this variant.  
The input to the reduction is of the form
\begin{align}
    \left\{\vec C_i \sim \CC_n, \; \vec E_i \vec C_i\ket{0^{n-k},\psi}\right\}_{i=1}^m \,,
\end{align}
where $\ket{\psi}$ is a random $k$-qubit stabilizer state. 
We will query a $\slsn(k',n,p)^m$ oracle on
\begin{align}
    \left\{\vec C_i (\vec I^{\otimes(n-k')} \otimes \vec B), \; \vec E_i \vec C_i\ket{0^{n-k},\psi}\right\}_{i=1}^m \,,
\end{align}
where $\vec B \sim \CC_{k'}$ is a random Clifford operator on the last $k'$ qubits. 
Since $\vec C_i \in \CC_n$ is uniformly random, we can set $\vec C_i' \coloneqq \vec C_i(\vec I^{\otimes(n-k')} \otimes \vec B)$, and instead consider 
\begin{align}
    \left\{\vec C_i', \; \vec E_i \vec C_i'(\vec I^{\otimes(n-k')} \otimes \vec B)^\dag \ket{0^{n-k},\psi}\right\}_{i=1}^m \,,
\end{align}
where $\vec C_i' \sim \CC_n$ is uniformly random and where $\vec B^\dag\ket{0^{n-k},\psi}$ results in $\ket{0^{n-k'},\phi }$,
where $\ket{\phi}$ is a random $k'$-qubit stabilizer state. 

Now, applying the solver for $\slsn(k',n,p)^m$, which also solves the stabilizer state variant, we obtain a state $\sigma$ such that $\mathbb{E}_{\ket{\psi}, \vec B}\left[\braket{\phi | \sigma | \phi} \right] = \frac{1}{2^{k'}}+ \frac{1}{\poly(n)}=\frac{1}{\poly(n)}$. 
Equivalently,
\begin{equation}
\mathop{\mathbb{E}}\limits_{\substack{\ket{\psi} \sim \nu\\ \vec B \sim \CC_{k'}}} \left[\braket{0^{k'-k} , \psi| \vec B\sigma \vec B^\dag| 0^{k'-k}, \psi }\right] = \frac{1}{\poly(n)}. 
\end{equation}
Next, prepare the state $\tau = \mathbb{E}_{\vec B}[\vec B\tau \vec B^\dag]$ by applying $\vec B$ to $\sigma$, and output the first $k$ qubits of $\tau'$. 
The resulting state is of the form
\begin{equation}
\tau' = \sum_{\mathbf{x} \in \Z_2^{k'-k}} \braket{\mathbf{x} | \tau | \mathbf{x}}.
\end{equation}
Using the positivity of diagonal matrix elements of a density matrix,
\begin{align}
\underset{\ket{\psi} \sim \mu_k}{\mathbb{E}}
[\braket{\psi | \tau' | \psi}] & = \underset{\ket{\psi} \sim \nu}{\mathbb{E}} [\braket{\psi | \tau' | \psi}] = \sum_{\mathbf{x} \in \Z_2^{k'-k}} \underset{\ket{\psi} \sim \nu}{\mathbb{E}} \left[\braket{\vec x,\psi | \tau |  \mathbf{x},\psi}\right] \\
& \geq \underset{\ket{\psi} \sim \nu}{\mathbb{E}} \left[\braket{0^{k'-k}, \psi| \tau | 0^{k'-k},\psi}\right] = \mathop{\mathbb{E}}\limits_{\substack{\ket{\psi} \sim \nu\\ \vec B \sim \CC_{k'}}} [\braket{0^{k'-k} ,\psi| \vec B\sigma \vec B^\dag| 0^{k'-k},\psi}] \\
&= \frac{1}{\poly(n)}.
\end{align}
Since $k = \omega(\log n)$, $\frac{1}{\poly(n)} = \frac{1}{2^k} + \frac{1}{\poly(n)}$,  as desired. 
\end{proof}

\section{Classical Representation of Quantum Stabilizer Decoding}\label{sec:classical_lsn}
As formulated in Section~\ref{sec:formal_defs}, the definitions of quantum stabilizer and classical decoding appear qualitatively quite different, rendering a reduction between them challenging to conceive.
More precisely, $\lpn$ involves a generator matrix $\vec A$ and a codeword corrupted by additive noise, $\vec{Ax} + \vec e$, whereas $\lsn$ involves a Clifford encoder $\vec C$ and a noisy quantum state $\vec{EC} \ket{0^{n-k}, \vec{x}}$.
In this section, we take a step to bridge this conceptual gap by introducing a problem, which we call the \emph{classical representation} of $\lsn$.
Specifically, this problem satisfies two properties.
First, the classical representation is \emph{equivalent} to the definition of $\lsn$ from Section~\ref{sec:formal_defs} in the sense that given an oracle which solves one problem, there is an efficient quantum algorithm which solves the other using a single call to the oracle.
Second, the classical representation has inputs of (roughly speaking) the form $(\vec A, \vec{Ax} + \vec{e})$, very much like that of $\lpn$.
This representation is therefore a key tool in many of our arguments in this work.

We now define this classical representation of $\lsn$. Note that we use the quantum and classical variant of $\lsn$ interchangeably, and the precise variant is always clear in context.

\begin{definition}[Learning Stabilizers with Noise, classical representation]\label{def:classical-search-LSN}
The classical formulation of $\lsn(k, n, p)$, is characterized by integers $k,n \in \mathbb{N}$ and $p \in (0,1)$. We consider two variants:
\begin{description}
    \item $\bullet$ \textbf{Search} $\lsn(k,n,p)$: this is the task of finding $\vec y \sim \Z_2^k$ given as input a sample of the form
    \begin{align}
        \Big(\begin{bmatrix}
      \vec A \, | \, \vec B  
    \end{bmatrix}, \;
    \begin{bmatrix}
      \vec A \, | \, \vec B  
    \end{bmatrix} \cdot \begin{bmatrix}
        \vec r \\
        \vec y
    \end{bmatrix} + \vec e \Big)
    \end{align}
    where $\vec A \in \Z_2^{2n \times n}$ and $\vec B \in \Z_2^{2n \times k}$ are uniformly random matrices subject to the constraint that the columns of $\vec A$ and also of $\vec B$, respectively, are pairwise symplectically orthogonal (as in Definition~\ref{def:symplectic_form_inner_product}) and that $\begin{bmatrix}
      \vec A \, | \, \vec B  
    \end{bmatrix}$ is full-rank, where $\vec r \sim \Z_2^n$ is random, and where $\mathbf{e}\in \Z_2^{2n}$ is a random depolarizing Pauli error in symplectic form.
    That is, $\vec{e} = \Symp(\vec E)$ for $\vec E \sim \mathcal{D}_p^{\otimes n}$.

   \item $\bullet$ \textbf{Decision} $\lsn(k,n,p)$: this is the task of distinguishing between samples of the form
   \begin{align}
       \Big(\begin{bmatrix}
      \vec A \, | \, \vec B  
    \end{bmatrix}, \;
    \begin{bmatrix}
      \vec A \, | \, \vec B  
    \end{bmatrix} \cdot \begin{bmatrix}
        \vec r \\
        \vec y
    \end{bmatrix} + \vec e \Big) 
    \quad \text{ and } \quad
     \Big(\begin{bmatrix}
      \vec A \, | \, \vec B  
    \end{bmatrix}, 
    \;\vec u \sim \Z_2^{2n}\Big)
   \end{align}
    where all variables are as before, and $\vec u \in \Z_2^{2n}$ is a uniformly random vector.
\end{description}
We refer to $\vec{r}$ as the \emph{junk} and $\vec y$ as the \emph{secret}.
In the \emph{search variant}, we say that a $\poly(n)$ (classical or quantum) algorithm solves the problem if it outputs $\vec{x} \in \Z_2^k$ with probability at least $\frac{1}{2^k} + \frac{1}{\poly(n)}$, whereas in the \emph{decision} variant of the problem we require a distinguishing advantage of at least $\frac{1}{\poly(n)}$.

In addition, we also consider in the classical representation the multi-sample variant of both \emph{Search} and \emph{Decision} $\lsn(k,n,p)$, which we denote by $\lsn^m(k,n,p)$; this problem features an additional parameter $m \in \mathbb{N}$ which captures the number of independent samples. In the case of \emph{Search} $\lsn^m(k,n,p)$, these consist of 
   \begin{align}
        \left\{\Big(\begin{bmatrix}
      \vec A_i \, | \, \vec B_i  
    \end{bmatrix}, 
    \begin{bmatrix}
      \vec A_i \, | \, \vec B_i  
    \end{bmatrix} \cdot \begin{bmatrix}
        \vec r_i \\
        \vec y
    \end{bmatrix} + \vec e_i \Big) \right\}_{i \in [m]}
   \end{align}
which are distributed as before, and where $\vec y \sim \Z_2^k$ is the same in each sample. In the case of \emph{Decision} $\lsn^m(k,n,p)$, these are either $m$ samples of the previous form, or $m$ samples of the form
\begin{align}
     \left\{\Big(\begin{bmatrix}
      \vec A_i \, | \, \vec B_i  
    \end{bmatrix}, 
    \vec u_i \sim \Z_2^{2n}\Big)\right\}_{i \in [m]}.
\end{align}
which are also independently distributed as in the single sample variant.
We emphasize that, just like in the standard formulation of $\lsn$, the secret is the same across samples.
However, the junk \emph{is} re-sampled each time.
\end{definition}
The classical representation of $\lsn$ much more closely resembles $\lpn$, but we highlight the three major ways in which they differ.
After we establish the equivalence of the classical representation, the remainder of the reduction from $\lpn$ to $\lsn$ will be to perform transformations that bridge these differences.
\begin{enumerate}
    \item[(a) ] \textit{Global symplectic condition}: The encoding matrix $\begin{bmatrix}
      \vec A \, | \, \vec B  
    \end{bmatrix}$ requires that columns of $\vec A$ are pairwise symplectically orthogonal, as are the columns of $\vec B$.
    \item[(b) ] \textit{Junk vector}: while the secret in $\lpn$ is the entire vector which the encoding matrix acts on, here it is only a small piece of the vector that the encoding piece acts on. 
    The remaining junk is re-generated across samples.
    \item[(c) ] \textit{Error model}: $\vec e$ is depolarizing instead of Bernoulli, implying that for $i \in [n]$, $e_i$ and $e_{i+n}$ are strongly correlated.
\end{enumerate}
As shown in Fig.~\ref{fig:classical_representation}, $\vec A$ corresponds to the space of operators which fix the logical computational basis state in $\lsn$---that is, the stabilizers and logical $\vec Z$'s of the code.
Since in $\lsn$ there is no distinction between stabilizers and logical $\vec Z$, we make no distinction of them here either, and instead consider a general space spanned by $n$ independent symplectically orthogonal vectors.
$\vec B$ corresponds to the space of logical $\vec X$.
We observe that the presence of the junk is a direct consequence of quantum degeneracy.

\begin{remark}[$\lsn$ vs $\slsn$, revisited]
    While the definition of $\lsn$, i.e. using logical computational basis states, was originally motivated to produce useful applications of the complexity of stabilizer decoding which require classical output, we observe here that the classical representation also crucially relies on the classical nature of the output.
    Since the classical representation is a key element in our reduction from $\lpn$, it sheds new light on the motivation behind the formulation of $\lsn$, namely that it admits a fully classical form which is more closely aligned with the structure of $\lpn$.
\end{remark}

\subsection{Equivalence with the Quantum Notion}
We now show that the two formulations of $\lsn$ are quantumly equivalent. 
Note that while the equivalence is quantum, there is no truly quantum phenomena in the reduction in the sense that the only quantum operations are applying certain Clifford operations to a computational basis state.
The only reason the reduction is required to be quantum is because the standard definition of $\lsn$ involves a quantum state.

When necessary, we will distinguish between the earlier and later definitions for $\lsn$ by referring to the purely classical problems as Decision $\lsn$ (classical) and Search $\lsn$ (classical).
We say that two tasks are \emph{strongly quantum-equivalent} if given an oracle which solves one task with probability $q$, there exists an efficient quantum algorithm which solves the other with probability at least $q$, using a single call to the oracle.

\begin{theorem}[Classical representation is equivalent to $\lsn$]
Search $\lsn[k, n, p][m]$ is strongly quantum-equivalent to Search $\lsn[k, n, p][m]$ (classical), and Decision $\lsn[k, n, p][m]$ is strongly quantum-equivalent to Decision $\lsn[k, n, p][m]$ (classical). 
\end{theorem}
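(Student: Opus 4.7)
The plan is to give explicit reductions in both directions, using the Pauli-syndrome duality as a bridge between the quantum state and the classical vector formulation.

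For the quantum-to-classical direction, given $(\vec C, \vec E \vec C \ket{0^{n-k}, \vec x})$, I would extract $\vec A_0 \in \Z_2^{2n \times n}$ whose columns are $\Symp(\vec C \vec Z_i \vec C^\dagger)$ for $i \in [n]$ (stabilizers and logical $\vec Z$'s) and $\vec B_0 \in \Z_2^{2n \times k}$ whose columns are $\Symp(\vec C \vec X_{n-k+j} \vec C^\dagger)$ for $j \in [k]$ (logical $\vec X$'s). I then measure the stabilizer syndrome and, via Gaussian elimination plus uniform randomization over the syndrome kernel, sample a uniformly random Pauli $\vec P$ whose syndrome matches that of $\vec E$; this guarantees $\Symp(\vec P) = \vec e + \vec A_0 \vec r_0 + \vec B_0 \vec u$ for $(\vec r_0, \vec u) \in \Z_2^n \times \Z_2^k$ uniform. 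Applying $\vec P$ and then $\vec C^\dagger$ and measuring in the computational basis yields $(\vec{0}^{n-k}, \vec x \oplus \vec u)$ deterministically (up to global phase), since stabilizers and logical $\vec Z$'s act trivially on a computational-basis logical state. To match the classical-representation distribution---which features an unrestricted $n \times k$ cross-product $\vec A^\intercal \boldsymbol{\Omega} \vec B$ rather than the constrained $(\vec 0 \,|\, \vec I_k)^\intercal$ structure inherent in the quantum sample---I apply uniformly random invertible $\vec N \in \Z_2^{n \times n}$ and $\vec M \in \Z_2^{k \times k}$, setting $\vec A = \vec A_0 \vec N$, $\vec B = \vec B_0 \vec M$, and $\vec z = \Symp(\vec P) = \vec A \vec r + \vec B \vec y + \vec e$ with $\vec r = \vec N^{-1} \vec r_0, \vec y = \vec M^{-1} \vec u$ uniform. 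Querying the classical oracle and combining its output $\vec y$ with the measurement outcome recovers $\vec x = (\vec x \oplus \vec u) \oplus \vec M \vec y$ in the search variant; the decision variant passes the oracle bit through unchanged.

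For the classical-to-quantum direction, given $([\vec A \,|\, \vec B], \vec z)$, I would normalize the $k \times n$ cross-product matrix $\vec B^\intercal \boldsymbol{\Omega} \vec A$, which has full row rank $k$ because $\mathrm{span}(\vec A)$ is Lagrangian and $[\vec A \,|\, \vec B]$ is full rank, to the canonical form $(\vec 0 \,|\, \vec I_k)$ by left-right invertible multiplication, producing $\vec A' = \vec A \vec N, \vec B' = \vec B \vec M$ whose columns satisfy the exact commutation relations of stabilizers, logical $\vec Z$'s, and logical $\vec X$'s. I extend this partial symplectic basis to a full symplectic basis and use Theorem~\ref{thm:circuit_tableau_equivalence} to read off a Clifford $\vec C$, then prepare the state $\Symp^{-1}(\vec z) \vec C \ket{0^n}$. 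Writing $\Symp^{-1}(\vec z) = \vec R \vec L_Z \vec L_X \vec E$ (stabilizer, logical $\vec Z$, logical $\vec X$, depolarizing noise) and commuting each factor past $\vec C \ket{0^n}$ yields (up to global phase) the quantum $\lsn$ sample $\vec E \vec C \ket{0^{n-k}, \vec y'}$ with $\vec y' = \vec M^{-1} \vec y$ uniform. Querying the quantum oracle and post-processing the returned $\vec y'$ via $\vec M \vec y'$ gives the classical secret $\vec y$.

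The main obstacle will be verifying distributional correctness. The key lemma is that right-multiplying the quantum-derived $(\vec A_0, \vec B_0)$ by uniform invertible $(\vec N, \vec M)$ exactly reproduces the uniform distribution over classical-representation pairs; I would establish this by a direct counting argument showing that the action of $(\vec N, \vec M)$ is transitive on fibers partitioned by the cross-product value and that fiber sizes are equal, so that all valid $([\vec A \,|\, \vec B])$ are hit with equal multiplicity. For the decision variant, unstructured classical samples have $\vec z$ uniformly random, so $\Symp^{-1}(\vec z)$ is a uniformly random Pauli, and the identity $\mathbb{E}_{\vec P \sim \CP_n}[\vec P \ketbra{\phi}{\phi} \vec P^\dagger] = \vec I / 2^n$ together with linearity of the quantum oracle's acceptance probability ensures that the advantage on the constructed instances matches its advantage on genuinely maximally mixed inputs. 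The multi-sample case follows by performing the reductions independently on each sample with fresh $(\vec N_i, \vec M_i)$ randomness while preserving the shared secret across samples.
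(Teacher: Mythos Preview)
Your approach is essentially the same as the paper's: in both directions you use the syndrome-matching Pauli $\vec P$ as the bridge between the quantum state and the classical vector, extract $(\vec A_0,\vec B_0)$ from the stabilizers/logical $\vec Z$'s and logical $\vec X$'s, and randomize the basis of the $\vec A$-part by a uniformly random $\vec N\in\mathrm{GL}_n(\Z_2)$. The paper organizes this into four lemmas (one per direction and variant) but with the same mechanics.

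There is, however, a genuine gap in your multi-sample handling, and it stems from your extra $\vec M$-transform on $\vec B$. The paper deliberately does \emph{not} right-multiply $\vec B_0$ by anything: it keeps the logical $\vec X$ basis fixed as exactly the columns of $\vec B$, and instead absorbs all basis freedom into $\vec A$ (in the quantum-to-classical direction) or into the choice of stabilizers/logical $\vec Z$'s within $\operatorname{im}(\vec A)$ (in the classical-to-quantum direction). The reason is precisely the shared-secret constraint. In your quantum-to-classical reduction, the classical secret becomes $\vec y_i=\vec M_i^{-1}\vec u_i$; with fresh $\vec M_i$ these are different across samples even after you align the $\vec u_i$'s via the measurement outcomes. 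In your classical-to-quantum direction, the normalization of the cross-product $\vec B_i^{\intercal}\boldsymbol{\Omega}\vec A_i$ forces $\vec M_i$ to depend on the $i$th sample, so the quantum secret becomes $\vec y_i'=\vec M_i^{-1}\vec y$, again varying with $i$. Your clause ``while preserving the shared secret across samples'' acknowledges the requirement but your construction does not meet it. The fix is simply to drop $\vec M$: keep $\vec B_i$ untouched, and in the classical-to-quantum direction sample the $n-k$ stabilizers and $k$ logical $\vec Z$'s from $\operatorname{im}(\vec A_i)$ so that the latter have the correct pairwise anticommutation with the \emph{given} columns of $\vec B_i$ (this is always possible since $\operatorname{im}(\vec A_i)$ is Lagrangian and $[\vec A_i\,|\,\vec B_i]$ has full rank). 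With that change, the paper's explicit alignment of the logical-$\vec X$ parts across samples (its condition 2) goes through, and the shared secret is $\vec y$ in every sample.

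A secondary point: your distributional justification (``transitive on fibers partitioned by the cross-product value'') is not quite the right picture---the $(\vec N,\vec M)$ action only moves within fixed images $(\operatorname{im}\vec A_0,\operatorname{im}\vec B_0)$, so uniformity really comes from the randomness of $\vec C$ itself. The cleaner counting (which the paper uses) is that for each target pair $(\vec A^*,\vec B^*)$ the number of Cliffords $\vec C$ with $\operatorname{im}(\vec A_0(\vec C))=\operatorname{im}(\vec A^*)$ and $\vec B_0(\vec C)=\vec B^*$ is a constant depending only on $(n,k)$.
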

\begin{proof}
We prove this theorem below in four pieces, via Lemmas \ref{lem:classical_reduction_1_search}, 
\ref{lem:classical_reduction_1_dec}, \ref{lem:classical_reduction_2_search}, and \ref{lem:classical_reduction_2_dec}.
\end{proof}

\begin{lemma}[Reduction to classical representation, search] \label{lem:classical_reduction_1_search}
Suppose that there is an oracle $\CO$ which solves Search $\lsn[k, n, p][m]$ (classical) with success probability $q$. Then there is an efficient quantum algorithm which solves Search $\lsn[k, n, p][m]$ with success probability $q$ using a single call to $\CO$.
\end{lemma}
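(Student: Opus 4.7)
My plan is to reduce each quantum sample $(\vec C_i, \vec E_i \vec C_i \ket{0^{n-k},\vec x})$ to a classical sample carrying the \emph{same} logical secret $\vec x$, and then invoke $\CO$ once on the resulting $m$-tuple. For each $i$, I first extract the tableau from the Clifford: let the columns of $\vec A_i \in \Z_2^{2n \times n}$ be $\Symp(\vec C_i \vec Z_j \vec C_i^\dagger)$ for $j = 1, \ldots, n$ (stabilizers and logical $\vec Z$'s together), and let the columns of $\vec B_i \in \Z_2^{2n \times k}$ be $\Symp(\vec C_i \vec X_j \vec C_i^\dagger)$ for $j = n-k+1, \ldots, n$ (logical $\vec X$'s). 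Both are computable from $\vec C_i$ in classical polynomial time by Theorem~\ref{thm:circuit_tableau_equivalence}.

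Next, I measure the stabilizer syndrome of the noisy state to obtain $\vec v_i \in \Z_2^{n-k}$, and by Gaussian elimination sample a uniformly random Pauli $\vec P_i$ with syndrome $\vec v_i$. Because $\vec E_i$ has the same syndrome, I may write $\vec P_i = \vec E_i \vec S_i \overline{\vec X}_{\vec a_i} \overline{\vec Z}_{\vec b_i}$ for a uniformly random stabilizer $\vec S_i \in \CS$ and uniformly random logical labels $\vec a_i, \vec b_i \in \Z_2^k$; in symplectic coordinates this reads
\begin{equation}
\Symp(\vec P_i) \;=\; \vec e_i \,+\, \vec A_i \vec r_i \,+\, \vec B_i \vec a_i,
\end{equation}
where $\vec e_i = \Symp(\vec E_i)$ is depolarizing and the junk $\vec r_i \in \Z_2^n$ aggregates the stabilizer part and the logical-$\vec Z$ part. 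The obstruction is that the implicit secret $\vec a_i$ is freshly re-randomized in each sample, whereas $\CO$ requires a single secret shared across all $m$ samples.

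To synchronize the secrets to $\vec x$, I apply $\vec C_i^\dagger \vec P_i$ to the post-syndrome state and measure in the computational basis. Up to an irrelevant global phase the state is $\ket{0^{n-k},\vec x \oplus \vec a_i}$: stabilizers fix the logical register, the logical-$\vec Z$ part contributes only a phase, and only the logical-$\vec X$ part shifts the label, so the measurement returns $\vec 0^{n-k}$ on the first $n-k$ qubits and some $\vec z_i \in \Z_2^k$ on the last $k$. Crucially, even though I do not know $\vec x$ or $\vec a_i$ separately, the identity $\vec z_i = \vec x \oplus \vec a_i$ together with $\vec B_i \vec z_i + \vec B_i \vec a_i = \vec B_i \vec x$ in $\Z_2$ lets me release the classical sample $([\vec A_i \mid \vec B_i], \Symp(\vec P_i) + \vec B_i \vec z_i)$, which simplifies to $\vec e_i + \vec A_i \vec r_i + \vec B_i \vec x$. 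Thus the common secret across all $m$ classical samples is exactly $\vec x$, and running $\CO$ on the tuple returns $\vec x$ with probability $q$.

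The main obstacle I foresee is verifying that the joint distribution of the output samples \emph{exactly} matches Definition~\ref{def:classical-search-LSN}. The column span of $\vec A_i$ is a uniformly random Lagrangian subspace of $(\Z_2^{2n}, \boldsymbol{\W})$ and the column span of $\vec B_i$ is a uniformly random transversal isotropic $k$-subspace (since $\vec C_i \sim \CC_n$), while $(\vec r_i, \vec a_i, \vec e_i)$ have the prescribed marginals by construction. However, the ordered bases and, in particular, the cross-pairing $\vec A_i^\top \boldsymbol{\W} \vec B_i$ induced by a Clifford lie in a restricted orbit of what the classical definition allows. I plan to absorb this mismatch by right-multiplying $\vec A_i$ by an independent uniformly random $\vec M_{A,i} \in \mathrm{GL}_n(\Z_2)$ and updating $\vec r_i \mapsto \vec M_{A,i}^{-1} \vec r_i$; because this re-randomization is per-sample and acts only on the junk register, the common secret $\vec x$ is untouched, and a direct counting argument using the transitivity of $\mathrm{Sp}_{2n}(\Z_2) \times \mathrm{GL}_n(\Z_2)$ on the valid pairs $(\vec A, \vec B)$ should show that the resulting joint distribution on $\{[\vec A_i \mid \vec B_i]\}_{i \in [m]}$ is exactly uniform under the constraints of Definition~\ref{def:classical-search-LSN}. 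Once this distributional equivalence is in hand, the oracle's success probability $q$ transfers verbatim.
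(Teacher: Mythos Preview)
Your proposal is correct and follows essentially the same approach as the paper: measure the syndrome, sample a uniformly random Pauli $\vec P_i$ with that syndrome, apply $\vec C_i^\dagger \vec P_i$ to the noisy state and read off the logical shift $\vec z_i = \vec x \oplus \vec a_i$, and right-multiply the $\vec Z$-block by a random element of $\mathrm{GL}_n(\Z_2)$ to destroy the canonical pairing between stabilizers/logical-$\vec Z$'s and logical-$\vec X$'s. The only cosmetic difference is that the paper uses the measured $\vec z_i$'s to synchronize all the $\vec P_i$'s to a common logical-$\vec X$ label $\vec y$ (so the oracle returns $\vec y$ and one recovers $\vec x = \vec z_1 \oplus \vec y$), whereas you add $\vec B_i \vec z_i$ to each classical sample so that the shared secret becomes $\vec x$ itself; these are equivalent rewritings of the same computation.
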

\begin{proof} 
The input for the reduction is 
\begin{equation}
(\vec{C}_i, \;\vec{E}_i\vec{C}_i\ket{0^{n-k}, \mathbf{x}})\text{ for } i\in [m]. 
\end{equation}
We first show that we can efficiently sample Paulis $\vec P_i$ which are uniformly random, subject to the following conditions.
\begin{enumerate}
\item $\vec{P}_i$ has the same syndrome as the error $\vec{E}_i$ for any $i \in [m]$. 
\item $\vec{C}_i^{-1}\vec{P}_i\vec{E}_i\vec{C}_i\ket{0^{n-k}, \mathbf{x}} = \vec{C}_j^{-1}\vec{P}_j\vec{E}_j\vec{C}_j\ket{0^{n-k}, \mathbf{x}}$, for any $i, j \in [m]$. 
\end{enumerate}
If we had only the first condition, the sampling efficiency would hold by the same argument given in the proof of Proposition~\ref{prop:slsn_to_lsn}.
Namely, sampling a uniformly random Pauli with the same syndrome as $\vec E$ is achieved by randomly sampling a solution to a linear system of equations.
Moreover, each $\vec P_i$ differs from $\vec E$ by some product of stabilizers and logical Paulis.
As for the second condition, for any $\vec P_i$ such that
\begin{equation}
\ket{0^{n-k}, \vec y_i} = \vec{C}_i^{-1}\vec{P}_i\vec{E}_i\vec{C}_i\ket{0^{n-k},\mathbf{x}} \neq \vec{C}_1^{-1}\vec{P}_1\vec{E}_1\vec{C}_1\ket{0^{n-k},\mathbf{x}} = \ket{0^{n-k}, \vec y_1},
\end{equation}
we can always map $\vec P_i \mapsto \overline{\vec X}_{\vec y_i - \vec y_1} \vec P_i$, which does not change the syndrome but removes the logical $\vec X$ difference between $\vec P_i$ and $\vec P_1$.
Since the logical $\vec X$ part of $\vec P_1$ is still uniformly random, we have sampled a random element in the set of Paulis satisfying the above two conditions.
Now, let $\mathbf{t}_i = \Symp(\vec{P}_i)$ for each $i$. 
Furthermore, define
\begin{equation}
\vec{A}_i' = 
\begin{bmatrix}
\mathbf{z}_1^{(i)} & \dots & \mathbf{z}_n^{(i)}
\end{bmatrix} \quad \text{and} \quad \vec{B}_i =
\begin{bmatrix}\mathbf{x}_1^{(i)} & \dots & \mathbf{x}_k^{(i)} ,
\end{bmatrix}
\end{equation}
where $\mathbf{z}_j^{(i)} = \Symp(\vec{C}_i \vec{Z}_{j} \vec{C}_i^\dag)$ and $\mathbf{x}_j^{(i)} =  \Symp(\vec{C}_i \vec{X}_{j} \vec{C}_i^\dag)$.
For each $i$, the first $n-k$ of the $\mathbf{z}_j^{(i)}$ are the stabilizers of the code encoded by $\vec C_i$, and the last $k$ are the logical $\vec Z$ operators.
Likewise, the $\mathbf{x}_j^{(i)}$ are the logical $\vec X$ operators of the code $\vec C_i$.

We will show that with
\begin{equation}
\vec{A}_i = \vec{A}_i'\vec{R}_i,
\end{equation}
where $\vec{R}_i \sim \operatorname{GL}_n(\Z_2)$ is a random invertible $n \times n$ matrix,
\begin{equation}
\left(\begin{bmatrix}
      \vec A_i \, | \, \vec B_i  
    \end{bmatrix}, \;\mathbf{t}_i\right) \text{ for } i \in [m]
\end{equation}
is a valid sample of $\lsn[k, n, p][m]$ under Definition \ref{def:classical-search-LSN}. 
In particular, we will show that
\begin{equation}\label{eq:stabilizer_logical_error_sum}
\mathbf{t}_i =\begin{bmatrix}
      \vec A_i \, | \, \vec B_i  
    \end{bmatrix}\begin{bmatrix}\mathbf{r}_i\\\mathbf{y}\end{bmatrix}+ \mathbf{e}_i, 
\end{equation}
where $\mathbf{y}$ and $\mathbf{r}_{i}$ are uniformly random, while $\mathbf{e}_{i}$ is the symplectic representation of a depolarization error. 
Intuitively, the definitions of $\vec A_i'$ and $\vec B_i$ satisfy the symplectic orthogonality condition required of the classical representation, but presently $\vec A_i'$ has an undesirable correlation.
Namely, we know that the $(n-k+j)$th column of $\vec A_i'$ is not symplectically orthogonal with the $j$th column of $\vec B_i$, and is symplectically orthogonal with all other columns of $\vec A_i'$ and $\vec B_i$.
But $\vec A_i$ and $\vec B_i$ are required to be completely random, other then the fact that they each commute among themselves and are linearly disjoint.
This discrepancy arises that we have currently chosen a special basis for $\operatorname{im}(\vec A_i')$ where the first $n-k$ columns are stabilizers and the last $k$ columns are logical $\vec Z$ operators.
By randomly choosing a new basis via application of $\vec R_i$, we obtain a truly random basis choice, thereby erasing the correlation structure between $\vec A_i'$ and $\vec B_i$ and satisfying the distribution requirements of the classical representation.

First, $\vec{B}_i$ has columns that are the symplectic representation of logical $\vec X$ operators, and they are therefore random symplectically orthogonal columns.
Meanwhile, $\operatorname{im}(\vec{A}_i')$ is a uniformly random $n$-dimensional symplectically orthogonal subspace that is linearly disjoint from $\operatorname{im}(\vec{B}_i)$. 
The conjugation of $\vec Z_1, \;\dots, \;\vec Z_n$ by a uniformly random Clifford $\vec C_i$ produces a uniformly random set of $n$ independent commuting Paulis.
This distribution in symplectic form is supported on subspaces $\operatorname{im}(\vec{A}_i')$, which are each linearly disjoint from $\operatorname{im}(\vec{B}_i)$.
Since $\vec{A}_i$ is a random matrix with the same image as $\vec{A}_i'$, and the image is uniformly random, it is indeed a random matrix with symplectically orthogonal columns for which $\begin{bmatrix} \vec{A}_i \;|\; \vec{B}_i \end{bmatrix}$ is full-rank. 
The distributions of $\vec{A}_i$ and $\vec{B}_i$ are therefore correct. 

Now, the Pauli operator $\vec{P}_i$ has the same syndrome as $\vec{E}_i$, implying that $\vec{P}_i\vec{E}_i$ is the product of a uniformly random stabilizer, a uniformly random $\vec{Z}$-logical operator, and a uniformly random $\vec{X}$-logical operator. 
By ensuring that 
\begin{equation}
\vec{C}_i^{-1}\vec{P}_i\vec{E}_i\vec{C}_i\ket{0^{n-k}, \mathbf{x}} = \vec{C}_j^{-1}\vec{P}_j\vec{E}_j\vec{C}_j\ket{0^{n-k}, \mathbf{x}}
\end{equation}
for all $i, j \in [m]$, the logical $\vec{X}$ operators must be the same, and therefore have symplectic representations $\vec{B}_i \mathbf{y}$ for a uniformly random bitstring $\mathbf{y}$.
Meanwhile, the remaining stabilizer and logical $\vec{Z}$ operator yield the term $\vec{A}_i \mathbf{r}_i$, while $\Symp(\vec{E}_i)=\mathbf{e}_i$. 
Thus, everything follows exactly the desired distribution, so Eqn.~(\ref{eq:stabilizer_logical_error_sum}) is a valid sample for the classical representation of $\lsn^m(k, n, p)$.

We now call $\CO$, which returns $\vec y$ with probability $q$.
Observe that $\vec{C}_i^{-1}\vec{P}_i\vec{E}_i\vec{C}_i\ket{0^{n-k}, \mathbf{x}} = \ket{0^{n-k}, (\mathbf{x} + \mathbf{y})}$ for any $i$ because $\vec{P}_i\vec{E}_i$ implements a logical $\vec{X}$ operator which exactly corresponds to adding the $\mathbf{y}$ obtained from the classical representation. 
We recover $\vec x$ from $\vec y$ and $\vec x + \vec y$ to complete the reduction.
\end{proof}

\begin{lemma}[Reduction to classical representation, decision] \label{lem:classical_reduction_1_dec}
Suppose that there is an oracle $\CO$ which solves Decision $\lsn[k, n, p][m]$ (classical) with success probability $q$. Then there is an efficient quantum algorithm which solves Decision $\lsn[k, n, p][m]$ with success probability $q$ using a single call to $\CO$.
\end{lemma}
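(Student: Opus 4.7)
The plan is to apply essentially the same construction as in the proof of Lemma~\ref{lem:classical_reduction_1_search}. Given the quantum samples (whether structured or unstructured), the reduction measures each syndrome to obtain $\vec v_i \in \Z_2^{n-k}$; classically samples Paulis $\vec P_i$ uniformly at random with syndrome $\vec v_i$; enforces consistency of the logical $\vec X$ part across samples by applying $\vec C_i^\dag \vec P_i$ to each state, measuring the last $k$ qubits in the computational basis to obtain $\vec z_i \in \Z_2^k$, and modifying $\vec P_i \mapsto \overline{\vec X}_{\vec z_i - \vec z_1} \vec P_i$ for $i \geq 2$; and finally constructs the matrices $[\vec A_i \,|\, \vec B_i]$ from $\vec C_i$ exactly as in the search proof. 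We then call $\CO$ on the transformed samples $\{([\vec A_i \,|\, \vec B_i], \mathbf{t}_i)\}_{i=1}^m$, where $\mathbf{t}_i = \Symp(\vec P_i)$, and return its answer.

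The analysis of the structured case is essentially the search reduction: each modified $\vec P_i$ has logical $\vec X$ part equal to the uniformly random bitstring $\vec u_1$ (since $\vec z_i = \vec x + \vec u_i$, making $\vec u_i + \vec z_i - \vec z_1 = \vec u_1$ in $\Z_2$), so the symplectic expansion yields $\mathbf{t}_i = \vec A_i \vec r_i + \vec B_i \vec u_1 + \mathbf{e}_i$ with the $\vec r_i$ uniform, $\vec u_1$ shared across $i$, and $\mathbf{e}_i = \Symp(\vec E_i)$ depolarizing---matching the structured distribution of Definition~\ref{def:classical-search-LSN}.

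The main new step is verifying that when the input is the unstructured quantum sample $(\vec C_i, \vec I^{\otimes n}/2^n)$, the $\mathbf{t}_i$'s are independent and uniformly distributed on $\Z_2^{2n}$. Measuring the syndrome of the maximally mixed state yields a uniform $\vec v_i$ and projects the state onto $\Pi_{\vec v_i}/2^k$; applying $\vec C_i^\dag \vec P_i$ then maps this deterministically to $\ketbra{0^{n-k}}{0^{n-k}} \otimes \vec I_k/2^k$, so the measured $\vec z_i$ is uniform and independent of $\vec P_i$. The consistency modification replaces the logical $\vec X$ part $\vec u_i$ of $\vec P_i$ (for $i \geq 2$) by $\vec u_i' := \vec u_i + \vec z_i - \vec z_1$. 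The potential obstacle is that $\vec z_1$ is shared across all of these modifications, which naively might induce correlations between the $\vec u_i'$'s. However, a direct calculation shows that marginalizing over the uniform $\vec z_1$ gives $\Pr[\vec u_1' = \vec a_1, \ldots, \vec u_m' = \vec a_m] = 2^{-km}$, matching the product of marginals, so the $\vec u_i'$'s are in fact mutually independent and uniform.

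Hence, in the unstructured case, the modified $\vec P_i$'s are jointly independent and uniform in $\CP_n$, and the $\mathbf{t}_i$'s are independent uniform in $\Z_2^{2n}$, matching the unstructured classical distribution of Definition~\ref{def:classical-search-LSN}. Combining the two cases, the distinguishing advantage of $\CO$ on the transformed classical samples transfers exactly to a distinguishing advantage on the original quantum samples, yielding the claimed success probability $q$ for Decision $\lsn[k, n, p][m]$.
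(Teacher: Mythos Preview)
Your proposal is correct and follows essentially the same construction as the paper: you run the search-case transformation from Lemma~\ref{lem:classical_reduction_1_search} and verify that it maps structured quantum samples to structured classical samples and unstructured (maximally mixed) samples to unstructured classical samples. The only real difference is perspective on the unstructured case: the paper purifies the maximally mixed state as $\vec R_i \vec C_i \ket{0^n}$ for a hidden random Pauli $\vec R_i$ and argues that the modification step ``multiplies a uniformly random Pauli by another Pauli,'' whereas you work directly at the density-matrix level to show $\vec z_i$ is uniform and independent of $\vec P_i$, then explicitly compute the joint law of the modified logical-$\vec X$ parts $\vec u_i'$. Your treatment of the potential correlation through the shared $\vec z_1$ is actually more careful than the paper's, which leaves the joint independence across $i$ somewhat implicit.
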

\begin{proof}
The reduction takes as input the samples 
\begin{equation}
(\vec{C}_i, \rho_i)\text{ for } i\in [m]. 
\end{equation}
In this case, $\rho_i$ is either the structured sample $\vec{E}_i\vec{C}_i\ket{0^{n-k}, \mathbf{x}} \bra{0^{n-k}, \mathbf{x}} \vec C_i^\dag \vec E_i^\dag$ mixed over random $\vec C_i \sim \CC_n$ and $\vec E_i \sim \CD_p^{\otimes n}$, or it is the maximally mixed state. 
We perform the same transformation as in Lemma~\ref{lem:classical_reduction_1_search}, giving classical samples of the form
\begin{equation}
\left(\begin{bmatrix}
      \vec A_i \, | \, \vec B_i  
    \end{bmatrix}, \;\mathbf{t}_i\right) \text{ for } i \in [m],
\end{equation}
where 
\begin{equation}
\mathbf{t}_i = \begin{bmatrix}
      \vec A_i \, | \, \vec B_i  
    \end{bmatrix}\begin{bmatrix}\mathbf{r}_i\\\mathbf{y}\end{bmatrix} + \mathbf{e}_{i}.
\end{equation}
By the same argument as in the proof of Lemma~\ref{lem:classical_reduction_1_search}, a structured sample of Decision $\lsn[k, n, p][m]$ indeed produces a structured sample as in Decision $\lsn[k, n, p][m]$ (classical). 
On the other hand, say that this same reduction was performed on input samples for which $\rho_i$ are maximally mixed states. 
The state $\rho_i$ may then be equivalently considered a random state $\vec{R}_i\vec{C}_i\ket{0^{n}}$, where $\vec{R}_i$ is a random Pauli operator, since this produces a maximally mixed state.
Then the Paulis $\vec{P}_i$ would be chosen so that they have the same syndrome as $\vec{R}_i$, and random so that
\begin{equation}
\vec{C}_i^{-1}\vec{P}_i\vec{R}_i\vec{C}_i\ket{0^n} = \vec{C}_1^{-1}\vec{P}_1\vec{R}_1\vec{C}_1\ket{0^{n}}. 
\end{equation}
The resulting $\vec{P}_i$ are independent and uniformly random, since we start by sampling a uniformly random Pauli with a uniformly random syndrome (that of $\vec R_i$), and then apply a choice of logical operator to enforce the above condition.
However, multiplying a uniformly random Pauli by another Pauli still gives a uniformly random Pauli.
Hence, in symplectic representation, $\vec e_i$ is a uniformly random vector.
When we add it to $\vec A_i \vec r_i + \vec B_i \vec y$, we still get a uniformly random vector.
Then $\vec t_i \sim \Z_2^{2n}$ and $\vec A_i, \vec B_i$ are distributed correctly by the same argument as in the proof of Lemma~\ref{lem:classical_reduction_1_search}.
Therefore, in both the unstructured and structured cases, the distributions are mapped correctly, so we can at this point call the decision oracle $\CO$ for the classical representation and output its answer to complete the reduction.
\end{proof}

We next show the reductions in the reverse direction, i.e. from the classical representation to the standard formulation of $\lsn$.

\begin{lemma}[Reduction from classical representation, search] \label{lem:classical_reduction_2_search}
Suppose that there is an oracle $\CO$ which solves Search $\lsn[k, n, p][m]$ with success probability $q$. Then there is an efficient quantum algorithm which solves Search $\lsn[k, n, p][m]$ (classical) with success probability $q$ using a single call to $\CO$.
\end{lemma}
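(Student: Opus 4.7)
The plan is to invert the construction from Lemma~\ref{lem:classical_reduction_1_search}. Given a classical-representation instance $\{(\begin{bmatrix}\vec A_i\,|\,\vec B_i\end{bmatrix},\,\vec t_i)\}_{i\in[m]}$, I will build Cliffords $\vec C_i \in \CC_n$ and noisy code states $\ket{\phi_i} \propto \vec E_i \vec C_i \ket{0^{n-k},\vec y}$ matching the $\lsn[k,n,p][m]$ distribution with the same secret $\vec y$, then feed them to $\CO$ to recover $\vec y$.

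The first and most intricate step is to reconstruct a Clifford whose code structure matches $(\vec A_i,\vec B_i)$. Because the columns of $\vec A_i$ are symplectically orthogonal and span an $n$-dimensional subspace, $V_A := \mathrm{im}(\vec A_i)$ is Lagrangian and will serve as the joint stabilizer-plus-logical-$\vec Z$ subspace; the columns of $\vec B_i$ will play the role of the logical $\vec X$ operators. To split $V_A$ into a stabilizer part and a logical-$\vec Z$ part, compute the matrix $\vec M := \vec A_i^T \boldsymbol\Omega \vec B_i \in \Z_2^{n\times k}$. A short argument (using that $V_A$ is its own symplectic complement and that the columns of $\vec B_i$ are linearly independent from $V_A$) shows $\mathrm{rank}(\vec M) = k$, so $\ker(\vec M^T)$ has dimension $n-k$ and parametrizes exactly those combinations of $\vec A_i$'s columns that commute with all of $\vec B_i$---i.e., the stabilizer basis. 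Complete this to a basis of $\Z_2^n$ and perform a single invertible $k\times k$ right-multiplication to canonicalize the remaining $k$ combinations into logical-$\vec Z$ operators $\vec L_j$ satisfying $\vec L_i \odot \vec b_j = \delta_{ij}$. Extend to a full symplectic basis via standard symplectic Gram-Schmidt (producing destabilizers), and apply Theorem~\ref{thm:circuit_tableau_equivalence} to convert the resulting tableau into a Clifford circuit $\vec C_i$. Randomizing over the remaining free parameters (destabilizer choice, Pauli phase, and bases within the stabilizer and logical subspaces) yields $\vec C_i$ uniformly distributed over all Cliffords consistent with $(\vec A_i,\vec B_i)$; by the transitivity of the Clifford group's action on valid code data, the fiber size is independent of $(\vec A_i,\vec B_i)$, so the marginal distribution of $\vec C_i$ is uniform on $\CC_n$.

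Next, set $\vec P_i := \Symp^{-1}(\vec t_i)$ and prepare $\ket{\phi_i} := \vec P_i \vec C_i \ket{0^n}$. Since $\Symp$ is a group isomorphism up to phase and $\vec t_i = \vec A_i \vec r_i + \vec B_i \vec y + \vec e_i$, the Pauli $\vec P_i$ factors (up to phase) as $\vec Q_i \cdot \overline{\vec X}_{\vec y} \cdot \vec E_i$, where $\vec Q_i \in \Symp^{-1}(V_A)$ is a stabilizer times a logical-$\vec Z$ Pauli, $\overline{\vec X}_{\vec y} = \Symp^{-1}(\vec B_i \vec y)$ is the logical-$\vec X$ encoding the secret, and $\vec E_i = \Symp^{-1}(\vec e_i) \sim \CD_p^{\otimes n}$. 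Acting right-to-left on $\vec C_i \ket{0^n}$: the error produces $\vec E_i \vec C_i \ket{0^n}$; commuting $\overline{\vec X}_{\vec y}$ past $\vec E_i$ up to sign and applying it yields $\pm \vec E_i \vec C_i \ket{0^{n-k},\vec y}$; and $\vec Q_i$ acts as $\pm 1$ on this computational-basis logical state, because stabilizers fix the code space and logical $\vec Z$ operators have $\pm 1$ eigenvalues on logical basis states. The accumulated sign is an unobservable global phase, so $\ket{\phi_i} \propto \vec E_i \vec C_i \ket{0^{n-k},\vec y}$ as pure states.

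The samples $\{(\vec C_i,\ket{\phi_i})\}_{i\in[m]}$ thus follow the Search $\lsn[k,n,p][m]$ distribution with shared secret $\vec y$, so invoking $\CO$ outputs $\vec y$ with probability at least $q$; this $\vec y$ is exactly the required answer for the classical-representation instance. The main obstacle is verifying the uniformity claim for $\vec C_i$ in the first step: concretely, one must check by counting that averaging over the additional random choices covers the orbit of Cliffords compatible with a given $(\vec A_i,\vec B_i)$ uniformly and that this orbit size is the same for every $(\vec A_i,\vec B_i)$. This reduces to a routine (if tedious) symplectic-group counting argument dual to the distribution analysis used in Lemma~\ref{lem:classical_reduction_1_search}.
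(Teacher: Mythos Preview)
Your proof is correct and follows essentially the same approach as the paper: construct a Clifford $\vec C_i$ compatible with $(\vec A_i,\vec B_i)$, apply $\vec P_i=\Symp^{-1}(\vec t_i)$ to $\vec C_i\ket{0^n}$, argue the result is $\vec E_i\vec C_i\ket{0^{n-k},\vec y}$ up to phase, and call $\CO$. The only cosmetic difference is that you identify the stabilizer subspace algebraically via $\ker(\vec M^T)$ for $\vec M=\vec A_i^T\boldsymbol\Omega\vec B_i$ before randomizing, whereas the paper samples stabilizers, logical $\vec Z$'s, and destabilizers iteratively subject to the same symplectic constraints; the uniformity argument (transitivity of the Clifford action and constant fiber size) is the same in both.
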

\begin{proof}
The reduction takes as input the samples
\begin{equation}
\left(\begin{bmatrix}
      \vec A_i \, | \, \vec B_i  
    \end{bmatrix}, \;\mathbf{t}_i\right) \text{ for } i\in [m],
\end{equation}
for $\mathbf{t}_i =  \begin{bmatrix}
      \vec A_i \, | \, \vec B_i  
\end{bmatrix}\begin{bmatrix}\mathbf{r}_i \\ \mathbf{y}\end{bmatrix} + \mathbf{e}_i$. 
We need to generate samples of the form $(\vec C_i, \;\vec E_i \vec C_i \ket{0^{n-k}, \vec x})$ such that $\vec C_i \sim \CC_n$, $\vec E_i \sim \CD_p^{\otimes n}$, and $\vec x \sim \Z_2^k$.
The key step to doing so is the efficient sampling of a uniformly random Clifford $\vec C_i \in \CC_n$ such that \begin{align}
    \operatorname{span}(\Symp(\vec C_i \vec Z_1 \vec C_i^\dagger), \;\dots, \;\Symp(\vec C_i \vec Z_n \vec C_i^\dagger)) = \operatorname{im}(\vec A_i) 
\end{align}
and $\Symp(\vec C_i \vec X_j \vec C_i^\dagger) = (\vec B_i)_j$, where $(\vec B_i)_j$ is the $j$th column of $\vec B_i$, for $j\in [k]$.
We do so by way of observing that up to phases which we do not consider, a Clifford is determined by its action on $\vec Z_1, \;\dots, \;\vec Z_n, \;\vec X_1, \;\dots, \;\vec X_n$.
Presently, $\vec A_i$ encodes how $\vec C_i$ is to act on $\vec Z_1, \;\dots, \;\vec Z_n$, while $\vec B_i$ encodes how $\vec C_i$ is to act on $\vec X_1, \;\dots, \;\vec X_k$.
There are, however, two problems.
First, $\vec A_i$ and $\vec B_i$ are completely independent whereas we would like an explicit basis correspondence where the $j$th logical $\vec Z$ anticommutes exactly with the $j$th logical $\vec X$.
Second, the Clifford is underdetermined because we do not know how $\vec C_i$ is to act on $\vec X_{k+1}, \;\dots, \;\vec X_{n}$; these missing draws are known as \emph{destabilizers}.
We address these problems by way of an iterative sampling procedure.
Specifically, first draw randomly $n-k$ independent basis vectors from $\operatorname{im} (\vec A_i) \subseteq \Z_2^{2n}$ which are symplectically orthogonal to the columns fo $\vec B_i$; these are the stabilizers.
Draw the remaining $k$ independent basis vectors from $\operatorname{im} (\vec A_i) \subseteq \Z_2^{2n}$ such that the $j$th draw, in addition to being independent and symplectically orthogonal to the stabilizers and previous draws, is symplectically orthogonal to all but exactly the $j$th column of $\vec B_i$; this $j$th draw is a logical $\vec Z_j$.
The logical $\vec X$'s are given exactly by the columns of $\vec B_i$.
Finally, we draw a random completion of the $\vec X$ part of the Paulis, i.e. the destabilizers.
To sample the $j$th destabilizers, we sample a random vector in $\Z_2^{2n}$ which is symplectically orthogonal to the logical $\vec Z$'s, logical $\vec X$'s, previously drawn destabilizers, and all but exactly the $j$th stabilizer.
All of these samplings are random solutions to an explicit set of linear equations, so they can be done efficiently.
Now that we have specified the action of the Clifford $\vec C_i$ on an explicit complete basis set of Paulis, we can efficiently compute an exact description of $\vec C_i$, e.g. as a circuit~\cite{rengaswamy2018synthesis}.
We have generated a uniformly random Clifford $\vec C_i$ which upon acting on $\vec Z_1, \;\dots, \;\vec Z_n$ produces in symplectic form $\vec A_i$, and upon acting on $\vec X_1, \;\dots, \;\vec X_k$ produces in symplectic form $\vec B_i$.
To complete the argument that $\vec C_i$ is uniformly random, we must show that the number of such $\vec C_i$ is the same for each choice of $(\vec A_i, \vec B_i)$. 
This fact follows directly from the fact that the Cliffords act transitively on Paulis (i.e. for any pair of Paulis there is a Clifford which maps one to the other by conjugation), so the set of valid Cliffords which produce each $(\vec A_i, \vec B_i)$ are the same.
Since $(\vec A_i, \vec B_i)$ is itself chosen uniformly at random, this argument completes the proof that $\vec C_i$ is a uniformly random Clifford.

Next, let $\vec P_i = \Symp^{-1}(\vec t_i)$ be the Pauli whose symplectic representation is $\vec t_i$.
We prepare the states 
\begin{equation} \label{eq:search_red_2_state}
(\vec{C}_i, \;\vec{P}_i \vec{C}_i\ket{0^n}) \text{ for } i \in [m]. 
\end{equation}
By definition, $\vec P_i$ is the product of three operators, whose symplectic parts are $\vec{A}_i\mathbf{r}_i$, $\vec{B}_i\mathbf{y}$, and $\mathbf{e}_i$. 
The $\vec{A}_i\mathbf{r}_i$ component does nothing to the logical state.
The $\vec B_{i} \mathbf{y}$ component maps the logical state from $\ket{0^k}$ to $\ket{\vec y}$.
Hence, the samples are exactly equal to 
\begin{equation}
(\vec{C}_i, \;\vec{E}_i \vec{C}_i\ket{0^{n-k}, \mathbf{y}}) \text{ for } i \in [m],
\end{equation}
where $\vec C_i \sim \CC_n$, $\vec E_i \sim \CD_p^{\otimes n}$, and $\vec y \sim \Z_2^k$,
as required for Definition \ref{def:search-LSN}. We can then call the oracle for $\lsn^m(k, n, p)$, which will return $\vec y$ with probability $q$, to complete the reduction. 
\end{proof}

\begin{lemma}[Reduction from classical representation, decision] \label{lem:classical_reduction_2_dec}
Suppose that there is an oracle $\CO$ which solves Decision $\lsn[k, n, p][m]$ with success probability $q$. Then there is an efficient quantum algorithm which solves Decision $\lsn[k, n, p][m]$ (classical) with success probability $q$ using a single call to $\CO$.
\end{lemma}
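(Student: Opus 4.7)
The plan is to essentially reuse the transformation from Lemma~\ref{lem:classical_reduction_2_search}, but now argue carefully about both the structured and unstructured distributions of the decision variant. Given a classical sample $\left(\begin{bmatrix}\vec A_i \,|\, \vec B_i\end{bmatrix}, \vec t_i\right)$ for $i \in [m]$, I would first apply the efficient Clifford sampling procedure constructed in the proof of Lemma~\ref{lem:classical_reduction_2_search}: from $(\vec A_i, \vec B_i)$ produce a uniformly random Clifford $\vec C_i \sim \CC_n$ such that $\operatorname{span}(\Symp(\vec C_i \vec Z_j \vec C_i^\dagger))_j = \operatorname{im}(\vec A_i)$ and $\Symp(\vec C_i \vec X_j \vec C_i^\dagger) = (\vec B_i)_j$. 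Then set $\vec P_i = \Symp^{-1}(\vec t_i)$, prepare $\vec P_i \vec C_i \ket{0^n}$, and call $\CO$ on $\{(\vec C_i, \vec P_i \vec C_i \ket{0^n})\}_{i \in [m]}$, outputting whatever it outputs.

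For correctness, the structured case is already handled by Lemma~\ref{lem:classical_reduction_2_search}: when $\vec t_i = \begin{bmatrix}\vec A_i \,|\, \vec B_i\end{bmatrix}\begin{bmatrix}\vec r_i \\ \vec y\end{bmatrix} + \vec e_i$, the factorization of $\vec P_i$ into its $\vec A_i \vec r_i$, $\vec B_i \vec y$, and $\vec e_i$ components shows that $\vec P_i \vec C_i \ket{0^n}$ is exactly distributed as $\vec E_i \vec C_i \ket{0^{n-k}, \vec y}$ with $\vec C_i \sim \CC_n$, $\vec E_i \sim \CD_p^{\otimes n}$, and $\vec y \sim \Z_2^k$, which matches the structured distribution of Decision $\lsn^m(k,n,p)$ exactly.

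The main step to verify is the unstructured case. Here $\vec t_i \sim \Z_2^{2n}$ uniformly, so $\vec P_i$ is a uniformly random $n$-qubit Pauli. Although the algorithm literally hands $\CO$ a specific pure state $\vec P_i \vec C_i \ket{0^n}$, from the oracle's point of view the input is a quantum channel output that depends only on the density matrix averaged over any randomness not observable by the oracle. Since $\vec t_i$ (and hence $\vec P_i$) is part of the sample's randomness and is not revealed to $\CO$ outside of the prepared state, the relevant input state is
\begin{align}
\mathop{\mathbb{E}}\limits_{\vec P_i \sim \CP_n}\!\left[\vec P_i \vec C_i \ket{0^n}\!\bra{0^n} \vec C_i^\dagger \vec P_i^\dagger\right] = \frac{1}{4^n} \sum_{\vec P \in \CP_n} \vec P \vec C_i \ket{0^n}\!\bra{0^n} \vec C_i^\dagger \vec P^\dagger = \frac{\vec I_n}{2^n},
\end{align}
by the Pauli twirl. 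Thus the joint distribution of $(\vec C_i, \text{input to } \CO)$ in this case is indistinguishable from $(\vec C_i, \vec I_n/2^n)$, matching the unstructured distribution of Decision $\lsn^m(k,n,p)$.

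The main obstacle, such as it is, is articulating the last point cleanly: the reduction only succeeds because a quantum oracle's behavior depends on the density matrix of its input, and the marginal density matrix induced by the random classical string $\vec t_i$ matches the maximally mixed state. Once this observation is made, the advantage of $\CO$ on the transformed samples is at least $q$, since both structured and unstructured instances of the classical representation map to the corresponding structured and unstructured distributions of the quantum $\lsn$ decision problem, completing the reduction with a single call to $\CO$.
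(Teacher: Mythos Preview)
Your proposal is correct and matches the paper's proof essentially line for line: reuse the Clifford-sampling transformation from Lemma~\ref{lem:classical_reduction_2_search}, observe that structured classical samples map to structured quantum samples, and for unstructured samples note that a uniformly random Pauli applied to $\vec C_i\ket{0^n}$ yields the maximally mixed state. The only difference is that you spell out the Pauli-twirl computation explicitly where the paper just asserts it.
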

\begin{proof}
The reduction takes as input the samples
\begin{equation}
\left(\begin{bmatrix}
      \vec A_i \, | \, \vec B_i  
    \end{bmatrix}, \;\mathbf{t}_i\right) \text{ for } i\in [m],
\end{equation}
where either
\begin{equation}
\mathbf{t}_i= \begin{bmatrix}
      \vec A_i \, | \, \vec B_i  
    \end{bmatrix}\begin{bmatrix}\mathbf{r}_i \\ \mathbf{y}\end{bmatrix} + \mathbf{e}_i,
\end{equation}
or $\mathbf{t}_i$ is uniformly random.
In the former case, the procedure in the proof of Lemma \ref{lem:classical_reduction_2_search} produces a valid structured sample $(\vec{C}_i, \; \vec{E}_i\vec{C}_i\ket{0^{n-k}, \mathbf{y}})$.
If this same procedure is applied to a uniformly random $\mathbf{t}_i$, then a uniformly random Pauli $\vec{P}_i$ is applied to each state $\vec{C}_i\ket{0^{n}}$, creating maximally mixed states. 
Hence, the resulting samples are $(\vec{C}_i, \rho_i)$, for $\rho_i$ maximally mixed. 
It follows that unstructured samples of $\lsn[k, n, p][m]$ (classical) get mapped to unstructured samples of $\lsn[k, n, p][m]$. 
This completes the proof. 
\end{proof}

\section{Self-Reducibility of Quantum Stabilizer Decoding}\label{sec:self_reducibility}

Generally, the classical landscape of hardness results admit two non-trivial forms of self-reductions. 
The first is a search-to-decision reduction, in which we show that an oracle solving the decision variant of a task can be called as the subroutine of an otherwise efficient algorithm to solve the search variant.
The second is a random self-reduction, wherein we show that an oracle solving a random instance of a task can be used to solve any instance of that task.
By definition, the worst case is always at least as hard as the average case, so the converse notion of an average-to-worst case reduction is vacuous.
Typically, the same can be said about a decision-to-search reduction.
Such a reduction is almost always trivial for classical problems because we can usually efficiently certify the validity of a proposed solution, and thus only accept correct answers.
However, in the quantum formulation it is unclear whether a search oracle is even useful in solving a decision problem, as we will show, because if is no longer easy to verify purported solutions.
Whether or not any of these three non-trivial reductions exist is an open problem for stabilizer decoding.
In this section, we resolve some of these open questions by either giving a reduction or providing evidence that no such reduction exists.

\subsection{Worst-Case Reductions Between Decision and Search} \label{sec:self_reducibility:subsec:worst_s2d}
% YES: there is a search to decision reduction

We give reductions between search and decision. 
In the worst case, we must assume a decision oracle which is capable not only of solving $\qncp(k, n, w)$ for a specific $w$, but also of all $w' \leq w$.
It is an open question as to whether this assumption can be relaxed.

\begin{theorem}[Search-to-decision reduction, $\qncp$] \label{thm:search2dec_sqncp}
    Let $\CT = \set{\vec{P}_1, \;\dots, \;\vec{P}_{n-k}}$ be a stabilizer tableau with code distance $d$ such that $w \leq \lfloor \frac{d-1}{2} \rfloor$.
    Let $\vec{C} \in \CC_n$ be the associated Clifford circuit and let $\CS$ be the stabilizer subgroup generated by $\CT$.
    Suppose there exists an algorithm $\CA(\CT)$ which solves Decision $\qncp(k, n, w')$ for all $w' \leq w$ with probability at least $\frac{2}{3}$. 
    Then there exists an algorithm $\CB(\CT)$ with input $(\vec{C}, \;\ket{\phi} = \vec{E C} \ket{0^{n-k}, \psi })$ such that $\wt(\vec{E}) \leq w$, finds $\widetilde{\vec{E}} \in \CP_n$ such that $\widetilde{\vec{E}} \vec{E} \in \CS$ with probability $\frac{2}{3}$.
\end{theorem}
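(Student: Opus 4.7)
The plan is to build up $\widetilde{\vec E}$ qubit by qubit, using the decision oracle to test which single-qubit Pauli the true error $\vec E$ supports on each physical qubit. The key structural fact we will rely on is that since $\wt(\vec E) \leq w \leq \lfloor (d-1)/2 \rfloor$, the minimum-weight representative of the coset $\vec E \CS$ is unique. Consequently, determining whether a state of the form $\vec P_i \vec R \ket{\phi}$ admits a recovery of weight strictly less than the current ``residual'' gives an unambiguous answer, namely whether $\vec P_i$ cancels the action of $\vec E$ on qubit $i$ (modulo what we have already accumulated in $\vec R$).

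First, I would determine the exact weight $w_0 := \wt(\vec E)$ by invoking $\CA(\CT)$ on $(\vec C, \ket{\phi})$ at weight bounds $w, w-1, \ldots, 0$ and letting $w_0$ be the smallest $w'$ for which the oracle returns \texttt{YES}; this uses $O(w) = O(n)$ queries. Next, I would initialize $\widetilde{\vec E} \leftarrow \vec I$ and a counter $j \leftarrow 0$, and loop over qubits $i = 1, \ldots, n$. At step $i$, for each $\vec P \in \{\vec X, \vec Y, \vec Z\}$ in turn, I would prepare the state $\vec P_i \widetilde{\vec E} \ket{\phi}$ (just a single-qubit Pauli application) and call $\CA(\CT)$ on $(\vec C, \vec P_i \widetilde{\vec E} \ket{\phi})$ with weight bound $w_0 - j - 1$. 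By the uniqueness of minimum-weight representatives, the oracle returns \texttt{YES} precisely when $\vec P$ equals the Pauli that $\vec E$ supports on qubit $i$; in that case I would update $\widetilde{\vec E} \leftarrow \vec P_i \widetilde{\vec E}$ and $j \leftarrow j+1$. If all three Paulis fail, then $\vec E$ is trivial on qubit $i$ and we move on. I stop early once $j = w_0$. The output is the accumulated $\widetilde{\vec E}$, which by construction satisfies $\widetilde{\vec E} \vec E \in \CS$, since it cancels $\vec E$ qubit by qubit.

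The main technical issue is error amplification: the procedure invokes $\CA$ on $O(n)$ inputs, and the guarantee of each call is only the constant $\frac{2}{3}$. I would handle this in the standard way by amplifying each of the $O(n)$ queries to confidence $1 - \frac{1}{10n}$ via majority vote over $O(\log n)$ independent invocations of $\CA$, which is efficient since $\CA$ is a decision algorithm. A union bound then yields overall success probability at least $\frac{2}{3}$. The other subtle point to verify is that the weight-bound arithmetic holds through the loop: this follows from the inductive invariant that after $i$ qubits have been processed, $\widetilde{\vec E} \vec E$ is supported on qubits $\{i+1, \ldots, n\}$ with weight exactly $w_0 - j$, which is preserved by the update rule and underpins the uniqueness-based correctness of each oracle call. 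Assuming this invariant, correctness of the reduction is immediate; I do not expect any genuine obstacles beyond careful bookkeeping.
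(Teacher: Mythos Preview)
Your high-level strategy---peeling off the error qubit by qubit, testing each single-qubit Pauli via the decision oracle---is essentially the paper's. However, there is a real gap in the implementation: you are handed a \emph{single copy} of the quantum state $\ket{\phi}$, yet the Decision $\qncp$ algorithm $\CA$ takes a quantum state as input and may consume or disturb it. Your procedure calls $\CA$ on $(\vec C,\,\vec P_i\,\widetilde{\vec E}\,\ket{\phi})$ roughly $O(n)$ times---and $O(n\log n)$ times once you amplify each call by majority vote---but after even one invocation you no longer possess $\ket{\phi}$. The Pauli you applied beforehand is reversible only if $\CA$ hands the state back intact, which you have no right to assume of an arbitrary quantum algorithm.

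The paper sidesteps exactly this issue by first invoking the instance-by-instance equivalence of Decision $\qncp$ with the purely classical Decision $\qsdp$ (Corollary~\ref{thm:Decision_sqncp-qsdp_equivalence}): measure the syndrome $\vec v$ of $\ket{\phi}$ once, and thereafter run the entire peeling loop on the classical data $(\vec H,\hat{\vec v})$, where each call to the now classical-input oracle can be repeated and amplified at will. Equivalently, each $\qncp$ query is serviced by preparing a \emph{fresh} state $\vec E''\vec C\ket{0^n}$ with $\vec E''$ any Pauli matching the desired syndrome, found by Gaussian elimination. As a minor secondary point, your ``key structural fact'' that the minimum-weight representative of $\vec E\CS$ is unique fails for degenerate codes (in Shor's $[[9,1,3]]$ code, $\vec Z_1$ and $\vec Z_2$ are both weight-one representatives of the same coset), so your stated invariant about $\widetilde{\vec E}\vec E$ itself being supported on $\{i+1,\dots,n\}$ does not hold literally; the correct bookkeeping tracks the minimum weight \emph{of the coset} rather than of the particular product $\widetilde{\vec E}\vec E$.
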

\begin{proof}
First, we apply the equivalence between $\qncp$ and $\qsdp$ from Theorem~\ref{thm:Decision_sqncp-qsdp_equivalence}.
It is therefore enough to find a search-to-decision reduction for syndrome decoding.
In a slight abuse of notation, we assume instead that $\CA$ and $\CB$ solve the respective syndrome decoding versions.
Here, $\CA$ outputs $\text{\texttt{YES}}$ if there is a solution, and $\text{\texttt{NO}}$ otherwise.
Let $\vec{H} \in \Z_2^{(n-k) \times 2n}$ be the symplectic check matrix for $\CT$, with rows $\mathbf{h}_1^X, \;\dots, \;\mathbf{h}_n^X, \;\mathbf{h}_1^Z, \;\dots, \;\mathbf{h}_n^Z$.
Let $\mathbf{v} \in \Z_2^{n-k}$ be the measured symplectic syndrome and $\mathbf{e} \in \Z_2^n$ be $\Symp(\vec{E})$.
Consider the following algorithm $\CB(\vec{H}, \mathbf{v}, w)$, where \texttt{CONTINUE} means to skip the rest of the loop iteration and immediately begin the next one.
\begin{enumerate}
    \item Using binary search and $\CA$, find $w_* = \wt(\Symp^{-1}(\mathbf{e}))$.
    \item Initialize vectors $\mathbf{\hat{e}}^X\leftarrow 0^n$, $\mathbf{\hat{e}}^Z \leftarrow 0^n$, $\mathbf{\hat{v}} \leftarrow \mathbf{v}$, and an integer $\hat{w} \leftarrow w_*$.
    \item For $i$ from 1 to $n$ do:
    \begin{enumerate}
        \item If $\CA(\vec{H}, \;\mathbf{\hat{v}} + \mathbf{h}_i^Z, \;\hat{w} - 1)= \text{\texttt{YES}}$, set $\mathbf{\hat{v}} \leftarrow \mathbf{\hat{v}} + \mathbf{h}_i^Z$, $\hat{w} \leftarrow \hat{w} - 1$, and $\hat{e}_i^X \leftarrow 1$, then \texttt{CONTINUE}.
        \item If $\CA(\vec{H}, \;\mathbf{\hat{v}} + \mathbf{h}_i^X, \;\hat{w} - 1)= \text{\texttt{YES}}$, set $\mathbf{\hat{v}} \leftarrow \mathbf{\hat{v}} + \mathbf{h}_i^X$, $\hat{w} \leftarrow \hat{w} - 1$, and $\hat{e}_i^Z \leftarrow 1$, then \texttt{CONTINUE}.
        \item If $\CA(\vec{H}, \;\mathbf{\hat{v}} + \mathbf{h}_i^X + \mathbf{h}_i^Z, \;\hat{w} - 1)= \text{\texttt{YES}}$, set $\mathbf{\hat{v}} \leftarrow \mathbf{\hat{v}} + \mathbf{h}_i^X + \mathbf{h}_i^Z$, $\hat{w} \leftarrow \hat{w} - 1$, $\hat{e}_i^X \leftarrow 1$, and $\hat{e}_i^Z \leftarrow 1$.
    \end{enumerate}
    \item Let $\mathbf{\hat{e}} = (\mathbf{\hat{e}}^X, \mathbf{\hat{e}}^Z)$. Output $\mathbf{\hat{e}}$.
\end{enumerate}
All of the outputs of $\mathcal{A}$ are correct with probability at least $\frac{2}{3}$, so by amplification we can make its error probability at most $1/9n^2$.
Then, by union bound, $\CA$ is correct all $3n \lceil \log_2 n \rceil$ times it is called with probability at least $2/3$; we henceforth assume perfect correctness without loss of generality.
The Search variant guarantees that there is a vector $\mathbf{e} \in \Z_2^n$ such that $\wt(\mathbf{e}) \leq w$ and $\vec{H\W} \mathbf{e} = \mathbf{v}$.
Step 1 uses the Decision algorithm $\CA$'s perfect answers to find $w_* = \wt(\Symp^{-1}(\mathbf{e}))$.
This is the value for which $\CA$ outputs $\text{\texttt{YES}}$ but for $w_*+1$ $\CA$ outputs $\texttt{NO}$.

Now, suppose that the error $\vec{E} = \Symp^{-1}(\mathbf{e})$, with $\wt(\vec{E}) \leq w$ and $\vec{H\W} \mathbf{e} = \mathbf{v}$, is supported on the first qubit.
Then $\vec{E}$ must have either $\vec{X}$, $\vec{Y}$, or $\vec{Z}$ on the first qubit, which means that applying one of them must yield a solution with weight $\leq w_* - 1$. 
Hence, one of the three sub-calls in Step 3 is guaranteed to occur.
Suppose instead that $\vec{E}$ is not supported on the first qubit. 
If any one of the sub-calls in Step 3 occurs, then this would imply that increasing the weight of $\vec{E}$ gave a valid solution with weight $\hat{w} - 1 = w_* - 1$. But $w_* \leq w$, and since $w \leq \lfloor \frac{d-1}{2} \rfloor$ there cannot be another solution with weight $\leq w$ giving the same syndrome.
Thus, we have a contradiction.
Consequently, Step 3 in the first iteration has some oracle sub-call if and only if there exists a solution supported on the first qubit, and records the action on the first qubit in symplectic notation. 
Applying this logic iteratively proves correctness.
\end{proof}

A closer inspection of the above proof reveals that in fact the outputted error $\widetilde{\vec{E}}$ satisfies $\wt(\widetilde{\vec{E}}) \leq w$ in addition to $\widetilde{\vec{E}} \vec{E} \in \CS$. 
This is the extra requirement which we imposed in the formulation of $\errqncp$ in Definition~\ref{def:worst_case_error_QNCP}. 
Therefore, $\errqncp$ reduces to Decision $\qncp$.
At the same time, there is a simple decision-to-search reduction for $\errqncp$, because the validity of a claimed $\errqncp$ solution can be easily checked.
\begin{claim}[Decision-to-search reduction, $\errqncp$] \label{claim:dec2search_wSQNCP}
    Let $\CT, \CS, \vec{C}, w, d$ be as in Theorem~\ref{thm:search2dec_sqncp}. 
    Suppose there is an algorithm $\CB(\CT)$ which solves Search $\errqncp(k, n, w)$ on the code with Clifford $\vec{C}$, with probability at least $\frac{2}{3}$.
    Then there is an algorithm $\mathcal{A}$ which solves Decision $\qncp$, i.e. determines whether the instance of Decision $\qncp$ is a \texttt{YES} (there is a solution) instance or \texttt{NO} (no solution) instance, with probability at least $\frac{2}{3}$.
\end{claim}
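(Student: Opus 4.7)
My plan is a direct decision-to-search reduction. On input $(\vec C, \ket{\phi} = \vec E\vec C \ket{0^{n-k},\psi})$ to Decision $\qncp(k,n,w)$, the algorithm $\CA$ invokes the Search $\errqncp$ solver $\CB$ on the instance to obtain a candidate Pauli $\vec E'\in\CP_n$, verifies whether $\vec E'$ is a valid witness, and outputs $\texttt{YES}$ iff verification succeeds. Verification has two steps. First, I would check classically from the symplectic representation of $\vec E'$ that $\wt(\vec E')\leq w$. Second, I would check that $\vec E'\vec E\in\CS$ by applying $\vec E'$ and then $\vec C^\dagger$ to $\ket\phi$ and measuring the first $n-k$ qubits in the computational basis, accepting iff every outcome is zero; when $\vec E'\vec E\in\CS$, the post-measurement state is $\pm\ket{0^{n-k},\psi}$, so all outcomes are deterministically zero.

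Completeness on $\texttt{YES}$ instances is immediate: by hypothesis $\CB$ returns a valid witness with probability at least $\tfrac{2}{3}$, and any valid $\vec E'$ automatically passes both checks. Soundness on $\texttt{NO}$ instances rests on the fact that no $\vec E'$ satisfying both $\wt(\vec E')\leq w$ and $\vec E'\vec E\in\CS$ exists, so no output of $\CB$ can pass full verification, and $\CA$ outputs $\texttt{NO}$.

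The hard part will be the second verification step, because the syndrome measurement only certifies that $\vec E'\vec E$ lies in the normalizer $N(\CS)$ rather than in $\CS$ itself; an adversarial $\CB$ on a $\texttt{NO}$ instance could in principle return $\vec E'$ with $\vec E'\vec E$ a nontrivial logical operator, which would still pass the measurement and fool the reduction. To close this gap, I would invoke the unique-decoding condition $w\leq\lfloor (d-1)/2\rfloor$ together with the instance-by-instance equivalence of Decision $\qncp$ with Decision $\qsdp$ from Corollary~\ref{thm:Decision_sqncp-qsdp_equivalence}: after measuring the symplectic syndrome $\vec v$ of $\ket\phi$, verification can be performed entirely classically by checking that $\Symp(\vec E')$ has weight at most $w$ and produces syndrome $\vec v$, and the distance bound then ensures that any such $\vec E'$ is unique modulo $\CS$, ruling out spurious adversarial candidates on $\texttt{NO}$ instances.
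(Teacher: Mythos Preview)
Your proposal is correct and, after the detour through the quantum measurement idea, lands on exactly the paper's approach: measure the syndrome $\vec v$, run the $\errqncp$ solver, and accept iff the returned $\vec E'$ has $\wt(\vec E')\le w$ and matches the syndrome $\vec v$. The paper's soundness argument (``if \texttt{NO}, any $\widetilde{\vec E}$ with the correct syndrome has weight $>w$'') is precisely the content of the $\qncp$--$\qsdp$ equivalence you invoke, so your resolution of the logical-operator concern via Corollary~\ref{thm:Decision_sqncp-qsdp_equivalence} is the right move and makes the initial quantum verification step unnecessary.
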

\begin{proof}
    $\CA$ converts to syndrome decoding by measuring the syndrome and using the symplectic representation to obtain check matrix $\vec{H}$ and symplectic syndrome $\mathbf{v}$. 
    Then, $\CA$ runs Search $\errqncp(k, n, w)$ on the given instance, which will return some $\widetilde{\vec{E}}$.
    We compute $\mathbf{\widetilde{e}} = \Symp(\widetilde{\vec{E}})$ and check whether $\vec{H \W} \mathbf{\widetilde{e}} = \mathbf{v}$ and $\wt(\widetilde{\vec{E}}) \leq w$. If so, output \texttt{YES}; otherwise output \texttt{NO}.
    If the instance is \texttt{YES}, then $\widetilde{\vec{E}}$ will have the correct syndrome and be low-weight with probability at least $\frac{2}{3}$, so $\CA$ will output \texttt{YES} with at least this probability.
    If the instance is \texttt{NO}, then any $\widetilde{\vec{E}}$ which satisfies the syndrome must satisfy $\wt(\widetilde{\vec{E}}) > w$, in which case $\CA$ will certainly say \texttt{NO}.
    Thus, in either case $\CA$ succeeds with probability at least $\frac{2}{3}$.
\end{proof}
This reduction would not go through for Search $\qncp$ or Search $\recqncp$, because there does not seem to be an efficient technique to verify the validity of solutions proposed by the search oracle.
It is an open question as to the existence or non-existence of a decision-to-search reduction using Search $\qncp$ or $\recqncp$.

While tempting to conclude, it is not true that the above results imply an equivalence between Decision $\qncp$ and Search $\errqncp$.
This is because the search-to-decision reduction requires a decision oracle which is capable of distinguishing whether the error is of weight above or below $w'$ for all $w' \leq w$.
Such an oracle is exceptionally strong and is equivalent to solving $\qncp(k, n, w')$ for all $w' \leq w$.
The search-to-decision reduction can be interpreted as revealing just how strong this oracle is, as it enables solving the short vector problem encoded within $\errqncp$.

As we next explore, the relation between search and decision exhibits slightly different inherently quantum phenomena in the average case. 
More precisely, the promise that instances are drawn from the uniform distribution makes it possible to construct reductions with a completely different caveat---using multiple samples---that equate average-case search and decision variants.

\subsection{Average-Case Search-Decision Equivalence}
In this section, we demonstrate a search-decision equivalence for $\lsn[][\poly]$. 
This reduction relies crucially on the ability to obtain multiple samples; it is an open question as to whether the equivalence holds with a single sample.

For context, we recall the trivial decision-to-search procedure in classical decoding.
For $\lpn$, one may consider a sample of $\lpn$, $(\vec{A}, \mathbf{y})$, and submit it to Search $\lpn$. 
If $\mathbf{y} = \vec{A}\mathbf{x} + \mathbf{e}$, then with non-negligible probability, the search algorithm will return $\vec{\widetilde{x}}$ such that $\vec{\widetilde{x}} = \vec{x}$. 
The decision algorithm, after verifying $\mathbf{x}$ by checking that $\vec y - \vec{A\widetilde{x}}$ is low-weight, can accept this sample.
However, if $\mathbf{y}$ is uniformly random, a solution $\mathbf{x}$ exists with negligible probability, so the decision algorithm will accept with negligible probability as well.

For $\lsn$, there is no longer an efficient algorithm to verify that a potential solution $\widetilde{\mathbf{x}}$ is indeed the solution. 
A verification protocol would have to observe that $\vec{C}\ket{\widetilde{\mathbf{x}}}$ and $\vec{EC}\ket{\vec{x}}$ differ by a low-weight error. 
However, $\vec{E}$ is not the only operator that could have taken $\vec{C}\ket{\vec{x}}$ to $\vec{EC}\ket{\vec{x}}$. 
Indeed, there are exponentially many such operators, and determining whether one of them is low-weight is precisely a short vector problem. 
Quantum degeneracy is therefore once again responsible for this qualitative difference between the classical and quantum settings. 
Nevertheless, we show that given more samples of the decision problem, there is still a reduction to the search problem. 

\begin{theorem}[Decision-to-search reduction, $\lsn$] \label{thm:decision_to_search_lsn}
Let $\CO$ be an oracle which solves Search $\lsn[k, n, p][m]$ with probability at least $\frac{1}{2^k} + \frac{1}{\poly(n)}$.
Then there exists an efficient quantum algorithm which solves Decision $\lsn[k, n, p][2m]$ with non-negligible advantage, using two calls to $\CO$.
\end{theorem}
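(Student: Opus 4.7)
The plan is to split the $2m$ given samples into two halves of size $m$, query $\CO$ on each, and accept based on whether the two outputs are consistent with a randomly chosen logical shift. Concretely, write the input as $\{(\vec C_i, \rho_i)\}_{i=1}^{2m}$, sample $\vec y \sim \Z_2^k$ uniformly, and for each $i \in \{m+1,\dots,2m\}$ apply the logical operator $\overline{\vec X}_{\vec y}^{(i)} := \vec C_i \vec X_{\vec y} \vec C_i^\dagger$ to $\rho_i$; this is efficient given the Clifford description of $\vec C_i$. Then run $\CO$ on the untransformed first half to obtain $\vec w_1 \in \Z_2^k$ and on the transformed second half to obtain $\vec w_2 \in \Z_2^k$, and output \texttt{structured} iff $\vec w_1 + \vec w_2 = \vec y$.

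I would analyze correctness in the two cases separately. In the unstructured case, each $\rho_i = \vec I^{\otimes n}/2^n$ is invariant under every unitary, so the joint distribution of $(\vec w_1,\vec w_2)$ is independent of the freshly sampled $\vec y$; since $\vec y$ is uniform in $\Z_2^k$ we get $\Pr[\vec w_1+\vec w_2 = \vec y \mid \text{unstruct}] = 2^{-k}$ exactly. In the structured case, $\overline{\vec X}_{\vec y}^{(i)}$ is itself a Pauli, so it commutes with the error $\vec E_i$ up to a global phase and maps $\vec C_i\ket{0^{n-k},\vec x}$ to $\vec C_i\ket{0^{n-k},\vec x+\vec y}$; the transformed second half is therefore distributed identically to a fresh $\lsn^m$ instance with secret $\vec x+\vec y$, and the two oracle calls act on independent samples whose secrets are $\vec x$ and $\vec x+\vec y$ respectively.

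The key computation expresses the accept probability in closed form. Define $g_{\vec a}(\vec u) := \Pr[\CO = \vec a+\vec u \mid \text{secret}=\vec a]$ and $\bar g(\vec u) := \mathbb{E}_{\vec a \sim \Z_2^k}[g_{\vec a}(\vec u)]$, so the hypothesis on $\CO$ reads $\bar g(\vec 0) \geq 2^{-k}+\epsilon$ with $\epsilon = 1/\poly(n)$. Substituting $\vec u = \vec w_1-\vec x$, noting that $\vec w_1+\vec w_2 = \vec y$ is equivalent to $\vec w_2-(\vec x+\vec y) = \vec u$, and using both the conditional independence of the two oracle calls and the fact that $\vec x$ and $\vec x+\vec y$ are independent and uniform when $\vec x,\vec y$ are, one obtains
\begin{align}
\Pr[\text{accept} \mid \text{struct}] \;=\; \sum_{\vec u \in \Z_2^k} \bar g(\vec u)^2.
\end{align}

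The main obstacle is that the naive lower bound $\bar g(\vec 0)^2 \geq (2^{-k}+\epsilon)^2$ can be far smaller than $2^{-k}$ (for instance when $k = O(1)$), which would give a vacuous or even negative advantage. The remedy is a second-order argument on the off-diagonal mass. Under the constraints $\sum_{\vec u} \bar g(\vec u) = 1$ and $\bar g(\vec 0) = q$, convexity of $x \mapsto x^2$ (equivalently, Cauchy-Schwarz) implies that $\sum_{\vec u \neq \vec 0} \bar g(\vec u)^2$ is minimized when $\bar g$ is uniform on $\Z_2^k\setminus\{\vec 0\}$, yielding
\begin{align}
\sum_{\vec u} \bar g(\vec u)^2 \;\geq\; q^2 + \frac{(1-q)^2}{2^k-1} \;=\; \frac{1}{2^k} + \frac{2^k}{2^k-1}\,\epsilon^2 \;\geq\; \frac{1}{2^k} + \epsilon^2
\end{align}
after a short algebraic simplification. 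The distinguishing advantage is therefore at least $\epsilon^2 = 1/\poly(n)$, which is non-negligible, completing the reduction. The technical heart of the argument is thus this second-order gain: the oracle's slight bias toward correctness translates, through the quadratic $\bar g(\vec u)^2$, into a genuine distinguishing advantage even in the regime $q^2 \ll 2^{-k}$ where the first-order bound fails.
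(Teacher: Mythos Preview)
Your proof is correct and follows essentially the same approach as the paper: split the $2m$ samples, apply a random logical shift $\overline{\vec X}_{\vec y}$ to one half, and accept iff the two oracle outputs differ by $\vec y$. Your lower-bound computation via convexity arrives at the same $\frac{1}{2^k}+\epsilon^2$ as the paper's centering argument (writing $\bar g(\vec v) = 2^{-k} + \delta(\vec v)$ with $\sum_{\vec v}\delta(\vec v)=0$ and expanding $\sum_{\vec v}\bar g(\vec v)^2 = 2^{-k} + \sum_{\vec v}\delta(\vec v)^2$), and your direct averaging over the uniform secret cleanly replaces the paper's appeal to secret re-randomization.
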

\begin{proof}
We begin with samples $(\vec{C}_i, \rho_i)$ for $i\in[2m]$, where $\rho_i$ are either independent maximally mixed states, or are code states $\mathbb{E}_{\vec{C}_i, \vec{E}_i, \vec{x}}\left[(\vec{E}_i\vec{C}_i)\ket{0^{n-k}, \vec{x}}\bra{0^{n-k}, \vec{x}}(\vec{E}_i\vec{C}_i)^\dag\right]$, where $\vec{E}_i \sim \mathcal{D}_p^{\otimes n}$ are independent depolarization errors, $\vec{x} \sim \Z_2^k$, and $\vec{C}_i$ are independent random Clifford operators. 
The second option for $\rho_i$ may equivalently be considered as the state $\vec{E}_i\vec{C}_i\ket{0^{n-k}, \vec{x}}$, where $\vec{E}_i, \vec{C}_i, \vec{x}$ are all stochastically chosen from the same distributions. 

Split up the $2m$ samples into two halves. The first half of samples, $(\vec{C}_i, \rho_i)$ for $i \in[m]$, we run the search oracle on our samples as-is, and receive a proposed solution $\vec z_1$. 
For the second half of the samples, randomly choose a bitstring $\vec{y} \in \Z_2^k$, and apply $\vec{\overline{X}}_{\vec{y}}^{(i)}$ to each, i.e. apply the logical operation for $\vec{X}_{\vec y}$ on the $i$th code. 
If the $\rho_i$ are maximally mixed, applying $\vec{\overline{X}}_{\vec{y}}^{(i)}$ does nothing. 
If they are code states, then the states transform to 
\begin{equation}
\vec{\overline{X}}_{\vec{y}}^{(i)} \vec{E}_i\vec{C}_i\ket{0^{n-k}, \vec{x}} = \vec{E}_i \vec{C}_i \ket{ 0^{n-k}, \vec{x}+ \vec{y}} .
\end{equation}
We then run the search oracle on these transformed samples, obtaining a new output $\vec{z}_2$.
We accept if $\vec{z}_1 + \vec{z_2} = \vec{y}$, and reject otherwise.

Suppose first that all $2m$ of our states were genuine noisy code states. 
Then, $\vec{z}_1$ and $\vec{z}_2$ are independent, since they are the outcomes of the search oracle on completely independent inputs—the circuits and errors were independently sampled, and $\vec{x} + \vec{y}$ for uniformly random $\vec{y}$ is independent from $\vec{x}$.
Hence,
\begin{align}
    \Pr[\vec{z}_1 + \vec z_2 = \vec y] & = \sum_{\vec v \in \Z_2^k} \Pr[\vec z_1 = \vec v, \;\vec z_2 = \vec v + \vec y] \\
    & = \sum_{\vec v \in \Z_2^k} \Pr[\vec z_1 = \vec v] \cdot \Pr[\vec z_2 = \vec v + \vec y] .
\end{align}
Let $F(\vec w)$ be the probability that the Search $\lsn^m$ solver outputs an answer which differs from the correct answer by $\vec w$.
By Lemma~\ref{lemma:secret_rerandomize}, $F$ has no dependence on the input itself.
Then \begin{align}
    \Pr[\vec z_1 + \vec z_2 = \vec y] = \sum_{\vec v \in \Z_2^k} F(\vec v + \vec x) \cdot F(\vec v + \vec x) = \sum_{\vec v \in \Z_2^k} F(\vec v)^2 .
\end{align}
We may write $F(\vec v) = \frac{1}{2^k} + \delta(\vec v)$.
By normalization, $\sum_{\vec v \in \Z_2^k} \delta(\vec v) = 0$; by assumption, $\delta(0) = \frac{1}{\poly(n)}$.
Thus, \begin{align}
    \sum_{\vec v \in \Z_2^k} F(\vec v)^2 = \sum_{\vec v \in \Z_2^k} \frac{1}{2^{2k}} + \frac{1}{2^k} \delta(\vec v) + \delta(\vec v)^2 = \frac{1}{2^k} + \sum_{\vec v \in \Z_2^k} \delta(\vec v)^2 \geq \frac{1}{2^k} + \frac{1}{\poly(n)} .
\end{align}

Suppose instead that all $2m$ of the states were independent maximally mixed states.
Then the application of $\overline{\vec X}_{\vec y}$ has no effect, as the state remains maximally mixed.
Therefore, the distribution of $\vec{z}_1 + \vec z_2$ is independent from $\vec{y}$. 
But $\vec y$ is uniformly random, so the probability that $\vec y = \vec z_1 + \vec z_2$ is exactly $1/2^k$.
There is therefore a negligible chance that the distinguisher will accept in this case. The two cases differ by a non-negligible term. Therefore, the distinguisher constructed has non-negligible advantage, which completes the reduction.
\end{proof}
This proof takes independent samples and shifts half of them by a random string $\vec{y}$ known only to the distinguisher. 
If the samples were not structured, then the shift by $\vec{y}$ is information-theoretically impossible to recover from the samples that were submitted to the search oracle. 
Meanwhile, if they were structured, then a search oracle is capable of returning the two samples correctly, and the shift between them yields $\vec{y}$. 
The above argument also works for $\lpn$, but unlike the standard technique with verification, it generalizes to the quantum setting as well.

For the search to decision reduction, our argument is inspired by the analogous search to decision reduction for $\lpn$.
It will require a lemma that shows that low-weight errors are capable of scrambling one qubit. 
\begin{lemma}[Low-weight errors scramble at least one qubit completely] \label{lem:error_randomize_qubit}
Let $\rho = \ket{y}\bra{y} \otimes \frac{\vec{I}^{\otimes(n-1)}}{2^{n-1}}$ be an $n$ qubit state for some $y \in \Z_2$, let $\vec{C} \sim \CC_n$ be a random Clifford, and let $\vec{E} \sim \mathcal{D}_p^{\otimes n}$ be an error. 
Then for any $p = \omega(\frac{\log n}{n})$, $\ket{\vec{C}}\bra{\vec{C}} \otimes (\vec{EC})\rho(\vec{EC})^\dag$ is negligibly close in trace distance to $\ket{\vec{C}}\bra{\vec{C}} \otimes \frac{\vec{I}^{\otimes n}}{2^n}$.
\end{lemma}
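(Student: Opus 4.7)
The plan is to open up $\rho$ in the Pauli basis and track what happens to the single non-trivial Pauli component under a random Clifford and depolarizing noise. Using $\ket{y}\bra{y} = \frac{1}{2}(\vec I + (-1)^y \vec Z)$, we write
\begin{align}
\rho = \frac{1}{2^n}\bigl(\vec I^{\otimes n} + (-1)^y\, \vec Z_1\bigr),
\end{align}
where $\vec Z_1 = \vec Z \otimes \vec I^{\otimes (n-1)}$. Set $\vec P := \vec C \vec Z_1 \vec C^\dagger$, which is a non-identity Hermitian Pauli. Then $\vec C \rho \vec C^\dagger = \frac{1}{2^n}(\vec I + (-1)^y \vec P)$, so the problem reduces to studying a single Pauli term under depolarizing noise.

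Next I would compute the expectation over $\vec E \sim \CD_p^{\otimes n}$. Since every single-qubit Pauli either commutes or anticommutes with $\vec P^{(i)}$, $\vec E \vec P \vec E^\dagger = (-1)^{s(\vec E, \vec P)}\vec P$, and the expectation factorizes across qubits. On qubits where $\vec P$ acts as $\vec I$, the factor is $1$; on the $\wt(\vec P)$ qubits where $\vec P$ acts non-trivially, each contributes $(1-p) + \tfrac{p}{3} - 2 \cdot \tfrac{p}{3} = 1 - \tfrac{4}{3}p$. Hence
\begin{align}
\mathbb{E}_{\vec E}\bigl[(\vec{EC})\rho(\vec{EC})^\dagger\bigr] \;=\; \frac{1}{2^n}\Bigl(\vec I + (-1)^y\bigl(1-\tfrac{4}{3}p\bigr)^{\wt(\vec P)} \vec P\Bigr).
\end{align}
Using $\lVert \vec P \rVert_1 = 2^n$, the trace distance to the maximally mixed state for a fixed $\vec C$ is exactly $\tfrac{1}{2}\lvert 1 - \tfrac{4}{3}p\rvert^{\wt(\vec P)}$.

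The main step is then the average over $\vec C \sim \CC_n$. Because the Clifford group acts transitively on non-identity phase-free Paulis by conjugation, $\vec P$ is uniformly distributed over $\CP_n \setminus \{\vec I\}$, so $\wt(\vec P)$ is distributed as $\Bin(n, 3/4)$ up to conditioning on a set of measure $4^{-n}$. By a standard Chernoff bound, $\Pr[\wt(\vec P) < n/2] \leq e^{-\Omega(n)}$, and on the complementary event $\lvert 1 - \tfrac{4}{3}p \rvert^{\wt(\vec P)} \leq e^{-2pn/3}$. For $p = \omega(\log n / n)$ this is $e^{-\omega(\log n)} = n^{-\omega(1)}$, so
\begin{align}
\mathbb{E}_{\vec C}\Bigl[\mathsf{T}\bigl(\mathbb{E}_{\vec E}[(\vec{EC})\rho(\vec{EC})^\dagger],\; \vec I^{\otimes n}/2^n\bigr)\Bigr] \;\leq\; \tfrac{1}{2} e^{-2pn/3} + e^{-\Omega(n)} = \negl(n).
\end{align}
Since the Clifford register appears classically in both joint states, this expected trace distance equals the trace distance $\mathsf{T}(\ket{\vec C}\bra{\vec C} \otimes (\vec{EC})\rho(\vec{EC})^\dagger,\; \ket{\vec C}\bra{\vec C} \otimes \vec I^{\otimes n}/2^n)$, completing the proof.

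There is no real obstacle here beyond bookkeeping; the only subtle ingredient is invoking the transitivity of Clifford conjugation on non-identity Paulis so that the weight of $\vec P$ truly concentrates at $3n/4$, which together with the assumption $p = \omega(\log n / n)$ drives $(1-\tfrac{4}{3}p)^{\wt(\vec P)}$ below any inverse polynomial in $n$.
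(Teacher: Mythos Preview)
Your proof is correct and in fact cleaner than the paper's. You expand $\rho$ in the Pauli basis, reducing the problem to the damping of a single non-identity Pauli $\vec P = \vec C \vec Z_1 \vec C^\dagger$ under the depolarizing channel, compute the damping factor $(1-\tfrac{4}{3}p)^{\wt(\vec P)}$ exactly, and then use Clifford transitivity on non-identity Paulis to conclude that $\wt(\vec P)$ concentrates near $3n/4$. The paper instead rewrites $(\vec{EC})\rho(\vec{EC})^\dagger = \vec C(\vec E'\rho\vec E'^\dagger)\vec C^\dagger$ with $\vec E' = \vec C^\dagger \vec E \vec C$, and analyzes the probability that $\vec E'$ flips the first qubit: it passes to the symplectic representation, invokes the Bernoulli decomposition of depolarizing noise (Corollary~\ref{corollary:bernoulli-depolarizing_duality}), and bounds the bias of an inner product $\vec r^\intercal \vec e$ via the weight of the first row $\vec r$ of $\Symp(\vec C^\dagger)$. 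Both arguments ultimately hinge on a Chernoff bound over the randomness of $\vec C$, but your Pauli-basis route avoids the symplectic machinery and the Bernoulli decomposition entirely, yielding the sharper and more transparent bound $\tfrac{1}{2}(1-\tfrac{4}{3}p)^{\wt(\vec P)}$ directly. The paper's approach has the minor advantage of being phrased in terms of the bit-flip probability on the first qubit, which matches the operational intuition used in the surrounding search-to-decision reduction, but analytically your argument is strictly simpler.
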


\begin{proof}
The state $(\vec{EC})\rho(\vec{EC})^\dag$ may be rewritten as $(\vec{CE}')\rho(\vec{CE}')^\dag$, where $\vec{E}' = \vec{C}^\dag \vec{EC}$. 
The action of $\vec{E}'$ on the state depends on the Pauli it applies to the first qubit.
If it is an $\vec{X}$ or $\vec{XZ}$ Pauli, then it flips the bit $y$, while if it is $\vec{I}$ or $\vec{Z}$ it does nothing.
$\vec{E}'$ may always be expressed as $\vec{X}_1\vec{P}$ or $\vec{P}$, where $\vec{P}$ is a Pauli on the last $n-1$ qubits and potentially a $\vec{Z}_1$. 
$\vec{P}$ fixes $\rho$, since the maximally mixed state will not be altered by the Pauli on the last $n-1$ qubits, and $\vec{Z}_1$ fixes $\ket{y}\bra{y}$. 
If we can show that, knowing $\vec C$, $\vec{E}'$ can be expressed as $\vec{P}$ with probability $\frac{1}{2}+\epsilon$ and as $\vec{X}_1\vec{P}$ with probability $\frac{1}{2}-\epsilon$, then
\begin{align} \label{eq:lsn_s2d_goal}
\underset{\vec E \sim \CD_p^{\otimes n}}{\mathbb{E}}[\vec{E}'\rho(\vec{E}')^\dag \,|\, \vec C] & = \left(\frac{1}{2} + \e\right) \frac{\vec{I}^{\otimes(n-1)}}{2^{n-1}} \otimes \ketbra{y}{y} + \left(\frac{1}{2} - \e\right) \frac{\vec{I}^{\otimes(n-1)}}{2^{n-1}} \otimes \ketbra{1-y}{1-y} \\
& = \frac{\vec{I}^{\otimes n}}{2^{n}} + \epsilon(\ket{y}\bra{y} - \ket{1- y}\bra{1- y}) \otimes \frac{\vec{I}^{\otimes(n-1)}}{2^{n-1}} .
\end{align}
We here allow $\e$ to be negative, and instead aim to show that $|\e| = \negl(n)$ for all but a negligible fraction of $\vec C$.
If so, then the trace distance of $\ketbra{\vec C}{\vec C} \otimes \vec{E}'\rho(\vec{E}')^\dag$ with the maximally mixed state is also negligible. 
Applying the unitary $\vec{C}$ and the fact that the trace distance is preserved by unitary transformations, we would obtain the same statement for $(\vec{CE}')\rho(\vec{CE}')^\dag$, completing the proof.

To bound $|\e|$, we apply the Bernoulli representation of depolarizing noise as from Corollary~\ref{corollary:bernoulli-depolarizing_duality} in order to decompose the error into a sum of independent Bernoulli errors.
In the symplectic representation, $\vec{C}^\dag$ is a $2n \times 2n$ matrix $\vec{S}$ that preserves the symplectic inner product~\cite{rengaswamy2018synthesis}, such that $\vec S$ maps $\mathbf{e} = \Symp(\vec E)$ to $\mathbf{e}' = \Symp(\vec E')$.
The bit $e'_1$ determines whether $\vec E'$ applies a $\vec X$ Pauli to the first qubit, and is the sole variable determining $\e$.
Furthermore, $e'_1 = \mathbf{r}^\intercal \mathbf{e}$, where $\mathbf{r}^\intercal$ is the first row of $\vec{S}$.
For a random $\vec C$, $\mathbf{r}^\intercal$ follows the uniform distribution over $\Z_2^{2n}$ because the marginal distribution of the first row of a random matrix that preserves the symplectic inner product is uniformly random. 

By Corollary~\ref{corollary:bernoulli-depolarizing_duality}, we may write $\vec e = (\vec e_{\vec X}, \vec e_{\vec Z}) + (\vec e_{\vec Y}, \vec e_{\vec Y})$, where $e_{\vec X}, e_{\vec Y}, e_{\vec Z}$ are i.i.d. $\Ber(q)$ and $p = 3[q^2(1-q) + q(1-q)^2]$.
Thus, $e_1' = \vec r^\intercal \widetilde{\vec e}_{\vec X} + \vec r^\intercal \widetilde{\vec e}_{\vec Z} + \vec r^\intercal (\vec e_{\vec Y}, \vec e_{\vec Y})$, where $\widetilde{\vec e}_{\vec X} = (\vec e_{\vec X}, 0)$ and $\widetilde{\vec e}_{\vec Z} = (0, \vec e_{\vec Z})$.
Each of these three terms are independent.
Consider the distribution of the first term, $\vec r^\intercal \widetilde{\vec e}_{\vec X}$.
Let $m$ be the weight of the first $n$ entries of $\vec r$ and assume that $m \geq \frac{n}3$.
Then $\vec r^\intercal \widetilde{\vec e}_{\vec X} \sim \Bin(m, q)$.
Using the fact that the probability that a Binomial random variable is even is $\frac{1}{2} + \frac{1}{2} (1 - 2q)^m$ and the exponential bound $1 + x \leq e^x$, $\forall x \in \R$, \begin{align}
    \Pr[\vec r^\intercal \widetilde{\vec e}_{\vec X} = 0] & = \frac{1}{2} + \frac{1}{2} (1 - 2q)^m \leq \frac{1}{2} + \frac{1}{2} (1 - 2q)^{n/3} \\
    & \leq \frac{1}{2} + \frac{1}{2} \exp{- \frac{2}{3} q n} .
\end{align}
So long as $q = \omega(\frac{\log n}{n})$, the above probability is $\frac{1}{2} + \eta$ where $\eta = \negl(n)$.
Note that $\frac{3}{2} q \leq 3[q^2(1-q) + q(1-q)^2] \leq 3q$ for all $q \in [0, \frac{1}{2}]$, so $q = \omega(\frac{\log n}{n})$ if and only if $p = \omega(\frac{\log n}{n})$.

Now, given $b \sim \Ber(\frac{1}{2} + \d)$, for any independent random variable $b' \sim \Ber(p')$, we can check by direct computation that $b + b' \sim \Ber(p'')$ where $p'' \in [\frac{1}{2} - \delta, \;\frac{1}{2} + \delta]$.
Thus, if $m \geq \frac{n}3$, $e_1' = \vec r^\intercal \vec e \sim \Ber(\e)$ where $|\e| \leq \frac{1}{2} \exp{\frac{- 2qn}3} = \negl(n)$.
Therefore, we have nearly accomplished our goal set out in Eqn.~(\ref{eq:lsn_s2d_goal}) for any $\vec C$ such that $m \geq 3$.
Since $\vec r$ is uniformly random over $\Z_2^{2n}$, the Chernoff bound implies that $m \geq \frac{n}3$ with probability $1 - \exp{-\W(n)}$, which fully accomplishes our goal.
\end{proof}

\begin{theorem}[Search-to-decision reduction, $\lsn$] \label{thm:search_to_decision_lsn}
Let $p = \omega(\frac{\log n}{n})$.
Suppose that there exists an oracle $\CO$ which solves Decision $\lsn[k, n, p][\poly][]$ with non-negligible advantage.
Then there exists an efficient quantum algorithm which solves Search $\lsn[k, n, p][\poly][]$ with probability $\Omega(1)$, using $\poly(n)$ calls to $\CO$. 
\end{theorem}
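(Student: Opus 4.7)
The plan is to learn the secret $\vec{x}\in \Z_2^k$ one logical bit at a time via queries to $\CO$, paralleling the classical $\lpn$ search-to-decision reduction but requiring genuinely quantum machinery; the central analytic tool will be Lemma~\ref{lem:error_randomize_qubit}, which is why the theorem hypothesizes $p=\omega(\log n / n)$.

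For each $i \in [k]$, I would design a quantum transformation $T_i$ that maps a fresh batch of $\poly(n)$ input samples $\{(\vec{C}_j, \vec{E}_j\vec{C}_j\ket{0^{n-k},\vec{x}})\}_j$ into another batch of $\lsn^{\poly}$-style samples such that the image batch is (i) distributionally a structured $\lsn^{\poly}$ instance if $x_i$ takes one value, and (ii) negligibly close in trace distance to the fully unstructured distribution $(\vec{C}, \vec I/2^n)$ if $x_i$ takes the other. Then a single call to $\CO$ on the transformed batch, which by assumption has non-negligible distinguishing advantage, exposes $x_i$ with non-negligible bias; amplification by drawing fresh independent randomness across polynomially many queries boosts per-bit success to $1-\negl(n)$, and a union bound over $k\leq n$ bits gives overall success $\Omega(1)$.

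The transformation $T_i$ proceeds in two stages. First, we pick a uniformly random Pauli $\vec{P}$ and apply it to each sample's state, which by itself would maximally mix the state in expectation. Second, we apply a logical controlled-$\vec{P}$ gate with control at the $i$-th logical qubit, realized as the Clifford $\vec{C}_j \widetilde{\vec{W}} \vec{C}_j^\dagger$ where $\widetilde{\vec{W}}$ is the unencoded controlled Pauli at position $n-k+i$; this is efficient as controlled Paulis are Cliffords and Cliffords are closed under conjugation. If the control activates, the two Pauli applications cancel and we are left with a structured noisy code state; otherwise only the initial scrambling survives, and Lemma~\ref{lem:error_randomize_qubit} guarantees the result is negligibly close to the maximally mixed state. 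The randomness of $\vec{C}_j$ ensures the effective Clifford seen by $\CO$ remains uniformly distributed.

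The main obstacle is that the intrinsic depolarizing error $\vec{E}_j$ effectively shifts the control value from $x_i$ to $x_i \oplus a_{n-k+i}$ for an error-dependent bit $a_{n-k+i}$, which in a naive implementation would completely decorrelate the distinguishing output from $x_i$. Following the strategy suggested in the technical overview, this is addressed by using the depolarizing error itself as the randomness that scrambles the ``control qubit'' in the analysis of Lemma~\ref{lem:error_randomize_qubit}; in particular, the parameter regime $p=\omega(\log n/n)$ is exactly the threshold at which this noise is strong enough to make the partially-scrambled state indistinguishable from $\vec I/2^n$ and thus close the reduction.
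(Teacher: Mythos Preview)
Your high-level strategy—learn one bit $x_i$ at a time by arranging that a controlled Pauli either cancels a random scrambler (structured sample) or leaves it in place (near-maximally-mixed via Lemma~\ref{lem:error_randomize_qubit})—matches the paper. But the implementation has a real gap in how the random Pauli interacts with the control.

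You apply a uniformly random Pauli $\vec P$ directly to the \emph{physical} state, then a logical controlled-$\vec P$ realized as $\vec C_j\widetilde{\vec W}\vec C_j^\dagger$. Pushed into the unencoded frame, the scrambler becomes $\vec C_j^\dagger\vec P\vec C_j$, which is itself a uniformly random $n$-qubit Pauli; with probability $3/4$ it carries an $\vec X$-component on qubit $n-k+i$. Hence the bit read by the control is $x_i$ XOR a \emph{uniformly random} bit from $\vec P$ (plus a further bit from $\vec E_j$), and the outcome is decorrelated from $x_i$ before you ever worry about the error. Your last paragraph correctly flags that the error perturbs the control, but misses that your own random $\vec P$ is the dominant source of corruption, and the ``use the depolarizing error to scramble the control qubit'' fix does nothing about it. (There is also a well-definedness issue: for the two applications to cancel, $\widetilde{\vec W}$ must be $C_{n-k+i}(\vec C_j^\dagger\vec P\vec C_j)$, whose target generically overlaps the control.)

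The paper avoids both problems by reversing the frame and the mode of application. The random Pauli $\vec P_i$ is chosen in the \emph{unencoded} frame on all qubits \emph{except} $n-k+1$, and applied to the physical state as $\vec P'_i=\vec C_i\vec P_i\vec C_i^\dagger$. The controlled-$\vec P_i$ is not applied to the state at all; it is \emph{absorbed into the Clifford description} handed to $\CO$, namely $\vec C'_i:=\vec C_i(C_{n-k+1}\vec P_i)\overline{\vec X}_{\vec y}$. Commuting $\vec P'_i$ past $\vec E_i$ and rewriting in terms of $\vec C'_i$ shows the control is evaluated on $\vec P_i\ket{0^{n-k},\vec x}$, whose $(n-k+1)$th qubit is exactly $x_1$ since $\vec P_i$ was chosen to avoid it. The error $\vec E_i$ then sits outside $\vec C'_i$ and does its intended job only in the $x_1=0$ branch, where Lemma~\ref{lem:error_randomize_qubit} scrambles the single remaining fixed qubit.
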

\begin{proof}
The input for the reduction is polynomially many samples 
\begin{equation}
(\vec{C}_i, \;\vec{E}_i\vec{C}_i\ket{0^{n-k}, \mathbf{x}}).
\end{equation}
We will construct a procedure to learn the $j$th bit of $\mathbf{x}$ with non-negligible probability.
Repeating it polynomially many times with independent samples amplifies that probability to $1 - o(\frac1k)$. 
Then, applying this procedure to each bit of $\mathbf{x}$, with independent samples of the search problem, we will obtain $\mathbf{x}$ with a probability of at least a constant.

Without loss of generality, we will describe the procedure to obtain the first bit $x_1$. 
Sample a random $\vec y \sim \Z_2^k$ and let $ \overline{\vec X}_{\vec y}$ be a Pauli $\vec{X}$ operator acting on the last $k$ qubits according to $\vec y$. 
For $i \in[m]$ samples necessary to call the decision oracle once, sample a random Pauli on all qubits except the $(n-k+1)$th qubit (i.e. the first qubit corresponding to $\vec{x}$), and let their tensor product be $\vec{P}_i$. 
Define $\vec{P}'_i$ as the Pauli so that $\vec{P}'_i\vec{C}_i = \vec{C}_i\vec{P}_i$.
Prepare the samples
\begin{equation}
(\vec{C}_i(C_{n-k+1}\vec{P}_i) \overline{\vec X}_{\vec y} , \;\vec{P}_i'\vec{E}_i\vec{C}_i\ket{0^{n-k}, \vec{x}}). 
\end{equation}
Here, $C_{n-k+1} \vec U$ denotes a controlled operation $\vec U$ with the $(n-k+1)$th qubit as the control.
Since $\vec{C}_i$ is uniformly random, we may write $\vec{C}_i' \coloneqq \vec{C}_i(C_{n-k+1}\vec{P}_i) \overline{\vec X}_{\vec y}$, and rewrite the samples as
\begin{equation}
(\vec{C}_i', \;\vec{P}_i'\vec{E}_i\vec{C}_i' \overline{\vec X}_{\vec y} (C_{n-k+1}\vec{P}_i)\ket{0^{n-k}, \vec{x}}) .
\end{equation}
Suppose first that $x_1 = 1$. Then the distribution is \begin{align}
(\vec{C}_i', \;\vec{P}'_i\vec{E}_i\vec{C}_i' \overline{\vec X}_{\vec y} \vec{P}_i\ket{0^{n-k}, \mathbf{x}}) & = (\vec{C}_i', \;\vec{E}_i \vec{C}_i' \overline{\vec X}_{\vec y} \ket{0^{n-k}, \mathbf{x}}) \\
& = (\vec{C}_i', \;\vec{E}_i \vec{C}_i' \ket{0^{n-k}, \mathbf{x} + \vec y}) .
\end{align}
These are precisely structured $\lsn$ samples.
Suppose instead that $x_1 = 0$. Then the controlled operation is equivalently the identity, so the distribution of a given sample is 
\begin{equation}
(\vec{C}_i', \vec{E}_i\vec{C}_i'\vec{P}_i\ket{0^{n-k}, \mathbf{x} + \mathbf{y} }).
\end{equation}
The application of the random $\vec{P}_i$ fully randomizes the state and yields the maximally mixed state for each sample on every qubit except the $(n-k+1)$th, which remains $x_1 + y_1$ with certainty. 
However, by Lemma \ref{lem:error_randomize_qubit}, the error $\vec{E}_i$ scrambles this final bit so that the state is negligibly close to the maximally mixed state, since $p = \omega\left( \frac{\log(n)}{n}\right)$. 

We conclude that the two options, $x_1 = 0$ and $x_1 = 1$, are precisely the types of samples that the decision problem can distinguish with non-negligible advantage.
This probability may be amplified by repeated trials (polynomially many times) such that the failure probability is $o(\frac1k)$, so that, iterating over each $x_j$, the correct secret $\vec x$ is outputted with probability $\W(1)$.
\end{proof}

Using the exact same proof strategy, we may obtain a reduction from Search $\lsn^m$ to Decision $\lsn^m$ for any \emph{fixed} $m$, and in particular for the case of greatest interest, $m=1$. 
However, the reduction holds only when there are very few logical qubits.

\begin{corollary}[Search-to-decision reduction, $\lsn$ with very few logical qubits]
\label{cor:search_to_decision_lsn_m}
Let $p = \omega(\frac{\log n}{n})$ and $k= O(\log n)$.
Suppose that there exists an oracle $\CO$ which solves Decision $\lsn[k, n, p][m][]$ with non-negligible advantage.
Then there exists an efficient quantum algorithm which solves Search $\lsn[k, n, p][m][]$ with probability $\frac{1}{2^k} + \frac{1}{\poly(n)}$, using one call to $\CO$. 
\end{corollary}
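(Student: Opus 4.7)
The plan is to directly adapt the technique used in the proof of Theorem~\ref{thm:search_to_decision_lsn}, which already implements a ``learn one bit per oracle call'' subroutine via the controlled-Pauli trick, but now restricted to a single call. The key observation is that under the assumption $k = O(\log n)$, we have $2^{-(k-1)} = 1/\poly(n)$, so uniform random guessing on the $k-1$ bits we cannot probe still incurs only a $1/\poly(n)$ penalty that is compatible with the required success probability $\frac{1}{2^k} + \frac{1}{\poly(n)}$.

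Concretely, given the input samples $\{(\vec C_i, \vec E_i \vec C_i \ket{0^{n-k}, \vec x})\}_{i \in [m]}$, I would apply the exact transformation from the theorem: sample a uniform shift $\vec y \sim \Z_2^k$, and for each sample $i \in [m]$, sample a Pauli $\vec P_i$ acting trivially on qubit $n-k+1$, define $\vec P_i'$ via $\vec P_i' \vec C_i = \vec C_i \vec P_i$, and produce the transformed samples $\big(\vec C_i (C_{n-k+1} \vec P_i)\, \overline{\vec X}_{\vec y},\; \vec P_i' \vec E_i \vec C_i \ket{0^{n-k}, \vec x}\big)$. As established in the theorem, if $x_1 = 1$ these are distributed as structured $\lsn^m(k,n,p)$ samples with secret $\vec x + \vec y$ (uniform), and if $x_1 = 0$ the controlled Pauli reduces to the identity and Lemma~\ref{lem:error_randomize_qubit} (invoking $p = \omega(\log n / n)$) implies the samples are negligibly close to the unstructured distribution. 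A single call to the decision oracle $\CO$ therefore outputs a bit $\hat x_1$ satisfying $\Pr[\hat x_1 = x_1] \geq \tfrac{1}{2} + \delta$ for some $\delta = 1/\poly(n)$.

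The reduction then samples $\hat x_2, \ldots, \hat x_k \sim \Z_2$ uniformly and outputs $\hat{\vec x} = (\hat x_1, \hat x_2, \ldots, \hat x_k)$. Independence of the oracle's determination of $x_1$ from the fresh random bits for the remaining coordinates gives
\[
\Pr[\hat{\vec x} = \vec x] \geq \left(\tfrac{1}{2} + \delta\right) \cdot 2^{-(k-1)} = \tfrac{1}{2^k} + \tfrac{\delta}{2^{k-1}} = \tfrac{1}{2^k} + \tfrac{1}{\poly(n)},
\]
where the last equality uses $k = O(\log n)$ so that $2^{k-1} = \poly(n)$.

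I do not anticipate a serious obstacle, since the corollary is essentially a budget-constrained version of Theorem~\ref{thm:search_to_decision_lsn}; the only point requiring care is confirming that the theorem's sample transformation is unaltered when used a single time (which it is, as it is sample-oblivious). The essential role of the hypothesis $k = O(\log n)$ is to control the cost of guessing the remaining bits: for $k$ superlogarithmic, the $2^{-(k-1)}$ factor would swamp the oracle's $1/\poly(n)$ advantage, and this strategy would no longer surpass the random-guessing baseline of $1/2^k$ by a non-negligible amount.
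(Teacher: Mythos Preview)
Your proposal is correct and follows essentially the same approach as the paper's own proof: invoke the single-bit transformation from Theorem~\ref{thm:search_to_decision_lsn} once to learn $x_1$ with probability $\frac{1}{2} + \frac{1}{\poly(n)}$, then guess the remaining $k-1$ bits uniformly, and use $k = O(\log n)$ to conclude that the resulting advantage $\delta/2^{k-1}$ is still $1/\poly(n)$. Your write-up is in fact more detailed than the paper's, which states only the two-line argument you arrived at.
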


\begin{proof}
Using one step of the approach from Theorem $\ref{thm:search_to_decision_lsn}$, we may obtain the first bit of $\vec{x}$ with probability $\frac{1}{2} + \frac{1}{\poly(n)}$. Guessing the remaining bits, 
we obtain $\vec{x}$ with probability $\geq \frac{1}{2^{k-1}}\left(\frac{1}{2}+\frac{1}{\poly(n)}\right) = \frac{1}{2^k} + \frac{1}{\poly(n)}$, where the equality holds because $k = O(\log(n))$.
\end{proof}

\subsection{Barriers Against Random Self-Reductions}
% Jonathan: define what it means to be a worst-to-average case reduction
One of the most direct ways to show that a problem is hard on average is to first show that it is hard in the worst case, and then construct a \emph{random self-reduction}.
This reduction takes the form of an efficient algorithm that scrambles an arbitrary instance of the problem into a random one, such that (a) an average-case solver correctly solves the transformed task with high probability and (b) a solution of the scrambled instance can be transformed efficiently into a solution of the original given instance.
In general, random self-reductions for decoding tasks have proved difficult even in the classical case.
Until recently, no random self-reduction into $\lpn$ was known for any parameter regime.
The work of \cite{brakerski2019worst} showed that for a relatively easy regime of worst-case decoding (in the sense that there exists a quasipolynomial time decoder in this regime) a random self-reduction to $\lpn$ can be constructed. 
However, the reduction only holds for a restricted subset of classical linear codes.
In the setting of quantum stabilizer codes, it is an open question as to whether any semblance of a random self-reduction can be achieved.
To that end, we first prove that no random self-reduction from an arbitrary worst-case $\qncp$ instance into $\lsn$ can exist.
The culprit preventing the existence of such a reduction is exchange symmetry between the code and the error---both are Pauli strings which are randomly scrambled during the reduction, but one must be maximally random while the other cannot be maximally random (less we lose decodability), which is impossible.
However, we will show that an analogous no-go theorem also holds classically. \cite{brakerski2019worst} circumvent this barrier by finding a clever restriction of possible worst-case instances---codes whose codewords all have nearly the same weight---wherein the reduction is possible at least in some non-trivial parameter regime.
This reduction proceeds by choosing a sufficiently local/sparse scrambling process, which carefully balances scrambling a code with the above restricted structure, without completely scrambling the error.
Quantumly, there are two natural analogs of a local scrambling model: a random sparse Clifford operator, and a random local Clifford circuit.
We prove that these two natural analogs of a local scrambling process cannot produce a random self-reduction for \emph{any} choice of restriction on worst-case instances.
Since locality/sparsity of a scrambling process is the only simple way to control the weight of an arbitrary error---which when too large destroys decodability even information theoretically---this result provides evidence that no random self-reduction exists at all, for any restriction.
We formally conjecture the nonexistence of random self-reductions into $\lsn$ (and its variants) at the end of this section.

We begin with a formal definition of a random self-reduction in the stabilizer code setting.
\begin{definition}[Random self-reduction distribution] \label{def:random_self_reduction}
    Fix parameters $n, k$.
    Let $\CD_{n, k, p} = \Unif(\Stab(n, k)) \otimes \CD_{p}^{\otimes n}$, where $\Stab(n, k)$ is set of $\llbracket n, k \rrbracket$ stabilizer codes and $\CD_{p}^{\otimes n}$ is the $n$-fold depolarizing channel with probability $p$.
    A $(\CV, \e, w, p)$ \emph{random self-reduction distribution} is an efficiently sampleable distribution $\CR_n$ over Cliffords $\CC_n$ which satisfies \begin{align}
        \mathop{\TV}\limits_{\vec{R} \sim \CR_n}((\vec{R C}, \;\vec{R E R}^\dag), \;\CD_{n, k, p}) \leq \e ,
    \end{align}
    for all $\vec C$ whose corresponding stabilizer code $\vec S$ is in $\CV \subseteq \CC_n$, and for all $\vec{E}$ such that $\wt(\vec{E}) \leq w$.
\end{definition}
Here, $\e$ quantifies the quality of the reduction.
We say that the reduction is \emph{strong} if $\e = \negl(n)$.
$\CV \subseteq \CC_n$ represents the restriction to a set of worst-case Clifford operators, analogous to the restrictions placed on the generator matrix in the classical reduction.
Lastly, $p$ represents the error rate in the random case. 
A reduction with $p = \frac34$ is always possible, as this error is equivalent to replacing the encoded state with a maximally mixed state.
We say that the reduction is \emph{non-trivial} if $p = \frac{3}{4} - \frac{1}{\poly(n)}$.
Otherwise, the reduction is trivial, because the encoded state of the output of the reduction is negligibly close to the maximally mixed state.
We also consider \emph{tableau-randomizing} random self-reduction distributions, where instead of randomizing $\llbracket n, k \rrbracket$ stabilizer codes, we randomize a stabilizer tableau, an ordered list $\vec{S}_1, \;\dots, \;\vec{S}_{n-k}$ that forms a basis of a stabilizer subspace.
Applying $\vec{R}$ yields $\vec{RS}_1\vec{R}^{-1}, \;\dots, \;\vec{RS}_{n-k}\vec{R}^{-1}$, and a random self-reduction distribution of this type must bring this stabilizer tableau close to the uniform distribution. While it is not strictly necessary for a random self-reduction to randomize a tableau, a distribution $\mathcal{R}_n$ that randomizes stabilizer subspaces but not tableaus would be rather exotic. One of our barriers to a reduction will apply only to tableau-randomizing reductions, but still poses a significant obstacle. 

The above definition establishes a \textit{necessary} condition to achieve the type of worst-case to average-case reduction achieved in \cite{brakerski2019worst} and sought out in \cite{poremba2025learningstabilizersnoiseproblem}. 
A desired reduction will take a worst-case instance state $\vec{EC}\ket{0^{n-k},\mathbf{x}}$, and apply the random distribution $\mathcal{R}_n$, yielding
\begin{equation}
(\vec{RER}^\dag) (\vec{RC}) \ket{0^{n-k},\mathbf{x}},
\end{equation}
for $\vec{R} \sim \mathcal{R}_n$. 
In order for this outputted state to be part of a valid $\lsn$ instance, the distribution of $(\vec{RC})\ket{0^{n-k},\mathbf{x}}$ must be close to the distribution of $\vec{C}'\ket{0^{n-k},\mathbf{x}}$ for $\vec{C}' \sim \mathcal{C}_n$.
We have not mentioned the distribution over the secret $\vec x$ because it is always easy to re-randomize the secret by way of Lemma~\ref{lemma:secret_rerandomize}.

\begin{remark}[Reduction for $\slsn$]
    We do not consider random self-reductions for $\slsn$. 
    In the one-sample variant, $\lsn$ and $\slsn$ are equivalent by Lemma~\ref{lem:equiv-slsn-lsn}. 
    In the multi-sample variant, the scrambling operator would have to map the secret to at least a higher-order $t$-design while also mapping Cliffords to Cliffords by multiplication, which is manifestly impossible except perhaps for $t \leq 3$, in which case the barriers are similar to those we will explore for $\lsn$ itself.
\end{remark}

We require $\vec{R}$ to be a Clifford for the following reasons: If we are to be able to apply $\vec{R}$ to a noisy state, it must be a unitary (we do not consider a potential more general notion of a self-reduction for which the dimension is allowed to change).
Furthermore, we require that $\vec{R EC} = \vec{E' RC} = (\vec{RER}^\dagger) \vec{RC}$.
The latter condition requires $\vec{E}' = \vec{RER}^\dag \in \CP_n$, implying that $\vec{R}$ is necessarily a Clifford.
Therefore, the given definition is essentially as general as possible.
The only potential other operation one may consider is the application of an additional Pauli error $\widetilde{\vec{E}}$ at the end to $\vec{RER}^\dag$, a technique analogous to noise flooding~\cite{hsieh2024general,bergamaschi2025revisiting}.
While we, for simplicity, do not consider this extra operation in our definition, our results below hold even if this extra operation is allowed, because our arguments are centered around the claim that in order for $\vec{R}$ to scramble $\vec{C}$, it already necessarily scrambles the error $\vec{E} \mapsto \vec{RER}^\dag$ to an information-theoretically un-decodable error.
Hence, adding further randomness via additional noise cannot help.
We will in this work consider only random self-reductions which are strong.

\subsubsection{No Tableau-Randomizing Reduction Without Instance Structure}
% Jonathan: no reduction unconditionally without strcuture
In principle, it is possible to construct a uniform distribution over low-dimensional subspaces of a space such that the basis for each is not uniformly random.
However, a scrambling operator which completely scrambles the set of stabilizer codes without scrambling tableaus appears difficult to construct, and natural scrambling operations typically produce randomized tableaus.
We here prove, however, that no general tableau-randomizing reduction is possible.
That is, if $\CV = \Stab(n, k)$, then no strong non-trivial tableau-randomizing reduction exists.

\begin{lemma}[Scrambling stabilizer tableaus scrambles the error]
\label{lemma:scramble_collateral_damage}
    If a distribution $\CR_n$ over $\CC_n$ takes every stabilizer tableau $\vec{S}_1, \;\dots, \;\vec{S}_{n-k}$ corresponding to a stabilizer code $S \in \Stab(n, k)$ to a distribution $\vec{R S}_1\vec{R}^{-1}, \;\dots, \;\vec{R S}_{n-k}\vec{R}^{-1}$ for $\vec{R} \sim \mathcal{R}_n$ which has negligible distance to the uniform distribution over stabilizer tableaus (in TV distance), then $\CR_n$ must also take every error $\vec{E} \in \CP_n$ by conjugation to have negligible distance from $\CD_{3/4}^{\otimes n}$. 
\end{lemma}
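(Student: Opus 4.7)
The plan is to embed the arbitrary non-identity error $\vec E$ as the first element of an $(n-k)$-element stabilizer tableau, invoke the hypothesis to push scrambling through conjugation by $\vec R$, and then apply monotonicity of TV distance to transfer the scrambling from the full tableau down to $\vec E$ alone. The key observation driving the argument is that the marginal distribution of the first Pauli in a uniformly random stabilizer tableau is itself close to uniform over $\CP_n$, via a symmetry argument on the Clifford group action. (Note that the lemma implicitly excludes $\vec E = \vec I$, since conjugation trivially fixes the identity; this matches the paper's convention of excluding the identity when referring to Paulis as errors.)

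Concretely, I would first fix any non-identity Pauli $\vec E \in \CP_n$ and extend it to a valid stabilizer tableau $(\vec S_1, \ldots, \vec S_{n-k})$ with $\vec S_1 = \vec E$. In the symplectic representation, this amounts to choosing $n-k-1$ additional non-zero vectors in the symplectic complement of $\Symp(\vec E)$ which are jointly linearly independent with $\Symp(\vec E)$, which is always possible when $k \geq 0$. The hypothesis would then yield that the distribution of the conjugated tuple $(\vec R \vec S_1 \vec R^{-1}, \ldots, \vec R \vec S_{n-k} \vec R^{-1})$ is within negligible TV distance of the uniform distribution over all stabilizer tableaus. Monotonicity of the TV distance under marginalization (Equation~\ref{eq:TV_monotonicity}) would then imply that the marginal distribution of $\vec R \vec E \vec R^{-1}$ alone is within negligible TV distance of the marginal of the first coordinate of a uniformly random tableau.

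The next step is to argue that this first-coordinate marginal is precisely the uniform distribution over $\CP_n \setminus \{\vec I\}$. This follows from the transitivity of the Clifford group action on stabilizer tableaus---equivalently, the transitivity of the symplectic group $\operatorname{Sp}(2n, \Z_2)$ on ordered $(n-k)$-tuples of pairwise symplectically orthogonal, linearly independent, non-zero vectors in $\Z_2^{2n}$. By an orbit--stabilizer argument, the number of tableaus extending any fixed non-identity Pauli as their first entry is constant across choices of that Pauli, forcing the marginal to be uniform on $\CP_n \setminus \{\vec I\}$. Since $\CD_{3/4}^{\otimes n}$ is the uniform distribution on all of $\CP_n$, the TV distance between these two distributions is $1/4^n$, which is negligible, and the triangle inequality completes the argument.

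The hard part will be the transitivity claim, i.e.~rigorously verifying that $\operatorname{Sp}(2n, \Z_2)$ acts transitively on ordered tuples of independent, pairwise symplectically orthogonal, non-zero vectors. While this is a standard fact---essentially Witt's extension theorem applied to the symplectic form over $\Z_2$---it still requires a careful constructive argument, for instance an explicit symplectic Gram--Schmidt procedure, to show that any such tuple can be mapped to any other by a symplectic transformation. Once established, this symmetry is exactly what guarantees the uniform-marginal property that the rest of the proof relies upon.
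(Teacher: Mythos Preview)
Your proposal is correct and follows essentially the same approach as the paper: embed $\vec E$ as an entry of some stabilizer tableau, apply the hypothesis, marginalize via TV monotonicity to get $\vec{RER}^{-1}$ close to the first-coordinate marginal of a uniform tableau, identify that marginal as uniform on $\CP_n \setminus \{\vec I\}$, and compare with $\CD_{3/4}^{\otimes n}$. The paper's proof is terser---it simply asserts that the single-Pauli marginal of a uniform tableau is uniform on $\CP_n \setminus \{\vec I\}$ without invoking Witt's theorem or an orbit--stabilizer count---so your version is if anything more detailed, but the logical structure is identical.
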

\begin{proof}
There exists a stabilizer tableau with the error $\vec{E}$ in it, and after being randomized by $\vec{R} \sim \mathcal{R}_n$ it is $\epsilon$-close in total variation distance to uniform over all stabilizer tableaus, for $\epsilon = \negl(n)$.
Hence, $\vec{RER}^{-1}$ has a marginal distribution $\epsilon$-close to the marginal distribution of a single Pauli in a stabilizer tableau, i.e. a uniformly random Pauli in $\mathcal{P}_n \setminus \set{\vec{I}}$. 
This distribution has negligible total variation distance from $\mathcal{D}_{3/4}^{\otimes n}$, and therefore $\vec{RER}^{-1}$ does as well. 
\end{proof}

\begin{corollary}[No strong tableau-randomizing random-self-reduction for $\qncp$] \label{thm:no_w2a_general}
    There is no tableau-randomizing reduction $(\mathcal{V}, \e, w, p)$ random self-reduction distribution, where $\mathcal{V}$ is the set of all stabilized tableaus, from $\qncp(k, n, w)$ with computational basis logical states to $\lsn(k, n, p)$ such that $p =  \frac34 - \frac{1}{\poly(n)}$ and $\e = \negl(n)$.
\end{corollary}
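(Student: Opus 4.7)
The plan is to derive a contradiction from Lemma~\ref{lemma:scramble_collateral_damage} together with a direct estimate of the total variation distance between depolarizing channels at nearby noise parameters.

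Suppose for contradiction that a tableau-randomizing $(\mathcal{V}, \e, w, p)$ random self-reduction distribution $\CR_n$ exists with $\e = \negl(n)$ and $p = \tfrac{3}{4} - 1/\poly(n)$. Since $\mathcal{V}$ is the full set of stabilizer tableaus, the tableau-randomizing hypothesis of $\CR_n$ matches the premise of Lemma~\ref{lemma:scramble_collateral_damage}. Applying that lemma, for every worst-case Pauli error $\vec E \in \CP_n$ with $\wt(\vec E) \leq w$, the distribution of $\vec R \vec E \vec R^\dagger$ under $\vec R \sim \CR_n$ is $\negl(n)$-close in total variation to $\CD_{3/4}^{\otimes n}$.

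Next I would invoke the monotonicity of total variation under marginalization (see Eq.~\ref{eq:TV_monotonicity}). By Definition~\ref{def:random_self_reduction}, the joint distribution of $(\vec R \vec C, \vec R \vec E \vec R^\dagger)$ is within TV distance $\e = \negl(n)$ of $\CD_{n,k,p} = \Unif(\Stab(n,k)) \otimes \CD_p^{\otimes n}$. Marginalizing over the code component, the marginal of $\vec R \vec E \vec R^\dagger$ must be $\negl(n)$-close to $\CD_p^{\otimes n}$. Combining with the previous paragraph via the triangle inequality,
\begin{equation}
\TV\!\left(\CD_p^{\otimes n},\; \CD_{3/4}^{\otimes n}\right) \leq \negl(n).
\end{equation}

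The last step is to lower bound $\TV(\CD_p^{\otimes n}, \CD_{3/4}^{\otimes n})$ by an inverse polynomial in $n$, which will contradict the negligible upper bound. Consider the event $A$ defined by the number of qubits on which the Pauli equals $\vec I$. Under $\CD_p^{\otimes n}$ this count is $\Bin(n, 1/4 + \delta)$ with $\delta = 1/\poly(n)$, whereas under $\CD_{3/4}^{\otimes n}$ it is $\Bin(n, 1/4)$; the means are shifted by $n\delta$ while both distributions have standard deviation $\Theta(\sqrt{n})$. A local central limit theorem (or direct Chernoff/Hoeffding estimate at the threshold $n/4 + n\delta/2$) shows that the two binomials can be distinguished with advantage $\Omega(\sqrt{n}\,\delta) = 1/\poly'(n)$, whence $\TV(\CD_p^{\otimes n}, \CD_{3/4}^{\otimes n}) \geq 1/\poly'(n)$. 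This is not $\negl(n)$, giving the desired contradiction and proving the non-existence of the reduction.

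The main obstacle is the final TV lower bound: the per-qubit distance between $\CD_p$ and $\CD_{3/4}$ is only $1/\poly(n)$, so one must extract an inverse-polynomial signal from the $n$-fold tensorization by carefully choosing the distinguishing statistic (the Hamming weight of non-identity positions is the natural choice) and verifying that the resulting binomial test retains constant-in-$n$ signal strength relative to its standard deviation for every $\poly(n)$ gap allowed by the non-triviality condition.
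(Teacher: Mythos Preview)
Your proposal is correct and follows essentially the same route as the paper: apply Lemma~\ref{lemma:scramble_collateral_damage} to force $\vec R\vec E\vec R^\dagger$ negligibly close to $\CD_{3/4}^{\otimes n}$, use the reduction definition (via TV monotonicity) to force it negligibly close to $\CD_p^{\otimes n}$, and derive a contradiction by showing $\TV(\CD_p^{\otimes n},\CD_{3/4}^{\otimes n})$ is non-negligible. The paper's proof is terser---it simply asserts an efficient distinguisher via the empirical error weight (which, since one sample from $\CD_p^{\otimes n}$ yields $n$ i.i.d.\ single-qubit draws, is exactly your binomial statistic)---whereas you spell out the marginalization step and the $\Omega(\sqrt{n}\,\delta)$ advantage explicitly; these are the same argument at different levels of detail.
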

\begin{proof}
    Suppose that such a reduction distribution exists, $\mathcal{R}_n$, over Cliffords. 
    Then the conditions of Lemma \ref{lemma:scramble_collateral_damage} are met, because for any stabilizer tableau, mixing by the distribution $\mathcal{R}_n$ yields a marginal distribution on $\vec{R}\vec{S}_1\vec{R}^{-1}, \;\dots, \;\vec{R}\vec{S}_{n-k}\vec{R}^{-1}$ that is $\epsilon$-close to the uniform distribution. 
    Hence, $\vec{RER}^{-1}$ is also $\epsilon$-close to the uniform distribution $\mathcal{D}_{3/4}^{\otimes n}$, and it therefore cannot have negligible total variation distance from $\mathcal{D}_{3/4 - 1/\poly(n)}^{\otimes n}$. 
    Indeed, there is an efficient algorithm that distinguishes these distributions, namely by sampling polynomially many times and estimating the expected error weight, which is either $\frac{3}{4}n$ or $\left(\frac{3}{4}-\frac{1}{\poly(n)}\right)n$. 
    But the existence of an efficient algorithm proves that the two distributions have non-negligible total variation distance. 
    
\end{proof}

\subsubsection{No Local Reduction for any Instance Structure}
While Corollary~\ref{thm:no_w2a_general} is a relatively strong barrier to random self-reductions, it is not an inherently quantum phenomenon.
To emphasize this fact, we prove in Appendix~\ref{sec:app-classical_w2a_no_go} that an analogous impossibility theorem holds classically.
However, classically there exists an explicit random self-reduction under a significant but nevertheless non-trivial set of restrictions in the worst-case instance.
Specifically, \cite{brakerski2019worst} shows that if the rate vanishes as $\frac{k}n = O(\frac1{n^{\epsilon}})$ for a fixed $\epsilon \in (0, 1)$; when the maximum error weight is $w = O\left(\frac{n^{\e}}{(1-\e)^2} \log^2 n\right)$; and when the worst-case instance's classical linear code is sufficiently \emph{balanced}; there exists a random self-reduction into $\lpn(k, n, p)$ where $p = \frac{1}{2} - \frac{1}{\poly(n)}$.
Here, a $\b$-\emph{balanced} code is one for which all nonzero elements of the code space have weight between $(\frac{1}{2} - \b)n$ and $(\frac{1}{2} + \b)n$.
Despite the fact that the worst-case instance has relatively low weight $w$ and the $\lpn$ instance has error probability so close to $\frac{1}{2}$ that it is difficult to imagine any decoding algorithm other than brute-force enumerated list decoding, the self-reduction is nonetheless remarkable because it achieves \emph{some} notion of a random self-reduction into $\lpn$ for the first time.

Corollary~\ref{thm:no_w2a_general} shows that no general reduction works because there is an inherent symmetry between the code and the error. 
A tableau-randomizing random self-reduction wishes to re-randomize the stabilizer tableau completely into a uniformly random set of Paulis, whilst simultaneously re-randomizing the error into a distribution of Paulis which are not close to maximal in weight, as such errors are not decodable even information theoretically.
However, both the tableau and the error are Paulis, so without some structure to break their symmetry, there is no way to act differently on tableaus and errors.
The reduction of \cite{brakerski2019worst} implicitly incorporate this insight by using a \emph{locally generated} distribution for their random self-reduction distribution.
Specifically, a reduction operator generated in a local manner (e.g. as the sum of random weight-1 vectors) guarantees a bounded norm which we can control.
Further, if we restrict the worst-case instance to have sufficiently low error weight and sufficiently high code weight, it becomes plausible that a weight-bounded random self-reduction operator could approximately randomize the code without completely randomizing the error.

Quantumly, the story appears substantially different. 
Enforcing a specific locality of the reduction operator appears to be the only reasonable technique for preventing an over-scrambled error.
Yet, as we show, for natural notions of a local reduction operator in the quantum setting, there is no random self-reduction for \emph{any} choice of restriction on stabilizer codes, and \emph{any} restriction on the error weight.
This inherently quantum phenomenon arises from the qualitatively distinct aspects of scrambling over the Clifford group, as compared to scrambling on a vector space over a binary field.

Any reasonable generation of the classical local self-reduction operator would have to be a sufficiently sparse Clifford.
In particular, this means a Clifford operator $\vec{C}$ for which its symplectic representation $\Symp(\vec{C})$ has $o(n)$ 1s in each column, i.e., $\max_i (|\vec{CX}_i\vec{C}^\dag|, \;|\vec{CZ}_i\vec{C}^\dag|) = o(n)$. 
Otherwise, even errors of constant weight could potentially be amplified to weight larger than $\frac{3}{4}n$, and therefore the randomized distribution of the error far from $\mathcal{D}_{3/4}^{\otimes n}$.

We prove, however, that no distribution over such non-trivially sparse Cliffords are viable self-reduction operators because there are simply not enough of them to reach the space of stabilizer codes.
While we otherwise consider strong reductions ($\e = \negl(n)$) in this work, the below theorem is a no-go for any $\e$ which is non-negligibly away from $1$.

\begin{theorem}[No random self-reduction from sparse Cliffords] \label{thm:w2a_nogo_sparse_Clifford}
If $k = n - \Omega(n)$, then number of stabilizer codes $N = |\Stab(n, k)|$ and the number of Cliffords $M$ with column sparsity $d = o(n)$ satisfy $\log(N) - \log(M) = \Omega(n^2)$. 
Consequently, no sparse random self-reduction distribution exists for any $\epsilon < 1 - 2^{-o(n^2)}$, any error weight $w \geq 1$, and any probability parameter $p$. 
\end{theorem}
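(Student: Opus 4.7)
The plan is to establish the claim by an entropy/pigeonhole argument, splitting the proof into the counting statement and then deriving the non-existence of the reduction as an immediate corollary.

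First, I would lower bound $\log N$. A $\llbracket n,k\rrbracket$ stabilizer code corresponds bijectively to an isotropic subspace of dimension $m = n-k$ in the symplectic space $(\Z_2^{2n}, \odot)$. The count of such subspaces is given by the standard symplectic Gaussian binomial
\begin{align}
    |\Stab(n,k)| \;=\; \prod_{i=0}^{m-1}\frac{4^{n-i}-1}{2^{m-i}-1},
\end{align}
whose log behaves as $\sum_{i=0}^{m-1}(2n-m-i) + O(m) = m(2n-m) - \binom{m}{2} + O(m)$. Since $k = n - \Omega(n)$ means $m = \Theta(n)$, this gives $\log N = \Omega(n^2)$.

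Next I would upper bound $\log M$. A Clifford is determined, up to a Pauli phase contributing only $O(n)$ bits, by its symplectic representation, which is a $2n \times 2n$ matrix over $\Z_2$. The number of columns of Hamming weight at most $d$ is $\sum_{i=0}^d \binom{2n}{i} \leq (2en/d)^d$, and multiplying across all $2n$ columns gives
\begin{align}
    \log M \;\leq\; 2nd\log(2en/d) + O(n).
\end{align}
Writing $d = n/f(n)$ with $f(n)\to\infty$ (which is just the statement $d=o(n)$), the right-hand side equals $\frac{2n^2 \log(2ef(n))}{f(n)} + O(n)$, which is $o(n^2)$ because $\log f(n)/f(n) \to 0$. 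Combining the two bounds gives the asserted gap $\log N - \log M = \Omega(n^2)$.

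For the consequence, I would argue by support size. Fix any worst-case Clifford $\vec C$; since $\mathcal{R}_n$ is supported on at most $M$ sparse Cliffords, the induced distribution on the code specified by $\vec{RC}$ is supported on at most $M$ elements of $\Stab(n,k)$. A distribution supported on at most $M$ points of a size-$N$ set has TV distance at least $1 - M/N$ from the uniform distribution on that set. By monotonicity of TV under marginalization (Eqn.~\eqref{eq:TV_monotonicity}), the TV distance to $\mathcal{D}_{n,k,p}$ satisfies $\varepsilon \geq 1 - M/N = 1 - 2^{-(\log N - \log M)} = 1 - 2^{-\Omega(n^2)}$, which contradicts the hypothesis $\varepsilon < 1 - 2^{-o(n^2)}$. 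Note that the error weight $w$ and noise parameter $p$ play no role here, since the obstruction already lives in the marginal on the code.

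The main obstacle, I expect, is making the counting step for $\log N$ truly tight in the regime $k = n - \Omega(n)$; one has to be careful that factors like $2^{m-i}-1$ in the denominator do not eat into the leading $\Omega(n^2)$ term. The Clifford-counting step is delicate but essentially mechanical: the key conceptual point is that any $d = o(n)$ column-sparsity kills the $n^2$ entropy because the per-column entropy is $\tilde{O}(d)$ rather than $\Theta(n)$, so a sparsity constraint cannot be weakened to anything beyond linear in $n$ without collapsing the argument.
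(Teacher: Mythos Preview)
Your proposal is correct and follows essentially the same route as the paper: lower-bound $\log N$ by counting isotropic subspaces (the paper does this via an elementary count of ordered bases divided by $|\mathrm{GL}_{n-k}(\Z_2)|$, you via the closed-form symplectic Gaussian binomial), upper-bound $\log M$ by counting sparse $2n\times 2n$ binary matrices (the paper uses the entropy bound $2^{nH_2(d/2n)}$, you use the equivalent $(2en/d)^d$), and conclude via the support-size argument for TV distance. One small wording quibble: the correspondence between stabilizer codes and isotropic subspaces is not quite bijective since sign choices on the generators give distinct codespaces, but this only strengthens your lower bound and does not affect the argument.
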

\begin{proof}
The number of $n-k$ dimensional stabilizer subspaces is lower bounded by the number of phase-free stabilizer subspaces (those for which no Pauli operators have no phase). 
The number of such subspaces is equal to 
\begin{equation}
\frac{C_{n-k}}{|\mathrm{GL}_{n-k}(\mathbb{Z}_2)|},
\end{equation}
where $C_{n-k}$ is the number of ordered, symplectically orthogonal linearly independent lists in $\Z_2^{2n}$ of length $n-k$, and $\operatorname{GL}_{n-k}(\Z_2)$ is the set of invertible $(n-k) \times (n-k)$ matrices over $\Z_2$.
The number of bases of a given $(n-k)$-dimensional symplectic subspace with trivial symplectic inner product is $|\mathrm{GL}_{n-k}(\mathbb{Z}_2)|$.
% The group $\mathrm{GL}_{n-k}$ acts freely and transitively on ordered bases, so the number of ordered bases is $|\mathrm{GL}_{n-k}|$. 

The order of $\mathrm{GL}_{n-k}(\mathbb{Z}_2)$ can be upper bounded by the number of $(n-k)\times (n-k)$ matrices, and therefore $\log_2(|\mathrm{GL}_{n-k}(\mathbb{Z}_2)|) \leq (n-k)^2$. 
Defining the complement rate $\overline{R} = 1 - \frac{k}{n}$, it follows that $\log_2(|\mathrm{GL}_{n-k}(\mathbb{Z}_2)|) \leq \overline{R}^2 n^2$. 
Meanwhile, the number of ordered symplectically orthogonal bases can be estimated iteratively as follows. In the $i$th step, $2^{2n-i+1}$ of the elements commute with the previous $i-1$ elements of the basis. 
Of these elements, $2^{i-1}$ are in the span of the previous $i-1$. 
Taking the product of $2^{2n-i+1} - 2^{i-1}$ for $i\in[n-k]$, and noting that $2^{2n-i+1} - 2^{i-1} \geq 2^{2n - i}$, 
\begin{align}
\log_2(C_{n-k}) & \geq \sum_{i = 1}^{n-k} (2n - i)   = 2n(n-k) - \frac{(n-k)(n-k+1)}{2} \\
& = 2n(n-k) - \frac{1}{2} n (n-k) + \frac{1}{2} (k-1) (n-k) \\
& = \frac{3}{2} \overline{R} n^2 + \frac{\overline{R}}{2} (k-1) n \geq \frac{3}{2} \overline{R} n^2 .
\end{align}
By assumption, $k = n - \W(n)$, so equivalently $\overline{R} = \frac{n-k}{k} = \Omega(1)$.
Hence, noting that $\overline{R} \in [0, 1]$, \begin{align}
    \log_2(N) = \log_2(C_{n-k}) - \log_2(|\operatorname{GL}_n(\Z_2)|) \geq \left(\frac{3}{2} \overline{R} - \overline{R}^2 \right) n^2 = \W(n^2) .
\end{align}

Next, we upper bound the number of sparse Cliffords whose symplectic representation has column-sparsity $d \leq n$. 
Here, we are measuring sparseness in the symplectic representation, so the Cliffords are $2n \times 2n$ matrices over $\Z_2$ which preserve the symplectic inner product.
Certainly the number of $d$-sparse Cliffords is at most the number of $d$-sparse $2n \times 2n$ binary matrices.
The number of column vectors with a sparsity of $d$ is $\sum_{i=0}^d {2n \choose i}$.
By binomial-entropy bounds (see, e.g., Chapter 1 of \cite{richardson2008modern}), $\sum_{i=0}^d {2n \choose i} \leq 2^{n H_2(d/2n)}$, where $H_2(x) = - x \log_2(x) - (1-x)\log_2(1-x)$ is the binary entropy.
For $d = o(n)$, $H_2(\frac{d}{2n}) = o(1)$, so \begin{align}
    \log_2 \left( \sum_{i=0}^d {2n \choose i} \right) \leq n H_2\left(\frac{d}{2n}\right) = o(n) .
\end{align}
Summing up over columns, we conclude that $\log_2(M) = o(n^2)$, and therefore $\log_2(N) - \log_2(M) = \Omega(n^2)$. 
This completes the proof of the technical claim in the theorem.

To deduce that the above implies the non-existence of a sparse random self-reduction distribution for any $\mathcal{V} \subseteq \Stab(n, k)$, note that for any starting Clifford $\vec C$, randomizing over sparse Cliffords yields the distribution $\vec{R C}$, which has support of size $M = 2^{o(n^2)}$
Meanwhile, $\Unif(\Stab(n, k))$ is a uniform distribution over a support of size $N = 2^{\Omega(n^2)}$. 
As a consequence, the event that these two distributions coincide has probability $p = 2^{-\Omega(n^2) + o(n^2)} = 2^{-\W(n^2)} = \negl(n)$, implying
\begin{equation}
\TV(\vec{R C}, \;\Unif(\Stab(n, k))) = 1-2^{-\Omega(n^2)}. 
\end{equation}
In particular, for any $\vec E \in \CP_n$ and $p \in (0, 1)$,
\begin{equation}
\TV((\vec{R C}, \;\vec{RER}^\dag), \;\CD_{n, k, p}) = 1-2^{-\Omega(n^2)} .
\end{equation}
This statement concerns only about the number of sparse Cliffords versus the number of stabilizer codes, so it is independent of the error.
\end{proof}

As a consequence of the above result, a random self-reduction distribution cannot have bounded sparsity.
This is in contrast to the classical case, in which the reduction of \cite{brakerski2019worst} does indeed have bounded sparsity.
Without completely bounded sparsity, one of the only remaining natural approaches is to generate a product of $m$ random sparse operators in hopes that for a certain choice of $m$, a random self-reduction can be obtained.
We next show that for the simplest model of such an approach, no tableau-randomizing reduction exists.

A reasonable approach to the construction of a reduction operator is to take the product of Cliffords drawn from a small base distribution of sparse matrices.
We develop a formalism for the analysis of such constructions via a connection into the mixing of Markov processes.
A natural construction for the base distribution is a random $t$-local Clifford gate, where $t$ is a constant.
That is, sample $t$ random qubit indices, and apply a uniformly random Clifford on the $t$ qubits.
Such a distribution need not be the only plausible base distribution, but it is most natural in the sense that it is completely symmetric in how it mixes various Paulis and is local not only in sparsity as a symplectic matrix but also in qubit support.
We will prove that no such construction can yield a strong and non-trivial tableau-randomizing reduction, beginning with a map into Markov theory.
A finite Markov process (also known as a Markov chain) is a sequence of random variables $(X_t)_t$ such that \begin{align}
    \Pr[(X_t)_t = (x_t)_t] = \Pr[X_t = x_t \;|\; X_{t-1} = x_{t-1}] \cdots \Pr[X_1 = x_1 \;|\; X_0 = x_0] \Pr[X_0 = x_0] .
\end{align}
In other words, conditioned on the previous value, the current value is independent of all other previous values.
Such a process over a finite state space $\W$ can always be described by a transition matrix $Q \in \R^{|\W| \times |\W|}_{\geq 0}$, so that given an initial distribution $\mb{v} \in \R^{|\W|}_{\geq 0}$, the law of $X_t$ is given by $Q^t \mb{v}$.
A process is irreducible if every outcome in $\W$ has positive probability of being reached in finitely many steps.
The chain is aperiodic if it does not return to any particular outcome at regular intervals.
A basic result of Markov theory is that irreducible and aperiodic chains converge to a limiting distribution, regardless of the starting state.
This limiting distribution is known as the \emph{stationary distribution} $\pi$, and satisfies $Q \pi = \pi$.
That is, $\pi$ is the unique $+1$ eigenvector of $Q$.
Convergence is required for a random self-reduction, as the ultimate reduction distribution must be uniformly random in order to scramble to code.
As such, we formulate our analytical strategy only for irreducible and aperiodic chains.

\begin{definition}[Optimistic and pessimistic mixing times]
    Let $M$ be an irreducible and aperiodic Markov chain on state space $\W$ with transition matrix $Q$ and stationary distribution $\pi$. The \textit{optimistic mixing time} is given by \begin{align}
        \t_*(\e) := \min_{s \in \CS} \min \set{t \in \Z_{\geq 0} \,|\, \TV(Q^t \mathbf{1}_s, \;\pi) \leq \e} ,
    \end{align}
    and the \textit{pessimistic mixing time} is given by \begin{align}
        \t^*(\e) := \max_{s \in \CS} \min \set{t \in \Z_{\geq 0} \,|\, \TV(Q^t \mathbf{1}_s, \;\pi) \leq \e} ,
    \end{align}
    where $\mathbf{1}_s$ is the indicator vector on $s$. Note that $\t^*(\e)$ coincides with the standard definition of the mixing time of an irreducible, aperiodic Markov chain and represents the minimum time at which, regardless of starting point, the distribution is $\e$-close to the stationary distribution. On the other hand, $\t_*$ represents the minimum time at which \textit{some} choice of starting state will have fully mixed.
\end{definition}

Suppose now that we have a reduction for $\qncp(k, n, w)$, given by a distribution $\CR_n$ over $\CC_n$. 
If we study how the Clifford acts on a Pauli string $\vec{P} \in \CP^{\otimes n}$ by conjugation, then we can interpret the action as transitions on a Markov chain. 
The state space is $\W = \CP_n \setminus \set{\vec{I}}$ (i.e. all non-identity Paulis) and the transition matrix is 
\begin{align}
    Q_{\vec{P}_1, \vec{P}_2} = \Pr_{\vec{U} \sim \CD}[\vec{U P U}^\dag = \vec{P}_2 \;|\; \vec{P} = \vec{P}_1] .
\end{align}
We prove a much stronger statement below than necessary, which may be of interest in its own right.
Specifically, we show that the $t$-qubit random Clifford circuit model scrambles Paulis at an approximately uniform rate.
That is, whether the initial Pauli has weight 1 or weight $\frac{3}4n$, the process brings both Paulis to be $\e$-close to a uniformly random non-identity Pauli at approximately the same rate.
This result implies that no restriction on the code (via a restriction on the structure of the stabilizer tableau Paulis) and no restriction on the error can prevent the code from being maximally scrambled before the error is also maximally scrambled.
In fact, there is not even a way to prevent the error from being over-scrambled before even the first Pauli in the tableau becomes maximally scrambled.
As such, no random self-reduction can exist for any subset of codes, and at any regime of parameters using this local model.

\begin{lemma}[Small scrambling gap implies no tableau-randomizing reduction] \label{lemma:scrambling_gap}
    Let $\CR_n$ be a distribution over $\CC_n$ which defines a Markov chain $M$ on non-identity Paulis by repeatedly conjugating an initial non-identity Pauli by independent samples $\vec{R} \sim \CR_n$. 
    Assume this chain to be irreducible and aperiodic, such that the stationary distribution is a uniformly random non-identity Pauli. 
    Let $\eta(\e) > 0$ be the pessimistic error when scrambling optimistically, which we call the \emph{scrambling gap}.
    That is, $\eta(\e)$ is defined implicitly via \begin{align}
        \t^*(\eta(\e)) = \t_*(\e) .
    \end{align}
    By definition, $\eta(\e) \geq \e$.
    If $\eta(\e) = O(\poly(n) \cdot \e^b)$ for some constant $b > 0$, then for any $k, w$ and subset $\CV \subseteq \Stab(n, k)$, there does not exist a positive integer $d$ for which the distribution induced by applying $d$ samples from $\CR_n$ yields a $(\CV, \e, w, p)$ random self-reduction operator as defined in Definition~\ref{def:random_self_reduction}, such that $\e = \negl(n)$ and $p = \frac34 - \frac{1}{\poly(n)}$.
\end{lemma}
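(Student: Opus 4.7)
The plan is proof by contradiction. Suppose there exists $d \in \mathbb{N}$ such that the $d$-fold convolution of $\CR_n$ yields a $(\CV, \e, w, p)$ tableau-randomizing reduction with $\e = \negl(n)$ and $p = \tfrac{3}{4} - \tfrac{1}{\poly(n)}$. Fix any code $\vec C \in \CV$ with stabilizer tableau $(\vec S_1, \dots, \vec S_{n-k})$ and any Pauli error $\vec E$ with $1 \leq \wt(\vec E) \leq w$. For $\vec R = \vec R_d \cdots \vec R_1$ with $\vec R_i \sim \CR_n$ i.i.d., and any starting Pauli $\vec P_0$, the distribution of $\vec R \vec P_0 \vec R^\dag$ is precisely $Q^d \mathbf{1}_{\vec P_0}$ by construction of the chain $M$.

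First I would extract a lower bound on $d$ from the tableau requirement. Combined with monotonicity of TV under marginalization (Eqn.~\ref{eq:TV_monotonicity}), the tableau-randomization condition implies the marginal of $\vec R \vec S_1 \vec R^\dag$ is within $\e$ of the marginal of a uniformly random stabilizer tableau on its first Pauli, which by Clifford symmetry equals $\pi := \Unif(\CP_n \setminus \{\vec I\})$. Thus $\TV(Q^d \mathbf{1}_{\vec S_1}, \pi) \leq \e$, and since any specific starting state mixes to $\e$ in at least $\t_*(\e)$ steps by definition of $\t_*$, we conclude $d \geq \t_*(\e)$.

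Next I would exploit the scrambling gap to over-scramble the error. Since TV to the stationary distribution is non-increasing under $Q$ (data-processing inequality for Markov kernels), and since $d \geq \t_*(\e) = \t^*(\eta(\e))$ by definition of $\eta$, we have $\TV(Q^d \mathbf{1}_{\vec E}, \pi) \leq \eta(\e) = O(\poly(n) \cdot \e^b) = \negl(n)$ because $b > 0$ is constant. The reduction also requires the marginal of $\vec R \vec E \vec R^\dag$ to be within $\e$ of $\CD_p^{\otimes n}$, so the triangle inequality yields
\begin{align*}
    \TV(\CD_p^{\otimes n},\; \pi) \;\leq\; \e + \eta(\e) \;=\; \negl(n).
\end{align*}

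To close the contradiction I would show $\TV(\CD_p^{\otimes n}, \pi) = \Omega(1/\poly(n))$. Since $\pi$ differs from $\Unif(\CP_n) = \CD_{3/4}^{\otimes n}$ only in whether it places mass $1/(4^n-1)$ or $0$ on $\vec I$, it suffices to lower-bound $\TV(\CD_p^{\otimes n}, \CD_{3/4}^{\otimes n})$. A direct single-qubit Taylor expansion around $p = 3/4$ gives squared Hellinger $H^2(\CD_p, \CD_{3/4}) = \Theta(\d^2)$ for $\d := \tfrac{3}{4} - p = 1/\poly(n)$; tensorization $H^2(P^{\otimes n}, Q^{\otimes n}) = 1 - (1 - H^2(P,Q))^n$ combined with the standard inequality $\TV \geq H^2/2$ then yields
\begin{align*}
    \TV(\CD_p^{\otimes n},\; \CD_{3/4}^{\otimes n}) \;\geq\; \tfrac{1}{2}\bigl[1 - (1 - \Theta(\d^2))^n\bigr] \;=\; \Omega\!\left(\min(1,\;n\d^2)\right) \;=\; \tfrac{1}{\poly(n)},
\end{align*}
contradicting the previous display. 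The main technical subtlety lies in this last step: because $p$ is only required to be $\tfrac{3}{4} - 1/\poly(n)$ for some polynomial, $\d$ may shrink arbitrarily fast, and one must verify that the separation from $\pi$ remains inverse-polynomial uniformly in the rate of decay of $\d$---the Hellinger tensorization provides exactly the clean quantitative bound needed so that no choice of polynomially-small $\d$ allows the target distribution to collapse into $\negl(n)$ of uniform.
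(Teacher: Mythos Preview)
Your proof is correct and follows essentially the same route as the paper: extract $d \ge \t_*(\e)$ from the first stabilizer's marginal, translate via the scrambling gap to $\TV(\vec R\vec E\vec R^\dag,\pi)\le\eta(\e)=\negl(n)$, and derive a contradiction with the required closeness to $\CD_p^{\otimes n}$. The only notable difference is that you justify $\TV(\CD_p^{\otimes n},\CD_{3/4}^{\otimes n})\ge 1/\poly(n)$ via an explicit Hellinger tensorization, whereas the paper refers back to an earlier efficient-distinguisher argument; your version is more self-contained and makes the polynomial dependence on $\delta$ transparent.
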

\begin{proof}
    Let $\CR_n^d$ be the distribution of $\vec{R}_d \cdots \vec{R}_1$, where $\vec{R}_i$ are i.i.d. samples from $\CR_n$.
    Suppose for contradiction that there exists a subset $\CV \subseteq \CC_n$, a weight $w$, and an integer $d > 0$ for which the unitary operation defined by applying $d$ samples from $\CR_n$ in sequence yields a strong and non-trivial self-reduction. 
    Let $\vec{C} \in \CV$ be a valid worst-case code and $\vec{E} \in \CP_n$ such that $1 \leq \wt(\vec{E}) \leq w$ be a valid worst-case error. 
    By assumption, \begin{align}
        \mathop{\TV}\limits_{\vec{R} \sim \CR_n^d}((\vec{RC}, \;\vec{RER}^\dag), \;\Unif(\CC_n) \otimes \CD_p^{\otimes n}) \leq \e = \negl(n) ,
    \end{align}
    where $p = \frac34 - \frac{1}{\poly(n)}$. Monotonicity of TV distance implies that \begin{align}
    \label{eq:Dp_vs_R_scrgap}
        \mathop{\TV}\limits_{\vec{R} \sim \CR_n^d}(\vec{RER}^\dag, \CD_{p}^{\otimes n}) \leq \e .
    \end{align}
    Let $\CT = \set{\vec{P}_1, \;\dots, \;\vec{P}_{n-k}}$ be a stabilizer tableau corresponding to $\vec{C}$. 
    Then the stabilizer tableau corresponding to $\vec{RC}$ is given by $\widetilde{\CT} = \set{\vec{RP}_1\vec{R}^\dag, \;\dots, \;\vec{RP}_{n-k} \vec{R}^\dag}$.
    The stabilizer tableau is computed deterministically from the corresponding Clifford, and the TV distance monotonically decreases with applying functions.
    This is because applying functions can only restrict the set of possible events, and the variational form of the TV distance is a maximization over all events.
    Consequently,
    \begin{align}
    \mathop{\TV}\limits_{\substack{\vec{R} \sim \CR_n^d, \widetilde{\vec{C}} \sim \CC_n}}(\vec{RP}_1\vec{R}^\dag, \; \widetilde{\vec{C}} \vec{P}_1 \widetilde{\vec{C}}^\dag) \leq \e.
    \end{align}
    Equivalently, with $\m_n$ the distribution of a uniformly random non-identity Pauli, \begin{align}
        \mathop{\TV}\limits_{\vec{R} \sim \CR_n^d}(\vec{RP}_1\vec{R}^\dag, \; \m_n) \leq \e .
    \end{align}
    By definition of the optimistic mixing time and the above, $d \geq \t_*(\e)$. 
    Then by assumption, $\t_*(\e) = \t^*(\eta(\e))$, so $d \geq \t^*(\eta(\e))$, where $\eta(\e) = \poly(n, \e) \leq n^a \e^b$ for some constants $a, b > 0$. 
    Using the definition of the pessimistic mixing time,
    \begin{align}
        \mathop{\TV}\limits_{\vec{R} \sim \CR_n^d}(\vec{RER}^\dag, \;\m_n) \leq \eta(\e) \leq n^a \e^b .
    \end{align}
    By the triangle inequality,
    \begin{align}
        \mathop{\TV}\limits_{\vec{R} \sim \CR_n^d}(\vec{RER}^\dag, \;\CD_{3/4}^{\otimes n}) \leq \mathop{\TV}\limits_{\vec{R} \sim \CR_n^d}(\vec{RER}^\dag, \;\m_n) + \TV(\m_n, \;\CD_{3/4}^{\otimes n}) \leq n^a \e^b + \exp{-\W(n)},
    \end{align}
    where the second equality holds because the total variation distance between $\mu_n$, a uniform distribution over Paulis, and $\mathcal{D}_{3/4}^{\otimes n}$, is $\exp{-\Omega(n)}$.
    At the same time, as previously discussed, \begin{align}
        \TV(\CD_{3/4}^{\otimes n}, \;\CD_{p}^{\otimes n}) \geq \frac{1}{\poly(n)} .
    \end{align}
    By the triangle inequality and Eqn.~(\ref{eq:Dp_vs_R_scrgap}), \begin{align}
        \frac{1}{\poly(n)} & \leq \TV(\CD_{3/4}^{\otimes n}, \;\CD_{p}^{\otimes n}) \\
        & \leq \mathop{\TV}\limits_{\vec{R} \sim \CR_n^d}(\vec{RER}^\dag, \;\CD_{p}^{\otimes n}) + \mathop{\TV}\limits_{\vec{R} \sim \CR_n^d}(\vec{RER}^\dag, \;\CD_{3/4}^{\otimes n}) \\
        & \leq \e + n^a \e^b + \exp{-\W(n)} \\
        & = \negl(n) ,
    \end{align}
    which is a contradiction.
\end{proof}

We now return to our particular model of study: random $t$-qubit Clifford operators applied in sequence, for a fixed constant $t \geq 2$.
It is convenient to re-express this Markov chain into a simpler form by way of the symplectic representation.
A $t$-qubit random Clifford acts on a Pauli by first choosing a subset of $t$ qubits randomly, then doing nothing if the Pauli on those $t$ qubits are all $I$.
Otherwise, it transforms the Pauli on those $t$ qubits into a uniformly random non-identity Pauli.
The corresponding transformation in symplectic form is described below.

\begin{definition}[$t$-Clifford chain] \label{def:Clifford_random_chain}
    Let $M$ be a Markov chain whose state space is $\W := \set{0,1,2,3}^n \setminus \set{0^n}$ and whose transitions are given by the following rule. Let the state be $s \in \W$. Pick a random $t$-tuple $(i_1, \;\dots \;i_t)$ where $i_j \in [n]$ and all $i_j$ are distinct. If $s_{i_j} = 0$ for all $j \in [t]$, do nothing. Otherwise, pick a random $t$-tuple $(a_1, \;\dots, \;a_t) \in \set{0,1,2,3}^t \setminus \set{0^t}$ and replace $(s_{i_1}, \;\dots, \;s_{i_t})$ with $(a_1, \;\dots, \;a_t)$.
\end{definition}
Note that this chain is irreducible and aperiodic~\cite{harrow2009random}. 
We now upper bound the pessimistic scrambling time and lower bound the optimistic scrambling time, and thereby show that the scrambling gap is sufficiently small to apply the nonexistence lemma (Lemma~\ref{lemma:scrambling_gap}).

\begin{lemma}[Random $t$-qubit Cliffords, upper bound] \label{lemma:2-qubitClifford_upper_bound}
    The $t$-Clifford chain from Definition~\ref{def:Clifford_random_chain} with  $\t^*(\e) = O(n \log \frac{n}{\e})$.
\end{lemma}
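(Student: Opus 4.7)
The plan is to bound $\tau^*(\e)$ via a coupling argument. Run two coupled copies $X_s, Y_s$ of the $t$-Clifford chain started from arbitrary non-identity Pauli strings $X_0, Y_0 \in \W$, both using the same uniformly random $t$-subset $S \subseteq [n]$ at every step $s$. The coupling of the replacement tuples is chosen to maximize agreement: if both $X_s|_S$ and $Y_s|_S$ are non-zero, sample a single uniform non-zero $t$-tuple $c$ and set $X_{s+1}|_S = Y_{s+1}|_S = c$; if both equal $0^t$, neither chain updates; if exactly one equals $0^t$, update only the other chain, to an independent uniform non-zero $t$-tuple. Each marginal matches the $t$-Clifford chain by construction, so this is a valid coupling.

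Next, I will track the Hamming distance $D_s = |\{i \in [n] : X_s(i) \neq Y_s(i)\}|$ together with the disagreement set $A_s = \{i : X_s(i) \neq Y_s(i)\}$. The goal is to show that $\mathbb{E}[D_{s+1} \,|\, X_s, Y_s] \leq (1 - c_t/n)\, D_s$ for some constant $c_t > 0$ depending only on $t$. The key driver of the contraction is the ``good case'' in which $S$ intersects $A_s$ and both $X_s|_S, Y_s|_S$ are non-zero, so that the joint update collapses all disagreement on $S$. Conditioning on $|S \cap A_s| = k \geq 1$, the change on $S$ in the good case is $-k$, and $\mathbb{E}[|S \cap A_s|] = tD_s/n$, yielding a contribution of order $-\Omega(D_s/n)$. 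Iterating for $\tau$ steps gives $\mathbb{E}[D_\tau] \leq n(1 - c_t/n)^\tau \leq n\, e^{-c_t \tau / n}$, so that $\tau = (1/c_t)\, n \log(n/\e) = O(n \log(n/\e))$ forces $\mathbb{E}[D_\tau] \leq \e$. Markov's inequality then gives $\Pr[X_\tau \neq Y_\tau] \leq \e$, and the standard coupling inequality $\TV(Q^\tau \mathbf{1}_{X_0}, \pi) \leq \Pr[X_\tau \neq Y_\tau]$ delivers the uniform bound $\tau^*(\e) = O(n \log(n/\e))$.

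The main obstacle will be the ``bad case'' in which exactly one of $X_s|_S, Y_s|_S$ equals $0^t$. There the non-zero chain is replaced by a uniform non-zero $t$-tuple of expected Hamming weight $\Theta(t)$, so the disagreement on $S$ can grow by a constant amount per step. This bad case can dominate the good case when both chains have many zero coordinates (the ``sparse'' regime) and undermine the raw Hamming-distance contraction in adversarial initial states. The cleanest fix is to augment the Hamming potential with a refined weight that puts a larger penalty on ``mixed zero/non-zero'' disagreements than on ``both non-zero'' disagreements; because the locality $t$ is a constant, the bad event introduces at most a constant amount of new weight per step, which can be absorbed into a slightly smaller contraction rate $c_t$. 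An alternative is a two-phase argument showing that within $O(n)$ steps the chain enters a dense regime (with $\Omega(n)$ non-zero coordinates in each chain) with overwhelming probability, after which the above dense-regime coupling analysis applies directly.
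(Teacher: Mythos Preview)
The paper does not prove this lemma directly; it cites Theorem~5.1 and Corollary~5.2 of Harrow--Low (together with a correction by Diniz--Jonathan), whose approach is to first lump the $t$-Clifford chain to the Pauli-weight chain on $\{1,\dots,n\}$, bound its mixing, and then lift back. Your coupling route is genuinely different in flavor.

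The uniform contraction $\mathbb{E}[D_{s+1}\mid X_s,Y_s]\le (1-c_t/n)D_s$ is false as stated. For $t=2$, let $X_0,Y_0$ each have weight $1$ with disjoint supports, so $D_0=2$. With probability $\Theta(1/n)$ the set $S$ hits exactly one of the two support points (your bad case), and the updated chain acquires expected weight $8/5$ on $S$ while the other stays $0^2$; the good case $S\supseteq\{1,2\}$ has probability only $\Theta(1/n^2)$. A direct computation gives $\mathbb{E}[D_1]-D_0=12/(5n)+O(1/n^2)>0$. Your Fix~1 does not repair this: every new disagreement created in the bad case is of the mixed type (the stuck chain remains $0^t$ on $S$), so raising the penalty on mixed disagreements scales the bad-case increase and the mixed good-case decrease by the same factor, leaving the sign of the sparse-regime drift unchanged. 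The student intuition that ``at most a constant new weight per step can be absorbed'' fails precisely because the potential itself is $O(1)$ in this regime.

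Fix~2 is the viable path and is close in spirit to Harrow--Low, but two of its claims need adjustment. First, the time for a single chain to reach weight $\Omega(n)$ from weight $1$ is $\Theta(n\log n)$, not $O(n)$: at weight $w$ the update probability is $\Theta(w/n)$ with $O(1)$ drift per update, so the growth is geometric. This is still inside the target $O(n\log(n/\epsilon))$. Second, you must actually verify the dense-regime contraction rather than asserting it: once both weights sit near the equilibrium $3n/4$, conditioning on $S\cap A_s\ne\emptyset$ and on hitting a mixed disagreement gives bad-case probability $\approx(1/4)^{t-1}$, and the net expected change is at most $-1+3t/4^t<0$ for all $t\ge 2$. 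You also need a concentration argument ensuring both chains remain near equilibrium weight with probability at least $1-\epsilon/2$ throughout the remaining $O(n\log(n/\epsilon))$ steps.
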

\begin{proof}
    This is proven by a combination of Theorem~5.1 and Corollary 5.2 in \cite{harrow2009random}. 
    While these theorems consider the case when $t = 2$, it holds equivalently for $t \geq 2$ a constant (it will not hold for $t = 1$ since no entanglement can be generated).
    Note that the proof of Corollary 5.2 is not correct in \cite{harrow2009random}, but a corrected proof is given by \cite{diniz2011comment}.
\end{proof}

In our lower bound, it does not matter if $t$ is not constant, so for completeness we include it in the asymptotic scaling statement.
While such a generalization is not necessary for this work, we believe that minor modifications to the upper bound proofs in \cite{harrow2009random} will give a matching scaling behavior of $O(\frac{n}{t} \log \frac{n}{\e})$.
We prove the lower bound by a mapping into the coupon collector problem from probability theory, which we first recall.

\begin{problem}[Coupon collector] \label{problem:coupon_collector}
    There are $n$ distinct coupons in a bag, and we repeatedly draw $t$ distinct coupons and then put them back. 
    How many draws do we need for the probability that you collected coupon 1 to be at least $1 - \e$?
\end{problem}

Let us give a lower bound for Problem~\ref{problem:coupon_collector}. Define the event \begin{align}
    C_i^m = \set{\text{coupon } i \text{ not drawn after } m \text{ draws}} .
\end{align}
We are interested in finding the minimum $m$ such that $\Pr[C_1^m] \leq \e$. 
\begin{lemma} \label{lemma:coupon_collector_lower_bound}
Let $n > t$.
    If $m = o(\frac{n}{t} \log(\frac1\e))$ then $\Pr[C_1^m] > \e$. Thus at least $m = \W(\frac{n}{t}\log (\frac1\e))$ are required for $\Pr[C_1^m] \leq \e$.
\end{lemma}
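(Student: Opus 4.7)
The plan is to directly compute the probability that coupon 1 has not been drawn after $m$ rounds and show it exceeds $\varepsilon$ whenever $m$ is too small. In each round, the number of ways to choose $t$ coupons from $n$ is $\binom{n}{t}$, and the number of those $t$-subsets that avoid coupon 1 is $\binom{n-1}{t}$. Hence the per-round probability of missing coupon 1 is
\begin{equation}
\Pr[\text{coupon 1 not drawn in a single round}] \;=\; \frac{\binom{n-1}{t}}{\binom{n}{t}} \;=\; \frac{n-t}{n} \;=\; 1 - \frac{t}{n}.
\end{equation}
Since the rounds are independent, $\Pr[C_1^m] = (1 - t/n)^m$.

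Next I would convert this into the desired asymptotic statement by taking logarithms. Since $t < n$, I can apply the elementary inequality $\ln(1 - x) \geq -x - x^2$, valid for $x \in [0, 1/2]$, with $x = t/n$ (handling the edge case $t/n > 1/2$ separately by noting that then $n/t = O(1)$ and the bound is trivial for constant $\varepsilon$). This yields
\begin{equation}
\ln \Pr[C_1^m] \;=\; m \ln\!\left(1 - \tfrac{t}{n}\right) \;\geq\; -m\left(\tfrac{t}{n} + \tfrac{t^2}{n^2}\right) \;=\; -\tfrac{mt}{n}\left(1 + O(\tfrac{t}{n})\right).
\end{equation}
Now, under the hypothesis $m = o(\tfrac{n}{t}\log(1/\varepsilon))$, we have $\tfrac{mt}{n} = o(\log(1/\varepsilon))$, so
\begin{equation}
\Pr[C_1^m] \;\geq\; \exp\!\left(-\tfrac{mt}{n}(1 + o(1))\right) \;=\; \varepsilon^{\,o(1)},
\end{equation}
which strictly exceeds $\varepsilon$ for all sufficiently large $n$. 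The contrapositive gives the claimed lower bound $m = \Omega(\tfrac{n}{t}\log(1/\varepsilon))$.

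The argument is essentially a two-line computation, so there is no real obstacle — the only minor subtlety is ensuring the $\ln(1-x)$ expansion is applied in the regime where it is valid (i.e.\ $t/n$ bounded away from $1$), but this is unproblematic since if $t = \Theta(n)$ then a single round suffices to collect coupon 1 with constant probability and the claimed scaling $\Omega(\tfrac{n}{t}\log(1/\varepsilon)) = \Omega(\log(1/\varepsilon))$ matches directly via the same geometric tail computation. Thus the entire proof reduces to the identity $\Pr[C_1^m] = (1-t/n)^m$ followed by a standard exponential estimate.
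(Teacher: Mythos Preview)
Your proposal is correct and follows essentially the same approach as the paper: both compute $\Pr[C_1^m] = (1 - t/n)^m$ exactly and then apply an elementary exponential lower bound on $1-x$ to conclude. The only cosmetic difference is the specific inequality chosen---the paper uses $1 - x \geq e^{-x/(1-x)}$ (derived from $1+y \leq e^y$), whereas you use $\ln(1-x) \geq -x - x^2$; both yield the same $\Omega(\tfrac{n}{t}\log(1/\varepsilon))$ bound.
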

\begin{proof}
First, note that for all $n > 1$, $1 - \frac{1}{n} \geq \exp{\frac{1}{1-n}}$. 
This follows from a simple transformation of the standard bound $1 + x \leq e^x$, $\forall x \in \R$. 
Equivalently, $e^{-x} \leq \frac{1}{1 + x}$. 
Let $x = \frac{1}{n-1}$ so that $x > 0$. 
Then \begin{align} \label{eq:exponential_bound_coupon}
    \exp{\frac{1}{1-n}} & = \exp{-\frac{1}{n-1}} \leq \frac{1}{1 + \frac{1}{n-1}} = \frac{n-1}{n - 1 + 1} = \frac{n-1}{n} = 1 - \frac{1}{n} .
\end{align}
After $m$ steps, the probability of $C_1^m$ is \begin{align}
    \Pr[C_1^m] = \br{1 - \frac{t}{n}}^m \geq \exp{-\frac{m}{\frac nt-1}}
\end{align}
by Eqn.~(\ref{eq:exponential_bound_coupon}).
Solving, we obtain \begin{align}
    m < \left( \frac{n}{t} - 1 \right) \log \left(\frac1\e\right) = O\left(\frac{n}{t} \log \frac{1}{\e}\right).
\end{align}
Therefore, if $m = o(\frac{n}{t} \log(\frac1\e))$ then then $\Pr[C_1^m] > \e$.
\end{proof}

\begin{lemma}[Random $t$-qubit Cliffords, lower bound] \label{lemma:2-qubitClifford_lower_bound}
    The $t$-Clifford chain from Definition~\ref{def:Clifford_random_chain} has $\t_*(\e) = \W(\frac{n}{t} \log \frac{1}{\e})$, for all $t < n)$.
\end{lemma}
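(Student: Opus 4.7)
The plan is to reduce this lower bound to the coupon collector bound of Lemma~\ref{lemma:coupon_collector_lower_bound}. The key observation is that each step of the $t$-Clifford chain selects only $t$ out of $n$ coordinates; any coordinate which is never selected during the first $m$ steps retains its initial symplectic value deterministically. Consequently, for the chain to mix starting from any state, nearly all coordinates must be selected at some point, and by coupon collector this forces $m = \Omega(\frac{n}{t}\log(1/\e))$.

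I would fix an arbitrary starting state $s^{(0)} \in \W$. By the chain's symmetry under coordinate permutations and relabelings of the three non-identity single-qubit Pauli labels, I may assume without loss of generality that $s^{(0)}_1 \neq 0$. Let $D_m^{(s^{(0)})}$ denote the distribution of the chain after $m$ steps, and consider the distinguishing event $E = \{s'_1 = s^{(0)}_1\}$. Partitioning over whether coordinate $1$ is ever chosen as part of one of the $m$ random $t$-tuples gives
\begin{align}
\Pr_{D_m^{(s^{(0)})}}[E] = (1-t/n)^m + (1 - (1-t/n)^m)\cdot \rho,
\end{align}
where $\rho := \Pr[s'_1 = s^{(0)}_1 \,|\, \text{coordinate } 1 \text{ selected at least once}]$. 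Because each update resamples the chosen $t$-tuple uniformly from $\{0,1,2,3\}^t \setminus \{0^t\}$, the single-coordinate marginal after any resample equals $4^{t-1}/(4^t-1) = 1/4 + O(4^{-t})$; iterating over any number of touches preserves this estimate, giving $\rho = 1/4 + O(4^{-t})$. Meanwhile, $\pi[E] = 4^{n-1}/(4^n - 1) = 1/4 + O(4^{-n})$.

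Combining the two estimates yields $\Pr_{D_m^{(s^{(0)})}}[E] - \pi[E] \geq \tfrac{3}{4}(1-t/n)^m - O(4^{-t}) - O(4^{-n})$, and hence by the variational characterization of total variation, $\TV(D_m^{(s^{(0)})}, \pi)$ is lower bounded by this quantity. Invoking Lemma~\ref{lemma:coupon_collector_lower_bound} with error parameter $\e$: for $m = o(\frac{n}{t}\log(1/\e))$ we have $(1-t/n)^m > \e$, and so $\TV(D_m^{(s^{(0)})}, \pi) > \Omega(\e)$ whenever $\e$ is not vanishingly small compared to $4^{-t}$. Since $s^{(0)}$ was arbitrary, this uniform lower bound yields $\tau_*(\e) = \Omega(\frac{n}{t}\log(1/\e))$, as claimed.

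The main obstacle I expect is tightly controlling the conditional marginal $\rho$: the joint resampling at each touch correlates the $t$ chosen coordinates, and over multiple touches the marginal on coordinate $1$ depends on the chain's full prior history (including which partners were selected and their values at those times). The crucial simplification is that the single-coordinate marginal of the uniform distribution over $\{0,1,2,3\}^t \setminus \{0^t\}$ is within $O(4^{-t})$ of $1/4$ \emph{independent} of history; induction on the number of touches propagates this $O(4^{-t})$ estimate, since each additional touch only replaces the marginal by another draw from the same near-uniform distribution. For any fixed $t \geq 2$ and $\e$ asymptotically larger than $4^{-t}$ (which covers all parameter regimes needed downstream, in particular for the scrambling-gap computation of Lemma~\ref{lemma:scrambling_gap}), the dominant term $\tfrac{3}{4}(1-t/n)^m$ carries the desired $\log(1/\e)$ dependence.
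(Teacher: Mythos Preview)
Your overall strategy---coupon-collect on coordinate 1 being left untouched and use its retained initial value as a distinguisher---matches the paper's. The gap is in your control of $\rho$.

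The induction claim that $\rho = 1/4 + O(4^{-t})$ overlooks the chain's ``do nothing'' rule: when coordinate 1 is selected but the full $t$-tuple is already all-zero, there is no resample and coordinate 1 stays at $0$. For a low-weight starting state (say weight $1$, supported only at coordinate 1), after the first touch the total weight is at most $t$, so subsequent touches of coordinate 1 very likely see all-zero partners. Conditioned on coordinate 1 having become $0$, it then gets absorbed there, and $\Pr[V_m = 0 \mid \text{touched}]$ drifts toward $1$ as the number of touches grows---hence $\rho = \tfrac{1}{3}(1 - \Pr[V_m = 0 \mid \text{touched}])$ drifts toward $0$, not $1/4$. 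This drift is governed by the (random) number of touches, not by $4^{-t}$, so your caveat ``$\e$ asymptotically larger than $4^{-t}$'' does not save the argument; and in any case the downstream Lemma~\ref{lemma:scrambling_gap} requires $\e = \negl(n) \ll 4^{-t}$ for constant $t$. Once $\rho$ can be near $0$, the quantity $\Pr[E] - \pi[E] \approx (1-t/n)^m - 1/4$ passes through $0$ as $m$ increases, so $E$ fails as a distinguisher at some $m$ in the relevant range.

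The paper repairs this by additionally conditioning on $B_2 = \{V_m \neq 0\}$. You already invoke the symmetry of the chain under relabeling of $\{X,Y,Z\}$; that symmetry gives $\Pr[V_m = s^{(0)}_1 \mid \text{touched},\, V_m \neq 0] = 1/3$ \emph{exactly}, independent of history and of the starting state, because it completely sidesteps the history-dependent $\Pr[V_m = 0]$. Under the stationary distribution this conditional is also $1/3$ up to $e^{-\Omega(n)}$, while the ``untouched'' contribution supplies the $\Omega(\e)$ gap in the conditional probability. A short case split (either $\Pr[B_2]$ itself is far from $\pi[B_2]$, or the conditional gap lifts to an unconditional one) then converts this into the required TV lower bound.
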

\begin{proof}
    Let $d = o(\frac{n}{t} \log \frac{1}{\e})$ be the number of steps for which we run the Clifford chain. 
    Let $\vec{V}$ be the unitary produced by this process. 
    We interpret the Clifford chain as a coupon collector problem. 
    In particular, without loss of generality assume that the first qubit supports the starting Pauli $\vec{P}_0 \neq I$---we may always relabel the qubits until this is the case. 
    The probability that the first qubit is not in the support of $\vec{R}$ (i.e. $\vec{R} = \vec{I}_1 \otimes \vec{R}_{\text{rest}}$) is at least the probability that after $d$ steps the first qubit was never chosen in any of the $t$-tuples. 
    This is precisely the coupon collector problem as stated in Problem~\ref{problem:coupon_collector}. Let $A_1$ be the event that $\vec{R} = \vec{I}_1 \otimes \vec{R}_{\text{rest}}$. 
    By Lemma~\ref{lemma:coupon_collector_lower_bound}, if $d = o(\frac{n}{t}) \log \frac{1}{\e}$ then $\Pr[A_1] > \e$. 
    Let $\vec{P}_1 := \vec{R P}_0 \vec{R}^\dag$, and consider the event $B_1 := \set{\vec{P}_1 \vec{P}_0 \rvert_1 = \vec{I}}$, where the notation $\vec{P} \rvert_1$ indicates the restriction of the Pauli $\vec{P}$ to the first qubit. 
    Let $B_2$ denote the event where $\vec{P}_1\rvert_1 \neq \vec{I}$.
    Now let $\CR_n^d$ be the distribution of $\vec{R}$.
    Note that $\Pr_{\vec{R} \sim \CR_n^d}[A_1 | B_2] \geq \Pr_{\vec{V} \sim \CR_n^d}[A_1 , B_2] = \Pr_{\vec{R} \sim \CR_n^d}[A_1]$.
    The last equality holds because if $A_1$ occurs then $B_2$ does as well, since if the first qubit is not touched then the Pauli remains non-identity. Now,
    \begin{align}
        \Pr_{\vec{R} \sim \CR_n^d}[B_1 | B_2] & = \Pr_{\vec{R} \sim \CR_n^d}[B_1 \;|\; A_1 , B_2] \Pr_{\vec{R} \sim \CR_n^d}[A_1 | B_2] \\
        & \phantom{hiii} + \Pr_{\vec{R} \sim \CR_n^d}[B_1 \;|\; \neg A_1 , B_2] \left(1 - \Pr_{\vec{R} \sim \CR_n^d}[A_1 | B_2]\right) \\
        &\geq (1) \Pr_{\vec{R} \sim \CR_n^d}[A_1] + \left(\frac{1}{3}\right) \left(1 - \Pr_{\vec{R} \sim \CR_n^d}[A_1]\right) > \frac{1}{3} + \frac{2\epsilon}{3}
    \end{align}
    since if $A_1$ occurs, $\vec{R}$ does not affect the first qubit so $P_0 \rvert_1 = P_1 \rvert_1$ and thus $P_1 P_0 \rvert_1 = I$. 
    But if $A_1$ does not occur, then $\vec{R}$ touches the first qubit. 
    The first time $\vec{R}$ touches the first qubit, it does so by scrambling it and $t-1$ other qubits into a uniformly random non-identity Pauli on $t$ terms. 
    In doing so, it removes any bias of the first qubit to any non-identity Pauli, and every subsequent Clifford leaves the marginal distribution unbiased between non-identity Paulis. 
    The probability of $B_1$ conditioned on both $\neg A_1$ and $B_2$ is therefore $\frac{1}{3}$, since it is equally likely to be any of the three non-identity Paulis. 
    
    On the other hand, under the stationary distribution $\pi$, a uniformly random non-identity Pauli, the first qubit is exponentially close to uniformly random. 
    Since the excluded outcome is the identity, it is unbiased between non-identity Paulis on the first qubit, so the probability of $B_1$ is $\frac{1}{3}$ conditioned on $B_2$.
    We therefore find that
    \begin{align}
        \left|\frac{\Pr_{\vec{R} \sim \CR_n^d}[B_1 , B_2]}{\Pr_{\vec{R} \sim \CR_n^d}[B_2]} - \frac{\Pr_{\pi}[B_1 , B_2]}{\Pr_{\pi}[B_2]} \right| &> \br{\frac{1}{3} + \frac{2\e}{3} \pm \exp{-\W(n)}} - \frac{1}{3} = \left(\frac{2}{3} \e \pm \exp{-\W(n)} \right) .
    \end{align}
    Now, note that $|\Pr_{\pi}[B_2] - \frac{3}{4}| \leq  \exp{-\W(n)}$, being the probability that the first qubit has a non-identity Pauli under the distribution of a uniformly random non-identity Pauli. 
    We may therefore assume that $\Pr_{\vec{R} \sim \CR_n^d}[B_2] \geq \frac{1}{2} - \exp{-\W(n)}$, or else the total variation distance of the two distributions would be at least $\frac{1}{4}$. 
    We therefore have a crude lower bound on each of the denominators.
    Under this assumption,
    \begin{align}
    \bigg|\Pr_{\vec{R} \sim \CR_n^d}[B_1 , B_2]\Pr_{\pi}[B_2] & - \Pr_{\pi}[B_1 , B_2]\Pr_{\vec{R} \sim \CR_n^d}[B_2] \bigg| \geq \Pr_{\pi}[ B_2]\Pr_{\vec{R} \sim \CR_n^d}[ B_2]\cdot \left( \frac{2}{3} \e \pm \exp{-\W(n)}\right) \\
    & \geq \br{\frac{3}{4} - \exp{-\W(n)}} \br{\frac{1}{2}- \exp{-\W(n)}} \left( \frac{2}{3} \e  \pm \exp{-\W(n)} \right) \\
    & = \W(\e) .
    \end{align}
    Now, assume for the sake of contradiction that both $\left|\Pr_{\vec{R} \sim \CR_n^d}[B_1 , B_2] - \Pr_{\pi}[B_1 , B_2]\right|=o(\epsilon)$ and $\left|\Pr_{\vec{R} \sim \CR_n^d}[B_2] - \Pr_{\pi}[B_2]\right|  = o(\epsilon)$. 
    Then, the difference above could be rewritten as
    \begin{align}
    \bigg|\Pr_{\vec{R} \sim \CR_n^d}[B_1 , B_2]\Pr_{\pi}[B_2] & - \Pr_{\pi}[B_1 , B_2]\Pr_{\vec{V} \sim \CR_n^d}[B_2] \bigg| \\
    & = \left|(\Pr_{\pi}[B_1 , B_2] + o(\epsilon))\Pr_{\pi}[B_2] - \Pr_{\pi}[B_1 , B_2](\Pr_{\pi}[B_2]+o(\epsilon)) \right| \\
    &= \left|o(\epsilon)(\Pr_{\pi}[B_2] + \Pr_{\pi}[B_1 , B_2]) + o(\epsilon)^2\right| \\
    &= o(\epsilon).
    \end{align}
    This is a contradiction, and therefore
    \begin{align}
    \TV_{\vec{R} \sim \CR_n^d}(\vec{R P}_0 \vec{R}^\dag , \;\pi) & \geq \max\left(\left|\Pr_{\vec{V} \sim \CR_n^d}[B_1 , B_2] - \Pr_{\pi}[B_1 , B_2]\right|, \;\left|\Pr_{\vec{V} \sim \CR_n^d}[ B_2] - \Pr_{\pi}[ B_2]\right|\right)\\
    & = \Omega(\epsilon) . 
    \end{align}
    Consequently, if $d = o(\frac{n}{t} \log \frac{1}{\e})$ then regardless of starting Pauli, the distribution is $\W(\e)$ away from the stationary distribution.
\end{proof}

%Let us consider what happens to our total variation distance if our Clifford never touches qubit 1.
%Let $P:=P_0|_1$. Note that $P_1|_1=P$, since $V$ does not touch qubit 1.
%We have that
%\begin{align*}
%TV[P_1,\pi]&=\frac12\sum_{M\in\CP}\left|\Pr[\pi=M]-\Pr[P_1=M]\right|\\
%&=\sum_{M\in\CP}\max(\Pr[\pi=M]-\Pr[P_1=M],0)\\
%&\geq\sum_{M\in\CP,\ M|_1\neq P}\max(\Pr[\pi=M]-\Pr[P_1=M],0)\\
%&=\sum_{M\in\CP,\ M|_1\neq P}\Pr[\pi=M]\\
%&=\Pr_{M\sim\pi}[M|_1\neq P]=\frac34.
%\end{align*}

\begin{theorem}[No local tableau-randomizing reduction for any parameter regime] \label{thm:no_w2a_even_if_restricted}
    Fix any logical qubit function $k(n)$ and weight function $w(n)$, and any subset of codes $\CV(n) \subseteq \CC_n$.
    Then there is no strong, non-trivial random self-reduction from $\qncp(k, n, w)$ (with computational basis logical states) to $\lsn(k, n, p)$.
    More precisely, there is no $(\CV(n), \;\e, \;w(n), \;p(n))$ random self-reduction distribution $\CR_n$ on $\llbracket n, k \rrbracket$ stabilizer codes, such that $p = \frac34 - \frac{1}{\poly(n)}$ and $\e = \negl(n)$, and $\CR_n$ is a sequence of random $t$-local Cliffords.
\end{theorem}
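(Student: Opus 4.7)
The plan is to invoke Lemma~\ref{lemma:scrambling_gap} applied to the Markov chain on non-identity Paulis induced by conjugation under a single random $t$-local Clifford. Since the stated reduction $\CR_n$ is by hypothesis a $d$-fold product of such base Cliffords, exhibiting a polynomially small scrambling gap for the base chain suffices. First I would identify this base chain as precisely the $t$-Clifford chain of Definition~\ref{def:Clifford_random_chain}: each step picks a uniformly random $t$-tuple of qubits and, if the current Pauli restricted to those qubits is non-identity, replaces it with a uniformly random non-identity $t$-qubit Pauli. For any constant $t \geq 2$ this chain is irreducible and aperiodic, and by the permutation symmetry of the transition rule on $\CP_n \setminus \{\vec I\}$, the uniform distribution over non-identity Paulis is a (hence the) stationary distribution.

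Next I would bound the scrambling gap $\eta(\e)$, defined implicitly by $\tau^*(\eta(\e)) = \tau_*(\e)$, by combining the two mixing estimates already proved: the upper bound $\tau^*(\eta) \leq C_1 n \log(n/\eta)$ from Lemma~\ref{lemma:2-qubitClifford_upper_bound} and the lower bound $\tau_*(\e) \geq C_2 (n/t)\log(1/\e)$ from Lemma~\ref{lemma:2-qubitClifford_lower_bound}. Substituting one into the other at the point $\eta = \eta(\e)$ yields
\begin{align*}
C_1 n \log\!\bigl(n/\eta(\e)\bigr) \;\geq\; \tau^*(\eta(\e)) \;=\; \tau_*(\e) \;\geq\; C_2 (n/t)\log(1/\e),
\end{align*}
so $\log(n/\eta(\e)) \geq (C_2/(C_1 t))\log(1/\e)$ and hence $\eta(\e) \leq n \cdot \e^{C_2/(C_1 t)}$. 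Since $t$ is a fixed constant, $b := C_2/(C_1 t)$ is a positive constant independent of $n$ and $\e$, so $\eta(\e) = O(\poly(n)\cdot \e^{b})$, which is exactly the hypothesis of Lemma~\ref{lemma:scrambling_gap}. That lemma then immediately rules out any positive integer $d$ producing a strong, non-trivial random self-reduction distribution $\CR_n^d = (\tilde{\CR}_n)^d$, for every choice of subset $\CV(n) \subseteq \Stab(n,k)$ and every weight bound $w(n) \geq 1$.

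The core of the argument is essentially encapsulated in the two mixing lemmas already in hand; the remaining work is the short algebraic manipulation above and the verification that the base chain fits the framework of Lemma~\ref{lemma:scrambling_gap}. The main obstacle, which has already been resolved upstream, is the matching of the $\Theta(n/t)$ lower bound on the optimistic mixing time with the $O(n)$ upper bound on the pessimistic mixing time: this tight quantitative matching (up to a constant ratio in the prefactor) is what produces a polynomial—rather than exponential—scrambling gap, and it is exactly this near-uniformity of mixing across starting Paulis that kills any hope of a locality-based reduction, however carefully one restricts the worst-case instance set $\CV(n)$ or the error weight $w(n)$.
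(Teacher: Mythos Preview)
The proposal is correct and follows essentially the same approach as the paper: combine the upper bound on $\tau^*$ from Lemma~\ref{lemma:2-qubitClifford_upper_bound} with the lower bound on $\tau_*$ from Lemma~\ref{lemma:2-qubitClifford_lower_bound} to show $\eta(\e) \leq n\,\e^{b}$ for a constant $b>0$, then apply Lemma~\ref{lemma:scrambling_gap}. Your inclusion of the $1/t$ factor in the lower bound and your explicit verification that the base chain matches Definition~\ref{def:Clifford_random_chain} (irreducible, aperiodic, uniform stationary distribution) are minor additional details the paper leaves implicit, but the argument is otherwise identical.
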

\begin{proof}
    The pessimistic mixing time of the chain is given by $\t^*(\e) = O(n \log \frac{n}{\e}) \leq \a n \log \frac{n}{\e}$ for sufficiently large $n$, by Lemma~\ref{lemma:2-qubitClifford_upper_bound}. The optimistic mixing time is given by $\t_*(\e) = \W(n \log \frac{1}{\e}) \geq \beta n \log \frac{1}{\e}$ for sufficiently large $n$, by Lemma~\ref{lemma:2-qubitClifford_lower_bound}. Here $\a, \b$ are constants. Then \begin{align}
        \b n \log \frac{1}{\e} \leq \t_*(\e) = \t^*(\eta(\e)) \leq \a n \log \frac{n}{\eta(\e)} .
    \end{align}
    Solving, we find that $\eta(\e) \leq n \e^{\b/\a}$. Hence, $\eta(\e)$ satisfies the assumptions of Lemma~\ref{lemma:scrambling_gap}, which completes the proof of the theorem.
\end{proof}

These combined results show that the landscape of random self-reducibility for stabilizer codes qualitatively differs from the classical analog.
Both classically and quantumly, restriction on the set of allowed codes in the worst case, as well as the weight of the worst-case error, is necessary.
However, with such restrictions, some notion of a random self-reduction is achievable, because local scrambling distributions there mix certain high-weight bitstrings substantially more quickly than low-weight bitstrings.
Therefore, by assigning codes to be comprised of the certain structured high-weight bitstrings and the errors to be comprised of low-weight bitstrings, a local model suffices as a reduction distribution.
Quantumly, the natural analogs of local scrambling---sparse Cliffords or products of sparse Cliffords---have no clear gap.
Sparse Cliffords unconditionally cannot give a reduction, and products of sparse Cliffords cannot randomize a tableau without randomizing the error.
Therefore, quantumly it appears that the primary barrier is not necessarily finding the right restriction of stabilizer codes, but rather finding how to (a) scramble the code space without scrambling the tableau and (b) scramble using non-sparse Cliffords while also not blowing up the error.
Satisfying just one of these conditions already appears challenging, so the two constraints together pose a substantial barrier to a random self-reduction.
It remains an open question as to whether these barriers can be surpassed, or can be strengthened to a complete impossibility theorem for random self-reductions.
We conjecture that, at least in the case of tableau-randomizing reductions, analysis techniques generalizing the scrambling gap formalism will reveal unconditionally that a reduction cannot scramble the tableau.
\begin{conjecture}
    There is no $(\CV, \e, w, p)$ tableau-randomizing random self-reduction operator (as defined in Definition~\ref{def:random_self_reduction}) for $\lsn$, for any $\CV \subseteq \CC_n$, $w \geq 1$, such that $\e = \negl(n)$ and $p = \frac34 - \frac{1}{\poly(n)}$.
\end{conjecture}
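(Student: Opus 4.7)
The plan is to extend the scrambling gap argument of Lemma~\ref{lemma:scrambling_gap} to arbitrary distributions $\CR_n$ over $\CC_n$, rather than just those arising as products of random $t$-local Cliffords. Concretely, for any fixed $\CR_n$ I would define an \emph{optimistic randomization error} $\e_*(\CR_n) = \min_{\vec P \in \CP_n \setminus \{\vec I\}} \TV(\vec R \vec P \vec R^\dag, \,\m_n)$ and a \emph{pessimistic randomization error} $\e^*(\CR_n) = \max_{\vec P \in \CP_n \setminus \{\vec I\}} \TV(\vec R \vec P \vec R^\dag, \,\m_n)$, where $\m_n$ denotes the uniform distribution over non-identity Paulis. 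The target inequality is $\e^*(\CR_n) \leq f(\e_*(\CR_n))$ with $f(\e) = \poly(n) \cdot \e^b$ for some absolute constant $b > 0$. Once this polynomial scrambling gap is established, the argument of Lemma~\ref{lemma:scrambling_gap}, applied directly to the single-step distribution $\CR_n$ in place of an iterated Markov chain, forces simultaneous randomization of any worst-case tableau Pauli and any error Pauli $\vec E$ with $\wt(\vec E) \geq 1$---immediately contradicting the non-triviality requirement $p = \tfrac{3}{4} - \tfrac{1}{\poly(n)}$.

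First, I would combine with the sparse-Clifford no-go (Theorem~\ref{thm:w2a_nogo_sparse_Clifford}) to restrict attention to the regime in which $\CR_n$ places most of its mass on Cliffords whose symplectic representation has super-constant column sparsity; otherwise the reduction is already ruled out. A direct counting argument then certifies that such a $\CR_n$ has non-negligible ``spread'' over $\CC_n$ modulo the left Pauli action: no single left Pauli coset can carry a $1-\negl(n)$ fraction of the mass, since left Pauli translates of a sparse Clifford remain sparse and the sparse Cliffords are too few. With this spread in hand, the scrambling gap bound should follow from the representation-theoretic structure of the conjugation action on $\CP_n \setminus \{\vec I\}$. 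Because the Clifford group acts doubly transitively on non-identity Paulis, the corresponding permutation representation decomposes into the trivial representation plus a single non-trivial irreducible, and the deviation of the conjugation image from $\m_n$ is governed by a single Fourier component. The main computational step is to bound this Fourier component uniformly in the starting Pauli using the spread of $\CR_n$, together with standard character inequalities over the symplectic group $\Symp_{2n}(\Z_2)$.

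The hard part is that $\CR_n$ is an arbitrary distribution with no intrinsic invariance, so the transition matrix $M_{\vec P_1, \vec P_2} = \Pr_{\vec R \sim \CR_n}[\vec R \vec P_1 \vec R^\dag = \vec P_2]$ does not commute with any natural group action on Paulis, and the two rows corresponding to $\vec P_1$ and $\vec P_2$ are a priori unrelated. The trick I would try is the following translation identity: for any non-identity $\vec P_1, \vec P_2$, choose a Clifford $\vec Q$ with $\vec Q \vec P_1 \vec Q^\dag = \vec P_2$ and observe that $\vec R \vec P_2 \vec R^\dag = (\vec R \vec Q) \vec P_1 (\vec R \vec Q)^\dag$, so the distribution of $\vec R \vec P_2 \vec R^\dag$ equals the conjugation image of the right translate $\CR_n \cdot \vec Q$ evaluated at the single base point $\vec P_1$. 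If one can show that every such right translate $\CR_n \cdot \vec Q$ is close enough to $\CR_n$ in the sense relevant to conjugation mixing---essentially, that conjugation mixing quality is stable under arbitrary right translations whenever $\CR_n$ is sufficiently spread---then $\e^*(\CR_n)$ is polynomially controlled by $\e_*(\CR_n)$. Turning this stability claim into a sharp quantitative bound is the principal bottleneck, and likely requires a delicate averaging argument exploiting the fact that ``typical'' $\vec Q$ preserve the bulk mass of any non-concentrated $\CR_n$ up to a polynomial factor. Overcoming this is the main step needed to promote the conjecture into a theorem.
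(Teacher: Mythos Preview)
This statement is a \emph{conjecture} in the paper, not a theorem; the paper explicitly leaves it open and only remarks that ``analysis techniques generalizing the scrambling gap formalism'' might eventually resolve it. Your proposal is exactly along this suggested direction, so there is no proof in the paper to compare against, and your outline should be read as an attack plan on an open problem rather than a proof.

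That said, the plan has a genuine gap which you yourself flag. The step you call the ``principal bottleneck''---stability of conjugation mixing under right translation by an arbitrary $\vec Q$---is not a technical detail to be cleaned up but essentially the entire content of the conjecture. The identity $\vec R \vec P_2 \vec R^\dag = (\vec R \vec Q)\vec P_1 (\vec R \vec Q)^\dag$ is correct but tautological: it merely restates that mixing of $\vec P_2$ under $\CR_n$ equals mixing of $\vec P_1$ under the right translate $\CR_n \cdot \vec Q$, and for a general $\CR_n$ there is no a priori relationship between these two distributions. Neither double transitivity of $\CC_n$ on non-identity Paulis (a property of the full group, not of the fixed distribution $\CR_n$) nor the single non-trivial irreducible in the permutation representation gives you control over an arbitrary $\CR_n$; those facts constrain Haar-type averages, not adversarially chosen distributions. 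Your invocation of Theorem~\ref{thm:w2a_nogo_sparse_Clifford} to obtain ``spread over left Pauli cosets'' is also off: that theorem rules out distributions supported on \emph{sparse} Cliffords, but says nothing about concentration on a few \emph{dense} Cliffords, and in any case spread over left Pauli cosets (which governs the conjugation action modulo phases) is the wrong notion for controlling stability under \emph{right} multiplication by a fixed $\vec Q$. In short, you have correctly identified the target---a universal polynomial scrambling gap for arbitrary $\CR_n$---but supplied no mechanism to reach it; the conjecture remains genuinely open.
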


At the same time, these barriers do not definitely prove the non-existence of a random self-reduction.
Moreover, we have not considered an even more general notion of a random self-reduction, e.g. when the dimensions $n, k$ can increase by a polynomial factor.
It is unclear whether allowing for a more broad definition of a random self-reduction can circumvent the strength of the barrier discussed above.
More precisely, can a random self-reduction be achieved by overcoming the above barriers, by increasing the dimension (i.e. by letting the reduction operator $\vec{R}$ be an isometry), or by letting $\e = \frac{1}{\poly(n)}$ (in which case we would require a smaller $p$ to be non-trivial, e.g. $p < \frac34 - \W(1)$)?

\section{Average-Case  Hardness of Quantum Stabilizer Decoding} \label{sec:hardness_lsn}
Despite the apparent challenges in giving a random self-reduction for $\lsn$, we establish in this section the algorithmic hardness of $\lsn$ by directly reducing $\lpn$ to $\lsn$.
More precisely, we show that $\lsn[k, n, p][m]$ for any $k \geq 1$, $p = \Omega(n^{-(1-\epsilon)})$ with $\e \in (0, 1)$, and any $m \geq 1$ is harder than an instance of $\lpn[k', \;2n, \;p']$ for $k' =p'n$, $p' = \frac{p}{6}$. 
These parameters include any problem instance with a constant error probability $p$. 
In this regime, our result states that $\lsn$ with \emph{any} number of samples $m$ and \emph{any} number of logical qubits $k \geq 1$, is as hard as $\lpn$ with a constant rate. 
In spite of much effort, no sub-exponential time algorithms for $\lpn$ at constant rate and constant error probability are known.
Thus, the existence of a sub-exponential time algorithm which can decode a typical quantum stabilizer code---subject to a constant noise rate---with non-negligible probability implies by way of this reduction a cryptographic breakthrough in the form of a sub-exponential algorithm for the hardest regime of $\lsn$.

It is illustrative to observe how standard classical attacks fail on $\lsn$.
We discuss the algorithms of BKW~\cite{BKW03} and Lyubashevsky~\cite{10.1007/11538462_32} which apply respectively in the regimes of negligible rate and inverse-polynomial rate.
The BKW algorithm finds a sparse vector $\mathbf{s}^\intercal$ that is in the left-kernel of the code $\vec{A}$. 
Then, calculating $\mathbf{s}^\intercal (\vec{A}\vec{x}+\vec{e}) = \mathbf{s}^\intercal \vec{e}$ solves Decision $\lpn$, because this product is 1 with low probability if $\mathbf{s}$ is sufficiently small.
However, in order for a sparse vector to exist at all, there must be superpolynomially (in $k$) many rows. 
The runtime of the algorithm is $2^{O(k/\log k)}$, and in particular sub-exponential. 
$\lsn$ corresponds to a problem of $\lpn$ where the rate is artificially restricted (in the sense that the encoding matrix in the classical representation has dimensions $2n \times (n+k)$, so the ``rate'' is at least $1/2$), and therefore BKW cannot apply at all.
Lyubashevsy's algorithm is a technique for amplifying polynomially many rows ($k^{1+\epsilon}$ for any $\epsilon>0$) to superpolynomially many, and then applying the BKW algorithm to achieve a runtime of $2^{O(k/\log\log k)}$. 
Once again, $\lsn$ with constant probability and any rate corresponds to an $\lpn$ problem with a constant rate, and therefore Lyubashevsky's algorithm fails as well. 

This explicit relationship between $\lpn$ and $\lsn$ establishes a qualitative difference between $\lpn[k, n, p]$ and $\lsn[k, n, p]$. 
The former problem has a trivial brute force algorithm with runtime $O(2^k n^3)$: guess the solution $\vec{x}'$, and check if $\vec{Ax}'$ is close to the sample $\vec{Ax}+\vec{e}$. 
On the other hand, even when $k = 1$, $\lsn[k, n, p]$ is as hard as $\lpn[k', \;2n, \;p']$ for $k' =p'n$, $p' = \frac{p}{6}$. 
Therefore, there cannot be an algorithm solving $\lsn$ with any runtime of the form $O(f(k) \cdot \poly(n))$ for any $f(k)$ (e.g. $f(k) = 2^k$ or $2^{2^k}$), as such an algorithm would imply an \emph{efficient} algorithm for constant-rate $\lpn$.
Previously, we noted this barrier by observing that there is no obvious way to certify the validity of a proposed answer $\vec x'$ to a $\lsn$ instance $\vec{EC}\ket{0^{n-k},\vec{x}}$.
To do so would be to solve a certain short vector problem: determining if $\vec{EC}\ket{0^{n-k},\vec{x}}$ and $\vec{C}\ket{0^{n-k},\vec{x}'}$ differ by a low-weight Pauli.
We can now establish a much stronger statement, namely that the existence of an efficient certification protocol, or equivalently the existence of a solution to this short vector problem, would imply a polynomial time algorithm solving $\lpn$ in the hardest regime.
One interpretation of these results is that while in $\lpn$ the natural security parameter is $k$, in $\lsn$ the natural security parameter is $n$, making $\lsn$ substantially harder when $n$ and $k$ differ significantly.

Our proof of this reduction ties together several components from previous sections.
Because $\lsn$ and $\lpn$ differ greatly even at a qualitative level in their definitions, we introduce several hybrids to bridge the gap.
On the quantum side, we will use exclusively the equivalent classical representation of $\lsn$ developed in Section~\ref{sec:classical_lsn}.
This representation appears much more qualitatively similar to $\lpn$ in the sense that the samples are of the form $(\vec B, \;\vec B \vec z + \vec e)$.
However, it differs in that (a) $\vec z$ is mostly junk (the secret constitutes a small sliver of $\vec z$), (b) the columns of $\vec B$ have certain symplectic orthogonality conditions, and (c) the error is depolarizing instead of Bernoulli.
Therefore, on the classical side, we will introduce a problem which we call \emph{symplectic learning parity with noise} ($\symplpn$), which differs from $\lpn$ in that the encoding matrix has symplectically orthogonal columns and that the error is depolarizing.
We reduce $\lpn$ to $\symplpn$ by modifying the code and error, then from $\symplpn$ to the classical representation of $\lsn$.
The latter reduction proceeds by a strange step wherein we embed directly into the junk, not into the secret of $\lsn$.
As a consequence, we can reduce only into the decision variant of $\lsn$ instead of the search variant.
Since we wish ultimately to reduce to the search variant, we will give the reduction from $\lpn$ to the multi-sample variant of Decision $\lsn$ and then apply our multi-sample decision-to-search reduction from Section~\ref{sec:self_reducibility}. 
This together will complete the reductive steps from Fig.~\ref{fig:LSN_main_figure}.
This final step is particularly interesting, because the multi-sample variant plays a crucial role despite being a non-standard notion of quantum error correction.

\subsection{Symplectic Learning Parity with Noise}

To begin, we formally introduce the $\symplpn$ problem. We introduce both \emph{search} and \emph{decision} variants of the problem; although we will mainly consider the latter variant.
Informally, $\symplpn$ is the task of decoding a random classical linear code with a symplectically self-orthogonal encoding matrix $\vec A$, rate $\frac{k}n = \frac{1}{2}$, and error drawn from a depolarizing distribution (in symplectic representation).

\begin{definition}[$\symplpn$]
The Symplectic Learning Parity with Noise problem, denoted by $\symplpn[n, p]$, is characterized by an integer $n \in \mathbb{N}$ and $p \in (0,1)$. We distinguish between two variants of the problem:
\begin{description}
    \item $\bullet$ \textbf{Search} $\symplpn[n, p]$ is the task of finding $\vec x \sim \Z_2^{n}$ given as input a sample of the form
    \begin{align}
        (\vec{A} \in \Z_2^{2n \times n}, \;\vec{A}\mathbf{x} + \mathbf{e} \Mod{2})
    \end{align}
    where $\vec A$ is random subject to the constraint that its $n$ columns are linearly independent and symplectically orthogonal, and where $\vec e$ is a depolarizing error with probability parameter $p$, i.e. $\Symp^{-1}(\vec e) \sim \CD_p^{\otimes n}$.

    \item $\bullet$ \textbf{Decision} $\symplpn[n, p]$ is the task of distinguishing between the samples
    \begin{align}
        (\vec{A} \in \Z_2^{2n \times n}, \;\vec{A}\mathbf{x} + \mathbf{e} \Mod{2}) \quad \text{ or } \quad (\vec{A} \in \Z_2^{2n \times n}, \;\mathbf{u} \sim \Z_2^{2n})
    \end{align} 
    where $\vec A, \vec x, \vec e$ are as before and $\vec u$ is a uniformly random bitstring.
\end{description}
\end{definition}
This definition is motivated by bridging part but not all of the differences between $\lpn$ and the classical representation of $\lsn$.
We will first prove that Decision $\lpn$ reduces to Decision $\symplpn$, and then construct this latter reduction.
To reduce $\lpn$ to $\symplpn$, we first construct an algorithm that takes a matrix $\vec{A} \in \Z_2^{(2n-\ell(1+\epsilon)) \times l}$, and adds $\ell(1+\epsilon)$ rows to create a matrix $\vec{B} \in \Z_2^{2n \times \ell}$ with symplectically orthogonal columns.
This algorithm has the property that it takes the uniform distribution over codes $\vec{A}$ to a distribution close to uniform over symplectically orthogonal codes.
We then correct the error distribution by applying the depolarizing decomposition into Bernoullis from Corollary~\ref{corollary:bernoulli-depolarizing_duality}.

\begin{lemma}[Symplectic matrix extension] \label{lem:symplectic_extension} Let $n,\ell \in \mathbb{N}$ and $\e > 0$ be parameters such that $\ell(1+\e) \leq n$.
There exists a $\poly(n)$ time classical algorithm which on input a uniformly random $(2n - \ell (1+\epsilon)) \times \ell$ matrix $\vec{A}$, outputs 
\begin{align}
    \vec{B} = \begin{bmatrix} \vec{A} \\ \vec{A'} \end{bmatrix} \in \Z_2^{2n \times \ell}
\end{align} 
for a matrix $\vec{A'}$ with dimensions $\ell(1+\epsilon) \times \ell$, such that \begin{align}
    \mathop{\TV}\limits_{\vec B \sim \nu}(\vec{B}, \mu_{2n, \ell}) = \negl(\ell),
\end{align}
where $\mu_{2n, \ell}$ is the uniform distribution over $2n \times \ell$ full-rank matrices with symplectically orthogonal columns and $\nu$ is the output distribution of the algorithm. 
\end{lemma}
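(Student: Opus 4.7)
The plan is to follow the strategy sketched in the technical overview: partition the candidate output $\vec{B}$ into four blocks, rewrite the pairwise symplectic-orthogonality constraint as a symmetry condition on a small matrix, and then complete $\vec{A}$ by sampling a uniformly random solution to the resulting linear system. Proving the lemma will then amount to a counting argument matching the algorithm's output law to the target uniform distribution, together with a negligible-probability analysis of a few exceptional strata.

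First, I will partition
\begin{align}
\vec{B} \;=\; \begin{bmatrix} \vec{N}_1 \\ \vec{M} \\ \vec{N}_2 \\ \vec{A}' \end{bmatrix},
\end{align}
with $\vec{N}_1,\vec{N}_2 \in \Z_2^{(n-\ell(1+\e))\times \ell}$ and $\vec{M}, \vec{A}' \in \Z_2^{\ell(1+\e) \times \ell}$, so that the input $\vec{A}$ comprises the top three blocks. Expanding $\vec{B}^\intercal \boldsymbol{\W} \vec{B}$ block-wise shows that the symplectic inner product between columns $i$ and $j$ of $\vec{B}$ equals $S_{ij} + S_{ji}$, where $\vec{S} := \vec{N}_1^\intercal\vec{N}_2 + \vec{M}^\intercal \vec{A}'$. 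Hence the columns of $\vec{B}$ are pairwise symplectically orthogonal if and only if $\vec{S}$ is symmetric.

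The algorithm is then: sample a uniformly random symmetric matrix $\vec{S}' \in \Z_2^{\ell\times\ell}$, abort if $\vec{M}$ is rank-deficient, and otherwise return a uniformly random solution $\vec{A}'$ of $\vec{M}^\intercal\vec{A}' = \vec{S}' - \vec{N}_1^\intercal\vec{N}_2$. Since $\vec{M}$ is a uniformly random $\ell(1+\e)\times \ell$ binary matrix, it has full column rank except with probability $O(2^{-\e\ell}) = \negl(\ell)$. Conditioned on full rank, $\vec{M}^\intercal$ is surjective onto $\Z_2^{\ell\times\ell}$ with fibers of size $2^{\e\ell^2}$, so the joint probability of producing any particular admissible pair $(\vec{A}, \vec{A}')$ equals
\begin{align}
2^{-(2n-\ell(1+\e))\ell} \cdot 2^{-\ell(\ell+1)/2} \cdot 2^{-\e\ell^2} \;=\; 2^{-2n\ell + \ell(\ell-1)/2}.
\end{align}
On the target side, the number of ordered, linearly independent, symplectically orthogonal $\ell$-tuples in $\Z_2^{2n}$ equals $\prod_{i=1}^\ell(2^{2n-i+1} - 2^{i-1}) = 2^{2n\ell - \ell(\ell-1)/2}(1 - \negl(\ell))$ under the hypothesis $\ell(1+\e) \leq n$. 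The two expressions are reciprocals up to a $1 - \negl(\ell)$ factor, so on the good event that $\vec{M}$ is full rank the algorithm outputs an almost-exactly uniform element of essentially the target support.

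The main obstacle I anticipate is bounding three exceptional strata in a unified way: (i) inputs $\vec{A}$ whose middle block $\vec{M}$ is rank-deficient, (ii) algorithm outputs $\vec{B}$ that fail to be full rank, and (iii) matrices in the target support $\mu_{2n,\ell}$ whose middle block fails to be full rank, so they lie outside the algorithm's image. The cleanest way I see to handle all three is via the structural fact that the top $n$ rows of a uniformly random symplectically orthogonal full-rank $\vec{B}$ form a near-uniform element of $\Z_2^{n\times\ell}$: each full-rank top half admits exactly $2^{(n-\ell)\ell+\ell(\ell+1)/2}$ symplectic completions independently of the choice, and almost all top halves are full rank. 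Combined with the elementary fact that a uniformly random $\ell(1+\e)\times\ell$ binary matrix has full column rank except with probability $O(2^{-\e\ell})$, each stratum carries relative mass $\negl(\ell)$, and the triangle inequality closes the TV bound.
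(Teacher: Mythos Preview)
Your proposal is correct and matches the paper's proof essentially step for step: the same four-block partition of $\vec{B}$, the same reformulation of symplectic orthogonality as symmetry of $\vec{S} = \vec{N}_1^\intercal\vec{N}_2 + \vec{M}^\intercal\vec{A}'$, the same algorithm (sample a random symmetric $\vec{S}'$ and a random solution $\vec{A}'$), and the same counting argument showing the per-output probability matches $|\mathrm{supp}(\mu_{2n,\ell})|^{-1}$ up to a $1-\negl(\ell)$ factor. Your treatment of the exceptional strata (i)--(iii) is slightly more explicit than the paper's---which simply notes that full-rank $\vec{M}^\intercal$ forces full-rank $\vec{B}$ and absorbs everything into the factors $r,r'=1-\negl(\ell)$---but the logic is the same.
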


\begin{proof}
Decompose $\boldsymbol{A}$ as 
\begin{equation}
\vec{A} = \begin{bmatrix} \vec{N}_1\\\vec{M} \\ \vec{N}_2 \end{bmatrix}
\end{equation}
where $\vec{N}_1, \vec{N}_2 \in \Z_2^{(n- \ell(1+\epsilon)) \times \ell}$, and $\vec{M} \in \Z_2^{\ell(1+\epsilon) \times \ell}$. 
The matrix $\vec{B}$ to has symplectically orthogonal columns if and only if the matrix
\begin{equation}
\vec{S}(\vec A') = (\vec{N}_1)^\intercal \cdot \vec{N}_2 + \vec{M}^\intercal \cdot \vec{A'}
\end{equation}
is symmetric. 
We can check this fact by observing that $\vec{S}_{ij} + \vec{S}_{ji}$ is precisely the symplectic inner product of the $i$th and $j$th columns of $\vec{B}$. 
Hence, $\vec{S}_{ij} = \vec{S}_{ji}$ corresponds to symplectic orthogonality. 

We sample a random symmetric matrix $\vec{S'} \in \Z_2^{\ell \times \ell}$, and let $\vec{T} = \vec{S'} - (\vec{N}_1)^\intercal \cdot \vec{N}_2$. 
Sample a random matrix $\vec{A'}$ for which $\vec{M}^\intercal \cdot \vec{A'} = \vec{T}$; this can be done efficiently being that it is a random solution to a system of linear equations. 
(If no such matrix exists, then output the zero matrix.)
Append $\vec A'$ to $\vec A$ to form $\vec B$.
For any such choice of $\vec{A'}$, 
\begin{equation}
\vec{S}(\vec A') = (\vec{N}_1)^\intercal \cdot \vec{N}_2 + \vec{M}^\intercal \cdot \vec{A'} = \vec{S'},
\end{equation}
and therefore $\vec{B}$ is symplectically orthogonal as desired. 

We next show that if $\vec{A}$ is uniformly random, then $\vec{B}$ is close to a uniformly random full-rank matrix with symplectically orthogonal columns. 
We do so by proving that for almost all $\vec A$, the number of choices $n_{\vec B}(\vec A)$ for a full-rank symplectically orthogonal $\vec B$ are the same.
Suppose first that $\vec M^\intercal \in \Z_2^{\ell \times \ell (1+\e)}$ is full-rank; this occurs with probability $r = 1 - \negl(\ell)$ since $\vec A$ is uniformly random.
Note that by dimensionality, $\vec M^\intercal$ being full-rank implies that $\vec B$ and $\vec A$ are full-rank.
Then we claim that 
\begin{equation}
n_{\vec{B}}(\vec{A}) = 2^{\ell(\ell+1)/2} \cdot 2^{\epsilon \ell^2}.
\end{equation}
There are $2^{\ell(\ell+1)/2}$ choices for $\vec S'$.
For each $\vec S'$, there are $2^{\dim \vec A' - \dim \vec S'} = 2^{(1+\e)\ell^2 - \ell^2} = 2^{\e \ell^2}$ choices of $\vec A'$ by the assumption that $\vec M^\intercal$ is full rank.
For a fixed $\vec A$, the choice of $\vec A'$ uniquely determines $\vec B$ and $\vec S'$, so every pair $(\vec A', \vec S')$ gives a distinct solution $\vec B$, proving the claim.
Let $n_{\nu}$ be the number of $\vec B$ that the algorithm can produce, conditioned on a full-rank $\vec M^\intercal$.
$n_{\vec B}(\vec A)$ does not depend on $\vec A$, so \begin{align}
    n_{\nu} = 2^{(2n - \ell(1+\e))\ell} \cdot 2^{\ell(\ell+1)/2} \cdot 2^{\epsilon \ell^2} = 2^{2n\ell - \ell^2(1+\e) + \ell(\ell+1)/2 + \e \ell^2} = 2^{2 n \ell - \ell(\ell-1)/2} .
\end{align}
On the other hand, we claim that there are exactly \begin{align}
    n_{\mu} = \prod_{j=1}^{\ell} (2^{2n - (j-1)} - 2^{j-1})
\end{align}
$2n \times \ell$ binary matrices which are full-rank and symplectically orthogonal.
This can be seen from an iterative construction procedure: at the $j$th step, out of $2^{2n}$ possible strings, the condition of symplectic orthogonality with $j-1$ previous independent strings decreases the dimensionality by $j-1$.
Furthermore, among the $2^{2n - (j-1)}$ strings which are symplectically orthogonal to the previous strings, $2^{j-1}$ lie in the span of the previous strings, so we subtract off those strings.
Asymptotically, however, this subtraction bears no relevance.
This is because our assumption of $\ell(1+\e) \leq n$ implies that $n - \ell \geq \e \ell$, and
\begin{align}
    \frac{2^{j-1}}{2^{2n - (j-1)}} = 2^{-2(n-(j-1))} \leq 2^{-2(n-(\ell-1))} \leq 2^{-2(\e \ell + 1)} = 2^{- \e \ell} = \negl(\ell) .
\end{align}
Hence, $n_{\mu} = r' \cdot 2^{2 n \ell - \ell(\ell - 1)/2} = r' \cdot n_{\nu}$, where $r' = 1 - \negl(\ell)$.
To complete the proof, note that according to the algorithmic distribution $\nu$, conditioned on an event ($M^\intercal$ is full rank) which has probability $r = 1 - \negl(\ell)$, the distribution is uniform over $n_{\nu}$ elements.
Meanwhile, according to the desired distribution $\mu_{2n, \ell}$, every full-rank symplectically orthogonal matrix has equal probability of being drawn, of which there are $n_{\mu} = r' n_{\nu} = (1 - \negl(\ell)) n_{\nu}$ of them.
These probabilities differ by a negligible multiplicative factor, with $\nu$ conditioning on an event with $1 - \negl(\ell)$ probability.
Therefore, \begin{align}
    \mathop{\TV}\limits_{\vec B \sim \nu}(\vec{B}, \mu_{2n, \ell}) = \negl(\ell)
\end{align}
as desired.
\end{proof}

Before demonstrating the reduction, we show that Bernoulli errors on a fixed, known set of indices can be randomized to produce independent Bernoulli errors with lower probability on all indices. 
\begin{lemma}[Bernoulli noise extension] \label{lem:error_extension}
Let $n,m \in \mathbb{N}$ be parameters such that $m = \omega(\log n)$, and let $\mathbf{e} \in \Z_2^n$ be a vector such that $e_i = 0$ for $i \leq n-m$, and $e_i \sim \Ber(\frac{1}{2})$ for $i > n-m$.
Then for any constant $\d \in (0, 1)$, there is an efficiently sampleable distribution $\m_{n, m}$ over $\Z_2^n$ such that
\begin{equation}
\mathop{\TV}\limits_{\substack{\pi \sim S^n\\ \vec e' \sim \m_{n, m}}}(\pi(\mathbf{e} + \mathbf{e'}), \;\Ber(p)^{\otimes n}) = \negl(n) .
\end{equation}
Here, $p = \frac{m}{2n}\left(1 + \delta\right)$ and $\pi$ is a uniformly random permutation drawn from the symmetric group $S^n$ on $n$ elements.
\end{lemma}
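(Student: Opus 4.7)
The plan is to reduce the TV distance on $\Z_2^n$ to a one-dimensional TV distance between weight distributions, then design $\mu_{n,m}$ to realize the correct weight distribution. First I will argue that since the random permutation $\pi \sim S^n$ preserves only the Hamming weight of its argument, and the target $\Ber(p)^{\otimes n}$ is itself permutation-invariant (conditioned on any weight $w$ both distributions are uniform over weight-$w$ strings), the total variation distance equals $\TV(\wt(\mathbf{e}+\mathbf{e}'),\,\Bin(n,p))$. Next I will decompose $\wt(\mathbf{e}+\mathbf{e}') = W_1 + W_2$, where $W_i$ is the weight on the first $n-m$ and last $m$ coordinates, respectively. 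Because $\mathbf{e}'$ is independent of $\mathbf{e}$ and $\Ber(\tfrac{1}{2})$ XORed with any independent bit remains $\Ber(\tfrac{1}{2})$, we always have $W_2 \sim \Bin(m,1/2)$ regardless of how $\mathbf{e}'$ is chosen on the last $m$ positions. Meanwhile $W_1 = \wt(\mathbf{e}'|_{[n-m]})$ is fully under our control.

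The construction of $\mu_{n,m}$ I would use is: sample a non-negative integer $K$ from some distribution $\nu$, then place $K$ ones at uniformly random positions among $[n-m]$, with zeros on the last $m$ coordinates. Then $W_1 = K$, and the task reduces to choosing $\nu$ so that $\TV(K + \Bin(m,1/2),\,\Bin(n,p)) = \negl(n)$. I would choose $\nu$ to match the first two moments of the required offset: $\mathbb{E}[K] = m\delta/2$ (so that the mean of $K + \Bin(m,1/2)$ matches $np = m(1+\delta)/2$) and $\Var[K] \approx m(1+2\delta)/4$ (to make up the variance gap between $\Bin(m,1/2)$, which has variance $m/4$, and the target $np(1-p) \approx m(1+\delta)/2$). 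A natural candidate is a negative-binomial-type distribution with these moments, which lives on $\mathbb{Z}_{\geq 0}$. To bound the TV, apply Chernoff-type concentration, valid because $m = \omega(\log n)$, to restrict both distributions to a bulk window of width $O(\sqrt{m\log n})$ about their common mean, contributing only $\negl(n)$ from the tails. In the bulk, use a local central limit theorem to compare both $K + \Bin(m,1/2)$ and $\Bin(n,p)$ pointwise to the same Gaussian density; matching the first two moments ensures the leading Gaussians coincide, so any residual TV is driven by higher-order Edgeworth-type corrections.

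\textbf{Main obstacle.} The central technical difficulty is pushing the bulk TV below the $O(1/\sqrt{m})$ rate that Berry--Esseen with only two matched moments would give, down to $\negl(n)$. This calls for either matching several higher moments of $\nu$ against those of the formal distribution $\Bin(n,p) - \Bin(m,1/2)$ (so that successive Edgeworth terms align to high order), or for a more delicate coupling argument exploiting the binomial structure. The hypothesis $m = \omega(\log n)$ disposes of the Chernoff tails effortlessly; all the work lies in the bulk, where a carefully tailored choice of $\nu$ is essential to drive the pointwise discrepancy below any inverse polynomial.
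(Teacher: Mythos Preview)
Your reduction to weight distributions is correct, and the decomposition $W_1 + W_2$ with $W_2 \sim \Bin(m,1/2)$ is sound. However, the obstacle you flag---pushing the bulk TV below $O(1/\sqrt{m})$ via moment matching and Edgeworth corrections---is a genuine one for \emph{your} construction, and it is self-inflicted. By forcing $\mathbf{e}'$ to vanish on the last $m$ coordinates, you commit to the deconvolution problem $K + \Bin(m,1/2) \approx \Bin(n,p)$, which has no clean exact solution; since $m$ may be as small as $\omega(\log n)$, moment matching of any fixed order cannot yield $\negl(n)$. So the approach as written has a real gap.

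The paper sidesteps this entirely with a much simpler construction. Sample $T \sim \Bin(n, 2p)$; since $\mathbb{E}[T] = m(1+\delta)$, a single Chernoff bound gives $T > m$ with probability $1 - \negl(n)$ (this is the only place $m = \omega(\log n)$ is used). Let $\mathbf{e}'$ be zero on the first $n-T$ coordinates and i.i.d.\ $\Ber(1/2)$ on the last $T$. On the event $T > m$, the support of $\mathbf{e}$ is contained in that of $\mathbf{e}'$, and since $\Ber(1/2)$ XOR anything remains $\Ber(1/2)$, we have $\mathbf{e} + \mathbf{e}' \stackrel{d}{=} \mathbf{e}'$. After applying $\pi$, a uniformly random size-$T$ subset of coordinates carries i.i.d.\ fair coins and the rest are zero. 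But this is \emph{exactly} the distribution obtained by independently marking each coordinate with probability $2p$ and then flipping a fair coin at each marked coordinate---which is $\Ber(p)^{\otimes n}$ on the nose. No local CLT, no Edgeworth expansion, no bulk analysis at all.

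The key idea you missed is to let $\mathbf{e}'$ \emph{overwrite} the region where $\mathbf{e}$ lives rather than act disjointly from it; once $\mathbf{e}$ is absorbed, the exact two-stage sampling identity for $\Ber(p)$ (select with probability $2p$, then flip a fair coin) finishes the argument in one line.
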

\begin{proof}
We will construct a procedure for sampling $\mathbf{e'}$.
First, sample $T \sim \Bin(n, \;2p)$, so that $\mathbb{E}[T] = 2np = m(1+\d)$.
By a Chernoff bound,
\begin{equation}
\Pr\left[M \geq \left(1 - \frac\d2\right) \mathbb{E}[T]\right]
 \leq \exp{- \frac{\d^2}{2 + \d} \mathbb{E}[T]} = \exp{- \frac{\d^2 (1 + \d)}{2 + \d} m} = \negl(n) ,
\end{equation}
since $m = \w(\log n)$.
Therefore, with probability $1 - \negl(n)$, 
\begin{equation}
T \geq m(1+\delta) \left(1 - \frac\d2\right) = m \left(1 + \frac\d2 - \frac{\d^2}2 \right) > m. 
\end{equation}
Choose $\mathbf{e'}$ to be a vector where the first $n - T$ entries are zero, and the rest are sampled uniformly randomly. 
Conditioned on the event that $T > m$, $\vec e + \vec e'$ is equidistributed as $\vec e'$ itself, since $\vec e$ has zeros in the first $n-m < n-T$ entries, and the remaining entries are uniformly random.

Because $\pi$ erases any correlation with index, the distribution of $\pi(\mathbf{e} + \mathbf{e'})$, conditioned on $T > m$, can be equivalently represented by first sampling $T \sim \Bin(n, \;2p)$ random indices, and then choosing the bit at each index to be 1 with probability $\frac{1}{2}$ independently.
The random choice of indices can instead be generated by independently for each index selecting it with probability $2p$.
Thus, each bit is $1$ independently with probability $2p \cdot \frac{1}{2} = p$.
Consequently, except under a negligibly improbable event, $\pi(\vec e + \vec e')$ is distributed as $\Ber(p)^{\otimes n}$, which completes the proof.
\end{proof}

We now use the above lemmas to give a reduction from $\lpn$ to $\symplpn$.
\begin{theorem}[$\lpn$ to $\symplpn$] \label{thm:lpn_to_symplpn}
Let $n, \ell \in \mathbb{N}$ be parameters and $\e \in (0, 1)$ be a constant, such that $\ell(1+\e) \leq n$ and $\ell = \Omega(n^{\epsilon})$.
Define $r = \frac{\ell}{2n}(1+\epsilon)^2 \leq p + \frac{\ell}{2n}(1+3\epsilon)$, where $p \in (0, 1)$, and $q = p + r - 2pr \leq p + \frac{\ell}{2n}(1+3\epsilon)$.
Let $\CO$ be an oracle that solves Decision $\symplpn[n, \;3q]$ with advantage $\eta$.
Then there exists an algorithm, running in time $\poly(n)$, which solves Decision $\lpn[\ell, \; \lceil 2n - \ell(1+\epsilon) \rceil, p]$ to advantage $\eta - \negl(n)$, using a single call to $\CO$.
\end{theorem}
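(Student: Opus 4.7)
The plan is to construct a classical polynomial-time reduction that maps a Decision $\lpn[\ell, \lceil 2n - \ell(1+\epsilon)\rceil, p]$ sample $(\vec A_0, \vec b_0)$ into a Decision $\symplpn[n, 3q]$ sample $(\vec A, \vec b)$, so that structured (respectively, unstructured) inputs yield distributions over $(\vec A, \vec b)$ negligibly close to the corresponding structured (respectively, exactly the unstructured) $\symplpn$ distribution. Feeding the output to $\CO$ then gives a distinguisher for Decision $\lpn$ with advantage $\eta - \negl(n)$.

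The matrix $\vec A$ is assembled in two sub-steps. First, apply Lemma~\ref{lem:symplectic_extension} to $\vec A_0$ to obtain $\vec B \in \Z_2^{2n \times \ell}$, whose columns are symplectically orthogonal and whose distribution is $\negl(\ell)$-close to uniform. Second, iteratively sample $n - \ell$ further columns $\vec B'$, each uniform subject to being symplectically orthogonal to all previously generated columns and linearly independent. A direct counting argument (analogous to the one in the proof of Lemma~\ref{lem:symplectic_extension}) shows that the conditional distribution of the last $n - \ell$ columns of a uniform full-rank symplectically orthogonal $\Z_2^{2n \times n}$ matrix, given its first $\ell$ columns, is itself uniform over valid completions; thus $\vec A = [\vec B \mid \vec B']$ is negligibly close to the target matrix distribution.

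The target $\vec b$ is built in three sub-steps. (a) Pad $\vec b_0$ with $\ell(1+\epsilon)$ uniformly random bits in the extended positions (the last $\ell(1+\epsilon)$ rows of the $\vec A'$ block of Lemma~\ref{lem:symplectic_extension}, all lying in the $Z$-half of the symplectic representation) to obtain $\vec b_1$; sample $\vec y' \sim \Unif(\Z_2^{n-\ell})$ and set $\vec b_2 = \vec b_1 + \vec B' \vec y'$. In the structured case this realizes $\vec b_2 = \vec A \vec y + \vec e_2$ with uniform secret $\vec y = (\vec x, \vec y')$ and noise $\vec e_2$ equal to $\Ber(p)$ on the $2n - \ell(1+\epsilon)$ original positions and $\Ber(\tfrac{1}{2})$ on the $\ell(1+\epsilon)$ extended positions; in the unstructured case $\vec b_2$ is exactly uniform in $\Z_2^{2n}$. (b) Sample $\vec e' \sim \mu_{n, \ell(1+\epsilon)}$ from Lemma~\ref{lem:error_extension} with $\delta = \epsilon$ (giving boost parameter $r = \tfrac{\ell(1+\epsilon)^2}{2n}$), and add it to the $Z$-half of $\vec b_2$; add an independent $\Ber(r)^{\otimes n}$ vector to the $X$-half; sample $\pi \in S^n$ uniformly and apply it symmetrically to both halves of $\vec b_2$ and to the corresponding rows of $\vec A$, which preserves symplectic orthogonality. (c) Sample $\vec e_Y \sim \Ber(q)^{\otimes n}$ and add $(\vec e_Y, \vec e_Y)$ to $\vec b_2$, which by Corollary~\ref{corollary:bernoulli-depolarizing_duality} makes the aggregate error distributed so that $\Symp^{-1}(\cdot) \sim \CD_{3q(1-q)}^{\otimes n}$; a final depolarizing convolution (Lemma~\ref{lemma:depolarizing_convolution}) then raises the parameter to exactly $3q$. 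Uniformity in the unstructured case is preserved verbatim because every operation is an affine transformation of uniform bits.

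The main technical obstacle will be the TV-distance analysis in step (b): the symmetric permutation couples the original-position Bernoulli noise with the extended-position uniform bits, so it is not immediately clear that the resulting $Z$-half noise is i.i.d.\ $\Ber(q)^{\otimes n}$. I will handle this by decomposing each extended uniform bit as an independent sum $\Ber(p) + \Ber(u_0)$ with $u_0 = (\tfrac{1}{2} - p)/(1 - 2p)$, using Lemma~\ref{lemma:Bern_convolution}. The $\Ber(p)$ summands then merge with the original-position Bernoulli noise into a single i.i.d.\ $\Ber(p)^{\otimes n}$ vector on the full $Z$-half, which is permutation-invariant. The residual ``zero on original, $\Ber(u_0)$ on extended'' vector is precisely the input shape that Lemma~\ref{lem:error_extension} is designed for---its proof uses only that the extended-position entries are independent of $\vec e'$, which holds here---so applying the lemma yields a residual contribution negligibly close to $\Ber(r)^{\otimes n}$ after permutation. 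Combining via Lemma~\ref{lemma:Bern_convolution} then gives $\Ber(q)^{\otimes n}$ on the $Z$-half; the $X$-half is the direct Bernoulli convolution $\Ber(p)^{\otimes n} + \Ber(r)^{\otimes n} = \Ber(q)^{\otimes n}$. All accumulated TV errors across the reduction are $\negl(n)$, which propagates to the claimed $\eta - \negl(n)$ distinguishing advantage.
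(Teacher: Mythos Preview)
Your proposal is correct and follows essentially the same route as the paper: symplectic extension of the code via Lemma~\ref{lem:symplectic_extension}, random padding and noise injection on the $Z$-half via Lemma~\ref{lem:error_extension} together with a symmetric permutation $\pi\oplus\pi$, Bernoulli boosting on the $X$-half, completion of the matrix to $n$ columns, and the Bernoulli-to-depolarizing conversion via Corollary~\ref{corollary:bernoulli-depolarizing_duality} with a final depolarizing convolution up to $3q$. Your explicit decomposition of each $\Ber(\tfrac12)$ entry on the extended positions as an independent sum $\Ber(p)+\Ber(u_0)$, so as to peel off a permutation-invariant i.i.d.\ $\Ber(p)^{\otimes n}$ layer before invoking Lemma~\ref{lem:error_extension} on the residual, is in fact a cleaner justification of the ``independently, $\pi(\vec e_2)\sim\Ber(p)^{\otimes n}$'' step than the paper's own phrasing.
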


\begin{proof}
Let $(\vec{A},\mathbf{y})$ be a sample of Decision $\lpn[\ell, \;2n - \ell (1 + \epsilon), p]$, where either $\mathbf{y} = \vec{A}\mathbf{x} + \mathbf{e}$ or $\vec y \sim \Z_2^{\ell}$.
First, suppose that $\mathbf{y} = \vec{A}\mathbf{x} + \mathbf{e}$. 
By Lemma \ref{lem:symplectic_extension}, we may obtain a full-rank, symplectically orthogonal matrix $\vec{B} = \begin{bmatrix} \vec{A} \\ \vec{A'} \end{bmatrix}$ such that the distribution over $\vec B$ is negligibly close to uniform over full-rank symplectically orthogonal matrices in $\Z_2^{2n \times \ell}$.
Sample $\vec r \sim \Z_2^{\ell(1 + \e)}$ and define
\begin{equation}
\mathbf{y'} = \begin{bmatrix} \mathbf{y}\\ \mathbf{r} \end{bmatrix} = \begin{bmatrix} \mathbf{y}_1\\ \mathbf{y}_2 \end{bmatrix} ,
\end{equation}
where $\vec y_1, \vec y_2 \in \Z_2^{n}$.
Likewise, we re-write $\vec{B}$ as 
\begin{equation}
\vec{B} = \begin{bmatrix}\vec{A}_1 \\\vec{A}_2 \end{bmatrix},
\end{equation}
where $\vec A_1, \vec A_2 \in \Z_2^{n \times \ell}$. 
By definition, $\mathbf{y}_1 = \vec{A}_1\mathbf{x} + \mathbf{e}_1$ for a Bernoulli error $\mathbf{e}_1 \sim \Ber(p)^{\otimes n}$. 
Because $\vec r$ is uniformly random, we may also equivalently express $\mathbf{y}_2$ as 
\begin{equation}
\mathbf{y}_2 = \vec{A}_2\mathbf{x} + \mathbf{e}_2 + \mathbf{f},
\end{equation}
where $\vec e_2$ is $\Ber(p)$ on the first $n - \ell(1+\e)$ bits and zero on the rest, and $\mathbf{f}$ is zero on the first $n - \ell (1+\epsilon)$ bits and independently $\Ber(\frac{1}{2})$ on the rest.

Next, sample a random permutation $\pi \sim S^n$. Denote by $\pi \oplus \pi$ a permutation operator on $\Z_2^{2n}$ whereby the first $n$ and last $n$ bits are permuted according to the same permutation $\pi \in S^n$.
Denote $(\pi \oplus \pi) (\vec M)$ for a matrix $\vec M \in \Z_2^{2n \times \ell}$ the action of $\pi \oplus \pi$ column-wise on $\vec M$.
By definition, $\pi \oplus \pi$ preserves the symplectic inner product, since for each $i \in [n]$ indices $(i, i+n)$ are relabeled jointly to $(j, j+n)$ for some $j \in [n]$.
Thus, the distribution of a full-rank symplectically orthogonal matrix $\vec B$ is invariant under action of $\pi \oplus \pi$.
We also sample $\vec e' \in \Z_2^n$ using Lemma~\ref{lem:error_extension}, with parameters $m = \ell(1+\e)$ and $\d = \e$.
Now, define 
\begin{equation}
    \mathbf{z} = \begin{bmatrix} \mathbf{z_1}\\ \mathbf{z_2}\end{bmatrix} = \begin{bmatrix} \pi (\mathbf{y_1}) \\ \pi (\mathbf{y_2} + \mathbf{e'})\end{bmatrix}
\end{equation}
and
\begin{equation}
    \vec{K} = \begin{bmatrix} \vec{K}_1 \\ \vec{K}_2\end{bmatrix} = (\pi \oplus \pi) (\vec B) = \begin{bmatrix} \pi (\vec{A}_1) \\ \pi (\vec{A}_2) \end{bmatrix}.
\end{equation}
Under these definitions,
\begin{equation}
    \mathbf{z}_1 = \vec K_1 \vec x + \vec e_1 ,\quad \vec z_2 = \vec{K}_2 \vec{x} + \pi(\mathbf{e}_2) + \pi(\mathbf{f} + \mathbf{e'}) .
\end{equation}
Marginally, $\pi(\vec f + \vec e') \sim \Ber(q)^{\otimes n}$ up to $\negl(n)$ corrections in probabilities by Lemma~\ref{lem:error_extension}.
Meanwhile, independently, $\pi(\vec e_2) \sim \Ber(p)^{\otimes n}$.
Hence, $\pi(\vec e_2) + \pi(\vec f + \vec e')$ is distributed as an $n$-fold Bernoulli with probability $r = \frac{l}{2n}(1+\epsilon)^2 \leq \frac{l}{2n}(1+3\epsilon)$. 
Summing the independent Bernoulli errors $\vec{P}(\mathbf{e}_2)$ and $\vec{P}(\mathbf{f} + \mathbf{e'})$ gives an $n$-fold Bernoulli distribution with probability parameter $q = p(1-r) + r(1-p) = p + r - 2pr \leq p + r$.
This error parameter is larger than that of $\vec e_1$ by $(1-2p)r$.
By way of Lemma~\ref{lemma:Bern_convolution}, we may sample some $\vec g \in \Z_2^{n}$ such that $\vec e_1 + \vec g \sim \Ber(q)$ as well.

Presently, our transformed sample 
\begin{align}
    \left( \vec K, \; \begin{bmatrix}
        \vec z_1 + \vec g \\ \vec z_2
    \end{bmatrix} \right)
\end{align}
is nearly distributed as a valid $\symplpn$ structured sample, except (a) $\vec C$ has $\ell$ columns instead of $n$, and is not necessarily full rank, and (b) the error is Bernoulli instead of depolarizing.
The first is remedied by additional sampling: we construct a random completion $\vec L \in \Z_2^{2n \times (n-\ell)}$ of $\vec K \in \Z_2^{2n \times \ell}$.
More precisely, the $\ell \times \ell$ submatrix consisting of the first $\ell$ columns of $\vec K$ is marginally uniformly random, and so full rank with probability $1 - \negl(\ell) = 1 - \negl(n)$.
Thus, $\vec K$ is also full rank with probability $1 - \negl(n)$.
We can then iteratively sample the next column of $\vec L$ subject to being independent from and symplectically orthogonal to all previous columns as well as columns of $\vec K$.
We also extend the secret by sampling $\vec w \sim \Z_2^{n - \ell}$.
Now, our updated sample is of the form \begin{align}
    \left( \begin{bmatrix}
        \vec K \; \vec L
    \end{bmatrix} , \; \begin{bmatrix}
        \vec z_1 + \vec{Kw} + \vec g \\ \vec z_2 + \vec{Kw} 
    \end{bmatrix} \right) .
\end{align}
The last remaining issue is the distribution of the error, which must be transformed from Bernoulli to depolarizing.
By Corollary~\ref{corollary:bernoulli-depolarizing_duality}, this transformation can be done by sampling a random $\vec h \sim \Ber(q)^{\otimes n}$ and adding $\vec h$ to each of the two halves of the output.
The cost of such an operation is that the error probability increases by at most a factor of 3.
Since we can always increase the error to exactly a factor of 3 larger by Lemma~\ref{lemma:depolarizing_convolution}, we can assume that the probability increases to $3q$.
The final sample
\begin{align}
    \left( \begin{bmatrix}
        \vec K \; \vec L
    \end{bmatrix} , \; \begin{bmatrix}
        \vec z_1 + \vec{Kw} + \vec g + \vec h \\ \vec z_2 + \vec{Kw} + \vec h
    \end{bmatrix} \right) 
\end{align}
has, up to $\negl(n)$ corrections discussed above, the distribution of a $\symplpn(n, \;3q)$ instance.

Finally, suppose instead that the given $\lpn$ sample $(\vec A, \vec y)$ was unstructured, so that $\vec y \sim \Z_2^{\ell}$.
Then the above transformations would still map $\vec A$ to a valid $\symplpn$ encoding matrix (up to negligible corrections), but all the transformations performed on $\vec y$ leave its distribution invariant because $\vec y$ is uniformly random and we extend it by padding a uniformly random vector $\vec r$.
Thus, the unstructured $\lpn$ sample also maps to an unstructured $\symplpn$ sample.
We now run $\CO$ on the transformed sample and output its decision.
By the above, we will correctly decide the $\lpn$ instance if and only if we correctly decide the $\symplpn$ instance, unless a negligibly rare event occurs.
Consequently, our probability of success can decrease by at most $\negl(n)$ from $\CO$'s probability of success.
This completes the reduction.
\end{proof}

\subsection{From Symplectic $\lpn$ to $\lsn$}

Initially, the reasonable next step may appear to be a reduction from Decision $\symplpn$ to Decision $\lsn$.
By dimension counting alone, the natural way to perform this reduction is to put the $2n \times n$ $\symplpn$ encoding matrix $\vec B$ into the stabilizer/logical $\vec Z$ part of the classical representation of $\lsn$, and then generate the secret/logical $\vec X$ part ourselves.
While this reduction does go through exactly as stated, it is not helpful in showing that \emph{Search} $\lsn$ is hard, because we have no way of reducing Decision $\lsn$ to Search $\lsn$.
There is also no natural way to reduce Decision $\symplpn$ directly to Search $\lsn$.
This is because we generate the secret for $\lsn$ ourselves in the above reduction sketch, so a search oracle finding the $\lsn$ secret has no utility at all.
To circumvent this barrier, we give a stronger decision reduction---namely, into multi-sample $\lsn$---and then apply our decision-to-search reduction from Section~\ref{sec:self_reducibility} so as to finally reduce to Search $\lsn$ (and thereby also Decision and Search $\slsn$).

\begin{lemma}[Decision $\symplpn$ to Decision $\lsn^{\poly}$] \label{lem:symplpn_to_lsnpoly}
Let $k \geq 1$ be any positive integer, which may depend on $n$.
Suppose that there is an oracle $\CO$ which solves the classical representation of Decision $\lsn[k, n, p][\poly]$ with advantage $\eta$.
Then there exists an efficient classical algorithm which solves Decision $\symplpn[n, p]$ to advantage $\frac\eta{\poly(n)}$ using a single call to $\CO$.
\end{lemma}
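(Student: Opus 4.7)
The plan is to apply a standard hybrid argument that embeds the single $\symplpn$ sample into a uniformly chosen position among the $m = \poly(n)$ samples required by $\CO$. For $j \in \{0, 1, \ldots, m\}$, define the hybrid distribution $H_j$ to consist of $j$ structured LSN samples sharing a freshly chosen secret $\vec y \sim \Z_2^k$, followed by $m-j$ independent unstructured samples. By assumption, $H_m$ and $H_0$ are exactly the two inputs that $\CO$ distinguishes, so $|\Pr[\CO = 1 \mid H_m] - \Pr[\CO = 1 \mid H_0]| \geq \eta$.

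Given a $\symplpn$ sample $(\vec A, \vec z)$, the reduction samples $j \in [m]$ uniformly, samples a shared secret $\vec y \sim \Z_2^k$, generates $j-1$ fresh structured LSN samples (classical representation) using secret $\vec y$, and generates $m-j$ fresh unstructured samples (with independent codes and uniform second coordinates). For position $j$, it sets $\vec A_j := \vec A$, samples $\vec B_j$ uniformly from the valid completions (i.e., $k$ vectors whose columns are pairwise symplectically orthogonal and such that $[\vec A_j \,|\, \vec B_j]$ remains full rank; this is an efficient linear sampling task, analogous to the one used in Lemma~\ref{lem:classical_reduction_2_search}), and outputs second coordinate $\vec z + \vec B_j \vec y$. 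If $\vec z = \vec A \vec x + \vec e$ is structured then the $j$-th sample equals
\begin{equation*}
    \bigl[\vec A_j \,\big|\, \vec B_j\bigr] \begin{bmatrix} \vec x \\ \vec y \end{bmatrix} + \vec e,
\end{equation*}
a valid structured LSN sample with junk $\vec x$ uniform and shared secret $\vec y$; if $\vec z$ is uniform then so is $\vec z + \vec B_j \vec y$, yielding an unstructured sample. Hence the constructed $m$-tuple has distribution $H_j$ in the former case and $H_{j-1}$ in the latter.

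Running $\CO$ on the constructed $m$-tuple and returning its verdict (with a possible sign flip) gives distinguishing advantage
\begin{equation*}
    \frac{1}{m}\bigl|\Pr[\CO = 1 \mid H_m] - \Pr[\CO = 1 \mid H_0]\bigr| \;\geq\; \frac{\eta}{m} \;=\; \frac{\eta}{\poly(n)},
\end{equation*}
as required. The one subtlety that will need careful verification is that $(\vec A_j, \vec B_j)$ has exactly the marginal and conditional distributions prescribed by the LSN classical representation: this holds because the symplectic group over $\Z_2^{2n}$ acts transitively on ordered Lagrangian bases, so the number of valid $\vec B_j$ completions is independent of $\vec A_j$, and thus the $\symplpn$-marginal of $\vec A$ (uniform over full-rank symplectically orthogonal $2n \times n$ matrices) coincides precisely with the LSN marginal of $\vec A_j$. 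The depolarizing error $\vec e$ supplied by $\symplpn$ already matches the LSN error model. Note that the reduction only targets the decision variant, since we generate the LSN secret $\vec y$ ourselves and a search oracle returning $\vec y$ would convey no information about the underlying $\symplpn$ secret---this is precisely the obstruction that motivates proving the multi-sample decision-to-search reduction (Theorem~\ref{thm:decision_to_search_lsn}) separately.
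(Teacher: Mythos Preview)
Your proposal is correct and takes essentially the same approach as the paper: a randomized hybrid argument that embeds the $\symplpn$ sample into a uniformly chosen position $j$, with the neighboring positions filled by fresh structured and unstructured samples so that the telescoping sum collapses to $|p_m - p_0|/m$. The only cosmetic difference is that you place structured samples \emph{before} position $j$ and unstructured after, whereas the paper does the reverse; this is immaterial to the argument, and your justification of the $(\vec A_j, \vec B_j)$ marginal via transitivity is a clean way to state what the paper handles implicitly.
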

\begin{proof}
A sample of Decision $\symplpn[n, p]$ consists of $(\vec{A}, \mathbf{z})$, where $\vec{A} \in \Z_2^{2n \times n}$ is a uniformly random full-rank symplectically orthogonal matrix.
In the structured case, $\mathbf{z} = \vec{A}\mathbf{x} + \mathbf{e}$, for uniformly random $\mathbf{x} \in \Z_2^n$, and $\Symp^{-1}(\vec e) \sim \CD_{p}^{\otimes n}$.
In the unstructured case, $\vec z \sim \Z_2^{2n}$.
Suppose $\CO$ uses $m = \poly(n)$ samples.
Recall that each sample is of the form $(\vec A_i, \;\vec B_i, \;\vec z_i)$, where in the structured case $\vec z_i = \vec A_i \vec r_i + \vec B_i \vec y + \vec e_i$ and in the unstructured case $\vec z_i \sim \Z_2^{2n}$.
The matrices $\vec A_i$ and vectors $\vec r_i$ are the \emph{junk} matrices and vectors, respectively.
They are re-generated for each sample and contain no information about the $\lsn$ secret $\vec y$.
Our basic strategy is to embed the $\symplpn$ data $(\vec A, \vec z)$ into the junk matrix and vector for one of the samples.
This is a strict departure from the typical forms of coding reductions, wherein the secret from one problem becomes the secret in the other.
This usual strategy is infeasible in our case, as the dimensions are completely mismatched, e.g. it is impossible to embed a secret of length $n$ into the secret of $\lsn$ if $k = 1$.
If $m = 1$, there is no ambiguity as to which sample we would embed into, and we would thus be essentially done.
However, since we have an arbitrary polynomial number of samples, we must thoughtfully choose which sample to embed the $\symplpn$ data into, so as to maintain a reduction.
To do so, we will proceed by a randomized version of the hybrid argument.
That is, we will choose a \emph{random} sample out of the $m$ possibilities in which to embed the $\symplpn$ data.

Sample a random index $j \in [m]$ and a random $\lsn$ secret $\mathbf{y} \in \Z_2^k$, and output
\begin{equation}
\left(\begin{bmatrix}
      \vec A_i \, | \, \vec B_i  
    \end{bmatrix}, \mathbf{z}_i\right) \text{ for } i\in[l],
\end{equation}
where 
\begin{enumerate}
\item If $i> j$, the $i$th sample is structured with secret $\vec y$. 
That is, $\vec{A}_{i} \in \Z_2^{2n \times n}$, $\vec{B}_{i} \in \Z_2^{2n \times k}$ are uniformly random symplectically orthogonal matrices such that $\begin{bmatrix}
      \vec A_i \, | \, \vec B_i  
    \end{bmatrix}$ is full rank, and $\mathbf{z}_i = \vec{A}_i\mathbf{r}_i+ \vec{B}_{i}\mathbf{y} + \mathbf{e}_i$ where $\mathbf{r}_i \in \Z_2^{2n}$ is uniformly random and $\mathbf{e}_i\sim \Symp(\mathcal{D}_{p}^{\otimes n})$.
\item If $i=j$, we embed the $\symplpn$ data into the sample.
Formally, $\vec{A}_{i} = \vec{A}$, $\vec{B}_{i} \in \Z_2^{2n \times k}$ is uniformly random and symplectically orthogonal such that $\begin{bmatrix}
      \vec A_i \, | \, \vec B_i  
    \end{bmatrix}$ is full rank, and $\mathbf{z}_{i} =\vec{B}_{i}\mathbf{y} + \mathbf{z}$,
\item If $i < j$, the $i$th sample is unstructured. 
That is, $\vec{A}_{i} \in \Z_2^{2n \times n}$, $\vec{B}_{i} \in \Z_2^{2n \times k}$ are uniformly random symplectically orthogonal matrices such that $\begin{bmatrix}
      \vec A_i \, | \, \vec B_i  
    \end{bmatrix}$ is full rank, and $\mathbf{z}_i \in \Z_2^n$ is uniformly random.
\end{enumerate}
We then apply the oracle $\CO$ to the resulting sample, and output its decision. 

Depending on whether the original sample $(\vec{A}, \vec{z})$ was structured or unstructured, the set of $m$ samples submitted to $\mathcal{O}$ has a different number of structured and unstructured samples. 
The first $j-1$ samples are always unstructured, by construction, and samples $j+1, \;\dots, \;m$ are always structured.
Meanwhile, the $j$th sample is structured if and only if the $\symplpn$ instance $(\vec{A}, \vec{z})$ is structured.

Let $p_k$ be the probability that $\mathcal{O}$ outputs \texttt{STRUCTURED} when the first $k$ of samples are structured, and the remaining $m-k$ are unstructured.
Then, the probability that $\mathcal{O}$ outputs \texttt{STRUCTURED} by our reduction when $(\vec{A}, \vec{z})$ is indeed structured is
\begin{equation}
\Pr[\CO = \text{\texttt{STRUCTURED}} \,|\, \text{\texttt{STRUCTURED}}] = \frac{1}{m} \sum_{k = 0}^{m-1} p_k,
\end{equation}
since $j$ is chosen randomly from 1 to $m$. Likewise, when $(\vec{A}, \vec{z})$ is unstructured,
\begin{equation}
\Pr[\CO = \text{\texttt{UNSTRUCTURED}} \,|\, \text{\texttt{UNSTRUCTURED}}] = \frac{1}{m} \sum_{k = 1}^{m} p_k,
\end{equation}
The difference in these two probabilities is
\begin{equation}\label{eq:symplpn_to_decision_distinguisher}
\left| \frac{1}{m}\sum_{k = 0}^{m-1} p_k - \frac{1}{m}\sum_{k = 1}^m p_k \right| = \frac{1}{m} | p_0 - p_m | .
\end{equation}
However, having $0$ unstructured samples and $m$ unstructured samples are precisely the two cases of Decision $\lsn[k, n, p][\poly]$, and by assumption $\CO$ succeeds in this problem with advantage $\eta$.
Therefore, $|p_0 - p_m| = \eta$, and as a result the advantage our reduction has on Decision $\symplpn(n, p)$, given by Eqn.~(\ref{eq:symplpn_to_decision_distinguisher}), is $\frac{\eta}m$.
\end{proof}

With these lemmas, we may at last establish the hardness of Search $\lsn[k, n, p]$.

\begin{theorem}[Average-case hardness of stabilizer decoding with any logical dimension]
Let $p = \Omega\left(n^{-(1-\epsilon)}\right)$, for $\e \in (0, 1)$ a constant, and let $k \geq 1$ be an arbitrary number of logical qubits (which may depend on $n$).
Suppose that there is an oracle $\CO$ which solves the classical representation of Search $\lsn[k, n, p]$ with probability $\frac{1}{2^k} + \frac{1}{\poly(n)}$.
Then there is a classical algorithm running in $\poly(n)$ time which solves Decision $\lpn[\lfloor \frac{np}6 \rfloor, \;2n, \;\frac{p}6]$ with advantage $\frac{1}{\poly(n)}$, calling $\CO$ as a subroutine.
By the equivalence of $\lsn$ and its classical representation, if $\CO$ instead solves the conventional formulation of Search $\lsn[k, n, p]$ there is also a quantum algorithm running in $\poly(n)$ time which solves Decision $\lpn[\lfloor \frac{np}6 \rfloor, \;2n, \;\frac{p}6]$ with advantage $\frac{1}{\poly(n)}$, calling $\CO$ as a subroutine.
\end{theorem}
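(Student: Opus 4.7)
The plan is to chain together the reductions developed in Sections~\ref{sec:self_reducibility} and \ref{sec:hardness_lsn}, working from the given Search $\lsn$ oracle downward through Decision $\lsn^{\poly}$, then Decision $\symplpn$, and finally Decision $\lpn$. Concretely, starting from the Search $\lsn^1[k,n,p]$ oracle $\CO$ with success probability $2^{-k} + 1/\poly(n)$, I would first construct a Search $\lsn^m[k,n,p]$ oracle for any $m = \poly(n)$ simply by running $\CO$ on one of the $m$ input samples and discarding the rest; the success probability is preserved. Applying Theorem~\ref{thm:decision_to_search_lsn} with this multi-sample search oracle yields an efficient algorithm for Decision $\lsn^{2m}[k,n,p]$ with advantage $1/\poly(n)$, and Lemma~\ref{lem:symplpn_to_lsnpoly} then transforms this into a solver for Decision $\symplpn[n,p]$ with advantage $1/\poly(n)$.

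The final step is to invoke Theorem~\ref{thm:lpn_to_symplpn}, and I expect the parameter tracking here to be the main technical point. I would set $\ell = \lfloor np/6 \rfloor$ and choose a small constant $\epsilon > 0$; the lower-bound hypothesis $p = \Omega(n^{-(1-\epsilon)})$ ensures the required condition $\ell = \Omega(n^{\epsilon})$. Theorem~\ref{thm:lpn_to_symplpn} then reduces Decision $\lpn[\lfloor np/6 \rfloor, \lceil 2n - \ell(1+\epsilon) \rceil, p/6]$ to Decision $\symplpn[n, 3q]$, where $q = p/6 + r(1 - p/3)$ and $r = \ell(1+\epsilon)^2/(2n)$. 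For $\epsilon$ sufficiently small and $p \leq 1$, a direct computation gives $3q < p$, so we can obtain the required $\symplpn[n, 3q]$ oracle from our $\symplpn[n, p]$ solver by padding each sample with additional depolarizing noise per Lemma~\ref{lemma:depolarizing_convolution}. The mild width discrepancy between $\lceil 2n - \ell(1+\epsilon) \rceil$ and $2n$ is resolved by the elementary observation that any $\lpn[k, W, p]$ solver with $W \leq 2n$ decides any $\lpn[k, 2n, p]$ instance by simply restricting to its first $W$ rows, which preserves both the uniform distribution of the generator matrix and the Bernoulli error distribution.

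The main conceptual point---and the reason the chain must pass through multi-sample variants---is that Lemma~\ref{lem:symplpn_to_lsnpoly} embeds the $\symplpn$ input not into the secret but into the \emph{junk} register of one randomly chosen $\lsn$ sample, so a direct search-type reduction is impossible and a multi-sample decision oracle is genuinely needed; this is precisely what forces us to invoke the decision-to-search reduction of Theorem~\ref{thm:decision_to_search_lsn} as an intermediate step. For the final sentence concerning an oracle for the conventional quantum formulation of $\lsn$, the same chain works after composing with the quantum-classical equivalence (Lemmas~\ref{lem:classical_reduction_1_search} and \ref{lem:classical_reduction_2_search} of Section~\ref{sec:classical_lsn}), which translates between the two representations via efficient quantum pre- and post-processing; this composition is the sole source of quantumness in the resulting algorithm and accounts for the ``quantum algorithm'' qualifier in the final statement.
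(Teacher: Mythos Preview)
Your proposal is correct and follows essentially the same chain of reductions as the paper's proof: $\lpn \to \symplpn$ via Theorem~\ref{thm:lpn_to_symplpn}, then $\symplpn \to$ Decision $\lsn^{\poly}$ via Lemma~\ref{lem:symplpn_to_lsnpoly}, then Decision $\lsn^{\poly} \to$ Search $\lsn^{\poly}$ via Theorem~\ref{thm:decision_to_search_lsn}, and finally Search $\lsn^{\poly} \to$ Search $\lsn$ by discarding samples. You present the chain in reverse order (starting from the oracle and working down to $\lpn$) and leave $\epsilon$ as ``sufficiently small'' rather than fixing $\epsilon = 1/3$ as the paper does, but the argument and parameter tracking are otherwise identical.
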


\begin{proof}
Let $p' = \frac{p}6$ and $k' = \lfloor n p' \rfloor$; let $\e \in (0, 1)$ be a constant.
We have an immediate reduction from $\lpn[k', \;2n, \;p']$ to $\lpn[k', \;\lfloor 2n - (1+\e) n p' \rfloor , \;p']$, since we are always free to discard rows.
By Theorem \ref{thm:lpn_to_symplpn}, this latter problem reduces to Decision $\symplpn[n, \;3q]$, with $q = p' + \frac{k'}{2n}(1 + 3\epsilon) \leq \left(1 + \frac{1 + 3\e}2\right) p'$. 
Using Lemma~\ref{lemma:depolarizing_convolution}, we can assume that $q$ is exactly this upper bound by adding extra noise.
Then, by Lemma \ref{lem:symplpn_to_lsnpoly}, Decision $\symplpn[n, \;3q]$ reduces to Decision $\lsn[k, n, 3q][\poly]$. 
Finally, Decision $\lsn[k, \;n, \;3q][\poly]$ reduces to Search $\lsn[k, \;n, \;3q][\poly]$ by Theorem~\ref{thm:decision_to_search_lsn}, which reduces easily to Search $\lsn[k, \;n, \;3q]$ by ignoring all but the first sample. 
Choosing $\e = \frac13$, we have $q \leq 2p'$, so $3 q \leq 6 p' = p$.
The composition of these reductions proves the theorem.
\end{proof}

\section{Summary and Outlook} \label{sec:discussion}
% Implications of this work
% Includes hardness of quantum decoding rigorously establishsed in the average case for the first time,
% no crypto due to one-way function?
% algorithmic hardness gives evidence that there is no non-trivial BKW-style algorithm for LSN, since it would imply a nontrivila algoorithm in an intractable LPN regime
% and more, hopefully!

This work explored the hardness of decoding quantum stabilizer codes from several perspectives.
We showed that quantum degeneracy and the inability to directly verify solutions caused the very definition of quantum decoding to lose its robustness to slight perturbations, e.g. finding the error is likely not equivalent (in an instance-by-instance manner) to recovering the logical state itself.
Nonetheless, all definitions are \textbf{NP}-complete.
We next studied the relation between $\lpn$ and $\lsn$; that is, between decoding random classical and random quantum codes.
The gap between these two problems is significant even conceptually.
We bridge this gap from both directions.
Quantumly, we introduced a completely classical representation of quantum decoding, which shares the same qualitative structure as $\lpn$ while also being equivalent to $\lsn$.
Classically, we introduced a symplectic variant of $\lpn$ which guarantees that the generator matrix satisfies a symplectic orthogonality condition naturally baked into quantum codes.
We then reduced $\lpn$ to its symplectic variant, and reduced that to $\lsn$.

At the same time, we examined the hardness of $\lsn$ from the perspective of self-reducibility.
We give search-to-decision reductions (and, in the average case, decision-to-search), but prove at the same time that several reasonable notions of a random self-reduction from the worst-case decoding problem to the average case $\lsn$ do not exist.
One interpretation of this distinction is that search-to-decision reductions can be accomplished with only local operators, whereas a random self-reduction necessitates a great deal of entanglement.
The entangling operations required, however, act in a highly uniform way (namely, they mix all Paulis at approximately the same rate), which ultimately bars a reduction.

A natural question stemming from this work is whether the converse reduction from $\lsn$ to $\lpn$ holds; namely, is there a reduction from $\lsn(k, n, p)$ to $\lpn(\Theta(np), \;\Theta(n), \;\Theta(p))$?
The existence of such a reduction would help shed light on the the exact relationship between quantum and classical decoding more generally.

There are also several open questions centered around strengthening both the positive and negative results we give about reductions among quantum decoding.
For example, we give a reduction from Search $\slsn$ to Search $\lsn$ only when $k = O(\log(n))$ or the solver for $\lsn$ succeeds with a very large probability, namely substantially more than $\frac{1}{2}$ (as compared to the usual $\frac{1}{2^k} + \frac{1}{\poly(n)}$). 
Whether or not this reduction can be improved remains open.
Likewise, we give search-to-decision reductions in the worst case with the requirement of a $\qncp(k, n, w')$ solver that works for each choice of $w' \leq w$.
It is unclear whether a reduction exists with a weaker assumption, e.g. if one requires only a solver with weight exactly $w$.
From the random self-reduction front, an important question is whether one can prove with no assumptions at all that no restricted notion of a random self-reduction exists, or whether such a reduction does exist but requires an exotic algorithmic approach.

Finally, from the point of view of applications, there is a long history of $\lpn$ as the basis for a great deal of cryptographic protocols~\cite{10.1007/978-3-642-27660-6_9}. 
A natural question concerns whether we can also base interesting cryptography on hard problems in quantum, rather than classical error correction. 
If cryptographic schemes based on $\lpn$ can be adapted to $\symplpn$ or $\lsn$, then new classical cryptography could be built from natural quantum-inspired problems. 
Finally---and perhaps, more interestingly---cryptography based
on these assumptions may turn out be harder to break than any cryptography based on $\lpn$.

\subsection*{Acknowledgments}
\addcontentsline{toc}{section}{Acknowledgments}
We thank Thiago Bergamaschi, Alexandru Gheorghiu, Zhiyang (Sunny) He, Yihui Quek, Justin Raizes, Peter Shor, Kabir Tomer, Adam Wills, and Henry Yuen for illuminating discussions.
ABK is supported by the Engineering and Physical Sciences Research Council grant number EP/Z002230/1: (De)constructing quantum software (DeQS).
JZL is supported in part by a National Defense Science and Engineering Graduate (NDSEG) fellowship.
AR is supported by a Caltech Summer Undergraduate Research Fellowship (SURF).
JZL and AP are supported in part by the U.S. Department of Energy,
Office of Science, National Quantum Information Science Research Centers, Co-design
Center for Quantum Advantage (C2QA) under contract number DE-SC0012704.
VV is supported in part by NSF CNS-2154149 and a Simons Investigator Award.

\printbibliography
\appendix

% Add any appendices needed here

\section{Nonexistence of unstructured classical random self-reductions} \label{sec:app-classical_w2a_no_go}
We here sketch a proof that there is no random self-reduction into $\lpn(k, n, p)$ from the worst-case version, known as $\ncp(k, n, w)$, which holds for all classical linear codes.
In this case, the reduction distribution is given by a random vector $\mb{r} \in \Z_2^n$, drawn from a distribution $\boldsymbol{R}_n$, which multiplies the code on the left.

\begin{lemma} \label{lemma:no_NCP_to_NCP}
    Let $\boldsymbol{A} \in \Z_2^{n \times k}$ be a generator matrix and $\mb{e} \in \Z_2^n$ an error with $|\mb{e}| \leq w$, and $\mb{s} \in \Z_2^{k}$ be the logical bitstring.
    Let $\mb{r}$ be drawn from any distribution over $\F_{2}^{n}$, transforming $(\boldsymbol{A}, \mb{e}) \longrightarrow (\mb{r}^T \boldsymbol{A}, \;\mb{e} \cdot \mb{r})$. 
    Suppose for any $\boldsymbol{A}$, the law of $\mb{r}^T \boldsymbol{A}$ is within TV distance $\e$ of $\Unif(\Z_2^k)$. Then the law of $\mb{e} \cdot \mb{r}$ is also $\e$-close in TV distance to $\Ber(\frac{1}{2})$.
\end{lemma}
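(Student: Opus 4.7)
The plan is to exploit the universal quantifier in the hypothesis: since the bound $\TV(\mb{r}^T \boldsymbol{A}, \;\Unif(\Z_2^k)) \leq \e$ must hold for \emph{every} generator matrix $\boldsymbol{A}$, I will exhibit a particular $\boldsymbol{A}$ that forces one coordinate of $\mb{r}^T \boldsymbol{A}$ to equal $\mb{r} \cdot \mb{e}$. Monotonicity of TV distance under marginalization, as in Eq.~\eqref{eq:TV_monotonicity}, then transfers the uniform approximation directly onto the inner product.

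Concretely, assuming $\mb{e} \neq \vec 0$ (the conclusion is vacuous otherwise), I would construct $\boldsymbol{A} \in \Z_2^{n \times k}$ whose first column equals $\mb{e}$ and whose remaining $k - 1$ columns are chosen so that $\boldsymbol{A}$ has full column rank. Such a completion exists because $n \geq k$ and any nonzero vector in $\Z_2^n$ can be extended to a linearly independent set of size $k$. Applying the hypothesis to this $\boldsymbol{A}$ gives $\TV(\mb{r}^T \boldsymbol{A}, \;\Unif(\Z_2^k)) \leq \e$. The first coordinate of $\mb{r}^T \boldsymbol{A}$ is precisely $\mb{r}^T \mb{e} = \mb{e} \cdot \mb{r}$, and the first marginal of $\Unif(\Z_2^k)$ is $\Ber(\tfrac{1}{2})$. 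Monotonicity of TV distance under marginalization therefore yields $\TV(\mb{e} \cdot \mb{r}, \;\Ber(\tfrac{1}{2})) \leq \e$, as claimed.

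There is essentially no obstacle here; the proof is a short application of TV monotonicity combined with the universal hypothesis, and the only mild subtlety is ensuring $\boldsymbol{A}$ qualifies as a valid generator matrix, which is immediate. This transparent argument also makes manifest the classical analog of the symmetry exploited in Corollary~\ref{thm:no_w2a_general}: in both the classical and quantum settings, the code and the error live in the same algebraic object (vectors over $\Z_2$ classically, Paulis quantumly), so any reduction operation that successfully randomizes a worst-case code---which can always be adversarially chosen to embed the error---must necessarily randomize the error as well, destroying decodability.
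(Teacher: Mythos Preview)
Your proposal is correct and follows essentially the same approach as the paper: pick a matrix whose first column is $\mb{e}$, then apply TV monotonicity under marginalization to the first coordinate of $\mb{r}^T\boldsymbol{A}$. Your version is in fact slightly more careful than the paper's, which writes ``first row'' where it means ``first column'' and does not explicitly address the full-rank (generator matrix) constraint; the only quibble is that the $\mb{e}=\vec 0$ case is not literally vacuous but rather degenerate and outside the regime of interest.
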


\begin{proof}
    Fix some choice of $\mb{e}$. Choose any $C$ whose first row is $\mb{e}$. By monotonicity of TV distance, $\TV(\mb{e} \cdot \mb{r}, \;\Unif(\Z_2)) \leq \TV(\mb{r}^T \boldsymbol{A}, \;\Unif(\Z_2^n)) \leq \e$. But $\Unif(\Z_2) = \Ber(\frac{1}{2})$ which completes the proof.
\end{proof}

\begin{theorem}
    There is no strong random self-reduction from $\ncp(k, n, w)$ to $\lpn(k, n, p)$ such that $p = \frac{1}{2} - \frac{1}{\poly(n)}$.
\end{theorem}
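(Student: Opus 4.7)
The plan is to combine Lemma~\ref{lemma:no_NCP_to_NCP} with the monotonicity of total variation distance under marginalization. A strong random self-reduction from $\ncp(k,n,w)$ into $\lpn(k,n,p)$ must, in its most natural form, be specified by an efficiently samplable distribution over matrices $\boldsymbol{R} \in \Z_2^{n \times n}$ (possibly composed with an additive shift that can be absorbed into the secret) such that for every worst-case classical code $\boldsymbol{A} \in \Z_2^{n \times k}$, every $\mathbf{x} \in \Z_2^k$, and every error $\mathbf{e} \in \Z_2^n$ with $\wt(\mathbf{e}) \leq w$, the distribution of $(\boldsymbol{RA}, \, \boldsymbol{R}(\boldsymbol{A}\mathbf{x}+\mathbf{e}))$ has total variation distance $\e = \negl(n)$ from the uniform $\lpn(k,n,p)$ distribution.

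First I would focus on an arbitrary single row $\mathbf{r}_i$ of $\boldsymbol{R}$. By the monotonicity of TV distance (Eqn.~\eqref{eq:TV_monotonicity}), the marginal of $(\mathbf{r}_i^T \boldsymbol{A}, \, \mathbf{r}_i \cdot (\boldsymbol{A}\mathbf{x}+\mathbf{e}))$ must be within negligible TV distance of $(\mathbf{a}, \, \mathbf{a} \cdot \mathbf{x}' + z)$, where $\mathbf{a} \sim \Unif(\Z_2^k)$ and $z \sim \Ber(p)$. In particular, marginalizing further, the distribution of $\mathbf{r}_i^T \boldsymbol{A}$ alone must be negligibly close to $\Unif(\Z_2^k)$ for every $\boldsymbol{A}$.

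Next I would apply Lemma~\ref{lemma:no_NCP_to_NCP} with $\e = \negl(n)$ to the row $\mathbf{r}_i$. The lemma then directly concludes that the marginal law of $\mathbf{r}_i \cdot \mathbf{e}$ must also lie within negligible TV distance of $\Ber(\tfrac{1}{2})$. On the other hand, in order for the transformed sample to look like a genuine $\lpn(k,n,p)$ instance, the $i$-th error bit $\mathbf{r}_i \cdot \mathbf{e}$ must be negligibly close to $\Ber(p)$. By the triangle inequality,
\[
\TV(\Ber(\tfrac{1}{2}), \Ber(p)) \leq \TV(\Ber(\tfrac{1}{2}), \mathbf{r}_i \cdot \mathbf{e}) + \TV(\mathbf{r}_i \cdot \mathbf{e}, \Ber(p)) = \negl(n).
\]
However, $\TV(\Ber(\tfrac{1}{2}), \Ber(p)) = |\tfrac{1}{2} - p| = \tfrac{1}{\poly(n)}$, which is non-negligible. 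This yields a contradiction, ruling out the existence of the reduction.

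The main obstacle will be ensuring that the argument covers the full space of plausible reduction structures rather than only pure left-multiplication by $\boldsymbol{R}$. One could in principle also permit an additive shift $\mathbf{y}$ applied to the output, post-processing that perturbs rows of $\boldsymbol{R}$ using shared randomness, or a change of basis in the secret. None of these extensions affect the core argument: an additive shift can be absorbed into the secret without changing the marginal of the error bit; shared randomness across rows does not affect the per-row marginal distribution; and basis changes act only on the code side, leaving the error marginals invariant. Since Lemma~\ref{lemma:no_NCP_to_NCP} is stated for an \emph{arbitrary} distribution over $\mathbf{r}$, no independence assumption across rows is needed, and the single-row contradiction suffices.
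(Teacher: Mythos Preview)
Your proposal is correct and follows essentially the same approach as the paper: marginalize to a single row, apply Lemma~\ref{lemma:no_NCP_to_NCP} to force the error bit near $\Ber(\tfrac{1}{2})$, then derive a contradiction via the triangle inequality against $\TV(\Ber(\tfrac{1}{2}),\Ber(p))=|\tfrac{1}{2}-p|$. Your treatment is slightly more explicit about the full-matrix formulation and the robustness to additive shifts and basis changes, whereas the paper works directly at the single-row level, but the core argument is identical.
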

\begin{proof}
    Suppose that such a reduction distribution $\boldsymbol{R}_n$ exists, so that \begin{align}
        \mathop{\TV}\limits_{\mb{r} \sim \boldsymbol{R}_n}((\mb{r}^T \boldsymbol{A}, \mb{e} \cdot \mb{r}), \;\Unif(\Z_2^{n \times k}) \otimes \Ber(p)) \leq \e = \negl(n) .
    \end{align} 
    By monotonicity, the TV between just the codes is $\leq \e$. Hence, by Lemma~\ref{lemma:no_NCP_to_NCP}, the TV distance between the smoothed error and $\Ber(\frac{1}{2})$ is $\leq \e$. However, \begin{align}
        \TV\left(\Ber\left(\frac{1}{2} - \d\right), \;\Ber\left(\frac{1}{2}\right)\right) = \frac{1}{2} \br{\abs{p - (p-\d)} + \abs{(1-p) - (1-p+\d)}} = \d .
    \end{align}
    For $p = \frac{1}{2} - \frac{1}{\poly(n)}$, the above and the triangle inequality imply that \begin{align}
        \frac{1}{\poly(n)} & \leq \TV\left(\Ber(p), \Ber\left(\frac{1}{2}\right)\right) \\
        & \leq \mathop{\TV}\limits_{\mb{r} \sim \boldsymbol{R}_n}(\mb{e} \cdot \mb{r}, \;\Ber(p)) + \mathop{\TV}\limits_{\mb{r} \sim \boldsymbol{R}_n}\left(\mb{e} \cdot \mb{r}, \;\Ber\left(\frac{1}{2}\right)\right) \\
        & \leq \epsilon + \epsilon = 2 \epsilon = \negl(n) ,
    \end{align}
    a contradiction for sufficiently large $n$.
\end{proof}

\end{document}